\numberwithin{equation}{section}
\newtheorem{lem}{Lemma}[section]
\newtheorem{thm}{Theorem}[section]
\newtheorem{cor}{Corollary}[section]
\newtheorem{ass}{Assumption}
\theoremstyle{definition}
\newtheorem{rem}{Remark}
\newtheorem{ex}{Example}
\renewcommand{\citep}[1]{\citeauthor{#1}, \citeyear{#1}}
\newcommand{\diag}{\text{diag}}
\newcommand{\Supp}{\text{Supp}}
\newcommand{\indep}{\perp\!\!\!\perp}
\newcommand{\convP}{\stackrel{p}{\longrightarrow}}
\newcommand{\convD}{\rightsquigarrow}
\newcommand{\N}{\mathcal{N}}
\newcommand{\eps}{\varepsilon}
\renewcommand{\epsilon}{\varepsilon}
\DeclareMathOperator*{\argmin}{arg\,min}
\newcommand*{\rom}
[1]{\expandafter\@slowromancap\romannumeral #1@}
\newcommand*\bigcdot{\mathpalette\bigcdot@{.5}}
\newcommand*\bigcdot@[2]{\mathbin{\vcenter{\hbox{\scalebox{#2}{$\m@th#1\bullet$}}}}}
\title{Adjustments with Many Regressors under Covariate-Adaptive Randomizations\thanks{Ke Miao thanks the National
		Natural Science Foundation of China for financial support under grant number 72103046. Yichong Zhang acknowledges the financial support from the Lee Kong Chian fellowship and the NSFC under the grant No. 72133002. Any and all errors are our own.\vspace{1.3mm}} 
	\\ \vspace{2mm}
}
\author{Liang Jiang\thanks{Fudan University.\ E-mail~address: jiangliang@fudan.edu.cn.} \and Liyao Li\thanks{The corresponding author. East China Normal University.\ E-mail~address:  lyli@fem.ecnu.edu.cn. } \and Ke Miao\thanks{Fudan University.\ E-mail~address: miaoke@fudan.edu.cn.}\and Yichong Zhang\thanks{%
		Singapore Management University.\ E-mail~address: yczhang@smu.edu.sg.}
	\date{}
}
\begin{document}
	\maketitle
	
	\begin{abstract}
	Our paper discovers a new trade-off of using regression adjustments (RAs) in causal inference under covariate-adaptive randomizations (CARs). On one hand, RAs can improve the efficiency of causal estimators by incorporating information from covariates that are not used in the randomization. On the other hand, RAs can degrade estimation efficiency due to their estimation errors, which are not asymptotically negligible when the number of regressors is of the same order as the sample size. Ignoring the estimation errors of RAs may result in serious over-rejection of causal inference under the null hypothesis. To address the issue, we construct a new ATE estimator by optimally linearly combining the estimators with and without RAs. We then develop a unified inference theory for this estimator under CARs. It has two features: (1) the Wald test based on it achieves the exact asymptotic size under the null hypothesis, regardless of whether the number of covariates is fixed or diverges no faster than the sample size; and (2) it guarantees weak efficiency improvement over estimators both with and without RAs.  
		
		% Regression adjustments can improve the estimation efficiency of average treatment effect by incorporates information from covariates that are not used in the randomization to improve efficiency. They 

		% In this paper, we study the regression adjusted average treatment effect estimator under covariate-adaptive randomization when there are dimension of the regressors grows at most at the same rate of the sample size. 

		\bigskip
		
		\noindent \textbf{Keywords:} Covariate-adaptive randomization, many regressors, regression adjustment. 
		
		\medskip
		\noindent \textbf{JEL codes:} C14, C21, D14, G21
	\end{abstract}
	\clearpage

\section{Introduction}
This paper studies linear regression adjustments (RAs) for the estimation and inference of the average treatment effect (ATE) under covariate-adaptive randomizations (CARs) when there are many regressors. CARs have recently seen growing use in a wide variety of randomized experiments in economic research.\footnote{See, for example, \cite{CCFNT16,greaney2016,jakiela2016,burchardi2019,anderson2021}, and etc.} Under CARs, units are first stratified using some baseline covariates, and within each stratum, the treatment status is assigned independently to achieve the balance between the numbers of treated and control units. Given data from CARs, researchers often estimate and infer various treatment effect parameters via regressions with strata dummies and baseline covariates as controls, a practice known as RAs. However, \cite{F08b,F081} showed that the usual OLS regression with covariates can actually decrease the precision of the ATE estimator. \cite{L13} found that, under complete randomization, for a linear regression with covariates to guarantee efficiency improvement upon the simple difference-in-means estimator (i.e., ``no-harm''), it must include a full set of interactions between treatment status and covariates. Under more complicated CARs, the ``no-harm'' regressions are performed stratum by stratum in a fully saturated fashion, as pointed out by \cite{BCS18} and \cite{YYS22}.

% Since then, many ``no-harm'' RAs have been developed for various randomization schemes and causal parameters under both finite- and super-population asymptotics.\footnote{See, for example,  \cite{SYZ10,BL16,F18,LTM20,LD21,MTL20,CF21,LD21,JPTZ22,JLTZ22, RYZ22, Y22,YYS22, BJRSZ23,C23,CMO23}, and etc.} 

In many economics applications, the ``no-harm'' linear regressions are usually coupled with high dimensional regressors, as noted by \cite{duflo2018machinistas}.\footnote{Examples of such experiments include \cite{bursztyn2019,banerjee2020lack,dhar2022reshaping}, among others.} Moreover, when the regressors are sieve bases with a growing dimension, causal estimators with ``no-harm'' RAs can potentially achieve the semiparametric efficiency bound, as shown by \cite{JLTZ22} and \cite{BJRSZ23}. However, \cite{CJN18_ET,CJN18} demonstrated that the usual heteroskedasticity consistent inference method for linear regressions leads to misleading inference when there are many regressors. \cite{CJM18} found similar problems for a class of non-linear settings. Under CARs, this issue of dimensionality is further exacerbated with the ``no-harm'' RAs stratum by stratum.

%{\color{red} To gauge how serious this issue is, consider a field experiment under CARs with 300 units. Suppose there are only three stratification dummy variables. This creates 8 strata with about 37 units on average. Then RAs with only six regressors may cause significant size distortion, as illustrated in Figure \ref{fig:sbr400} of our simulation.}

In this paper, we consider the ``many regressors'' regime--an asymptotic regime where the number of regressors can diverge at most as fast as the sample size and discover a new efficiency trade-off of using ``no-harm'' RAs due to the dimensionality of regressors. On one hand, RAs can improve the estimation efficiency by incorporating information from covariates that are not used in the randomization; on the other hand, they can degrade estimation efficiency due to their estimation errors, which are not asymptotically negligible when the number of regressors grows at the same rate of the sample size. This trade-off poses potentially serious challenges for inference. First, ignoring the cost of RAs in the asymptotic variance estimation may lead to substantial over-rejection under the null. Second, a consistent variance estimator that accounts for the cost of RAs can potentially be larger than that of the simple estimator without RAs, even asymptotically, if the cost outweighs the benefit. Therefore, using a ``no-harm'' RA that incurs such costs can actually be harmful.  

To resolve these issues, we derive the joint asymptotic distribution for estimators both with and without RAs under the ``many regressors'' asymptotic regime and propose a consistent estimator of the asymptotic covariance matrix, which accounts for both the benefit and cost of RAs. We then use this matrix to construct a new ATE estimator by optimally linearly combining the two estimators. We show that (1) the Wald test based on this new optimal linear combination estimator achieves the exact asymptotic size under the null and (2) the optimal linear combination estimator is weakly more efficient than estimators both with and without RAs. Furthermore, we conduct a local asymptotic power analysis and establish that the Wald test based on the optimal linear combination estimator is asymptotically the uniformly most powerful (UMP) test against two-sided local alternatives over a class of unbiased tests that only depend on estimators with and without RAs. This implies our new estimator is optimal locally among not only linear but also nonlinear combination estimators. These results collectively address the challenge of the efficiency trade-off unearthed in the paper.

We further consider an alternative asymptotic regime where the dimension of regressors is fixed or moderate. Under mild regularity conditions, we show that the \textit{same} optimal linear combination estimator with the \textit{same} covariance matrix estimator achieves the \textit{same} properties (1) and (2) as above. Importantly, these two results hold even when the RAs are not approximately correctly specified.

\textbf{Relation to the literature.} Our results build on the work of \cite{CJN18}, who studied the inference of linear coefficients in OLS regression with many regressors and independent or cluster-independent observations. We extend their analysis in three new directions. First, we consider observations generated by CARs, which introduce cross-sectional dependence among treatment assignments and outcomes. Second, we estimate the ATE in a two-step procedure by linearly combining intercepts from multiple OLS regressions with many regressors. Because the first step intercept estimators are asymptotically dependent and the second step of the procedure introduces additional estimation errors, the asymptotic distribution of our ATE estimator is not a simple application of the result by \cite{CJN18}. Instead, we develop a new distributional theory for our ATE estimator using techniques from \cite{BCS17}, the Yurinskii's coupling,\footnote{We avoid the mistake mentioned in Remark 2.1 of \cite{CMU22} by formulating our result such that the Gaussian approximation error $\delta_n$ is explicitly defined as a vanishing sequence with respect to the sample size.} and an anti-concentration inequality from \cite{CCK14}. Third, we propose a new optimal linear combination estimator and discuss its efficiency improvement under the many regressors framework, which is novel in the literature. \cite{CJM18} also considered a two-step procedure, and their knife-edge order of the number of regressors is the square root of the sample size. Our paper differs from theirs because our first-step intercept estimators enter the second step linearly, resulting in a knife-edge rate of the same order as the sample size.

Our covariance matrix estimator builds on cross-fit estimators for both the variance of regression coefficients and the ``variance component'' with many regressors, as proposed by \cite{J22} and \cite{KSS2020}. To address the dependence among the first-step intercept estimators, we also extend \cite{KSS2020} by proposing a new estimator for the ``covariance component'' with many regressors, involving coefficients from two separate linear regressions. Other recent studies on inference in linear regression with many regressors include \cite{AS23} and \cite{MS23}.

This paper also relates to other works on RAs in randomized experiments. 
\cite{JLTZ22},  \cite{JPTZ22}, \cite{LTM20}, \cite{MTL20}, \cite{WDTT16}, and \cite{YYS22} considered RAs for various causal parameters when the covariates are either fixed dimensional or high-dimensional but sparse. In these settings, the estimation errors of RAs are asymptotically negligible. However, as argued by \cite{LM21}, the sparsity condition (1) may not hold in social science applications as it is unclear why the large majority of control coefficients should be very nearly zero, (2) is not invariant to linear transformations of the controls, and (3)  depends on some tuning parameter (e.g., the penalty term in Lasso), which complicates small sample interpretation of the resulting inference. \cite{WZ23} further demonstrated that both Lasso and debiased Lasso can exhibit significant omitted variable biases, even when the coefficient vector is sparse and the sample size is sufficiently large compared to the number of controls. In such cases, the ``long regression'' approach often outperforms both Lasso and debiased Lasso, utilizing the modern high-dimensional OLS-based inference methods from \cite{CJN18_ET, CJN18, CJM18}. Our approach does not need sparsity conditions, thus avoiding these issues. Under a \textit{finite-population} framework with \textit{complete random sampling}, \cite{LD21} and \cite{CMO23} proposed bias correction for the ATE while allowing the number of regressors to grow at a rate \textit{slower} than the sample size. More recently, \cite{LYW23} introduced a debiased regression-adjusted estimator under \textit{complete random sampling}, permitting the number of regressors to grow at the same rate as the sample size. In contrast, we adopt a \textit{super-population} framework with general \textit{CARs}. A more detailed comparison with \cite{LD21}, \cite{CMO23}, and \cite{LYW23} is provided in Remark \ref{rem:specification} below. \cite{T18} further considered ``optimal" stratification under CARs with a pilot experiment. 

% This paper also adds to the literature on RAs with high-dimensional sparse covariates. For example,  consider the regularized regression adjustments for the inference of average, quantile, and local average treatment effects, respectively, under a sparsity condition. 

%\cite{WDTT16} also consider the case of regression adjustment with many Gaussian regressors under the simple random sampling framework. By contrast, we consider the regression adjustment under \textit{CARs} and allow the covariates to be \textit{non-Gaussian}. We further propose a consistent estimator of the asymptotic variance via cross-fitting, and an estimator that is guaranteed to improve upon the unadjusted estimator in terms of efficiency. 

% In simulations, we found that Lasso regression adjustments have good power property under sparsity condition but also size distortions when the regression is not sparse. On the other hand, our method has a comparable but slightly worse power property but controls size even sparsity does not hold. 

% {\color{red} Fifth, compare the many regressor approach with the variable selection approach. Check the related discussions in \cite{CJN18} and .}

% In contrast with the ATE estimators with the ``no-harm'' RAs, the estimator without any adjustments does not enjoy the efficiency improvement from the covariates, but also does not suffer from the efficiency degradation caused by the RAs. 

Recent studies (\citep{Bai22}, \citep{C23optimal}, and \citep{BLST23}) have highlighted the optimality of ``finely stratified'' experiments. However, this paper focuses on ``coarse'' CARs for two main reasons. First, CARs, including simple random sampling and stratified block randomization, are widely used in empirical research as demonstrated by works such as \cite{CCFNT16}, \cite{jakiela2016}, \cite{burchardi2019}, \cite{anderson2021}. Based on a survey of selected development economists, \cite{B09} reported that about 40\% of researchers have used such a design at some point. Second, ``fine stratifying'' continuous covariates at the experimental design stage may not be applicable in some scenarios because (1) sometimes, all relevant covariates are discrete, as seen in \cite{dupas2018}, where randomization is stratified by gender, occupation, and bank branch-- all discrete variables; (2) researchers may need to analyze either past experiments or those conducted by others using CARs. Our paper thus aligns with the literature on RAs under CARs by taking the randomization scheme as given and aims to develop more efficient estimators using covariate information at the data analysis stage. Therefore, it complements the strategy of ``fine stratification'' at the experimental design stage because our method applies at the data analysis stage.

The rest of this paper is organized as follows. Section \ref{sec:setup} introduces the setup. Section \ref{sec:trade-off} identifies the efficiency trade-off in RAs with many regressors. In Section \ref{sec:optimal}, we present the optimal linear combination estimator for ATE and its asymptotic properties. Section \ref{sec:fixed} discusses the properties of the optimal linear combination estimator with fixed or moderate number of regressors.  Sections \ref{sec:simulations} and \ref{sec:app} provide simulations and an empirical application, respectively. Proofs and additional figures are given in the Online Appendix.

% Based on the results from Sections \ref{sec:optimal} and \ref{sec:fixed}, Section \ref{sec:test} shows that the standard Wald test based on the optimal linear combination estimator has uniform size control.

\textbf{Notation.} For any positive integer $m$, let $0_m$, $1_m$ and $I_m$ be the $m \times 1$ vector of zeros, $m \times 1$ vector of ones, and the $m \times m$ identity matrix, respectively. Let $||\cdot||_2$ and $||\cdot||_{op}$ denote the $\ell_2$-norm for a vector and operator norm for a matrix, respectively. For a symmetric and positive semi-definite matrix $\Upsilon$, $\lambda_{\max}(\Upsilon)$ denotes the maximum eigenvalue of $\Upsilon$. We define $W^{(n)}$ as the sample of $W$'s, i.e., $W^{(n)} = \{W_i\}_{i \in [n]}$, where $[n] = 1,\cdots,n$. We write $U \stackrel{d}{=} V$ for two random variables $U$ and $V$ if they share the same distribution.

\section{Setup}
\label{sec:setup}
Potential outcomes for treated and control groups are denoted by $Y(1)$ and $Y(0)$, respectively. Treatment status is denoted by $A$, with $A=1$ indicating treated and $A=0$ untreated. The stratum indicator is denoted by $S$, based on which the researcher implements the covariate-adaptive randomization. The support of $S$ is denoted by $\mathcal{S}$, a finite set. After randomization, the researcher can observe the data $\{Y_i,S_i,A_i,X_i\}_{i \in [n]}$ where $Y_i = Y_i(1)A_i + Y_i(0)(1-A_i)$ is the observed outcome, and $X_i$ contains covariates besides $S_i$ in the dataset. The support of $X$ is denoted as $\Supp(X)$. In this paper, we allow $X_i$ and $S_i$ to be dependent. For $s \in S$, let $n_{s} = \sum_{i \in [n]}1\{S_i = s\}$, $n_{1,s} = \sum_{i \in [n]}A_i1\{S_i=s\}$, and $n_{0,s} = n_{s} - n_{1,s}$. Let $\aleph_{a,s} = \{i \in [n]: A_i = a,S_i=s\}$ denote the set of individuals in stratum $s$ with treatment status $a$ and $\aleph_{s} = \aleph_{1,s} \cup \aleph_{0,s} $. Our parameter of interest is the average treatment effect defined as $\tau = \mathbb{E}(Y(1)-Y(0))$.

% The key feature of our paper is that allow $k_n$, the dimension of $X_i$, to grow with $n$ up to the same rate as long as $k_n/n <1$. For notation simplicity, we omit the subscript of $k_n$.  

We make the following assumptions on the data generating process (DGP) and the treatment assignment rule.
\begin{ass}
	\begin{enumerate}[label=(\roman*)]
		\item $\{Y_i(1),Y_i(0),S_i,X_i\}_{i \in [n]}$ are independent and identically distributed (i.i.d.). 
		\item $\{Y_i(1),Y_i(0),X_i\}_{i \in [n]} \indep \{A_i\}_{i \in [n]}|\{S_i\}_{i \in [n]}$.
		\item Suppose $p_s = \mathbb{P}(S_i = s)$ is fixed with respect to (w.r.t.) $n$ and is positive for every $s \in \mathcal{S}$.
		\item  Let $\pi_s$ denote the target fraction of treatment for stratum $s$. Then, $c<\min_{s \in \mathcal{S}}\pi_s \leq \max_{s \in \mathcal{S}}\pi_s<1-c$ for some constant $c \in (0,0.5)$ and $\frac{D_{n,s}}{n_{s}} = o_P(1)$ for $s \in \mathcal{S}$, where $D_{n,s} = \sum_{i \in [n]} (A_i-\pi_s)1\{S_i = s\}$.
	\end{enumerate}
	\label{ass:assignment1}
\end{ass}

\begin{rem}
	\label{rem:ass1}
	Several remarks are in order. First, Assumption \ref{ass:assignment1}(i) assumes $(Y(1),Y(0),S,X)^{(n)}$ are independent. This assumption is essential for all the theoretical results presented in this paper, which implies that our findings primarily apply to cross-sectional data.    However, we still allow for $A^{(n)}$, and thus,  $Y^{(n)}$ to be cross-sectionally dependent, which will be the case under CARs. The identical distribution assumption can be relaxed by using a set of more complex notations. Second, Assumption \ref{ass:assignment1}(ii)
	implies that the treatment assignment $A^{(n)}$ are generated only based on
	strata indicators. Third, Assumption \ref{ass:assignment1}(iii) imposes that the number of strata is bounded and 
	the strata sizes are approximately balanced. It is also interesting to explore settings where the number of strata increases with the sample size. For instance, in the matched pairs design (see, e.g., \cite{BRS19}), the number of strata grows proportionally to $n$. More generally, it could scale as $n^\alpha$  for some $\alpha \in (0,1]$. Investigating these scenarios is an avenue for future research. Fourth, \cite{BCS17} show that
	Assumption \ref{ass:assignment1}(iv) holds under several specific covariate-adaptive
	treatment assignment rules such as simple random sampling (SRS), biased-coin
	design (BCD), adaptive biased-coin design (WEI) and stratified block
	randomization (SBR). For completeness, we provide brief descriptions
	below. Note that the requirement $D_{n,s}/n_s=o_P(1)$ is weaker than the assumption imposed by \cite{BCS17}, but it is the same as that imposed by \cite{BCS18} and \cite{ZZ20}.
\end{rem}

\begin{ex}
	[SRS] \label{ex:srs} Let $\{A_{i}\}_{i=1}^{n}$ be drawn independently across
	$i$ and of $\{S_{i}\}_{i=1}^{n}$ as Bernoulli random variables with success
	rate $\pi_s$, i.e., for $k=1,\ldots,n$,
	\[
	\mathbb{P}\left(  A_{k}=1\big|\{S_{i}\}_{i=1}^{n},\{A_{j}\}_{j=1}%
	^{k-1}\right)  =\mathbb{P}(A_{k}=1|S_k)=\pi_{S_k}.
	\]
\end{ex}

\begin{ex}
	[WEI] \label{ex:wei}  This design was first proposed by \cite{W78}. Let
	$n_{k-1}(S_{k}) = \sum_{i=1}^{k-1}1\{S_{i} = S_{k}\}$, $B_{k-1}(S_{k}) =
	\sum_{i=1}^{k-1}\left( A_{i} - \frac{1}{2} \right)  1\{S_{i} = S_{k}\}$, and
	\begin{align*}
	\mathbb{P}\left( A_{k} = 1\big| \{S_{i}\}_{i=1}^{k},\{A_{i}\}_{i=1}%
	^{k-1}\right)  = f\biggl(\frac{2B_{k-1}(S_{k})}{n_{k-1}(S_{k})}\biggr),
	\end{align*}
	where $f(\cdot):[-1,1] \mapsto[0,1]$ is a pre-specified non-increasing
	function satisfying $f(-x) = 1- f(x)$ and $f(x)$ is differentiable at $x=0$. Here, $\frac{B_{0}(S_{1})}{n_{0}%
		(S_{1})}$ and $B_{0}(S_{1})$ are understood to be zero.
\end{ex}

\begin{ex}
	[BCD] \label{ex:bcd}  The treatment status is determined sequentially for $1
	\leq k \leq n$ as
	\begin{align*}
	\mathbb{P}\left( A_{k} = 1| \{S_{i}\}_{i=1}^{k},\{A_{i}\}_{i=1}^{k-1}\right)
	=
	\begin{cases}
	\frac{1}{2} & \text{if }B_{k-1}(S_{k}) = 0\\
	\lambda & \text{if }B_{k-1}(S_{k}) < 0\\
	1-\lambda & \text{if }B_{k-1}(S_{k}) > 0,
	\end{cases}
	\end{align*}
	where $B_{k-1}(s)$ is defined as above and $\frac{1}{2}< \lambda\leq1$.
\end{ex}

\begin{ex}
	[SBR] \label{ex:sbr}  For each stratum, $\lfloor\pi_s n_s \rfloor$ units are
	assigned to treatment and the rest are assigned to control at random.
\end{ex}

\begin{rem}\label{rem:strong_balance}
	We note that SRS and SBR allow for the target faction of treatment to be different for different strata. For WEI and BCD, we have $\pi_s = 1/2$ for $s \in \mathcal{S}$. \cite{BCS17} further shows BCD and SBR achieve strong balance in the sense that $D_{n,s} = o_P(n^{1/2})$.    
\end{rem}

\section{Efficiency Trade-off in RAs with Many Regressors}
\label{sec:trade-off}
This section derives the joint asymptotic distribution of the fully saturated adjusted and unadjusted estimators under the assumption that the number of regressors grows no faster than the sample size. We then identify the efficiency trade-off in RAs based on the comparison of two asymptotic variances. The variance-covariance matrix will also be used to construct the optimal linear combination estimator in Section \ref{sec:optimal}.

The unadjusted estimator, denoted by $\hat \tau^{unadj}$, is the fully saturated estimator proposed by \cite{BCS18} and defined as
\begin{align}\label{eq:unadj_def}
\hat \tau^{unadj} = \frac{1}{n} \sum_{i \in [n]} \frac{A_iY_i}{\hat \pi_{S_i}} - \frac{1}{n} \sum_{i \in [n]} \frac{(1-A_i)Y_i}{1-\hat \pi_{S_i}} = \sum_{s \in \mathcal S} \hat p_s (\hat \tau_{1,s}^{unadj} - 
\hat \tau_{0,s}^{unadj}),
\end{align}
where for $s \in S$ and $a = 0,1$, $\hat \pi_s = n_{1,s}/n_s$ and $\hat \tau_{a,s}^{unadj} = \frac{1}{n_{a,s}} \sum_{i \in \aleph_{a,s}}Y_i$. 

Following their lead, the fully saturated regression adjusted estimator considered in this paper is denoted by $\hat \tau^{adj}$ and defined as 
\begin{align*}
\hat \tau^{adj} = \sum_{s \in \mathcal{S}} \hat p_s (\hat \tau_{1,s} - 
\hat \tau_{0,s}),
\end{align*}
where $\hat \tau_{a,s}$ is computed as the intercept in the OLS regression of $Y_i$ on $(1,\breve X_i)$ using all observations in $\aleph_{a,s}$, $\hat p_s = n_{s}/n$, and $\breve X_i = X_i - \overline{X}_{S_i}$ with $\overline{X}_s = \frac{1}{n_{s}}\sum_{i \in \aleph_s} X_i$. We further denote $\hat \beta_{a,s}$ as the OLS coefficient of $\breve X_i$ in the OLS regression. We note that $Y_i$ should not be demeaned in the above regression because our key parameter of interest is the intercept rather than the slope coefficient of $\breve X_i$. In fact, by the Frisch-Waugh-Lovell Theorem, the regression coefficient $\hat \beta_{a,s}$ is the same as if $X_i$ is demeaned by the cluster-treatment level mean ($\frac{1}{n_{a,s}}\sum_{i \in \aleph_{a,s}} X_i$) or not demeaned at all. Here, by demeaning $X_i$ with the cluster level mean, we effectively just change the intercept estimator $\hat \tau_{a,s}$ in the linear regression. 

% , i.e.,  
% \begin{align}
% \hat Y_i = \hat \tau_{a,s} + \hat \beta_{a,s}^\top \breve X_i, 
% \label{eq:reg}
% \end{align}

\begin{rem} \label{rem:aipw}
	We consider both adjusted and unadjusted estimators for several reasons. First, \cite{BCS18} demonstrated that the unadjusted estimator remains consistent even when the treatment assignment probability, $\pi(s)$, varies across strata, and achieves the semiparametric efficiency bound in datasets without covariates, $X_i$. Second, the adjusted estimator, $\hat \tau^{adj}$, can be expressed as an augmented inverse propensity score weighted (AIPW) estimator:
	\begin{align}\label{eq:adj_def}
	\hat \tau^{adj}  & = \frac{1}{n} \sum_{i \in [n]} \frac{A_i(Y_i - X_i^\top \hat \beta_{1,S_i})}{\hat \pi_{S_i}} - \frac{1}{n} \sum_{i \in [n]} \frac{(1-A_i)(Y_i -  X_i^\top \hat \beta_{0,S_i})}{1-\hat \pi_{S_i}} + \frac{1}{n}\sum_{i \in [n]}X_i^\top (\hat \beta_{1,S_i} - \hat \beta_{0,S_i}),
	\end{align}
	where $\hat \pi_s = n_{1,s}/n_{s}$. It's important to note that if an individual, $i$, belongs to stratum $s$, then $\breve X_i$ is defined as $X_i$ demeaned by the stratum-level mean $\frac{1}{n_s}\sum_{i \in \aleph_s}X_i$, not the stratum-treatment level mean $\frac{1}{n_{a,s}}\sum_{i \in \aleph_{a,s}}X_i$. This distinction is crucial for the AIPW representation to hold. Comparing \eqref{eq:unadj_def} with \eqref{eq:adj_def}, we see that $\hat \tau^{adj}$ is a natural extension of $\hat \tau^{unadj}$ because the latter also follows the AIPW representation but with an empty set of covariates. This also implies that $\hat \tau^{unadj}$ does not use the information of $X_i$ at all, and therefore, neither benefits nor suffers from RAs. Third, while we follow \cite{BCS18} and fully saturate the regression, the violation of the overlapping support condition is less of a concern here because the treatment is assigned by the experimenter, and CARs can force strong balance of treated and control units in each stratum, as discussed in Remark \ref{rem:strong_balance}. 
\end{rem}

\begin{rem}
	In empirical applications, it is common to regress the outcome on strata fixed effects (SFEs), treatment status, and potentially additional covariates without full interaction between strata dummies and other regressors. However, we do not consider this regression for several reasons. First, even without additional covariates, \cite{BCS18} have shown that the SFE regression is inconsistent when the treatment assignment probability $\pi(s)$ is heterogeneous across strata. Moreover, when it is consistent, it is generally less efficient than the unadjusted estimator $\hat \tau^{unadj}$ unless the CAR achieves strong balance. Second, with a fixed number of covariates, the SFE regression is not always ``no-harm,'' whereas the adjusted estimator $\hat \tau^{adj}$ is, as demonstrated by \cite{YYS22}. Last, without full interactions, the SFE regression is unlikely to be (approximately) correctly specified, even when all covariates are discrete, unless the average treatment effects are homogeneous across strata. This introduces additional complexity in our analysis, particularly when the number of covariates is proportional to the sample size.
\end{rem}

% \begin{rem}
% We note 

% It is also just the AIPW estimator with an empty set of covariates. Therefore, the unadjusted estimator 
% \end{rem}

\subsection{Asymptotic Properties}

Denote the dimension of $X$ as $k$. In the following, we derive the asymptotic properties of $\hat \tau^{adj}$ and $\hat \tau^{unadj}$ jointly in the case that $k=k_n$ increases no faster than the sample size $n$.

To clearly state our assumptions, we need to introduce more notation. Let $\breve X_{\aleph_{a,s}}$ be an $n_{a,s} \times k_n$ matrix which is constructed by stacking $\breve X_i^\top$ for $i \in \aleph_{a,s}$. We define $A_{\aleph_{a,s}} \in \Re^{n_{a,s}}$ similarly. Then, let $P_{a,s} = \breve X_{\aleph_{a,s}}  (\breve X_{\aleph_{a,s}}^\top\breve X_{\aleph_{a,s}})^{-1}\breve X_{\aleph_{a,s}}^\top$ be the projection matrix of $\breve X_{\aleph_{a,s}}$, and  $M_{a,s} = I_{n_{a,s}} - P_{a,s}$. Let $ M_{a,s, i,j}$ and $ P_{a,s, i,j}$ be the $(i,j)$th entry of $M_{a,s}$ and $ P_{a,s}$, respectively. Define
\begin{align*}
& \eps_{i}(a) = Y_i(a) - \mathbb{E}(Y_i(a)|X_i,S_i), \\
& \gamma_{a,s,n} = (1_{n_{a,s}}^\top M_{a,s} 1_{n_{a,s}})/n_{a,s},	\\
& \sigma^2_{a,s,n} = \frac{1}{n_{a,s}}\sum_{i \in \aleph_{a,s}} \left[ \left(\sum_{j \in \aleph_{a,s}} M_{a,s, i,j}\right)^2 \mathbb{E}(\eps_{i}^2(a)|X_i,S_i=s) \right], \quad \text{and}\\
& \rho_{a,s,n} = \frac{1}{n_{a,s}}\sum_{i \in \aleph_{a,s}} \left[ \left(\sum_{j \in \aleph_{a,s}} M_{a,s, i,j}\right) \mathbb{E}(\eps_{i}^2(a)|X_i,S_i=s) \right].
\end{align*}

\begin{ass}
	\begin{enumerate}[label=(\roman*)]
		\item $\max_{i \in [n]}\mathbb{E}\left[\eps_{i}^4(a)|X_i,S_i\right] =O_P(1).$ 
		\item There exists a constant $b>0$ such that $\min_{a =0,1, s \in \mathcal{S}, i \in [n]}\mathbb{E}\left[\eps_{i}^2(a)|X_i,S_i=s\right] \geq b.$
		\item $k_n/n_{a,s} \rightarrow \kappa_{a,s} \in \left[0, 1\right)$ and $\limsup_{n \rightarrow \infty}\max_{a=0,1, s\in \mathcal{S}, i \in [n]}P_{a,s,i,i} \leq 1-\delta$ almost surely for some constant $\delta \in (0,1)$.
		\item For $a=0,1$ and $s \in \mathcal{S}$, we have
		\begin{align*}
		\mathbb{E}(Y_i(a)|X_i,S_i=s) = \alpha_{a,s} + X_i^\top \beta_{a,s} + e_{i,s}(a),
		\end{align*}
		such that $\mathbb{E}(e_{i,s}^2(a)|S_i=s) = o(n^{-1})$. 
		% \item For $a=0,1$ and $s \in \mathcal{S}$, suppose $\gamma_{a,s,n}^{-1} = O_P(1)$.
		\item $ \max_{a=0,1,s \in \mathcal{S}}\max_{i \in \aleph_{a,s}}\left|\sum_{j \in \aleph_{a,s}} M_{a,s, i,j}\right| = o_P(n^{1/2})$.
	\end{enumerate}
	\label{ass:linear}
\end{ass}

\begin{rem}\label{rem:ass2}
	Assumption \ref{ass:linear}(i) and \ref{ass:linear}(ii) are mild regularity conditions. Assumption \ref{ass:linear}(iii) implies that we allow the number of covariates to diverge at the rate of the sample size $n$. Because we run the stratum-treatment level regression with an effective sample size of $n_{a,s}$, we require $\kappa_{a,s} < 1$ to avoid multicollinearity. Assumption \ref{ass:linear}(v) is the same as \citet[Assumption 3]{CJN18} and \citet[Assumption 3]{J22},\footnote{In fact, $\hat v_{i,n}$ in both \citet[Assumption 3]{CJN18} and \citet[Assumption 3]{J22} is just $\sum_{j \in \aleph_{a,s}} M_{a,s, i,j}$.} which we recommend readers refer to for more discussion. 
	% Instead, we consider the super-population framework and  (. Under this assumption, the issue of dimensionality of regressors is an inflation of asymptotic variance instead of the bias. The approximately correct specification of the linear regression (Assumption \ref{ass:linear}(iv)) 

\end{rem}

\begin{rem}
	Assumption \ref{ass:linear}(iv) implies that the linear regression is approximately correctly specified, a common assumption in analyses with many regressors, as seen in \citet[Assumption 3]{CJN18} and \citet[Assumption 3]{J22}. \cite{KSS2020} further require Assumption \ref{ass:linear}(iv) to hold with no approximation error, i.e., $e_{i,s}(a) = 0$. This condition is reasonable because the dimension $k_n$ is allowed to diverge to infinity. For example, if baseline covariates $Z_i$ are fixed-dimensional and continuous, then Assumption \ref{ass:linear}(iv) holds when $X_i$ contains sieve basis functions of $Z_i$, and $\mathbb{E}(Y_i(a)|Z_i=z,S_i=s)$ is sufficiently smooth in $z$.\footnote{In this case, we have $\mathbb E(Y_i(a)|Z_i,S_i=s) = \alpha_{a,s} + X_i^\top \beta_{a,s} + \tilde e_{i,s}(a)$ such that $\mathbb E (\tilde e_{i,s}^2(a)|S_i=s) = o(n^{-1})$. Then, we have $\mathbb E(Y_i(a)|X_i,S_i=s) = \alpha_{a,s} + X_i^\top \beta_{a,s} + e_{i,s}(a)$ where $e_{i,s}(a) = \mathbb E(\tilde e_{i,s}(a)|X_i,S_i=s)$. Then, by Jensen's inequality, we have $\mathbb E (e_{i,s}^2(a)|S_i=s) \leq \mathbb E (\tilde e_{i,s}^2(a)|S_i=s) = o(n^{-1})$.} We further detail the sieve bases and smoothness requirement implied by Assumption \ref{ass:linear}(iv) in the next remark. If all baseline covariates are categorical, Assumption \ref{ass:linear}(iv) holds when $X_i$ contains the fully saturated dummies for all categories. The approximately correct specification is also commonly assumed in the literature on causal inference with ultra-high-dimensional data, where researchers further assume the specification is sparse. See, for example, Assumption 4.2 in the seminal work by \cite{BCFH13}. While we do not consider the ultra-high dimensionality (the dimension is at most a fraction of the sample size), we do not require sparsity either. In cases when the identity of $X_i$ is predetermined by the researcher and its dimension is fixed, the true specification might not be well approximated by a linear function of $X_i$, thus potentially violating Assumption \ref{ass:linear}(iv). However, in Section \ref{sec:fixed}, we show that in this case with a fixed or moderately diverging $k$, our estimator and inference methods remain valid even if the linear regression is misspecified and the approximation error is not asymptotically negligible. Remark \ref{rem:specification} provides more discussion on this point. 
\end{rem}

\begin{rem}
	Here we provide more details on sieve bases and smoothness requirement implied by Assumption \ref{ass:linear}(iv). Suppose we have a finite $d_z$-dimensional regressor $Z \in \Re^{d_z}$ and $X = (b_{1,n}(Z), \cdots, b_{h_n,n}(Z))^\top$ is a $h_n$-dimensional vector that contains sieve bases of $Z$, where $\{b_{h,n }(\cdot)\}_{h \in [h_n]}$ are $h_n$ basis functions of a linear sieve space, denoted as $\mathbb{B}$.  Given that all $d_z$ elements of $Z$ are continuously distributed, the sieve space $\mathbb{B}$ can be constructed as follows.
	\begin{enumerate}
		\item For each element $Z^{(l)}$ of $Z$, $l=1,\cdots,d_z$, let $\mathcal{B}_l$ be the univariate sieve space of dimension $J_n$. 
		\item Let $\mathbb{B}$ be the tensor product of $\{\mathcal{B}_l\}_{l=1}^{d_x}$, which is defined as a linear space spanned by the functions $\prod_{l=1}^{d_x} b_l$, where $b_l \in \mathcal{B}_l$. The dimension of $\mathbb{B}$ is then $k_n \equiv d_z J_n$.
	\end{enumerate}
	
	We provide two examples of sieve space $\mathcal B_l$ below. More examples can be found in Section 2.3 of \cite{C07_sieve}.  
	\begin{enumerate}
		\item Polynomials. 
		$$\mathbb{B}_l = \biggl\{\sum_{j=0}^{J_n}\alpha_j z^j, z \in \Supp(Z^{(l)}), \alpha_j \in \Re \biggr\};$$
		\item Splines. 
		$$\mathbb{B}_l = \biggl\{\sum_{t=0}^{r-1}\alpha_t z^t + \sum_{j=1}^{J_n}c_j[\max(z-q_j,0)]^{r-1}, z \in \Supp(Z^{(l)}), \alpha_t, c_j \in \Re \biggr\},$$
		where the grid $-\infty=q_0 \leq q_1 \leq \cdots \leq q_{J_n} \leq q_{J_n+1} = \infty$ partitions $\Supp(Z^{(l)})$ into $J_n+1$ subsets $I_j = [q_j,q_{j+1}) \cap \Supp(Z^{(l)})$, $j=1,\cdots,J_n-1$, $I_{0} = (q_{0},q_{1}) \cap \Supp(Z^{(l)})$, and $I_{J_{n}} = (q_{J_n},q_{J_n+1}) \cap \Supp(Z^{(l)})$. 
	\end{enumerate}

	Suppose the function $f_{a,s}(z) = \mathbb{E}(Y(a) \mid Z=z, S=s)$ is $p$-th order smooth and satisfies other regularity conditions. Then, the approximation error satisfies $\mathbb{E}(e_{i,s}^2(a) \mid S_i=s) = O(k_n^{-2p/d_z})$ (see Section 2.3 of \cite{C07_sieve}). In the existing literature, the number of sieve bases typically follows $k_n = o(n^{1/2})$. For the approximation error to meet $\mathbb{E}(e_{i,s}^2(a) \mid S_i=s) = o(n^{-1})$, the smoothness order $p$ must satisfy $p > d_z$. In our case, we allow $k_n$ to be proportional to $n$, which relaxes the smoothness requirement to $p > d_z/2$. A similar observation was made by \cite{CJN18} in Section 4.
\end{rem}

\begin{rem}
	\cite{LD21} and \cite{CMO23} consider the inference of ATE under complete randomization in the finite-population setup without assuming approximate correct specification. However, for the validity of their inference methods, they require $k_n = o(n^{2/3})$, and thus, $\kappa_{a,s} = 0$. We provide more detail discussion about the dimensionality of covariates and the correct specification assumption in Remark \ref{rem:specification} in Section \ref{sec:fixed}. 
\end{rem}

\begin{ass}
	Suppose $\gamma_{a,s,n} \convP \gamma_{a,s,\infty}>0$,  $\gamma_{a,s,n}^{-2}\sigma^2_{a,s,n} \convP  \omega^2_{a,s,\infty}>0$, and $\gamma_{a,s,n}^{-1}\rho_{a,s,n} \convP \varpi_{a,s,\infty}$ for some deterministic constants  $(\gamma_{a,s,\infty},\omega_{a,s,\infty},\varpi_{a,s,\infty})$.
	\label{ass:omega}
\end{ass}

\begin{rem}\label{rem:ass3}
	If $k_n \log k_n = o(n)$ (which means $\kappa_{a,s} = 0$ for all $(a,s)$), then under general conditions on the distribution of $X_i$, it is possible to show that Assumption \ref{ass:omega} holds with $\gamma_{a,s,\infty} = 1$ and $ \omega^2_{a,s,\infty} = \varpi_{a,s,\infty} = \mathbb E (\eps_i^2(a)|S_i=s)$. See \cite{belloni2015} and \cite{CFF20} for more discussion and examples. If $\kappa_{a,s}>0$, the eigenvalues of Gram matrix $n^{-1}\breve X_{\aleph_{a,s}}^\top\breve X_{\aleph_{a,s}}$ do not converge to those of its expectation. Instead, they will converge to some spectral density which determines the quantities $(\gamma_{a,s,\infty},\omega_{a,s,\infty},\varpi_{a,s,\infty})$. Specifically, in Section \ref{sec:app_omega} in the Online Supplement, we provide detailed calculation of $\gamma_{a,s,\infty}$ when the $k_n \times 1$ vector $X_i$ given $S_i=s$ is jointly Gaussian with a covariance matrix $\Sigma_{s,n}$. Based on the Mar\v{c}enko-Pastur theorem (\citep{MP67}), we can show that 
	\begin{align*}
	\gamma_{a,s,\infty} = \frac{1}{1+ (a(1-\pi_s) + (1-a)\pi_s)\zeta_{a,s}}< 1,
	\end{align*}
	where $\zeta_{a,s} = \int_{\lambda_-}^{\lambda_+} \frac{\sqrt{(\lambda_+ - \lambda)(\lambda - \lambda_-)}}{2\pi\lambda^2} d\lambda$ and $\lambda_{\pm} = (1 \pm \sqrt{\kappa_{a,s}})^2$. If we further assume homoskedasticity in the sense that $\mathbb E (\eps_i^2(a)|X_i,S_i=s) = \mathbb E (\eps_i^2(a)|S_i=s)$, then, we have 
	\begin{align*}
	\omega^2_{a,s,\infty} = \varpi_{a,s,\infty} =  \gamma_{a,s,\infty}^{-1} \mathbb E (\eps_i^2(a)|S_i=s)>\mathbb E (\eps_i^2(a)|S_i=s).
	\end{align*}
	We conjecture that Assumption \ref{ass:omega} holds for non-Gaussian distributions of $X_i$ and heteroskedastic errors. However, establishing general preliminary conditions for this assumption requires the use of advanced tools from random matrix theory.\footnote{In random matrix theory, many delicate asymptotic properties of Gaussian ensembles were later shown to hold for much broader classes of random matrices with non-Gaussian entries. For example, see the Mar\v{c}enko-Pastur theorem (\cite{MP67}), the semicircular law (\citet[Theorem 2.4.2]{T12}), and \cite{bai08} for a comprehensive survey.} Such a discussion is beyond the scope of this paper. Finally, we emphasize that researchers do not need to know the exact values of $(\gamma_{a,s,\infty},\omega^2_{a,s,\infty},\varpi_{a,s,\infty})$ to implement our inference method.
\end{rem}

\begin{thm}
	Suppose Assumptions \ref{ass:assignment1}--\ref{ass:omega} hold. Then, we have
	\begin{align*}
	\sqrt{n}\begin{pmatrix}
	\hat \tau^{adj} - \tau \\
	\hat \tau^{unadj} - \tau
	\end{pmatrix} \convD \begin{pmatrix}
	\mathcal U^{adj}  \\
	\mathcal U^{unadj}  \\
	\end{pmatrix} + \begin{pmatrix}
	\mathcal V^{adj} \\
	\mathcal V^{unadj} \\
	\end{pmatrix}+\begin{pmatrix}
	\mathcal W \\
	\mathcal W
	\end{pmatrix},
	\end{align*}
	where  
	\begin{align*}
	\mathcal U = \begin{pmatrix}
	\mathcal U^{adj}  \\
	\mathcal U^{unadj}  \\
	\end{pmatrix} \stackrel{d}{=} \N\left(\begin{pmatrix}
	0 \\
	0
	\end{pmatrix}, \Sigma_{\mathcal U} \right),~\Sigma_{\mathcal U} = \begin{pmatrix}
	\mathbb{E}\left(\frac{\omega_{1,S_i,\infty}^2}{\pi_{S_i}}+\frac{\omega_{0,S_i,\infty}^2}{1-\pi_{S_i}}\right) & 	\mathbb{E}\left(\frac{\varpi_{1,S_i,\infty}}{\pi_{S_i}}+\frac{\varpi_{0,S_i,\infty}}{1-\pi_{S_i}}\right)\\
	\mathbb{E}\left(\frac{\varpi_{1,S_i,\infty}}{\pi_{S_i}}+\frac{\varpi_{0,S_i,\infty}}{1-\pi_{S_i}}\right) &  \mathbb{E}\left(\frac{\mathbb{E}(\eps_i^2(1)|S_i)}{\pi_{S_i}}+\frac{\mathbb{E}(\eps_i^2(0)|S_i)}{1-\pi_{S_i}}\right)
	\end{pmatrix}, 
	\end{align*}
	\begin{align*}
	\mathcal V = \begin{pmatrix}
	\mathcal V^{adj}  \\
	\mathcal V^{unadj}  \\
	\end{pmatrix}  \stackrel{d}{=}  \N\left(\begin{pmatrix}
	0 \\
	0
	\end{pmatrix},\Sigma_{\mathcal V} \right),~\Sigma_{\mathcal V} = \begin{pmatrix}
	\mathbb{E}	var(\phi_i(1) - \phi_i(0)|S_i) & \mathbb{E}	var(\phi_i(1) - \phi_i(0)|S_i)\\
	\mathbb{E}	var(\phi_i(1) - \phi_i(0)|S_i) & \mathbb{E}\left[\frac{var(\phi_i(1)|S_i)}{\pi_{S_i}} + \frac{var(\phi_i(0)|S_i)}{(1-\pi_{S_i})}\right]
	\end{pmatrix}, 
	\end{align*}
	\begin{align*}
	\mathcal W \stackrel{d}{=}\N(0, \Sigma_{\mathcal W}),~\Sigma_{\mathcal W} = var(\mathbb{E}(Y_i(1) - Y_i(0)|S_i)),
	\end{align*}
	$\phi_i(a) = \mathbb{E}(Y_i(a)|X_i,S_i)-\mathbb{E}(Y_i(a)|S_i)$ for $a=0,1$, and $(U,V,W)$ are independent.
	\label{thm:main}
\end{thm}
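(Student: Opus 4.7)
My plan is to decompose each estimator into three asymptotically orthogonal pieces aligning with $\mathcal W$, $\mathcal V$, and $\mathcal U$, derive the limit of each piece separately, and assemble the joint limit via nested conditioning. Under Assumption \ref{ass:linear}(iv), I write
\begin{align*}
Y_i(a)=\mu_{a,S_i}+\phi_i(a)+\eps_i(a),\qquad \phi_i(a)=\breve X_i^\top\beta_{a,s}+(\overline X_s-\mathbb{E}(X|S=s))^\top\beta_{a,s}+\tilde e_{i,s}(a),
\end{align*}
with $\mu_{a,s}=\mathbb{E}(Y(a)|S=s)$ and $\mathbb{E}(\tilde e_{i,s}^2(a)|S=s)=o(n^{-1})$. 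By Frisch--Waugh--Lovell together with $M_{a,s}\breve X_{\aleph_{a,s}}=0$,
\begin{align*}
\hat\tau_{a,s}-\mu_{a,s}=\overline\phi_{s}(a)+\frac{1_{n_{a,s}}^\top M_{a,s}\eps_{\aleph_{a,s}}(a)}{1_{n_{a,s}}^\top M_{a,s}1_{n_{a,s}}}+R_{a,s,n},
\end{align*}
where $\overline\phi_s(a)=n_s^{-1}\sum_{i\in\aleph_s}\phi_i(a)$ is the full-stratum (not stratum-treatment) average, produced precisely by the scalar $(\overline X_s-\mathbb{E}(X|S))^\top\beta_{a,s}$ summand of $\phi_i(a)$, and $R_{a,s,n}=o_P(n^{-1/2})$ absorbs the $\tilde e$ contributions using Assumption \ref{ass:linear}(iv) and (v) together with $\gamma_{a,s,n}\convP\gamma_{a,s,\infty}>0$. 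The unadjusted estimator obeys the simpler decomposition $\hat\tau_{a,s}^{unadj}-\mu_{a,s}=n_{a,s}^{-1}\sum_{i\in\aleph_{a,s}}(\phi_i(a)+\eps_i(a))$.

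Aggregating $\sum_s\hat p_s[\,\cdot\,]$ and subtracting $\tau$ then cleanly splits $\sqrt n(\hat\tau^{adj}-\tau)$ into a common $\mathcal W$ piece $\sqrt n\sum_s(\hat p_s-p_s)(\mu_{1,s}-\mu_{0,s})$, a $\mathcal V^{adj}$ piece that telescopes to $n^{-1/2}\sum_i(\phi_i(1)-\phi_i(0))$ via $\sum_s\hat p_s\cdot n_s^{-1}\sum_{i\in\aleph_s}=n^{-1}\sum_i$, and a $\mathcal U^{adj}$ piece formed by the residual $\eps$-weighted ratios; the unadjusted side admits the parallel split with IPW-form $\mathcal V^{unadj}$ and $\mathcal U^{unadj}$ pieces. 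The marginal limits of $\mathcal W$ and $\mathcal V^{adj}$ are immediate i.i.d.\ CLTs delivering $\mathrm{var}(\mathbb{E}(Y(1)-Y(0)|S))$ and $\mathbb{E}[\mathrm{var}(\phi(1)-\phi(0)|S)]$ respectively. For $\mathcal V^{unadj}$, I condition on $(S^{(n)},A^{(n)})$ so the $\phi_i(a)$'s are i.i.d.\ within strata, apply a stratified CLT in the style of \cite{BCS17}, and invoke Assumption \ref{ass:assignment1}(iv) to replace $n_s/n_{a,s}$ by its probability limit $1/\pi_s$ or $1/(1-\pi_s)$, recovering $\Sigma_{\mathcal V}^{unadj,unadj}$. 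The off-diagonal entry of $\Sigma_{\mathcal V}$ follows from a direct covariance calculation that uses $\mathbb{E}[A_i/\pi_{S_i}\mid X_i,S_i]=1$ in the limit to collapse the cross terms to $\mathbb{E}[(\phi_i(1)-\phi_i(0))^2\mid S_i]$.

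The main technical hurdle is the joint limit of $(\sqrt n\mathcal U^{adj},\sqrt n\mathcal U^{unadj})$: conditional on $(X^{(n)},S^{(n)},A^{(n)})$ both coordinates are linear combinations of independent $\eps_i$'s with data-dependent weights---involving $(\sum_{j\in\aleph_{a,s}}M_{a,s,i,j})/(n_{a,s}\gamma_{a,s,n})$ in the adjusted coordinate and $1/n_{a,s}$ in the unadjusted one. Assumption \ref{ass:linear}(v) bounds the maximal adjusted weight, Assumption \ref{ass:linear}(i) and (ii) deliver the fourth-moment control, and Assumption \ref{ass:linear}(iii) keeps the denominators bounded away from zero. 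I then apply Yurinskii's coupling jointly across the two coordinates and across all strata---exploiting the conditional independence of $\{\eps_i\}_{i\in\aleph_{a,s}}$ across $(a,s)$ given $(X^{(n)},S^{(n)},A^{(n)})$---and, heeding the footnote's caveat from \cite{CMU22}, express the Gaussian approximation error as an explicit vanishing sequence $\delta_n$. A direct calculation of the conditional covariance matrix using $\sigma_{a,s,n}^2$, $\rho_{a,s,n}$ and Assumption \ref{ass:omega} shows it converges in probability to the $\Sigma_{\mathcal U}$ displayed in the theorem, and the anti-concentration inequality of \cite{CCK14} then converts the coupling into weak convergence on $\mathbb R^2$. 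Joint convergence and mutual independence of $(\mathcal U,\mathcal V,\mathcal W)$ finally emerge from nested conditioning in the BCS17 style: $\mathcal W$ is $\sigma(S^{(n)})$-measurable, $\mathcal V$ is conditionally Gaussian given $(S^{(n)},A^{(n)})$ with deterministic limiting covariance, and $\mathcal U$ is conditionally Gaussian given $(X^{(n)},S^{(n)},A^{(n)})$ with deterministic limiting covariance, so the tower property yields the asserted product-form limit.
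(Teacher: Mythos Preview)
Your proposal is correct and follows essentially the same route as the paper: the same $\mathcal U_n+\mathcal V_n+\mathcal W_n$ decomposition via Frisch--Waugh--Lovell and Assumption~\ref{ass:linear}(iv), the same Yurinskii coupling conditional on $(X^{(n)},S^{(n)},A^{(n)})$ combined with the \cite{CCK14} anti-concentration bound for the $\mathcal U$ block, the same BCS17-style partial-sum argument for the $\mathcal V$ block, and the same nested-conditioning tower to deliver joint convergence with mutual independence. The paper packages the last three steps into a separate lemma and makes the BCS17 reordering explicit (defining auxiliary i.i.d.\ sequences $\{X_i^s,\eps_i^s(a)\}$ and replacing random index bounds by deterministic ones via stochastic equicontinuity), but the substance is identical to what you outline.
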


\begin{rem}\label{rem:main_theorem}
	Theorem \ref{thm:main} decomposes the variance of ATE estimators into $\Sigma_{\mathcal U}$, $\Sigma_{\mathcal V}$, and $\Sigma_{\mathcal W}$, which represent the variation of $Y(a)$ given $X_i$ and $S_i$, the variation of $X_i$ given $S_i$, and the variation of $S_i$, respectively. Denote  $\Sigma = \Sigma_{\mathcal U} + \Sigma_{\mathcal V} + \Sigma_{\mathcal W}1_21_2^\top$. We will focus on the variance components of $\Sigma$ to identify the efficiency trade-off in RAs herein. The whole matrix will be used to construct the optimal linear combination estimator in Section \ref{sec:optimal}.
	
\end{rem}

\begin{rem}\label{rem:rate}
	We establish a Berry-Esseen-type bound for the convergence of \(\tilde{\mathcal{U}}_n\) to \(\mathcal{U}\), where \(\tilde{\mathcal{U}}_n\) represents the part of \(\sqrt{n}(\hat{\tau}^{adj} - \tau, \hat{\tau}^{unadj} - \tau)'\) that captures the variation in \((Y(1), Y(0))\) given \((X, S)\). It is also standard to establish a Berry-Esseen-type bound for the convergence of the components $\mathcal V_n$ and $\mathcal W_n$, which accounts for the variation in \(X_i\) given \(S_i\) and the variation in \(S_i\), respectively. By combining these bounds, it is possible to determine the rate of Gaussian approximation for the distribution given in Theorem \ref{thm:main}, which is \(n^{-1/6}\) under standard moment conditions.
\end{rem}

\subsection{Efficiency Trade-off}
\label{subsec:trade-off}
We note that the $(1,1)$ and $(2,2)$ elements of $\Sigma_{\mathcal U}$ represent the asymptotic variances of $\mathcal U^{adj}$ and $\mathcal U^{unadj}$, respectively, which we denote as $Var(\mathcal U^{adj})$ and $Var(\mathcal U^{unadj})$. We denote $Var(\mathcal V^{adj})$ and $Var(\mathcal V^{unadj})$ in the same manner. Following the argument in Remark \ref{rem:ass3}, we can show that $Var(\mathcal U^{adj}) = Var(\mathcal U^{unadj})$ if $k_n \log k_n = o(n)$. Moreover, it can be shown that $Var(\mathcal V^{adj}) \leq Var(\mathcal V^{unadj})$, which indicates the benefit of RA from incorporating information of $X_i$. Therefore, when $k_n \log k_n = o(n)$, $\hat \tau^{adj}$ is weakly more efficient than $\hat \tau^{unadj}$, a result that the previous literature has established under more stringent conditions.

However, when $k_n$ is of the same order of $n$, $Var(\mathcal U^{adj})$ can be larger than $Var(\mathcal U^{unadj})$.  To illustrate this, let's continue with the Gaussian covariates example mentioned in Remark \ref{rem:ass3}. Suppose  for $a=0,1$ and $s\in \mathcal S$, $\pi_s = 1/2$, $\kappa_{a,s} = \kappa$, and $\eps_i(a)$ is homoskedastic in the sense that $\mathbb E (\eps_i^2(a)|X_i,S_i=s) = \mathbb E (\eps_i^2(a)|S_i=s)$, then we have
\begin{align*}
\text{Variance Inflation Factor}   \equiv \frac{\text{Var}(\mathcal U^{adj}) - \text{Var}(\mathcal U^{unadj})}{\text{Var}(\mathcal U^{unadj})} = \frac{\zeta}{2},
\end{align*}
where $\zeta = \int_{\lambda_-}^{\lambda_+} \frac{\sqrt{(\lambda_+ - \lambda)(\lambda - \lambda_-)}}{2\pi\lambda^2} d\lambda$ and $\lambda_{\pm} = (1 \pm \sqrt{\kappa})^2$.     
Figure \ref{fig:vif} plots the values of the variance inflation factor (VIF) as a function of $\kappa$. We can see that the VIF is more than $12.5\%$, $25\%$, $50\%$, and $100\%$ if $\kappa$ is greater than $1/5$, $1/3$, $1/2$, $2/3$, respectively. 

\begin{figure}[t]
	\centering
	\includegraphics[width=0.9\textwidth]{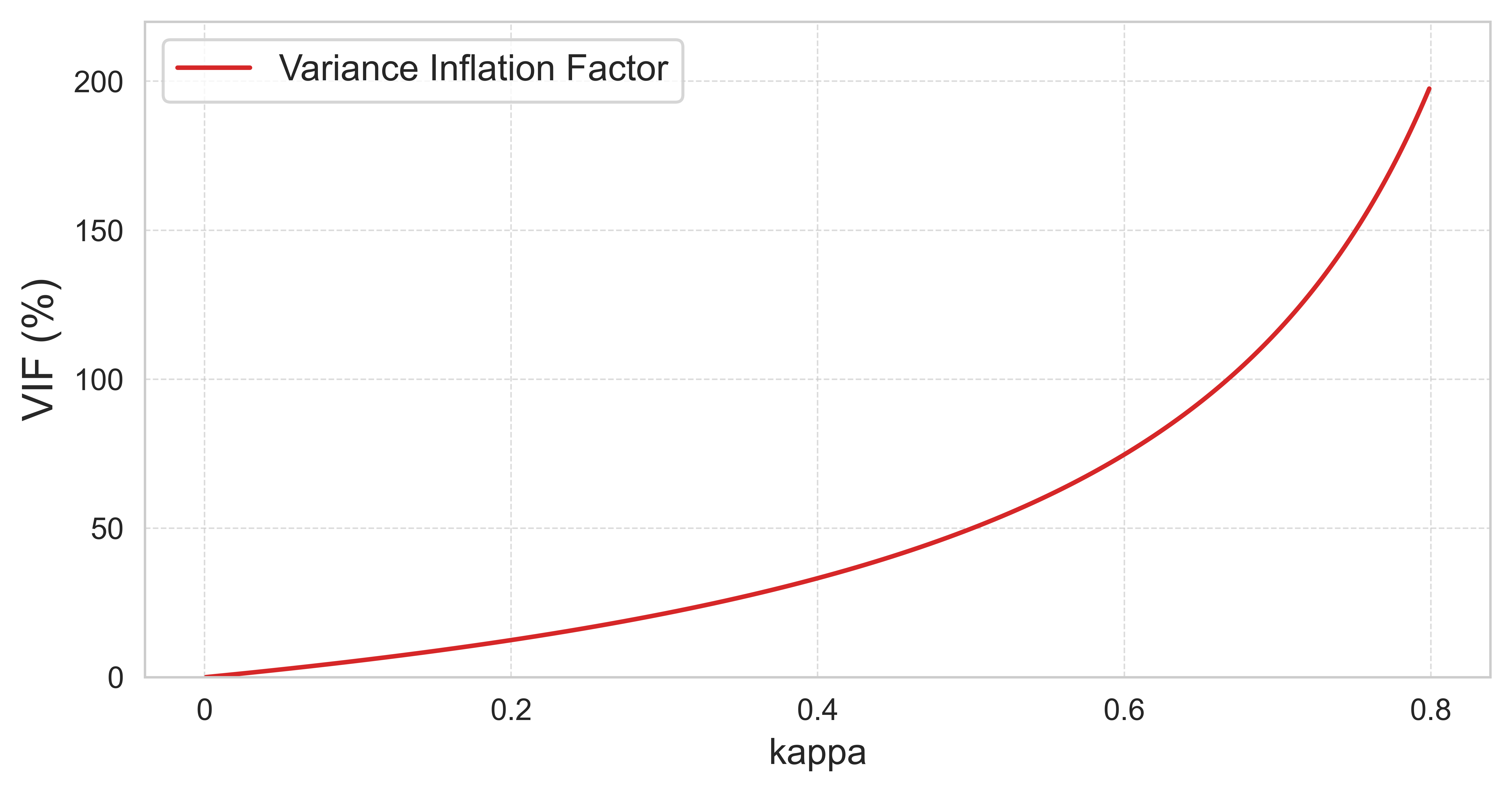}
	\caption{The Variance Inflation Factor}
	\label{fig:vif}
\end{figure}

Consequently, an efficiency trade-off in RAs between $Var(\mathcal U^{adj})$ and $Var(\mathcal V^{adj})$ can be identified in Theorem \ref{thm:main}. This trade-off can be elucidated by decomposing the  difference between the asymptotic variances of $\hat \tau^{adj}$ and $\hat \tau^{unadj}$ as 
\begin{align*}
(var(\mathcal U^{adj}) - var(\mathcal U^{unadj}))+ (var(\mathcal V^{adj}) - var(\mathcal V^{unadj})),
\end{align*}
where the second term is always non-positive but the first term can be positive when $k_n$ is of the same order of the sample size. Under some circumstances, the first term may dominate the second, resulting in the estimator with the theoretically ``no-harm'' adjustment being even less efficient than the unadjusted estimator. For example, see Model 1 of Panel (b) in Figures \ref{fig:sbr400} and \ref{fig:sbr800}. This defeats the purpose of using RAs. 

This also implies that the asymptotic variance estimator of $\hat \tau^{adj}$ in the literature, e.g., the one proposed by \cite{YYS22}, tends to under-estimate the true asymptotic variance when there are many regressors, as it neglects the VIF induced by the dimensionality. Therefore, when there are many regressors, the causal inference using such a variance estimator may over-reject under the null hypothesis, as illustrated in Section \ref{sec:simulations}.

\section{Optimal Linear Combination Estimator}
\label{sec:optimal}

As shown above, ignoring the VIF can result in size distortion in causal inference. When we account for the VIF, however, the adjusted estimator may become even less efficient than the unadjusted one due to the estimation errors incurred in RAs with many regressors. In this section, we propose an optimal linear combination estimator, denoted by $\hat \tau^*$, designed to address both issues. We show that $\hat \tau^*$ is consistent, asymptotically normal, and weakly more efficient than both $\hat \tau^{adj}$ and $\hat \tau^{unadj}$. We also present the local asymptotic power analysis for the standard Wald test based on $\hat \tau^*$. Finally, we provide a consistent estimator $\hat \Sigma$ for $\Sigma = \Sigma_{\mathcal U} + \Sigma_{\mathcal V} + \Sigma_{\mathcal W}1_21_2^\top$ in Section \ref{sec:var} below.

\subsection{Asymptotic Properties of $\hat \tau^*$}

We construct the optimal linear combination estimator as 
\begin{align}\label{eq:taustar}
\hat \tau^* = \hat w \hat \tau^{adj} + (1-\hat w)\hat \tau^{unadj},    
\end{align}
which is a weighted average of $\hat \tau^{adj}$ and $\hat \tau^{unadj}$ with the weight $\hat \omega$ minimizing the asymptotic variance. 

Denote $\hat \Sigma = \begin{pmatrix}
\hat  \Sigma_{1,1} & \hat \Sigma_{1,2} \\
\hat \Sigma_{1,2} & \hat \Sigma_{2,2}
\end{pmatrix}$  as a consistent estimator for $\Sigma \equiv \Sigma_{\mathcal U} + \Sigma_{\mathcal V} + \Sigma_{\mathcal W}1_21_2^\top = \begin{pmatrix}
\Sigma_{1,1} & \Sigma_{1,2} \\
\Sigma_{1,2} & \Sigma_{2,2}
\end{pmatrix}$  and suppose $\hat \tau^{adj}$ and $\hat \tau^{unadj}$ are not asymptotically equivalent, i.e., $ \Sigma_{1,1} +  \Sigma_{2,2} - 2  \Sigma_{1,2} > 0$. Then, the optimal weight is  
\begin{align}\label{eq:omegastar}
\hat w = \frac{\hat \Sigma_{2,2} - \hat \Sigma_{1,2}}{\hat \Sigma_{1,1} + \hat \Sigma_{2,2} - 2 \hat \Sigma_{1,2}},
\end{align}
and the corresponding estimator of the asymptotic variance for $\hat \tau^*$ is $\hat \sigma^2_* = (\hat w, 1- \hat w) \hat \Sigma (\hat w, 1- \hat w)^\top$. 

\begin{rem}\label{rem:lambda}
	When $\Sigma_{1,1} - 2\Sigma_{1,2} + \Sigma_{2,2}>0$, the adjusted estimator $\hat \tau^{adj}$ and the unadjusted estimator $\hat \tau^{unadj}$ are not asymptotically equivalent. In this case, $(w,1-w)\Sigma (w,1-w)^\top$ has a unique minimizer $w^* = \frac{\Sigma_{2,2} - \Sigma_{1,2}}{\Sigma_{1,1} - 2\Sigma_{1,2} + \Sigma_{2,2}}$, and our $\hat w$ is a consistent estimator of $w^*$. If the estimation error of the RA is asymptotically negligible and $X_i$ does not contain useful information of $(Y_i(1),Y_i(0))$ in the sense that 
	\begin{align*}
	\mathbb E(Y_i(a)|X_i,S_i) = \mathbb E(Y_i(a)|S_i), 
	\end{align*}
	then we have $\Sigma_{1,1} - 2\Sigma_{1,2} + \Sigma_{2,2}=0$ (which implies $\Sigma_{1,1} = \Sigma_{1,2} = \Sigma_{2,2}$). In this case, all linear combinations are asymptotically equivalent. If this case is a realistic concern, then we can always safeguard against the zero denominator by introducing a small positive constant $\lambda$ and constructing the final estimator as $\tilde \tau = \tilde w \hat \tau^{adj} + (1-\tilde w)\hat \tau^{unadj}$ with $ \tilde w = \frac{\hat \Sigma_{2,2} - \hat \Sigma_{1,2}}{\hat \Sigma_{1,1} - 2\hat \Sigma_{1,2} + \hat \Sigma_{2,2} + \lambda}$. It is possible to show that $\tilde \tau$ is always weakly more efficient than $\hat \tau^{unadj}$ as long as $\lambda > 0$, regardless of whether $\Sigma_{1,1} - 2\Sigma_{1,2} + \Sigma_{2,2}=0$ or $\Sigma_{1,1} - 2\Sigma_{1,2} + \Sigma_{2,2}>0$. 
	%    Finally, we note that the condition $\Sigma_{1,1} - 2\Sigma_{1,2} + \Sigma_{2,2}>0$ is testable. 
\end{rem}

% In both diverging (Theorem \ref{thm:further}) and fixed $k$ (Theorem \ref{thm:fixed_k}) cases, we give regularity conditions under which (1)  $\sqrt{n}(\hat \tau^* - \tau)/\ \hat \sigma_* \convD \N(0,1)$ so that the corresponding Wald test has the exact asymptotic size under the null and (2) $\hat \tau^*$ is weakly more efficient than both $\hat \tau^{adj}$ and $\hat \tau^{unadj}$. 

\begin{thm}
	Suppose Assumptions \ref{ass:assignment1}--\ref{ass:omega} hold. If $\Sigma_{1,1} - 2\Sigma_{1,2} + \Sigma_{2,2}>0$ and $\hat \Sigma$ is an consistent estimator for $\Sigma$, then we have 
	\begin{align*}
	\sqrt{n}\left[ (\hat w, 1- \hat w) \hat \Sigma (\hat w, 1- \hat w)^\top\right]^{-1/2}(\hat \tau^* - \tau) \convD \N(0, 1),
	\end{align*}
	where $\hat \tau^*$ and $\hat \omega$ are defined in \eqref{eq:taustar} and \eqref{eq:omegastar}, respectively. In this case, $\hat \tau^*$ is asymptotically weakly more efficient than both $\hat \tau^{adj}$ and $\hat \tau^{unadj}$.
	\label{thm:further}
\end{thm}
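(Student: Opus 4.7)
The plan is to reduce Theorem \ref{thm:further} to a routine application of the continuous mapping theorem and Slutsky's theorem, using Theorem \ref{thm:main} as the workhorse. First, I would establish that $\hat w \convP w^* := (\Sigma_{2,2} - \Sigma_{1,2})/(\Sigma_{1,1} - 2\Sigma_{1,2} + \Sigma_{2,2})$. Since the assumption $\Sigma_{1,1} - 2\Sigma_{1,2} + \Sigma_{2,2} > 0$ ensures the denominator is bounded away from zero, the map $M \mapsto (M_{2,2} - M_{1,2})/(M_{1,1} - 2 M_{1,2} + M_{2,2})$ is continuous at $\Sigma$, so $\hat w \convP w^*$ follows from $\hat \Sigma \convP \Sigma$ and continuous mapping.

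Next, Theorem \ref{thm:main} gives the joint convergence
\begin{align*}
\sqrt{n}\begin{pmatrix} \hat \tau^{adj} - \tau \\ \hat \tau^{unadj} - \tau \end{pmatrix} \convD \N(0, \Sigma).
\end{align*}
Writing $\sqrt{n}(\hat \tau^* - \tau) = (\hat w, 1-\hat w) \sqrt{n}(\hat \tau^{adj} - \tau, \hat \tau^{unadj} - \tau)^\top$ and applying Slutsky's theorem with the vector of weights converging in probability to the deterministic limit $(w^*, 1-w^*)$, I obtain $\sqrt{n}(\hat \tau^* - \tau) \convD \N(0, \sigma_*^2)$ with $\sigma_*^2 := (w^*, 1-w^*) \Sigma (w^*, 1-w^*)^\top$. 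Consistency of the variance estimator $\hat \sigma_*^2 = (\hat w, 1 - \hat w) \hat \Sigma (\hat w, 1-\hat w)^\top \convP \sigma_*^2$ follows from another application of continuous mapping together with $\hat w \convP w^*$ and $\hat \Sigma \convP \Sigma$; a final Slutsky step then yields the self-normalized limit $\sqrt{n}(\hat \sigma_*^2)^{-1/2}(\hat \tau^* - \tau) \convD \N(0,1)$.

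For the efficiency statement, I would observe that $w^*$ is, by construction, the minimizer of the scalar quadratic $g(w) := (w, 1-w) \Sigma (w, 1-w)^\top$ on $\mathbb{R}$, which one verifies by a one-line first-order condition using that the Hessian $\Sigma_{1,1} - 2\Sigma_{1,2} + \Sigma_{2,2}$ is strictly positive by hypothesis. Since $g(1) = \Sigma_{1,1}$ is the asymptotic variance of $\hat \tau^{adj}$ and $g(0) = \Sigma_{2,2}$ is that of $\hat \tau^{unadj}$, the optimality of $w^*$ immediately yields $\sigma_*^2 \leq \min(\Sigma_{1,1}, \Sigma_{2,2})$, which is the weak efficiency claim.

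There is no substantive obstacle here: everything rests on the joint CLT of Theorem \ref{thm:main} (which does the heavy lifting, including the Yurinskii coupling and anti-concentration machinery), the non-degeneracy condition $\Sigma_{1,1} - 2\Sigma_{1,2} + \Sigma_{2,2} > 0$ which validates continuous mapping for $\hat w$, and the hypothesized consistency $\hat \Sigma \convP \Sigma$ (whose verification is deferred to Section \ref{sec:var} and is therefore not part of this proof). The only mild subtlety worth noting in the write-up is that Slutsky is applied to a random linear combination of an asymptotically Gaussian vector, which is standard but relies on the probability limit $(w^*, 1-w^*)$ being deterministic.
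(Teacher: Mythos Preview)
Your proposal is correct and follows essentially the same approach as the paper's proof: establish $\hat w \convP w^* = \argmin_w (w,1-w)\Sigma(w,1-w)^\top$ by continuous mapping (using the non-degeneracy hypothesis), invoke the joint CLT from Theorem \ref{thm:main} together with Slutsky/continuous mapping to obtain the self-normalized limit, and deduce weak efficiency from the fact that $w=1$ and $w=0$ correspond to $\hat\tau^{adj}$ and $\hat\tau^{unadj}$. The paper's write-up is terser but the logic is identical.
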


% Suppose we are interested of the null hypothesis $H_0: \tau = \tau_0$ v.s. the two-sided alternative $H_A: \tau \neq \tau_0$. Based on the optimal linear combination estimator $\hat \tau^*$, we can construct 

\subsection{Local Power}
Suppose we want to test the null hypothesis $H_{0}:\tau = \tau_0$ versus the two-sided alternative $H_{1}:\tau \neq \tau_0$ with level $\alpha \in (0,1)$, and we are under the local alternative in the sense that $\tau = \tau_0 + \Delta/\sqrt{n}$. Then, the hypothesis can be rewritten as $H_0: \Delta = 0$ versus $H_1: \Delta \neq 0$. In addition, by Theorem \ref{thm:main}, we have 
\begin{align*}
\sqrt{n} \Sigma^{-1/2}\begin{pmatrix}
(\hat \tau^{adj} - \tau_0) \\
(\hat \tau^{unadj} - \tau_0) \\
\end{pmatrix} \convD \N\left( \begin{pmatrix}
a\Delta \\
b\Delta
\end{pmatrix}, I_2 \right),
\end{align*}
where $\begin{pmatrix}
a \\
b
\end{pmatrix} = \Sigma^{-1/2} \begin{pmatrix}
1\\
1
\end{pmatrix}$. Let us consider a limit experiment in which a researcher observes 
\begin{align}\label{eq:N12}
\begin{pmatrix}
N_1 \\
N_2
\end{pmatrix}   \stackrel{d}{=}\N\left( \begin{pmatrix}
a\Delta \\
b\Delta
\end{pmatrix}, I_2 \right),
\end{align}
knows the values of $(a,b)$, and wants to test $\Delta = 0$ versus the two-sided alternative. Then, by the factorization criterion, it is easy to see that $N^* = \frac{aN_1+bN_2}{\sqrt{a^2+b^2}} \sim N(\sqrt{a^2+b^2}\Delta,1)$ is a sufficient statistic for $\Delta$.  
Define the usual two-sided Wald test based on $\hat \tau^*$ as
\begin{align}\label{eq:Wald}
\mathbb W_n = 1\left\{n \left[(\hat w, 1-\hat w)\hat \Sigma (\hat w, 1-\hat w)^\top\right]^{-1} ( \hat \tau^* - \tau_0)^2 \geq \mathcal{C}_\alpha \right\},    
\end{align}
where $\mathcal{C}_{\alpha}$ is the $(1-\alpha)$ quantile of the chi-squared distribution with one degree of freedom. Then, the following corollary implies $\mathbb W_n \convD 1\{ (N^*)^2 \geq \mathcal C_\alpha\}$. In addition, $1\{ (N^*)^2 \geq \mathcal C_\alpha\}$ is known as the uniformly most powerful (UMP) unbiased test (see, for example, \citet[Section 4.2]{LR06}) in the sense that it is more powerful than any level $\alpha$ unbiased test that is a (potentially nonlinear) combination of $(N_1,N_2)$. Consequently, it means $\mathbb W_n$ is asymptotically the UMP test over a class of level $\alpha$ unbiased tests that only depend on 
\begin{align}\label{eq:hatN12}
\begin{pmatrix}
\hat N_1 \\
\hat N_2
\end{pmatrix}   = \hat \Sigma^{-1/2} \begin{pmatrix}
\sqrt{n}(\hat \tau^{adj}-\tau_0) \\
\sqrt{n}(\hat \tau^{unadj}-\tau_0)  
\end{pmatrix}.
\end{align}
This implies $\hat \tau^*$ is optimal among not only \textit{linear} but also \textit{nonlinear} combination estimators. 

To formalize the statement, we need to introduce the following notation. Suppose $(N_1,N_2)$ follow the joint distribution in \eqref{eq:N12} and let  
\begin{align*}
\Psi_{\alpha}^U = \begin{Bmatrix}
\psi(\cdot): \mathbb{E}\psi(N_1,N_2) \leq \alpha \quad \text{if $\Delta = 0$}, \\
\mathbb{E}\psi(\N_1,\N_2) \geq \alpha \quad \text{if $\Delta \neq 0$}, \\
\text{the discontinuities of $\psi(\cdot)$ has zero Lebesgue measure} 
\end{Bmatrix}    
\end{align*}
be a class of unbiased tests with level $\alpha$.

\begin{cor}\label{cor:umpu}
	Suppose all conditions in Theorem \ref{thm:further} hold, and we are under the local alternative in the sense that $\tau = \tau_0 + \Delta/\sqrt{n}$. Then, we have
	\begin{align*}
	\sqrt{n}\left[ (\hat w, 1- \hat w) \hat \Sigma (\hat w, 1- \hat w)^\top\right]^{-1/2}(\hat \tau^* - \tau_0) \convD N^* \stackrel{d}{=}\N(\sqrt{a^2+b^2}\Delta,1),
	\end{align*}
	where $\begin{pmatrix}
	a \\
	b
	\end{pmatrix} = \Sigma^{-1/2} \begin{pmatrix}
	1\\
	1
	\end{pmatrix}$. In addition,       suppose $\breve{\psi}_n$ is a generic test such that $\breve{\psi}_n = \psi(\hat N_1,\hat N_2) +o_P(1)$ for some $\psi \in \Psi_{\alpha}^U$ and the sequence $\{\breve{\psi}_n \}_{n \geq 1}$ is uniformly integrable, where $(\hat N_1,\hat N_2)$ are defined in \eqref{eq:hatN12}. Then, we have
	\begin{align*}
	\lim_{n \rightarrow \infty}    \mathbb{E}\mathbb W_n = \sup_{\psi \in \Psi_{\alpha}^U } \lim_{n \rightarrow \infty} \mathbb{E}\psi(\hat N_1,\hat N_2) \geq  \lim_{n \rightarrow \infty} \mathbb{E}\breve{\psi}_n. 
	\end{align*}
\end{cor}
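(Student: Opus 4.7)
The plan is to split the proof into two stages: (i) derive the local-alternative limit distribution of the standardized $\hat\tau^*$ from Theorem \ref{thm:main}, and then (ii) recognize the Wald rule as the $N^*$-based UMPU test in the Gaussian limit experiment, transferring the optimality back to finite $n$ via continuous mapping together with the uniform integrability hypothesis.

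For the first stage, I would start by writing $\sqrt{n}(\hat\tau^{adj}-\tau_0,\hat\tau^{unadj}-\tau_0)^\top = \sqrt{n}(\hat\tau^{adj}-\tau,\hat\tau^{unadj}-\tau)^\top + \Delta\mathbf{1}_2$ under $\tau = \tau_0+\Delta/\sqrt{n}$. Theorem \ref{thm:main} then delivers a $\N(\Delta\mathbf{1}_2,\Sigma)$ limit, and premultiplication by $\Sigma^{-1/2}$ gives $(\hat N_1,\hat N_2)^\top\convD (N_1,N_2)^\top$. Since $\hat\Sigma\convP\Sigma$ with $\Sigma_{1,1}-2\Sigma_{1,2}+\Sigma_{2,2}>0$, continuous mapping yields $\hat w\convP w^*$, and Slutsky produces
\[
\sqrt{n}\bigl[(\hat w,1-\hat w)\hat\Sigma(\hat w,1-\hat w)^\top\bigr]^{-1/2}(\hat\tau^*-\tau_0)\convD\N\!\bigl(\Delta/\sqrt{\nu^{*\top}\Sigma\nu^*},\,1\bigr),
\]
where $\nu^* = (w^*,1-w^*)^\top$. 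A direct computation at the variance minimizer yields $\nu^{*\top}\Sigma\nu^* = \det(\Sigma)/(\Sigma_{1,1}-2\Sigma_{1,2}+\Sigma_{2,2})$, whereas $a^2+b^2 = \mathbf{1}_2^\top\Sigma^{-1}\mathbf{1}_2 = (\Sigma_{1,1}-2\Sigma_{1,2}+\Sigma_{2,2})/\det(\Sigma)$, so these quantities are reciprocals; the location therefore collapses to $\sqrt{a^2+b^2}\,\Delta$, matching $N^*$.

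For the second stage, in the limit experiment \eqref{eq:N12} the law of $(N_1,N_2)$ indexed by $\Delta$ forms a one-parameter exponential family with natural statistic $aN_1+bN_2$, so $N^*$ is complete sufficient for $\Delta$, and the classical Neyman-structure argument (\citet[Section 4.2]{LR06}) identifies $\psi^*(n_1,n_2) = 1\{(an_1+bn_2)^2/(a^2+b^2)\geq\mathcal{C}_\alpha\}$ as the level-$\alpha$ UMPU test against $\Delta\neq 0$, i.e.\ $\mathbb{E}\psi^*(N_1,N_2) = \sup_{\psi\in\Psi_\alpha^U}\mathbb{E}\psi(N_1,N_2)$. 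Because $\mathcal{C}_\alpha$ is a continuity point of $(N^*)^2$, Step 1 and continuous mapping give $\mathbb{W}_n\convD 1\{(N^*)^2\geq\mathcal{C}_\alpha\}$, and boundedness yields $\mathbb{E}\mathbb{W}_n\to\mathbb{E}\psi^*(N_1,N_2)$. For every $\psi\in\Psi_\alpha^U$, its discontinuity set is Lebesgue-null by definition and therefore $(N_1,N_2)$-null by absolute continuity of the Gaussian, so $(\hat N_1,\hat N_2)\convD(N_1,N_2)$ combined with bounded convergence gives $\mathbb{E}\psi(\hat N_1,\hat N_2)\to\mathbb{E}\psi(N_1,N_2)$. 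For the generic competitor $\breve\psi_n = \psi(\hat N_1,\hat N_2)+o_P(1)$, the same continuous-mapping argument yields $\breve\psi_n\convD\psi(N_1,N_2)$, and the assumed uniform integrability upgrades this to $\mathbb{E}\breve\psi_n\to\mathbb{E}\psi(N_1,N_2)$. Chaining the three relations delivers the equality and the inequality in the statement.

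The principal obstacle is the algebraic identification $\nu^{*\top}\Sigma\nu^* = 1/(a^2+b^2)$: this is what pins down the fact that the variance-minimizing linear combination of $(\hat\tau^{adj},\hat\tau^{unadj})$, after Gaussian standardization, aligns with the sufficient-statistic direction in the limit experiment; once this is seen, the remainder is classical UMPU theory plus routine continuous-mapping-and-uniform-integrability transfer. The only delicate measure-theoretic point is verifying $(N_1,N_2)$-a.s.\ continuity of each $\psi\in\Psi_\alpha^U$, which is precisely why the definition imposes a Lebesgue-null condition on the discontinuity set.
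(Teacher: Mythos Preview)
Your proposal is correct and follows the same two-stage architecture as the paper: establish the local-alternative limit of the standardized $\hat\tau^*$, then invoke the Gaussian UMPU result from \citet[Section 4.2]{LR06} and transfer back via continuous mapping plus uniform integrability.

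The one substantive difference is in the algebra of Stage~(i). The paper verifies that the standardized limit coincides with $N^*$ by writing out $\Sigma^{1/2}$ and $\Sigma^{-1/2}$ explicitly (via the closed-form square root of a $2\times 2$ matrix) and then checking that the direction $(\tilde a,\tilde b)=(\omega^*,1-\omega^*)\Sigma^{1/2}$ satisfies $\tilde a/\tilde b=a/b$. You instead use the GLS-type identity
\[
\min_{\nu:\,\nu^\top 1_2=1}\nu^\top\Sigma\nu=\frac{1}{1_2^\top\Sigma^{-1}1_2}=\frac{1}{a^2+b^2},
\]
which pins down the mean $\Delta/\sqrt{\nu^{*\top}\Sigma\nu^*}=\sqrt{a^2+b^2}\,\Delta$ without any explicit square-root computation. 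Your route is shorter and makes transparent \emph{why} the variance-minimizing linear combination aligns with the sufficient-statistic direction; it also generalizes immediately to the $m$-estimator combination discussed in Remark~\ref{rem:three}. The paper's explicit computation, by contrast, confirms the stronger (but here unnecessary) fact that the two directions are proportional rather than merely producing the same marginal law. For Stage~(ii) the two arguments are essentially identical.
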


\begin{rem}
	In Corollary 4.1, we show that the two-sided Wald test based on the estimator $\hat \tau^*$, denoted by $\mathbb W_n$, is uniformly most powerful against two-sided local alternatives within the class of tests $\Psi_\alpha^U$. Thus, the optimality of $\mathbb W_n$ is specifically in terms of local power. The class $\Psi_\alpha^U$ consists of level $\alpha$ unbiased tests constructed using both adjusted and unadjusted estimators ($\hat \tau^{adj}$ and $\hat \tau^{unadj}$). Similar methods are used in the weak-identification robust inference literature, where combining Anderson-Rubin and Lagrange multiplier tests yields an unbiased test that maintains size control under weak identification and is uniformly most powerful against two-sided alternatives when identification is strong. For examples, see \cite{Andrews(2016)} and \cite{LWZ24}.
\end{rem}

\begin{rem}\label{rem:three}
	Our inference procedure is ``optimal'' within the class of unbiased tests constructed by both adjusted and unadjusted estimators ($\hat \tau^{adj}$ and $\hat \tau^{unadj}$). However, further improvement is possible by combining more than two estimators. For instance, researchers may have prior knowledge about a subset of adjustment regressors, denoted as $X_i^{prior}$, and use them to form another regression-adjusted estimator $\hat \tau^{prior}$.\footnote{It can be computed by replacing $X_i$ in our estimation procedure by $X_i^{prior}$.} It is feasible to optimally combine $(\hat \tau^{adj},\hat \tau^{unadj},\hat \tau^{prior})$ in the same manner. Specifically, suppose we are under the same local alternative and have 
	\begin{align*}
	\sqrt{n} \Sigma^{-1/2} \begin{pmatrix}
	\hat \tau^{adj} - \tau_0 \\
	\hat \tau^{unadj} - \tau_0 \\
	\hat \tau^{prior} - \tau_0 \\
	\end{pmatrix} \convD \begin{pmatrix}
	N_1 \\
	N_2 \\
	N_3 
	\end{pmatrix} \stackrel{d}{=} \N\left( \begin{pmatrix}
	a \Delta \\
	b \Delta \\
	c \Delta
	\end{pmatrix}, I_3 \right),
	\end{align*}
	where $(a,b,c)^\top = \Sigma^{-1/2} 1_3$, then following the same argument before Corollary \ref{cor:umpu}, the level-$\alpha$ UMP unbiased test in the limit experiment can be written as 
	\begin{align*}
	1\left\{ \frac{(aN_1 + bN_2 + cN_3)^2}{a^2+b^2+c^2} \geq \mathcal C_\alpha  \right\}.
	\end{align*}
	We can then construct a consistent cross-fit estimator $\hat \Sigma$ for the $3 \times 3$ covariance matrix $\Sigma$ following the same strategy proposed in Section \ref{sec:var} below, which leads to consistent estimators $(\hat a,\hat b, \hat c)$ for $(a,b,c)$. Further denote 
	\begin{align*}
	\sqrt{n} \hat \Sigma^{-1/2} \begin{pmatrix}
	\hat \tau^{adj} - \tau_0 \\
	\hat \tau^{unadj} - \tau_0 \\
	\hat \tau^{prior} - \tau_0 \\
	\end{pmatrix}   = \begin{pmatrix}
	\hat  N_1 \\
	\hat   N_2 \\
	\hat   N_3 
	\end{pmatrix}. 
	\end{align*}
	Then, it can be shown that 
	\begin{align*}
	1\left\{ \frac{(\hat a \hat N_1 + \hat b \hat N_2 + \hat c \hat N_3)^2}{\hat a^2+ \hat b^2+ \hat c^2} \geq \mathcal C_\alpha   \right\}
	\end{align*}
	is asymptotically UMP unbiased test, as detailed in Corollary \ref{cor:umpu}, and thus, is uniformly more powerful than the two-sided t-tests based on $(\hat \tau^{adj}, \hat \tau^{unadj},\hat \tau^{prior})$ and $\hat \tau^*$, which does not use the prior information.
	
	Theoretically, the same procedure can extend to optimally combine any fixed number of regression-adjusted estimators. However, two problems arise. Firstly, combining multiple estimators may make the covariance matrix nearly singular and cause numerical issues. This creates another interesting trade-off between the negative effect due to the estimation error of the large covariance matrix and the benefit of including more estimators. Studying the optimal number of estimators and their identities in our estimator combination procedure would be interesting. 
	
	Secondly, using data-driven methods (e.g., Lasso) to choose significant regressors may cause size distortion in the inference, especially when regressors have weak but dense effects, as in Models 1-3 in our simulation. It means even there is a well-defined globally optimal combination estimator, it may not be achievable. Due to these problems, we focus on an inference procedure that combines $\hat \tau^{adj}$ and $\hat \tau^{unadj}$ only. 
	
\end{rem}

\subsection{Covariance Matrix Estimator}
\label{sec:var}
We now propose a consistent estimator $\hat \Sigma$ for $\Sigma =\Sigma_{\mathcal U} + \Sigma_{\mathcal V} + \Sigma_{\mathcal W}1_21_2^\top$. We first note that 
$$\Sigma_{2,2} = \mathbb E\left[\frac{var(Y_i(1)|S_i)}{\pi_{S_i}} + \frac{var(Y_i(0)|S_i)}{1-\pi_{S_i}}\right] + \Sigma_{\mathcal W}$$ is the asymptotic variance of the unadjusted estimator and does not depend on $X_i$. Now let
\begin{align*}
\hat \Sigma_{2,2} =& \frac{1}{n} \sum_{s \in \mathcal{S}}
\left[
\sum_{i \in \aleph_{1,s}} \left(\frac{Y_i}{\hat{\pi}_s} - \frac{1}{n_{1,s}}\sum_{i \in \aleph_{1,s}}\frac{Y_i}{\hat{\pi}_s} \right)^2 
+
\sum_{i \in \aleph_{0,s}} \left(\frac{Y_i}{1-\hat{\pi}_s} - \frac{1}{n_{0,s}}\sum_{i \in \aleph_{0,s}}\frac{Y_i}{1-\hat{\pi}_s} \right)^2 
\right]
\\
&+ \sum_{s \in \mathcal{S}}\hat p_s\left(\frac{1}{n_{1,s}}\sum_{i \in \aleph_{1,s}}Y_i - \frac{1}{n_{0,s}}\sum_{i \in \aleph_{0,s}}Y_i -\hat \tau^{unadj}\right)^2.     
\end{align*}
This estimator is consistent as long as Assumption \ref{ass:assignment1} holds. See, for example, \cite{BCS18} for details. Therefore, it suffices to construct consistent estimators for the $(1,1)$ and $(1,2)$ elements of $ \Sigma_{\mathcal U}$ and $ \Sigma_{\mathcal V}$, and $ \Sigma_{\mathcal W}$. 

Denote the residual of the regression and its leave-one-out version are denoted as $\hat \eps_{a,s,i} = Y_i - \hat \tau_{a,s} - \breve{X}_i' \hat \beta_{a,s}$ and $\acute \eps_{a,s,i} = \hat \eps_{a,s,i}/M_{a,s,i,i}$ for $i \in \aleph_{a,s}$, respectively. Then, we construct our estimator $\hat \Sigma_{\mathcal U}$ following the cross-fit approach developed by \cite{J22}. It is also possible to construct the estimator based on the sample splitting approach proposed by \cite{CJN18}.  The cross-fit estimator, as shown below, is asymptotically valid as long as $\kappa_{a,s}<1$, but is not guaranteed to be positive definite in finite sample. In contrast, the sample splitting estimator is guaranteed to be positive semidefinite but requires $\kappa_{a,s}<1/2$. When both of them are valid, they are asymptotically equivalent. Interested readers can refer to \cite{J22} for more discussions. For the rest of the paper, we focus on cross-fit estimators of the $(1,1)$ and $(1,2)$ elements of $\hat \Sigma_{\mathcal U}$, which is defined as 
% \begin{align*}
% 	\hat \Sigma_{\mathcal U} = \begin{pmatrix}
% \sum_{s \in \mathcal{S}} \sum_{a=0,1} \frac{n_s^2}{n n_{a,s}} \hat \omega_{a,s}^2	& \sum_{s \in \mathcal{S}} \sum_{a=0,1} \frac{n_s^2}{n n_{a,s}} \hat \varpi_{a,s} \\
% \sum_{s \in \mathcal{S}} \sum_{a=0,1} \frac{n_s^2}{n n_{a,s}} \hat \varpi_{a,s}  & \sum_{s \in \mathcal{S}} \sum_{a=0,1} \frac{n_s^2}{n n_{a,s}} \hat \nu_{a,s}^2 
% 	\end{pmatrix},
% \end{align*}
\begin{align*}
\hat \Sigma_{\mathcal U} = \begin{pmatrix}
\sum_{s \in \mathcal{S}} \sum_{a=0,1} \frac{n_s^2}{n n_{a,s}} \hat \omega_{a,s}^2	& \sum_{s \in \mathcal{S}} \sum_{a=0,1} \frac{n_s^2}{n n_{a,s}} \hat \varpi_{a,s} \\
\sum_{s \in \mathcal{S}} \sum_{a=0,1} \frac{n_s^2}{n n_{a,s}} \hat \varpi_{a,s}  & \bigcdot 
\end{pmatrix},
\end{align*}
where 
\begin{align*}
& \hat \omega_{a,s}^2 = \frac{1}{n_{a,s}}\gamma_{a,s,n}^{-2} \sum_{i \in \aleph_{a,s}} \left(\sum_{j \in \aleph_{a,s}} M_{a,s,i,j}\right)^2 Y_i \acute \eps_{a,s,i},  \\
& \hat \varpi_{a,s} = \frac{1}{n_{a,s}}\gamma_{a,s,n}^{-1} \sum_{i \in \aleph_{a,s}} \left(\sum_{j \in \aleph_{a,s}} M_{a,s,i,j}\right) Y_i \acute \eps_{a,s,i}.
% & \hat \nu_{a,s}^2 = \frac{1}{n_{a,s}} \sum_{i \in \aleph_{a,s}} Y_i \acute \eps_{a,s,i}.
\end{align*}

\begin{rem}
	\cite{CJN18_ET} pointed out that different numbers of covariates can be used for point estimation and for variance construction when the covariates possess approximation power (such as a series basis). Since $\Sigma_{\mathcal U}$ is analogous to the variance of the intercept in a linear regression with many regressors, this same idea can be applied to the construction of $ \hat \omega_{a,s}^2$ and $ \hat \varpi_{a,s}$. Moreover, if the error term $\eps_i(a)$ is homogeneous within a stratum (i.e., $\mathbb{E} (\eps_i^2(a)|X_i,S_i) = \mathbb{E} (\eps_i^2(a)|S_i)$), then following \cite{CJN18_ET}, we can define
	\begin{align*}
	\hat \Sigma_{\mathcal U}^{HO} = \begin{pmatrix}
	\sum_{s \in \mathcal{S}} \sum_{a=0,1} \frac{n_s^2}{n n_{a,s}}  \gamma_{a,s,n}^{-1} \hat {\mathbb V}_{a,s} 	& \sum_{s \in \mathcal{S}} \sum_{a=0,1} \frac{n_s^2}{n n_{a,s}}\hat {\mathbb V}_{a,s} \\
	\sum_{s \in \mathcal{S}} \sum_{a=0,1} \frac{n_s^2}{n n_{a,s}} \hat {\mathbb V}_{a,s}  & \bigcdot 
	\end{pmatrix},
	\end{align*}
	where 
	\begin{align*}
	\hat {\mathbb V}_{a,s} = \frac{1}{n_{a,s}-1 -\kappa_n} \sum_{i \in \aleph_{a,s}} \hat \eps_i^2(a), 
	\end{align*}
	and show that $\hat \Sigma_{\mathcal U}^{HO} \convP \Sigma_{\mathcal U}$. 
\end{rem}

\begin{rem}
	We can also consider the HC3 variant of $\hat \omega_{a,s}^2$, denoted as $\tilde \omega_{a,s}^2$, which is defined as:
	\begin{align*}
	\tilde \omega_{a,s}^2 = \frac{1}{n_{a,s}}\gamma_{a,s,n}^{-2} \sum_{i \in \aleph_{a,s}} \left(\sum_{j \in \aleph_{a,s}} M_{a,s,i,j}\right)^2 M_{a,s,i,i}^{-2} \hat \eps^2_{a,s,i}.
	\end{align*}
	\cite{CJN18} demonstrated that $\tilde \omega_{a,s}^2$ is asymptotically conservative, meaning:
	\begin{align*}
	\liminf_{n \rightarrow \infty} \tilde \omega_{a,s}^2 \geq \omega_{a,s,\infty}^2.  
	\end{align*}
	With this construction, along with the $(1,1)$ elements of $\hat \Sigma_{\mathcal V}$ and $\hat \Sigma_{\mathcal W}$ outlined below, we can develop an asymptotically conservative variance estimator for the adjusted estimator $\hat \tau^{adj}$.
	
\end{rem}

To define $\hat \Sigma_{\mathcal V}$, we note that $\mathbb{E}(Y_i(a)|X_i,S_i)-\mathbb{E}(Y_i(a)|S_i)$ can be approximated by $\breve{X}_i^\top \beta_{a,s}$ for $i \in \aleph_{a,s}$, where $\beta_{a,s}$ is defined in Assumption \ref{ass:linear}(iv). \cite{KSS2020} considered the estimation and inference for quadratic functions of the coefficient in one linear regression with many regressors. However, in our setting, $\Sigma_{\mathcal V}$ depends on multiple linear coefficients $\beta_{a,s}$ which are estimated from different strata of observations, and each regression model may only be approximately linear. Therefore, the estimator $\hat \Sigma_{\mathcal V}$ includes both the quadratic and cross-product terms of the estimated coefficients $\hat \beta_{a,s}$. Denote $\Gamma_{a,s} = \sum_{i \in \aleph_{a,s}}\breve{X}_i\breve{X}_i^\top$ and $\Gamma_{s} = \sum_{i \in \aleph_{s}}\breve{X}_i\breve{X}_i^\top$. Following the spirit of \cite{KSS2020}, our estimator of the $(1,1)$ and $(1,2)$ elements of $\hat \Sigma_{\mathcal V}$ is defined as 
\begin{align*}
\hat \Sigma_{\mathcal V} = \begin{pmatrix}
\hat \Sigma_{\mathcal V}^{adj} &  \hat \Sigma_{\mathcal V}^{adj} \\
\hat \Sigma_{\mathcal V}^{adj} & \bigcdot
\end{pmatrix},
\end{align*}
where 
\begin{align*}
& \hat \Sigma_{\mathcal V}^{adj} = \sum_{s \in \mathcal{S}}\hat p_s \left[\sum_{a=0,1}\left(\frac{1}{n_{a,s}} \hat \beta_{a,s}^\top \Gamma_{a,s} \hat \beta_{a,s} - \frac{1}{n_{a,s}}\sum_{i \in \aleph_{a,s}}P_{a,s,i,i} Y_i \acute \eps_{a,s,i}\right) - \frac{2}{n_s}\hat \beta_{1,s}^\top \Gamma_{s} \hat \beta_{0,s} \right].
% & \hat \Sigma_{\mathcal V}^{u} = \sum_{s \in \mathcal{S}} \sum_{a=0,1} \frac{n_s^2 }{n n_{a,s}}\left(\frac{1}{n_{a,s}} \hat \beta_{a,s}^\top \Gamma_{a,s} \hat \beta_{a,s} - \frac{1}{n_{a,s}}\sum_{i \in \aleph_{a,s}}P_{a,s,i,i} Y_i \acute \eps_{a,s,i}\right).
\end{align*}

The estimator of $\Sigma_{\mathcal W}$ is standard: 
\begin{align*}
\hat \Sigma_{\mathcal W} = \sum_{s \in \mathcal{S}}\hat p_s \left(\hat \tau_{1,s} - \hat \tau_{0,s} - \hat \tau^{adj} \right)^2. 
\end{align*}

\begin{ass}
	Suppose $\max_{a=0,1,s \in \mathcal{S}} \left\Vert \Gamma_s^{1/2} \Gamma_{a,s}^{-1}\Gamma_s^{1/2}\right\Vert_{op} = o_P(n). $
	% \begin{align*}
	% \max_{i \in [n]}(\mathbb{E}(Y_i(a)|X_i,S_i)-\mathbb{E}(Y_i(a)|S_i))^2 = O_P(n^{1/2}) \quad and \quad \max_{a=0,1,s \in \mathcal{S}} \left\Vert \Gamma_s^{1/2} \Gamma_{a,s}^{-1}\Gamma_s^{1/2}\right\Vert_{op} = o_P(n).    
	% \end{align*}
	\label{ass:variance}
\end{ass}

\begin{rem}\label{rem:ass4}
	By the random matrix theory, under general regularity conditions, one can show that all the eigenvalues of $\Gamma_s/n_s$ and $\Gamma_{a,s}/n_{a,s}$ are bounded and bounded away from zero, even if $\kappa_{a,s}>0$. This then implies $\max_{a=0,1,s \in \mathcal{S}} \left\Vert \Gamma_s^{1/2} \Gamma_{a,s}^{-1}\Gamma_s^{1/2}\right\Vert_{op} = O_P(1)$, and thus, Assumption \ref{ass:variance} holds. 
\end{rem}

\begin{thm}
	Suppose Assumptions \ref{ass:assignment1}--\ref{ass:variance} hold. Let 
	$$\hat \Sigma = \begin{pmatrix}
	\sum_{s \in \mathcal{S}} \sum_{a=0,1} \frac{n_s^2}{n n_{a,s}} \hat \omega_{a,s}^2 +    \hat \Sigma_{\mathcal V}^{adj} + \hat \Sigma_{\mathcal W} & \sum_{s \in \mathcal{S}}\sum_{a=0,1} \frac{n_s^2}{n n_{a,s}} \hat \varpi_{a,s} +    \hat \Sigma_{\mathcal V}^{adj} + \hat \Sigma_{\mathcal W}   \\
	\sum_{s \in \mathcal{S}}\sum_{a=0,1} \frac{n_s^2}{n n_{a,s}} \hat \varpi_{a,s} +    \hat \Sigma_{\mathcal V}^{adj} + \hat \Sigma_{\mathcal W}  & \hat \Sigma_{2,2}
	\end{pmatrix}.$$ Then, we have
	\begin{align*}
	\hat \Sigma \convP \Sigma. 
	\end{align*}
	\label{thm:variance}
\end{thm}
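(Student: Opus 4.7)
My plan is to prove $\hat\Sigma\convP\Sigma$ entry by entry, leveraging the decomposition $\Sigma=\Sigma_{\mathcal U}+\Sigma_{\mathcal V}+\Sigma_{\mathcal W}\,1_21_2^\top$ that is built into the estimator. The $(2,2)$ entry $\hat\Sigma_{2,2}\convP\Sigma_{2,2}$ is precisely the unadjusted-variance estimator analysed in \cite{BCS18} and so is handled by citation. For $\hat\Sigma_{\mathcal W}$, I use consistency of $\hat\tau_{a,s}$ (a sample mean within $\aleph_{a,s}$ that is i.i.d.\ conditional on $S^{(n)},A^{(n)}$) together with consistency of $\hat\tau^{adj}$, which is implicit in the proof of Theorem \ref{thm:main}. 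The substantive work then concentrates on two claims: (a) $\hat\omega_{a,s}^2\convP\omega_{a,s,\infty}^2$ and $\hat\varpi_{a,s}\convP\varpi_{a,s,\infty}$ for every $(a,s)$; and (b) $\hat\Sigma_{\mathcal V}^{adj}\convP\mathbb{E}\,\mathrm{var}(\phi_i(1)-\phi_i(0)|S_i)$.

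For (a) I condition on $(S^{(n)},A^{(n)})$, under which $\{(X_i,Y_i)\}_{i\in\aleph_{a,s}}$ is an i.i.d.\ sample from the law of $(X,Y(a))|S=s$. Using the standard leave-one-out identity $\acute\eps_{a,s,i}=Y_i-\hat\tau_{a,s,(-i)}-\breve X_i^\top\hat\beta_{a,s,(-i)}$, the conditional independence of $\eps_i(a)$ from the leave-$i$-out estimators yields $\mathbb{E}[Y_i\acute\eps_{a,s,i}\mid X^{(n)},S^{(n)},A^{(n)}]=\mathbb{E}(\eps_i^2(a)|X_i,S_i=s)$ plus a term controlled by $e_{i,s}(a)$, which is asymptotically negligible under Assumption \ref{ass:linear}(iv). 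Plugging this back reduces the leading parts of $\hat\omega_{a,s}^2$ and $\hat\varpi_{a,s}$ to $\gamma_{a,s,n}^{-2}\sigma_{a,s,n}^2$ and $\gamma_{a,s,n}^{-1}\rho_{a,s,n}$, whose probability limits are $\omega_{a,s,\infty}^2$ and $\varpi_{a,s,\infty}$ by Assumption \ref{ass:omega}. The stochastic remainder I control by a variance computation in the spirit of \cite{J22}, exploiting the fourth-moment bound in Assumption \ref{ass:linear}(i), the projection-diagonal bound in Assumption \ref{ass:linear}(iii), and the row-sum bound on $M_{a,s}$ in Assumption \ref{ass:linear}(v).

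For (b), under Assumption \ref{ass:linear}(iv) we have $\phi_i(a)\approx\breve X_i^\top\beta_{a,s}$, so the target equals $\sum_s p_s\bigl[\beta_{1,s}^\top\Sigma_{X|s}\beta_{1,s}+\beta_{0,s}^\top\Sigma_{X|s}\beta_{0,s}-2\beta_{1,s}^\top\Sigma_{X|s}\beta_{0,s}\bigr]$ with $\Sigma_{X|s}=\mathbb{E}(\breve X\breve X^\top|S=s)$. Decomposing $\hat\beta_{a,s}=\beta_{a,s}+(\breve X_{\aleph_{a,s}}^\top\breve X_{\aleph_{a,s}})^{-1}\breve X_{\aleph_{a,s}}^\top(\eps_{\aleph_{a,s}}(a)+e_{\aleph_{a,s}}(a))$ gives $\frac{1}{n_{a,s}}\hat\beta_{a,s}^\top\Gamma_{a,s}\hat\beta_{a,s}=\frac{1}{n_{a,s}}\beta_{a,s}^\top\Gamma_{a,s}\beta_{a,s}+\frac{1}{n_{a,s}}\eps_{\aleph_{a,s}}(a)^\top P_{a,s}\eps_{\aleph_{a,s}}(a)+\text{cross terms}$. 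The first piece converges to $\beta_{a,s}^\top\Sigma_{X|s}\beta_{a,s}$ by a within-stratum LLN on $\breve X\breve X^\top$; the quadratic-in-$\eps$ bias $\approx\frac{1}{n_{a,s}}\sum_i P_{a,s,i,i}\mathbb{E}[\eps_i^2(a)|X_i,S_i=s]$ is cancelled, to leading order, by the KSS-style correction $\frac{1}{n_{a,s}}\sum_i P_{a,s,i,i}Y_i\acute\eps_{a,s,i}$ using the leave-one-out unbiasedness from step (a); and the cross terms are mean-zero with a variance bounded by the same conditions. For the cross-arm term $\hat\beta_{1,s}^\top\Gamma_s\hat\beta_{0,s}/n_s$, because $\hat\beta_{1,s}$ and $\hat\beta_{0,s}$ are built from the disjoint subsamples $\aleph_{1,s}$ and $\aleph_{0,s}$, their stochastic parts are conditionally independent and no ``same-observation-squared'' bias arises, so no analogous correction is needed.

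The main obstacle is the cross-arm bilinear form. Since $\hat\beta_{a,s}-\beta_{a,s}$ is of order $O_P(\sqrt{k_n/n})$, its contribution to the bilinear form enters at order $O_P(k_n/n)$, which is not $o_P(1)$, so naive term-by-term expansion does not collapse. My strategy is to rewrite $\hat\beta_{1,s}^\top\Gamma_s\hat\beta_{0,s}$ using $\Gamma_s=\Gamma_{1,s}+\Gamma_{0,s}$, absorb each piece $\hat\beta_{a,s}^\top\Gamma_{a,s}$ as a projection of $Y_{\aleph_{a,s}}$ onto $\breve X_{\aleph_{a,s}}$, and then bound the residual bilinear interactions in operator norm via Assumption \ref{ass:variance}; the condition $\|\Gamma_s^{1/2}\Gamma_{a,s}^{-1}\Gamma_s^{1/2}\|_{op}=o_P(n)$ is exactly what is needed to transfer Gram-matrix-based bounds between the pooled and the arm-specific samples. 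Once this transfer is in hand, independence across arms kills any residual cross-bias and the remaining stochastic terms are mean-zero with variance controlled by the same fourth-moment and projection-diagonal bounds used in step (a). Assembling the limits of the three components yields $\hat\Sigma\convP\Sigma$.
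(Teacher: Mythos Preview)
Your proposal is correct and mirrors the paper's proof: the same three-block decomposition, the appeal to \cite{J22} for $\hat\omega_{a,s}^2$, the explicit leave-one-out expansion for $\hat\varpi_{a,s}$, the KSS-type bias cancellation for the own-arm quadratic form (which the paper isolates as a separate lemma), and the use of Assumption \ref{ass:variance} together with cross-arm independence to kill the bilinear remainder in $\hat\beta_{1,s}^\top\Gamma_s\hat\beta_{0,s}/n_s$. The only cosmetic differences are that the paper handles the cross-arm term by a direct nine-term expansion of $\hat\beta_{1,s}^\top\Gamma_s\hat\beta_{0,s}$ rather than first splitting $\Gamma_s=\Gamma_{1,s}+\Gamma_{0,s}$, and that $\hat\tau_{a,s}$ is not literally a within-$\aleph_{a,s}$ sample mean (the demeaning of $X_i$ is at the stratum, not arm, level), so its consistency should be read off the linear expansion in the proof of Theorem \ref{thm:main} rather than an i.i.d.\ average argument.
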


Indeed, stratum-specific RAs reduce the effective sample size, exacerbating the dimensionality issue. When $k_n$ exceeds the effective sample size $n_{a,s}$, researchers can employ Ridge-regularized linear regression for adjustment. Studying the statistical properties of the corresponding regression-adjusted ATE estimator is an interesting avenue for future research.

%We plan to investigate the statistical properties of the corresponding regression-adjusted ATE estimator in our future research.

%It is also interesting to consider regression adjustment with many regressors under ``finely stratified" experiments following \cite{BJRSZ23}. 

% \begin{rem}\label{rem:Sigma22}
%  All the results below still hold if we replace the $(2,2)$ element of $\hat \Sigma$ by $\tilde \Sigma_{2,2}$. 
% \end{rem}

\section{Fixed or Moderate Number of Regressors}
\label{sec:fixed}
In this section, we consider the CARs specified in Assumption \ref{ass:assignment1} with a fixed or moderately diverging dimension of $X_i$, i.e., $k_n$. We then introduce another set of regularity conditions to replace Assumptions \ref{ass:linear}--\ref{ass:variance}. Under these new conditions, the Wald test based on the \textit{same} estimator $\hat \tau^*$ - constructed using the \textit{same} $\hat \tau^{adj}$, $\hat \tau^{unadj}$, and $\hat \Sigma$ as defined above - maintains exact asymptotic size under the null and is weakly more powerful than the Wald test based on the unadjusted estimator $\hat \tau^{unadj}$. 

Particularly, in this dimension regime, there's no need to assume that the RAs are approximately correctly specified, as in Assumption \ref{ass:linear}(iv). So the results here are not merely special cases of those in Section \ref{sec:trade-off}. Additionally, they are also novel to the literature of RAs with fixed number of regressors (e.g., \cite{YYS22}). This is because the cross-fit estimator of the covariance matrix and the optimal linear combination estimator, motivated by our high-dimensional analysis, are new.

% In Section \ref{sec:test} of the Online Supplement, we establish the uniform size control of our inference method across asymptotic regimes with fixed or diverging (up to the sample fraction) numbers of regressors. This means that researchers can use our method without choosing the asymptotic regime.

\begin{ass}	\begin{enumerate}[label=(\roman*)]
		\item 
		Denote the dimension of $X_i$ as $k_n$ and suppose $\max_{i \in [n]}||\breve X_i||_\infty = O_P(\xi_n)$ such that $k_n^2 \xi_n^2 \log(k_n) = o(n)$. In addition, there exist constants $c,C$ such that
		\begin{align*}
		0 < c \leq \lambda_{\min}\left(\frac{1}{n_{a,s}}\sum_{i \in \aleph_{a,s}}\breve X_i \breve X_i^\top\right) \leq  \lambda_{\max}\left(\frac{1}{n_{a,s}}\sum_{i \in \aleph_{a,s}}\breve X_i \breve X_i^\top\right) \leq C<\infty
		\end{align*}
		and 
		\begin{align*}
		0 < c \leq \lambda_{\min}\left(Var( X_i |S_i=s)\right) \leq  \lambda_{\max}\left(Var( X_i |S_i=s)\right) \leq C<\infty.
		\end{align*}
		\item Suppose $ \sum_{s \in \mathcal{S}}\frac{p_s}{\pi_s (1-\pi_s)} Var(X_i^\top \overline{\beta}_s^*|S_i=s) \geq c> 0$ for some constant $c$, where $\overline{\beta}_s^* = (1-\pi_s)\beta_{1,s}^* + \pi_s \beta_{0,s}^*$ and $\beta_{a,s}^*= Var(X_i\mid S_i=s)^{-1} Cov(X_i,Y_i(a)|S_i=s).$
	\end{enumerate}
	\label{ass:reg_fixed}
\end{ass}

\begin{rem}\label{rem:ass5}
	Assumption \ref{ass:reg_fixed}(i) allows the dimension of $\breve X_i$ to be either fixed or diverging with the sample size. If $\xi_n$ is bounded,\footnote{For example, the elements of the $k_n \times 1$ vector $X_i$ are independent and bounded.} then the rate requirement boils down to $k_n^2 \log(k_n) = o(n)$ or $k_n = o(n^{1/2})$ up to a logarithmic factor.       
\end{rem}

\begin{rem}\label{rem:ass52}
	Assumption \ref{ass:reg_fixed}(ii) means the covariates have non-negligible prediction power of the outcome in at least one of the strata. If this condition fails, then the adjusted and unadjusted estimators are asymptotically equivalent. The combination estimator will also be equivalent to them asymptotically if we apply the same strategy mentioned in Remark \ref{rem:lambda} to safeguard against the degeneracy.         
\end{rem}

\begin{thm}
	Suppose Assumptions \ref{ass:assignment1} and \ref{ass:reg_fixed} holds. Then, we have 
	\begin{align*}
	\Omega^{-1/2}  \begin{pmatrix}
	\sqrt{n}(\hat \tau^{adj} - \tau) \\
	\sqrt{n}(\hat \tau^{unadj} - \tau) 
	\end{pmatrix} \convD \N\left(0_2, I_2 \right), 
	\end{align*}
	\begin{align*}
	\Omega & = \left\{ \mathbb{E}\left[\frac{Var(Y_i(1)|X_i,S_i)}{\pi_{S_i}} + \frac{Var(Y_i(0)|X_i,S_i)}{1-\pi_{S_i}}\right] + \mathbb{E}(m_{1}(X_i,S_i) - m_{0}(X_i,S_i)-\tau)^2 \right\}1_2 1_2^\top \\
	& + \sum_{s \in \mathcal{S}}\frac{p_s}{\pi_s (1-\pi_s)}\begin{pmatrix}
	V_s & V_s \\
	V_s & V_s'
	\end{pmatrix},
	\end{align*}
	and 
	\begin{align*}
	\hat \Sigma^{-1}\Omega \convP I_2,
	\end{align*}
	where $m_a(x,s) = \mathbb E (Y_i(a)|X_i=x,S_i=s)$,  
	\begin{align*}
	%\beta_{a,s}^* = \mathbb{E}\left[(X_i - \mathbb{E}(X_i \mid S_i=s))(X_i - \mathbb{E}(X_i \mid S_i=s))^\top \right]^{-1}\mathbb{E}\left[(X_i - \mathbb{E}(X_i \mid S_i=s))Y_i(a) \right],}
	%\beta_{a,s}^* =,\\
	& V_s = Var((1-\pi_s) m_1(X_i,s) + \pi_s m_0(X_i,s) - X_i^\top \overline{\beta}_s^* \mid S_i=s), \\
	& V_s' = Var((1-\pi_s) m_1(X_i,s) + \pi_s m_0(X_i,s) \mid S_i=s),
	\end{align*}

	In addition, we have
	\begin{align*}
	\sqrt{n} \left[(\hat w, 1-\hat w)\hat \Sigma (\hat w, 1-\hat w)^\top\right]^{-1/2} ( \hat \tau^* - \tau) \convD \N(0,1).
	\end{align*}
	In this case, $\hat \tau^*$ is asymptotically equivalent to $\hat \tau^{adj}$ in sense that $\hat \tau^* = \hat \tau^{adj} + o_P(n^{-1/2})$ and weakly more efficient than $\hat \tau^{unadj}$.
	\label{thm:fixed_k} 
\end{thm}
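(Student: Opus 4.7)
The plan proceeds in three stages: (i) establish joint asymptotic normality of $(\hat\tau^{adj},\hat\tau^{unadj})$ by linearizing $\hat\tau^{adj}$ around the population projection; (ii) verify $\hat\Sigma\convP\Omega$; and (iii) leverage a structural degeneracy of $\Omega$ to deduce $\hat\tau^{*}=\hat\tau^{adj}+o_P(n^{-1/2})$.

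For (i), by Frisch-Waugh-Lovell the cell-level intercept satisfies $\hat\tau_{a,s}=\overline Y_{a,s}-\overline{\breve X}_{a,s}^{\top}\hat\beta_{a,s}$, where $\overline Y_{a,s}$ and $\overline{\breve X}_{a,s}$ are the stratum-treatment cell means. Assumption \ref{ass:reg_fixed}(i) bounds the sample Gram matrix eigenvalues away from zero and infinity, so a standard OLS analysis gives $\lVert\hat\beta_{a,s}-\beta_{a,s}^{*}\rVert_{2}=O_P(\sqrt{k_n/n})$. Within-stratum covariate balance under CARs yields $\lVert\overline{\breve X}_{a,s}\rVert_{2}=O_P(\sqrt{k_n/n})$, so the cross term $\overline{\breve X}_{a,s}^{\top}(\hat\beta_{a,s}-\beta_{a,s}^{*})=O_P(k_n/n)=o_P(n^{-1/2})$ under the rate condition $k_n^{2}\xi_n^{2}\log k_n=o(n)$. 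Using $\overline{\breve X}_{1,s}=(n_{0,s}/n_s)(\overline X_{1,s}-\overline X_{0,s})$ and its $a=0$ counterpart, this collapses to
\begin{equation*}
\hat\tau^{adj}=\hat\tau^{unadj}-\sum_{s\in\mathcal S}\hat p_s(\overline X_{1,s}-\overline X_{0,s})^{\top}\bar\beta_s^{*}+o_P(n^{-1/2}),
\end{equation*}
after which joint normality with covariance $\Omega$ follows by applying the CARs central-limit machinery of \cite{BCS18} (as used in \cite{YYS22}) to $\hat\tau^{unadj}-\tau$ and the linear bias-correction term.

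For (ii), $\hat\Sigma_{2,2}\convP\Omega_{2,2}$ is standard from \cite{BCS18}. In the fixed or moderate regime, $\gamma_{a,s,n}\convP 1$, so the cross-fit objects $\hat\omega_{a,s}^{2}$ and $\hat\varpi_{a,s}$ both reduce asymptotically to $\mathbb{E}(r_i^{2}(a,s)\mid S_i=s)$, the second moment of the long-regression residual $r_i(a,s)=Y_i(a)-\alpha_{a,s}^{*}-X_i^{\top}\beta_{a,s}^{*}$. Adapting the quadratic-form consistency arguments of \cite{KSS2020} to a two-regression setting -- in particular handling the cross-product $\hat\beta_{1,s}^{\top}\Gamma_s\hat\beta_{0,s}$ between coefficient vectors estimated on disjoint treatment cells -- delivers $\hat\Sigma_{\mathcal V}^{adj}\convP\sum_s(p_s/[\pi_s(1-\pi_s)])V_s$; combined with the standard consistency of $\hat\Sigma_{\mathcal W}$ this yields $\hat\Sigma\convP\Omega$.

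For (iii), observe that $\bar\beta_s^{*}$ is precisely the best linear projection coefficient of $(1-\pi_s)m_1(\cdot,s)+\pi_s m_0(\cdot,s)$ on $X$, so $V_s'-V_s=\mathrm{Var}(X_i^{\top}\bar\beta_s^{*}\mid S_i=s)$ and hence $\Omega_{1,1}=\Omega_{1,2}$, while Assumption \ref{ass:reg_fixed}(ii) gives $\Omega_{2,2}-\Omega_{1,1}\geq c>0$. Consequently $\hat w=(\hat\Sigma_{2,2}-\hat\Sigma_{1,2})/(\hat\Sigma_{1,1}+\hat\Sigma_{2,2}-2\hat\Sigma_{1,2})\convP 1$, and the algebraic identity $\hat\tau^{*}=\hat\tau^{adj}+(1-\hat w)(\hat\tau^{unadj}-\hat\tau^{adj})$ combined with $\hat\tau^{unadj}-\hat\tau^{adj}=O_P(n^{-1/2})$ gives $\hat\tau^{*}=\hat\tau^{adj}+o_P(n^{-1/2})$; the CLT and the inequality $\Omega_{1,1}\leq\Omega_{2,2}$ then deliver the remaining claims. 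The main obstacle will be step (ii): the cross-fit formulae for $\hat\Sigma_{\mathcal U}$ and $\hat\Sigma_{\mathcal V}^{adj}$ were engineered for the many-regressors regime, and showing they remain consistent under misspecification across two different cell regressions requires careful bookkeeping of how the mis-approximation error enters each term; the simplification $\gamma_{a,s,n}\convP 1$, which crucially fails in Section \ref{sec:optimal}, is what ultimately makes the reduction to standard HC-style estimators possible here.
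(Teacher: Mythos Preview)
Your overall architecture matches the paper's proof: linearize $\hat\tau^{adj}$ around the population projection $\beta_{a,s}^{*}$ (the paper does this via the AIPW representation, you via Frisch--Waugh--Lovell, but the resulting expansion is the same) and invoke the BCS-style CLT; show each block of $\hat\Sigma$ has a limit; then exploit $\Omega_{1,1}=\Omega_{1,2}<\Omega_{2,2}$ to get $\hat w\convP 1$. Steps (i) and (iii) are essentially how the paper proceeds and are correct.

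The gap is in step (ii): your stated limit for $\hat\Sigma_{\mathcal V}^{adj}$ is wrong. In the moderate-$k_n$ regime, $\max_i P_{a,s,i,i}=o_P(1)$, so the leave-one-out correction $\sum_i P_{a,s,i,i}Y_i\acute\eps_{a,s,i}/n_{a,s}$ vanishes and only the sample quadratic forms remain. These give $n_{a,s}^{-1}\hat\beta_{a,s}^{\top}\Gamma_{a,s}\hat\beta_{a,s}\convP Var(X_i^{\top}\beta_{a,s}^{*}\mid S_i=s)$ and $n_s^{-1}\hat\beta_{1,s}^{\top}\Gamma_s\hat\beta_{0,s}\convP Cov(X_i^{\top}\beta_{1,s}^{*},X_i^{\top}\beta_{0,s}^{*}\mid S_i=s)$, so
\[
\hat\Sigma_{\mathcal V}^{adj}\convP \mathbb E\,Var\bigl(X_i^{\top}(\beta_{1,S_i}^{*}-\beta_{0,S_i}^{*})\mid S_i\bigr),
\]
with no $1/[\pi_s(1-\pi_s)]$ weighting and no $m_a$ dependence. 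Likewise, your limit for the $\hat\Sigma_{\mathcal U}$ block, $\sum_s p_s\bigl[\pi_s^{-1}Var(Y(1)-X^{\top}\beta_{1,s}^{*}\mid s)+(1-\pi_s)^{-1}Var(Y(0)-X^{\top}\beta_{0,s}^{*}\mid s)\bigr]$, is \emph{not} the $1_21_2^{\top}$ part of $\Omega$: under misspecification $Var(Y(a)-X^{\top}\beta_{a,s}^{*}\mid s)\neq\mathbb E[Var(Y(a)\mid X,S)\mid S=s]$.

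The three individual limits do sum to $\Omega$, but this is a non-obvious algebraic identity that the paper establishes explicitly (its Step 4). One must split $Var(Y(a)-X^{\top}\beta_{a,s}^{*}\mid s)$ via the ANOVA decomposition into $\mathbb E[Var(Y(a)\mid X,S)\mid s]+Var(m_a-X^{\top}\beta_{a,s}^{*}\mid s)$, use that $\beta_{a,s}^{*}$ is the linear projection to replace $Cov(m_a,X^{\top}\beta_{a,s}^{*}\mid s)$ by $Var(X^{\top}\beta_{a,s}^{*}\mid s)$, and then repackage the resulting terms across $a\in\{0,1\}$ so that the cross-products assemble into $V_s$ and the common block. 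Your proposal elides this rearrangement, and with your mislabelled intermediate limits the pieces would not visibly add up to $\Omega$.
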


\begin{rem}
	The limit distribution of $\hat \tau^{adj}$ has already been derived by \cite{YYS22} when $k_n$ is fixed. Here, our main contributions are (1) deriving the joint distribution of $\hat \tau^{adj}$ and $\hat \tau^{unadj}$ while allowing the dimension of the covariates to diverge with the sample size and (2) establishing the consistency of our cross-fit covariance matrix estimator $\hat \Sigma$ under the moderate dimension framework. Theorem  \ref{thm:fixed_k} shows that the Wald test based on our optimal linear combination estimator $\hat \tau^*$ and the covariance matrix estimator $\hat \Sigma$ proposed in Section \ref{sec:var} still controls asymptotic size under the null and has the same power as $\hat \tau^{adj}$ under local alternatives with a moderate dimension of $X_i$. In fact, $\hat \tau^*$ is equivalent to $\hat \tau^{adj}$, and \cite{YYS22} have shown that, when $k_n$ is fixed, $\hat \tau^{adj}$ (and thus, $\hat \tau^*$) is the optimally linearly adjusted estimator in the sense that it achieves the minimum asymptotic variance among a class of estimators adjusted by linear functions of $X_i$. This optimality result naturally extends to the moderate $k_n$ case considered here as long as the rate requirement in Assumption \ref{ass:reg_fixed} is satisfied. Therefore, $\hat \tau^*$ is weakly more efficient that $\hat \tau^{unadj}$. See, for example, \cite{LTM20}, \cite{MTL20}, and \cite{YYS22} for more discussion.
\end{rem}

\begin{rem}\label{rem:specification}
	We emphasize that Theorem \ref{thm:fixed_k} holds without requiring the linear RAs to be approximately correctly specified, even when the dimension of covariates diverges. \cite{LD21} and \cite{CMO23} established similar results for their regression-adjusted ATE estimator under complete randomization and finite-population asymptotics, but they imposed the restriction $k_n = o(n^{1/2})$. Our rate requirement aligns with theirs, differing only by a logarithmic factor. When $k_n$ grows faster than $n^{1/2}$, the regression-adjusted ATE estimator may suffer from asymptotic bias if the linear regressions are not approximately correctly specified. To address this, \cite{LD21} and \cite{CMO23} proposed bias correction methods, though their approaches permit at most $k_n = o(n^{2/3})$. More recently, \cite{LYW23} introduced a debiased regression-adjusted estimator that allows $k_n$ to be of the same order as $n$ without assuming the correct specification. Our work differs from theirs in three key aspects. First, \citeauthor{LYW23}'s (\citeyear{LYW23}) regression adjustment follows a specially designed approach distinct from the original YYS proposal, which remains our main focus; indeed, \cite{LYW23} highlighted the technical challenges in analyzing the original regression adjusted estimator (i.e., \citeauthor{L13}'s (\citeyear{L13}) estimator, which is analogous to YYS in complete random sampling) when $k_n$ is proportional to $n$. Second, their results apply only to complete random sampling under a finite-population framework, whereas we consider general CARs under a superpopulation framework. Third, we propose a linear combination estimator that is guaranteed to be at least as efficient as the unadjusted estimator.
	
	Extending similar regression adjustment and bias-correction methods to our setting, which accommodates a more general randomization scheme (i.e., CAR) under a sampling-based superpopulation framework, would be an interesting avenue for future research. However, we consider this beyond the scope of the present paper. Therefore, we view our inference method as a complement rather than a substitute for those proposed by \cite{LD21}, \cite{CMO23}, and \cite{LYW23}.
	
	In summary, when treatment is assigned by CARs and either the dimension of covariates is $o(\sqrt{n})$ but the RAs may be misspecified, or the dimension of covariates is a fraction of the sample size and the RAs are approximately correctly specified (as in the cases discussed after Assumption \ref{ass:linear}), researchers can choose our inference method. However, when treatment is assigned by complete randomization, the dimension of the covariates is greater than $n^{1/2}$, and the RAs are not likely to be approximately correctly specified (as in the case where there are many original continuous regressors), researchers should turn to the bias-corrected inference methods proposed by \cite{LD21}, \cite{CMO23}, \cite{LYW23}. 
\end{rem}

% {\color{red}Add uniform size control and power comparison.}

% \begin{rem}\label{rem:uniform_size}
%  Theorem \ref{thm:uniform_size} shows the usual Wald test with our cross-fit variance estimator controls asymptotic size for a wide range of DGPs that have both diverging and fixed dimensions of covariates.    
% \end{rem}

\section{Simulations}
\label{sec:simulations}

We conduct a simulation study to evaluate the finite sample performance of our proposed inference method. The nominal level is $\alpha=5\%$ throughout the section.  

\subsection{Data Generating Processes}
For $a\in\{0,1\}$, we generate potential outcomes according to the equation
\begin{equation}
Y_{i}(a)=\mu_{a}+m_{a}(Z_{i})+ \sigma_{a}(Z_{i})\epsilon_{i}(a),\label{eq:simulpart01}
\end{equation}
where $\mu_{a}$, and $m_{a}\left(Z_{i}\right)$ and $\sigma_a(Z_i)$ are specified as follows. In each of the following specifications, $\{Z_i, \epsilon_{i}(1), \epsilon_{i}(0)\}$ are i.i.d. We have considered four models following the designs by \cite{CJN18}. For the first three models, the dimension of $Z_i$ is denoted as $d_n$, and we let $X_i$ be either the full set or the first $k_n$ elements of $Z_i$. For the last model, $Z_i$ is of fixed dimension, and we let $X_i$ be either the full set or the first $k_n$ elements of polynomial series of $Z_i$. The polynomial series will be specified below. By varying the choice of $k_n$, we can illustrate the uniformity of our inference method.  

\begin{description}
	\item [{Model 1: Linear model with many dummy variables.}] The dimension of $Z_i$ is set to $d_n=0.2n/|\mathcal{S}|$. The first entry of $Z_i$ is uniformly distributed in $[-1,1]$, i.e., $Z_{1i}\sim U[-1,1]$. 
	The other entries of $Z_i$ are dummies: $[Z_{2i},Z_{3i},\cdots,Z_{d_n,i}]^\top=\textbf{1}(\textbf{v}_{i}\geq \Phi^{-1}(0.8))$ with $\textbf{v}_{i} \sim \mathcal{N}(0,\textbf{I}_{d_n-1})$. We set $m_a(\cdot)$ and $\sigma_a(\cdot)$ respectively as
	$$
	m_1(Z_i)=m_0(Z_i)=Z_{1i}+2\sum_{j=2}^{d_n}Z_{ji}/\sqrt{d_n-1}
	$$
	and $$\sigma_{0}(Z_{i})=\sigma_{1}(Z_{i})=c_{\epsilon}\left[1+\left(Z_{1i}+\sum_{j=2}^{d_n}Z_{ji}/\sqrt{d_n-1}\right)^2\right]^{1/2}.$$ 
	Let $(\epsilon_{i}(1),\epsilon_{i}(0))\sim \mathcal{N}(0,\mathbf{I}_2)$ be independent of $Z_i$. We set $c_\eps$ as the normalizing constant such that $\mathbb E \sigma_a^2(Z_i) = 1$. 
	
	\item [{Model 2: Linear model with many independent continuous variables.}] This model is the same as model 1, except that $Z_{ji}$'s are $U[-1,1]$ and 
	$$
	m_1(Z_i)=m_0(Z_i)=2\sum_{j=2}^{d_n}Z_{ji}/\sqrt{d_n-1},
	$$
	where $Z_{1i}$ does not appear in $ m_1(Z_i)$ and $m_0(Z_i)$.
	
	\item  [{Model 3: Linear model with many correlated continuous variables.}] This model is the same as Model 2, except that $[Z_{2i},\cdots,Z_{d_n,i}]^\top=\Sigma_\rho^{1/2}\mathbf{v}_i$, where   $\mathbf{v}_i\in \mathbb{R}^{d_n-1}$ and $\Sigma_\rho$ is a Toeplitz matrix with
	$
	[\Sigma_\rho]_{i,j}=\rho^{|i-j|}
	$ and $\rho=0.6$. We set $\mathbf{v}_i$ to contain i.i.d. $U[-1,1]$ entries. 
	
	\item [{Model 4: Nonlinear model with many polynomial series.}] Let $d_{n}=6$, $Z_{ji}\sim U[-1,1]$ for $j=1,\cdots,6$, 
	$$m_{1}(Z_i)=m_{0}(Z_i)=2\exp\left( \sqrt{\frac{1}{6}\sum_{j=1}^{6}Z_{ji}^2} \right), $$
	and $$\sigma_{0}(Z_{i})=\sigma_{1}(Z_{i})=c_{\epsilon}\left[1+\left(Z_{1i}+\sum_{j=2}^{6}Z_{ji}/\sqrt{5}\right)^2\right]^{1/2}.$$  
	We use the following polynomials or some of them as regressors:
	\begin{itemize}
		\item First order terms: $Z_{1i},Z_{2i},Z_{3i},Z_{4i},Z_{5i},Z_{6i}$,
		\item Second order terms: $Z_{1i}^2,Z_{2i}^2,Z_{3i}^2,Z_{4i}^2,Z_{5i}^2,Z_{6i}^2$  and first-order interactions,
		\item Third-order terms: $Z_{1i}^3,Z_{2i}^3,Z_{3i}^3,Z_{4i}^3,Z_{5i}^3,Z_{6i}^3$ and second-order interactions,
		\item Fourth-order terms: $Z_{1i}^4,Z_{2i}^4,Z_{3i}^4,Z_{4i}^4,Z_{5i}^4,Z_{6i}^4$ and third-order interactions.
	\end{itemize}
	
\end{description}

For each model, strata are determined by dividing the support of $Z_{1i}$ into $|\mathcal{S}|$ intervals of equal length. Specifically, we have 
$S_i = \sum_{j = 1}^{|\mathcal{S}|} 1\{Z_{1i} \leq g_j\},
$
where $g_j=2j/|\mathcal{S}|-1$. For all strata, we set $\pi_s=1/2$. The treatment status is determined according to one of the following four CAR schemes:
\begin{enumerate}
	\item SRS: Treatment assignment is generated as in Example 1,
	\item BCD: Treatment assignment is generated as in Example 2 with $\lambda = 0.75$,
	\item WEI: Treatment assignment is generated as in Example 3 with $\phi(x) = (1-x)/2$,
	\item SBR: Treatment assignment is generated as in Example 4.
\end{enumerate}

%[{Model 3}] We generate $Z_{2i}$ and $Z_{3i}\sim U[-1,1]$ , for  $i\in [n]$. Let $$m_{1}(Z_i)=m_{0}(Z_i)=\exp\left(  -Z_{1i}+Z_{2i}+Z_{3i}\right). $$

% For regressors, we use polynomial spline for $Z_{ji}$. Let $\tilde{Z}_{ji}/2+1/2$. 
%\begin{itemize}
%    \item As $Z_{1i}$ determines strata, we generate polynomial spline as follows
%   $$
%    \begin{cases}
%        \left[1,\tilde{Z}_{1i},\tilde{Z}_{1i}^2, \tilde{Z}_{1i}^2\cdot \mathbf{1}(Z_{1i}>\frac{g_1-1}{2}) \right]& \text{ if }S_i=1\\
%      & \\
%         % \left[1,\tilde{Z}_{1i},\tilde{Z}_{1i}^2, \tilde{Z}_{1i}^2\cdot \mathbf{1}(Z_{1i}>\frac{g_s+g_{s-1}}{2})\right]& \text{ if }S_i=s>1
%     \end{cases}
%     $$

%     \item For $j=2,3$, the spline is 
%     $$
% \left[1,\tilde{Z}_{ji},\tilde{Z}_{ji}^2, \tilde{Z}_{ji}^2\cdot \mathbf{1}(Z_{ji}>0), \tilde{Z}_{ji}^2\cdot \mathbf{1}(Z_{ji}>0.5)\right].
%     $$
% \end{itemize}
% The tensor product gives us a spline basis with dimension $4\times 5\times 5=100$. We exclude the intercept term and include the first $\dim (X_i)$ terms as our $X_i$. $\dim (X_i)$ may vary in different scenarios.

\subsection{Simulation Results}
\label{sec:sim_results}

Rejection probabilities are computed using $10,000$ replications. We compare our methods with three existing methods recently introduced by \cite{BCS17}, \cite{YYS22}, and \cite{LTM20}. We list these methods below.

\begin{description}
	\item [$\hat{\tau}^{adj}$:] The inference method based on  the fully saturated regression adjusted estimator introduced in Section \ref{sec:setup} and the variance estimator $\hat{\Sigma}_{1,1}$ as in Theorem \ref{thm:variance}.
	\item [$\hat{\tau}^\ast$\  \ :] The inference method based on the optimal linear combination estimator introduced in Section \ref{sec:optimal} and Theorem \ref{thm:further}.
	\item[BCS:] The inference method introduced by \cite{BCS18}. The estimator for $\tau$ is exactly the unadjusted IPW estimator $\hat{\tau}^{unadj}$. 
	\item[YYS:] The inference method introduced by \cite{YYS22}. The estimator for $\tau$ is exactly the same as $\hat{\tau}^{adj}$. However, its variance estimator does not take into account the issue of many covariates.
	\item[LTM:] The inference method introduced by \cite{LTM20}, which uses Lasso to select relevant regressors in RA.  Specifically, we use the `HDM' package in R with the default choice of tuning parameter to obtain the corresponding Lasso estimates. 
\end{description}

Tables \ref{Table1} reports the size results for $n=400$ and $800$, under $H_{0}:\mu_1=\mu_0=0$ and $S=2$. In the first three models, we include all elements of $Z_i$ as regressors, i.e., $X_i=Z_i$. For the last model, we include all polynomial series of $Z_i$ as regressors. Therefore, the number of covariates used in each regression is approximately $\mathbf{40\%}$ of the effective sample size (i.e., $\kappa_{a,s} = 0.4$). We find that our methods $\hat{\tau}^{adj}$ and $\hat{\tau}^{\ast}$ have good size control under four models, improving as $n$ increases. In contrast, YYS has high rejection rates, over 10\%, in all cases, even when $n$ is as large as 800. Our theory in Section \ref{sec:trade-off} shows this size distortion is caused by ignoring the estimation errors from many regressors. Furthermore, LTM has rejection probabilities over 7\% in many cases. In particular, in Model 3 with correlated covariates, LTM has rejection rates over 9\% in all randomization schemes, even when $n$ is 800. This is because in this model, the sparsity condition does not hold. Finally, BCS does not use any regressors, and thus, is immune to the size distortion due to the many regressors. 

Table \ref{Table2} shows the power results under $H_{1}:\mu_1-\mu_0=0.2$. We find that, first, $\hat{\tau}^{\ast}$ always has higher power than $\hat{\tau}^{adj}$ and BCS, as our theory predicts. Second, YYS ignores the cost of RAs, resulting in a smaller variance estimator, and thus, higher rejection rates under both the null and alternative. Third, LTM has lower power than $\hat{\tau}^{\ast}$ in Models 1-3, where the sparsity condition for the regression coefficients fails. In fact, in this case, Lasso excludes many informative covariates in the adjustments. In model 4, LTM has better power than $\hat{\tau}^{\ast}$ since the sparsity condition is satisfied. By exploiting the sparsity, it is possible to show that the ATE estimator with the $\ell_1$ regularized adjustments achieves the semiparametric efficiency bound. However, even in this case that favors the Lasso regression adjustments, the power of our method based on $\hat \tau^*$ is very close to LTM. 

\newcolumntype{L}{>{\raggedright\arraybackslash}X} \newcolumntype{C}{>{\centering\arraybackslash}X}

\begin{table}[tbp]\caption{Rejection rate (in percent) under $H_0:\mu_{1}-\mu_{0}=0$}  \label{Table1}
	\centering
	\begin{tabularx}{1\textwidth}{CCCCCCCCCC} 
		\toprule
		\multicolumn{1}{l}{} & \multicolumn{1}{c}{} &  \multicolumn{4}{c}{n=400, $k_n=40$} &\multicolumn{4}{c}{n=800, $k_n=80$} \\ 
		
		\cmidrule(lr){3-6} \cmidrule(lr){7-10}
		
		Model                & Method               & SRS   & BCD   & WEI   & SBR   & \multicolumn{1}{c}{SRS} & \multicolumn{1}{c}{BCD} & \multicolumn{1}{c}{WEI} & \multicolumn{1}{c}{SBR} \\ 
		\midrule
		1     & $\hat{\tau}^{adj}$  & 5.78  & 5.89  & 5.67  & 5.89  & 5.43  & 5.32  & 5.46  & 5.32  \\
		& $\hat{\tau}^{\ast}$ & 5.72  & 5.96  & 5.81  & 5.96  & 5.37  & 5.41  & 5.28  & 5.41  \\
		& BCS                                     & 5.05  & 5.41  & 5.22  & 5.41  & 5.12  & 5.30   & 5.18  & 5.30   \\
		& YYS                                     & 12.04 & 12.23 & 12.56 & 12.23 & 11.73 & 11.53 & 11.53 & 11.53 \\
		& LTM                                     & 6.64  & 7.08  & 7.16  & 7.08  & 5.95  & 6.33  & 6.07  & 6.33  \\ \hline
		2     & $\hat{\tau}^{adj}$  & 5.51  & 5.57  & 5.32  & 5.57  & 5.10   & 5.65  & 5.38  & 5.65  \\
		& $\hat{\tau}^{\ast}$ & 5.42  & 5.53  & 5.18  & 5.53  & 5.06  & 5.63  & 5.51  & 5.63  \\
		& BCS                                     & 5.04  & 4.88  & 5.12  & 4.88  & 4.82  & 5.18  & 5.14  & 5.18  \\
		& YYS                                     & 11.53 & 11.49 & 11.53 & 11.49 & 11.56 & 11.63 & 11.35 & 11.63 \\
		& LTM                                     & 7.36  & 7.33  & 7.29  & 7.33  & 7.25  & 7.43  & 7.02  & 7.43  \\ \hline
		3     & $\hat{\tau}^{adj}$  & 5.90  & 6.12  & 5.79  & 5.85  & 5.44  & 5.75  & 5.48  & 5.75  \\
		& $\hat{\tau}^{\ast}$ & 5.76  & 6.23  & 5.83  & 5.98  & 5.39  & 5.75  & 5.60   & 5.75  \\
		& BCS                                     & 4.79  & 4.84  & 5.15  & 5.27  & 5.12  & 5.04  & 5.23  & 5.04  \\
		& YYS                                     & 11.47 & 11.32 & 11.33 & 11.26 & 11.38 & 11.68 & 11.3  & 11.68 \\
		& LTM                                     & 9.82  & 9.47  & 9.22  & 9.53  & 9.69  & 9.69  & 9.62  & 9.69  \\ \hline
		4     & $\hat{\tau}^{adj}$  & 6.63  & 7.11  & 7.25  & 7.09  & 6.28  & 6.45  & 6.68  & 6.45  \\
		& $\hat{\tau}^{\ast}$ & 7.21  & 7.57  & 7.57  & 7.64  & 6.29  & 6.67  & 6.52  & 6.67  \\
		& BCS                                     & 4.83  & 5.02  & 5.09  & 5.37  & 4.93  & 5.28  & 4.93  & 5.28  \\
		& YYS                                     & 12.52 & 12.87 & 13.40 & 12.62 & 12.46 & 12.86 & 13.08 & 12.86 \\
		& LTM                                     & 6.07  & 6.24  & 6.14  & 6.57  & 5.76  & 5.87  & 5.92  & 5.87  \\ 
		
		\bottomrule
		
	\end{tabularx}
	
\end{table}

Figures \ref{fig:sbr400} and \ref{fig:sbr800} show how the rejection rates of $\hat{\tau}^{adj}$, $\hat{\tau}^{\ast}$, YYS and LTM vary with the number of regressors under the null and alternative hypotheses. We consider Models 1-4 with $n=800$ and $n=400$, respectively. We set the number of regressors $k_n=0,2,\cdots,80$ ($k_n=0,1,\cdots,40)$ for $n=800$ ($n=400$), corresponding to $\kappa_{a,s} = 0,0.01,\cdots,0.4$. The x-axis is $\kappa_{a,s}$. The y-axis is the rejection probabilities. We display only the figures under the SBR scheme. Similar patterns are found under the other randomization schemes, which are shown in the Online Supplement.

Panels (a) of the figures show that our methods (i.e., $\hat \tau^{adj}$ and $\hat \tau^*$) have uniform size control over $\kappa_{a,s}$. Even when the dimensions of regressors vary from 0 to 80 (40) for $n=800$ ($n=400$), our methods have close-to-nominal rejection rates under $H_{0}$, consistent with our theories. In contrast, YYS has rejection probabilities that increase linearly with $\kappa_{a.s}$ in all models. Similarly, LTM displays this size distortion in Models 2 and 3, in which the sparsity condition fails.

Panels (b) of the figures show the power of the five methods under $H_{1}:\mu_1-\mu_0=0.2$. Several observations emerge. First, BCS's power, our benchmark, does not vary with $\kappa_{a,s}$ because it uses no regressors. Second, for all models, $\hat{\tau}^{\ast}$ is always more powerful than BCS and usually more powerful than $\hat \tau^{adj}$, as our theory predicts. Third, as the number of regressors increases, the misspecification error decreases, and $\hat{\tau}^{\ast}$ becomes more powerful. However, $\hat{\tau}^{adj}$'s power drops even below that of the unadjusted estimator as $k_n$ increases for Model 1. This reflects that many covariates do not always improve estimation accuracy, and the ``no-harm'' regression adjustments can in fact harm the estimation precision. Fourth, for Models 1-3, $\hat{\tau}^{\ast}$ is generally more powerful than LTM as the dimension of regressors increases. For Model 4, LTM slightly outperforms $\hat{\tau}^{\ast}$ when $\kappa_{a,s}$ is large for $n=800$ due to the sparsity of the model.

\begin{table}[tbp]   \caption{Rejection rate (in percent) under $H_1:\mu_{1}-\mu_{0}=0.2$} \label{Table2}
	\centering
	\begin{tabularx}{1\textwidth}{CCCCCCCCCC}
		\toprule
		\multicolumn{1}{l}{} &                     &  \multicolumn{4}{c}{n=400, $k_n=40$} &\multicolumn{4}{c}{n=800, $k_n=80$}  \\
		\cmidrule(lr){3-6} \cmidrule(lr){7-10}
		Model                & Method              & \multicolumn{1}{c}{SRS} & \multicolumn{1}{c}{BCD} & \multicolumn{1}{c}{WEI} & \multicolumn{1}{c}{SBR} & \multicolumn{1}{c}{SRS} & \multicolumn{1}{c}{BCD} & \multicolumn{1}{c}{WEI} & \multicolumn{1}{c}{SBR} \\ 
		\midrule
		1     &$\hat{\tau}^{adj}$  & 41.54 & 41.60  & 42.16 & 41.34 & 68.42 & 68.99 & 68.12 & 68.99 \\
		&$\hat{\tau}^{\ast}$& 44.66 & 44.93 & 45.37 & 44.44 & 72.31 & 72.44 & 72.15 & 72.44 \\
		& BCS                                     & 32.53 & 33.67 & 33.17 & 32.68 & 57.23 & 57.86 & 57.59 & 57.86 \\
		& YYS                                     & 55.59 & 55.82 & 56.66 & 55.33 & 80.26 & 80.62 & 80.57 & 80.62 \\
		& LTM                                     & 40.68 & 41.23 & 41.07 & 40.20 & 65.89 & 65.79 & 65.51 & 65.79 \\ \hline
		2     &$\hat{\tau}^{adj}$  & 41.88 & 41.99 & 41.99 & 42.28 & 68.33 & 68.82 & 68.64 & 68.82 \\
		&$\hat{\tau}^{\ast}$ & 43.70 & 43.80 & 44.05 & 43.87 & 70.56 & 71.22 & 70.69 & 71.22 \\
		& BCS                                     & 25.66 & 25.50  & 25.71 & 25.33 & 44.38 & 45.34 & 44.88 & 45.34 \\
		& YYS                                     & 56.10 & 55.84 & 55.35 & 55.93 & 79.83 & 80.32 & 80.66 & 80.32 \\
		& LTM                                     & 35.74 & 35.58 & 34.97 & 35.44 & 55.46 & 56.22 & 56.00 & 56.22 \\ \hline
		3     & $\hat{\tau}^{adj}$ & 42.56 & 42.14 & 42.39 & 42.59 & 68.24 & 68.30  & 68.34 & 68.30 \\
		& $\hat{\tau}^{\ast}$ & 43.13 & 43.05 & 42.99 & 42.77 & 68.60  & 68.98 & 69.04 & 68.98 \\
		& BCS                                     & 12.74 & 12.89 & 13.01 & 12.83 & 20.00  & 20.91 & 20.10 & 20.91 \\
		& YYS                                     & 55.62 & 55.80 & 55.43 & 55.63 & 79.62 & 79.80 & 80.05 & 79.80 \\
		& LTM                                     & 43.79 & 43.41 & 44.14 & 43.87 & 65.27 & 66.16 & 66.14 & 66.16 \\ \hline
		4     &$\hat{\tau}^{adj}$  & 41.02 & 42.41 & 41.59 & 41.52 & 65.65 & 65.67 & 65.11 & 65.55 \\
		&$\hat{\tau}^{\ast}$ & 45.59 & 46.78 & 45.88 & 46.14 & 72.20  & 71.57 & 71.58 & 72.21 \\
		& BCS                                     & 33.66 & 34.90 & 34.13 & 34.35 & 59.12 & 59.06 & 59.20 & 58.55 \\
		& YYS                                     & 55.25 & 56.03 & 55.64 & 55.51 & 78.73 & 78.85 & 78.31 & 78.58 \\
		& LTM                                     & 47.38 & 48.09 & 47.42 & 47.47 & 78.35 & 77.38 & 77.86 & 77.59    \\ 
		\bottomrule
	\end{tabularx}
\end{table}

\begin{figure}[H]
	\centering
	\subfigure[$H_{0}:\mu_1-\mu_0=0$]{
		\includegraphics[width=\textwidth]{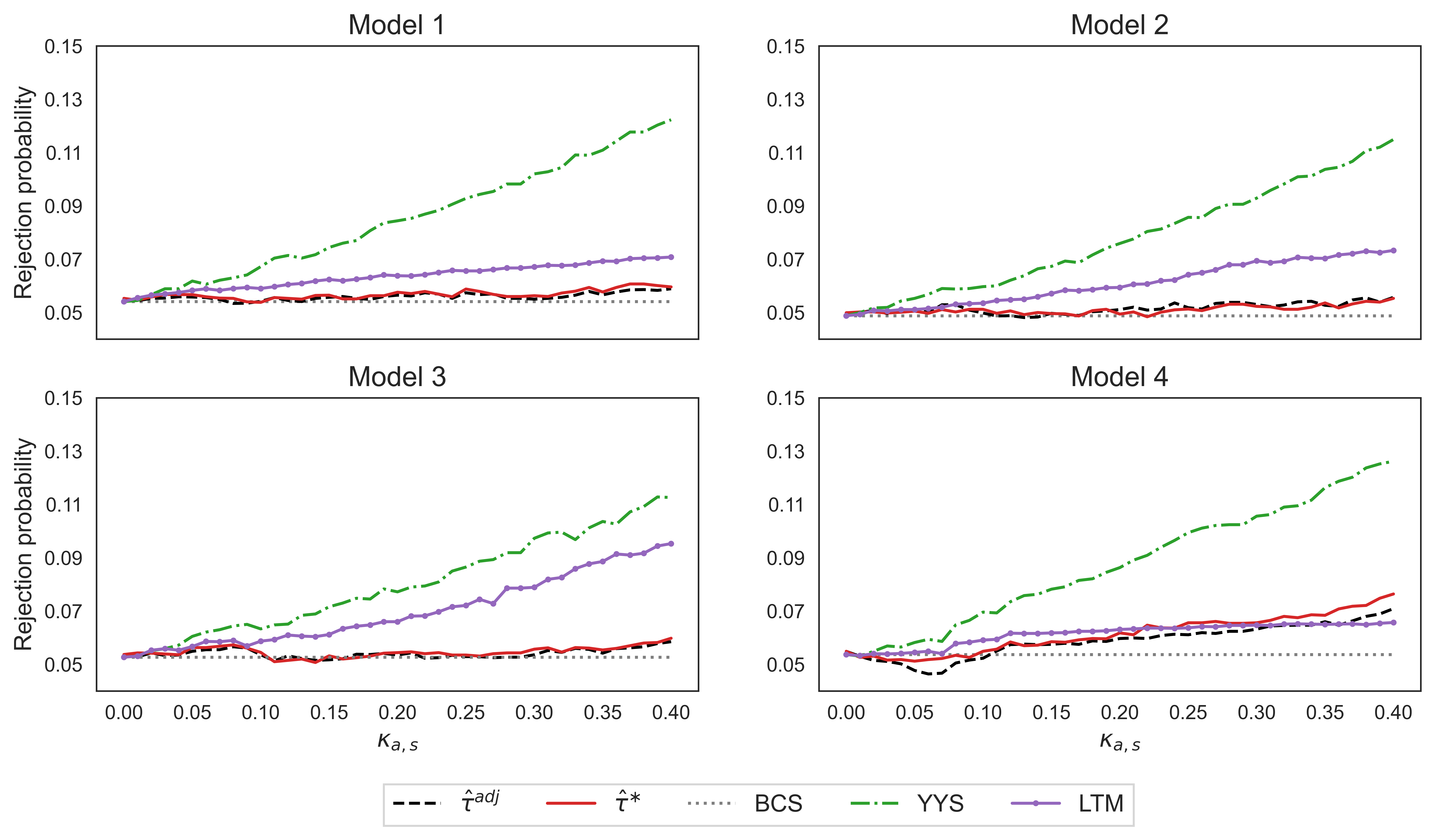}
	}
	
	\subfigure[$H_{1}:\mu_1-\mu_0=0.2$]{
		\includegraphics[width=\textwidth]{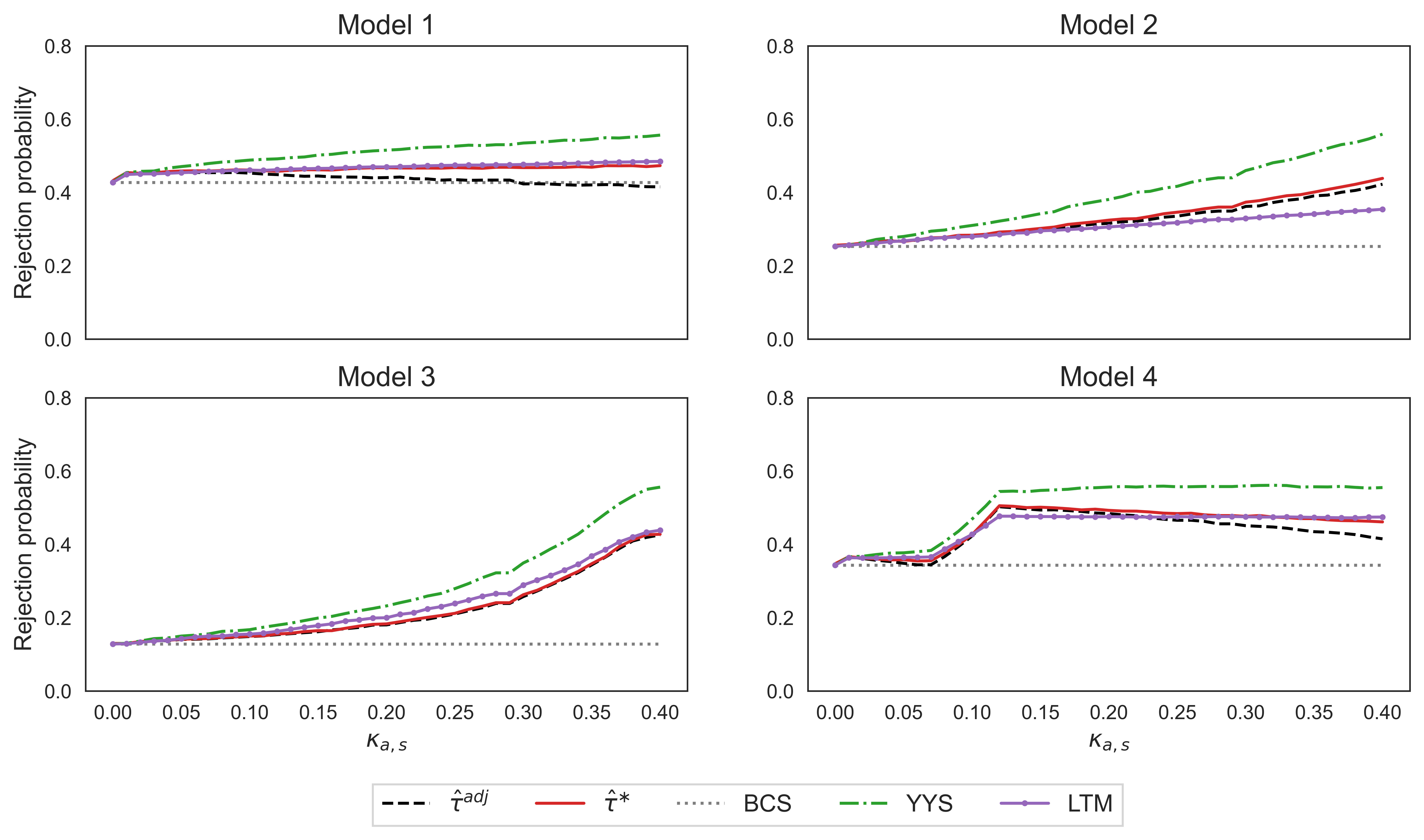}
	}
	\caption{Rejection probabilities under SBR when $n=400$}\label{fig:sbr400}
\end{figure}

\begin{figure}[H]
	\centering
	\subfigure[$H_{0}:\mu_1-\mu_0=0$]{
		\includegraphics[width=\textwidth]{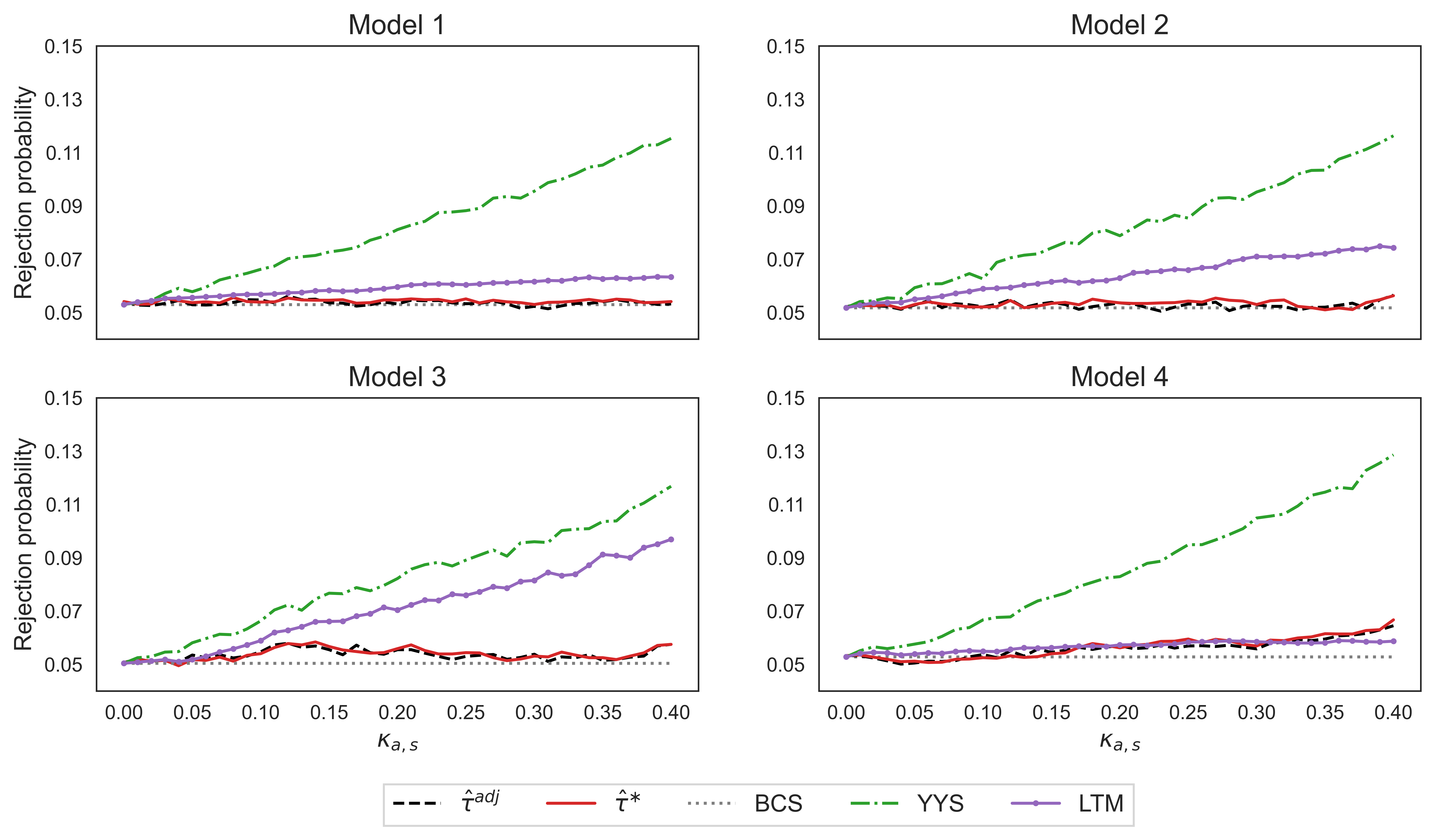}
	}
	
	\subfigure[$H_{1}:\mu_1-\mu_0=0.2$]{
		\includegraphics[width=\textwidth]{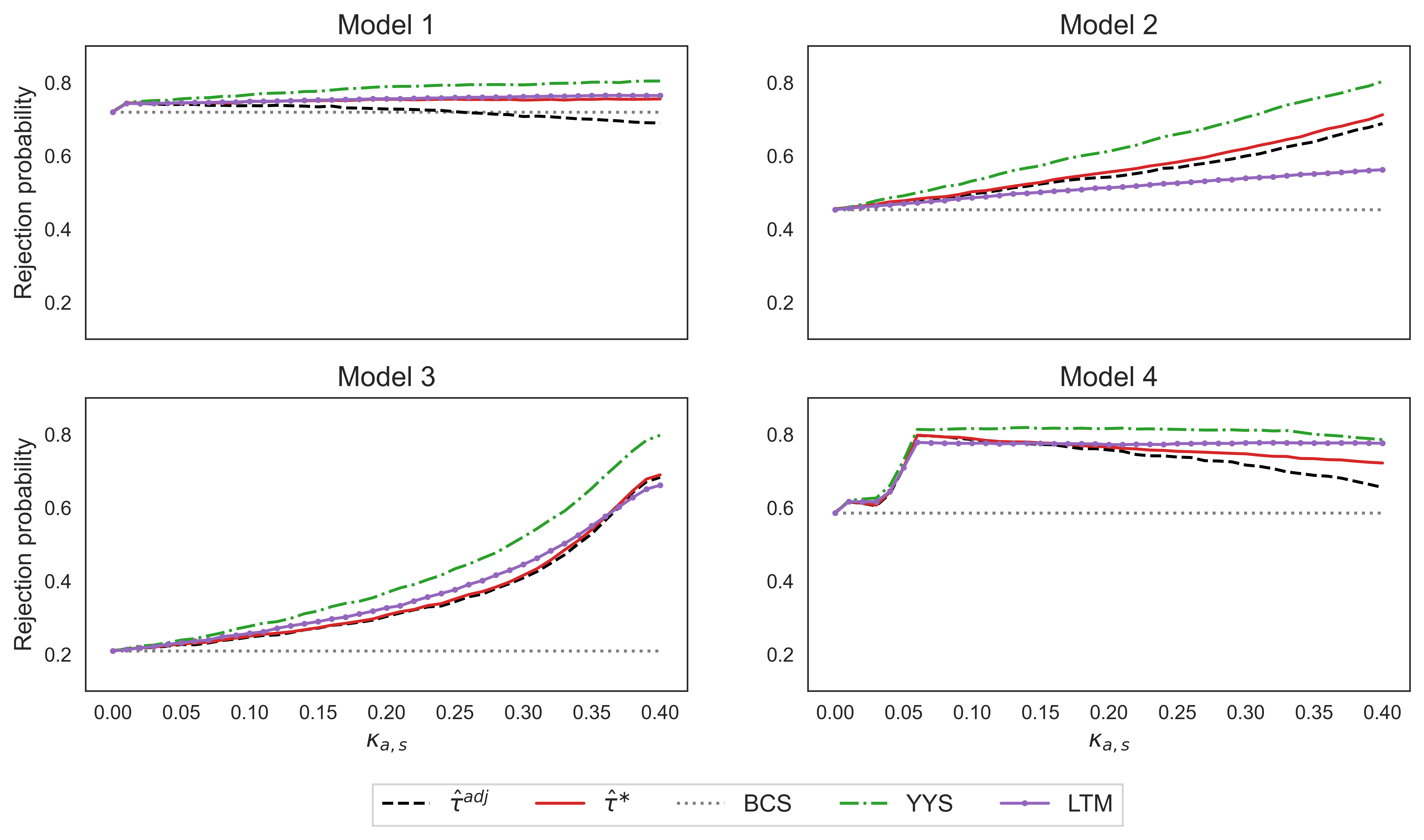}
	}
	\caption{Rejection probabilities under SBR when $n=800$}\label{fig:sbr800}
\end{figure}

\subsection{Practical Implication of Dimensionality}
When the dimension of covariates, $k_n$, is relatively small compared to $n^{1/2}$, our inference method remains valid and "no-harm," even when the linear regression adjustment is misspecified. This also holds for conventional inference methods like that of \cite{YYS22}. 

When $k_n$ exceeds $n^{1/2}$ and the linear regression is approximately correctly specified (i.e., Assumption \ref{ass:linear}(iii)), our inference method remains valid and ``no-harm'', even when $k_n$ is smaller but comparable to $n$, whereas the conventional method does not. Approximately correct specification is satisfied if the baseline covariates $Z_i$ are fixed-dimensional and continuous, with $X_i$ containing sieve basis functions of $Z_i$, or if the baseline covariates are categorical (discrete), with $X_i$ including fully saturated dummies for all categories.

When $k_n$ exceeds $n^{1/2}$ and the linear regression adjustment is not approximately correctly specified, the adjusted and linear combination estimators, $\hat{\tau}^{adj}$ and $\hat{\tau}^*$, may exhibit non-negligible biases. These biases can potentially be corrected using the methods developed by \cite{LD21} and \cite{CMO23}, provided $k_n = o(n^{2/3})$. However, the statistical properties of these bias-corrected ATE estimators, and whether they remain 'no-harm' under CARs, have not yet been studied. 

When $k_n$ exceeds $n^{2/3}$ and the linear regression adjustment is not approximately correctly specified, there is currently no known method for valid causal inference using regression-adjusted estimators under any randomization scheme, which is left for future research.

\section{Empirical Application}
\label{sec:app}

Studying the effects of child health and nutrition on educational outcomes helps us understand the connection between health and economic development in less developed countries (\citep{glewwe2007}; \citep{dupas2017}). To investigate the impact of an iron supplement program on students' schooling attainment, \cite{CCFNT16} conducted a randomized experiment in Peru with a CAR framework. 

Their experiment involved 215 students from a rural secondary school in Peru during the 2009 academic year. Students were stratified by five grade levels, with each grade having students randomly assigned to one of three groups: two treatment groups and one control group. In the treatments, one group viewed a video where a well-known soccer player endorsed iron supplements, and the other group saw a video with a doctor's endorsement. The control group watched a video that did not mention iron supplements. 

% It is straightforward to verify that Assumption \ref{rem:ass1}(i), (ii), (iii), and (iv) are satisfied.

In this section, we focus on the impact of iron supplement program on the number of iron pills taken, whether a student is anemic at endline, and the student's cognitive ability. Throughout the analysis, students in the two treatment arms are grouped into the treatment group as in \cite{CCFNT16}. 

We use a fully saturated model, including two dummy covariates (gender and electricity at home) from the regressors in \cite{CCFNT16} and their interaction as the third covariate. This ensures that the model is correctly specified. Although the number of regressors is not large, it is still a significant proportion of the sample size given that the total sample size is not large either. $\kappa_{1,s}$ range from approximately 8\% to 15\%, and $\kappa_{0,s}$ range from about 16\% to 30\%.\footnote{We left out the strata where the number of observations in either treatment or control group is less than or equal to 4.}

Table \ref{tab:emp} presents the ATEs and the standard errors (in parentheses) estimated by different methods.\footnote{The description of these methods is similar to that in Section \ref{sec:sim_results}.} We make three observations from the results. First, the variance estimates of the YYS method are smaller than those of $\hat{\tau}^{adj}$ in general. For example, the standard error for pills taken is approximately about 17.6\% lower with YYS compared to $\hat{\tau}^{adj}$, despite identical ATE estimates. This indicates that YYS method may fail to fully capture estimation errors that arise when using many regressors.

Second, the standard errors of $\hat{\tau}^{\ast}$ are consistently lower than those of both $\hat{\tau}^{adj}$ and BCS for all outcomes, with some differences being substantial. For example, for cognitive ability, the standard error of $\hat{\tau}^{\ast}$ is about 8.2\% smaller than that of BCS. This result is expected because $\hat{\tau}^{\ast}$ is a weighted average of $\hat \tau^{adj}$ and the unadjusted estimator (BCS),  designed to minimize asymptotic variance. The standard errors of $\hat{\tau}^{\ast}$ are also generally lower than those of LTM, though by a smaller margin.

Third, consistent with \cite{CCFNT16}, our results indicate that the iron supplement program increases the number of iron pills taken but has no significant effect on anemia or cognitive ability.

\newcolumntype{L}{>{\raggedright\arraybackslash}X} \newcolumntype{C}{>{\centering\arraybackslash}X}

\begin{table}[t]
	\centering
	\caption{Impacts of Iron Supplements}
	\vspace{1ex}
	\begin{tabularx}{1\textwidth}{lCCCCC}
		\toprule
		Y & $\hat{\tau}^{adj}$ & $\hat{\tau}^{\ast}$ & BCS & YYS & LTM\\
		\midrule
		
		Pills Taken             & 3.844     & 4.789     & 4.773     & 3.844     & 4.471      \\ 
		& (1.5123)  & (1.3356)  & (1.3357)  & (1.2467)  & (1.2948)   \\ 
		Amenic                  & -0.047    & -0.057    & -0.081    & -0.047    & -0.070     \\ 
		& (0.0655)  & (0.0645)  & (0.0691)  & (0.0639)  & (0.0661)   \\ 
		Cognitive Ability      & 0.103     & 0.111     & 0.134     & 0.103     & 0.148      \\ 
		& (0.1419)  & (0.1402)  & (0.1527)  & (0.1363)  & (0.1454)   \\

		\bottomrule
	\end{tabularx} \\
	\vspace{-1ex}
	\justify
	Notes: The table reports the ATE estimates of the effect of the iron supplement program on the number of iron pills taken, whether a student is anemic at endline, and cognitive ability. We use gender, a dummy for electricity at home, and its product as the covariates in the regression adjustments. 
	\label{tab:emp}
\end{table}

\section{Conclusion}
This paper shows that the estimation error of the ``no-harm'' regression adjustments can contaminate the ATE estimator and degrade estimation efficiency when the number of regressors is of the same order as the sample size. We then propose a new ATE estimator which is guaranteed to be weakly more efficient than both the estimators with and without RAs. We also propose a consistent estimator of its asymptotic variance and construct the corresponding Wald-statistic.

\newpage

\appendix

\section{Verifying Assumption \ref{ass:omega} for Gaussian Covariates}\label{sec:app_omega}
For $i \in \aleph_s$, we assume that $X_i = \Sigma_{s,n}^{1/2} Z_i $ where $Z_i = (Z_{i,1},\cdots,Z_{i,k_n})$ are $k_n$ independent standard normal random variables and the covariance matrix $\Sigma_{s,n} \in \Re^{k_n \times k_n}$ is symmetric and positive definite. The assumption that $Z_i$ has zero mean is without loss of generality because the means will be canceled in the definition of $\breve X_i$. We focus on the case with $a=1$.  By an abuse of notation, we assume $\aleph_{1,s} = \{1,\cdots,n_{1,s}\}$ and $\aleph_{0,s} = \{1+n_{1,s},\cdots,n_{s}\}$ and denote $Z_{\aleph_{1,s}} = (Z_1,\cdots,Z_{n_{1,s}})^\top \in \Re^{n_{1,s} \times k_n}$ and $Z_{\aleph_{s}} = (Z_1,\cdots,Z_{n_{s}})^\top \in \Re^{n_{s} \times k_n}$. Then, we have 
$$\breve X_{\aleph_{1,s}} = X_{\aleph_{1,s}} - 1_{n_{1,s}} 1_{n_s}^\top X_{\aleph_s}/n_s \stackrel{d}{=} (Z_{\aleph_{1,s}} - 1_{n_{a,s}} 1_{n_s}^\top Z_{\aleph_s}/n_s)\Sigma_{s,n}^{1/2} = \Gamma_n Z_{\aleph_s}\Sigma_{s,n}^{1/2}$$  
and 
\begin{align*}
P_{1,s} =  \breve X_{\aleph_{1,s}} \left[\breve X_{\aleph_{1,s}}^\top \breve X_{\aleph_{1,s}}\right]^{-1} \breve X_{\aleph_{1,s}}^\top \stackrel{d}{=}    \Gamma_n Z_{\aleph_s} \left[Z_{\aleph_s}^\top \Gamma_n^\top \Gamma_n Z_{\aleph_s}\right]^{-1}   Z_{\aleph_s}^\top\Gamma_n^\top,
\end{align*}
where $\Gamma_n = (I_{n_{1,s}},0_{n_{1,s},n_{0,s}}) - 1_{n_{1,s}} 1_{n_s}^\top/n_s$ and $0_{n_{1,s},n_{0,s}}$ is a $n_{1,s} \times n_{0,s}$ matrix of zeros. For the $l$-th column of $\Gamma_nZ_{\aleph_s}$ denoted as  $\Gamma_nZ_{\aleph_s,l}$, we have 
\begin{align*}
\Gamma_nZ_{\aleph_s,l} \stackrel{d}{=} W_n^{1/2} Z_{\aleph_{1,s},l},
\end{align*}
where $Z_{\aleph_{1,s},l}$ is an $n_{1,s} \times 1$ standard normal vector and $W_n$ is a $n_{1,s} \times n_{1,s}$ matrix with $1-1/n_s$ and $-1/n_s$ being the diagonal and off-diagonal elements, respectively. Given that $Z_{\aleph_s}$ are independent across columns, we have
\begin{align*}
\Gamma_nZ_{\aleph_s} \stackrel{d}{=} W_n^{1/2} Z_{\aleph_{1,s}},
\end{align*}
which implies 
\begin{align}\label{eq:gamma_1sn}
1- \gamma_{1,s,n} & \stackrel{d}{=} \frac{1_{n_{1,s}}^\top  \Gamma_n Z_{\aleph_s}}{n_{1,s}} \left[\frac{Z_{\aleph_s}^\top \Gamma_n^\top \Gamma_n Z_{\aleph_s}}{n_{1,s}}\right]^{-1}   \frac{Z_{\aleph_s}^\top\Gamma_n^\top1_{n_{1,s}}}{n_{1,s}} \notag \\
& \stackrel{d}{=} \frac{1_{n_{1,s}}^\top  W_n^{1/2} Z_{\aleph_{1,s}}}{n_{1,s}} \left[\frac{Z_{\aleph_{1,s}}^\top W_n Z_{\aleph_{1,s}}}{n_{1,s}}\right]^{-1}   \frac{Z_{\aleph_{1,s}}^\top W_n^{1/2}1_{n_{1,s}}}{n_{1,s}} \notag \\
& \stackrel{d}{=} \frac{1_{n_{1,s}}^\top U_n D_n^{1/2} Z_{\aleph_{1,s}}}{n_{1,s}} \left[\frac{Z_{\aleph_{1,s}}^\top D_n Z_{\aleph_{1,s}}}{n_{1,s}}\right]^{-1}   \frac{Z_{\aleph_{1,s}}^\top D_n^{1/2} U_n^\top 1_{n_{1,s}}}{n_{1,s}} \notag \\
&= \frac{(1-n_{1,s}/n_s)}{n_{1,s}} Z_{n_{1,s}}^\top \left[\frac{1}{n_{1,s}}\sum_{i \in [n_{1,s}-1] }Z_iZ_i^\top + \left(\frac{1}{n_{1,s}} - \frac{1}{n_s}\right) Z_{n_{1,s}}Z_{n_{1,s}}^\top \right]^{-1} Z_{n_{1,s}} \notag \\
& = \frac{(1-n_{1,s}/n_s)}{n_{1,s}} Z_{n_{1,s}}^\top \left[\mathcal{A}_n^{-1} - \frac{\mathcal{A}_n^{-1} ( \left(\frac{1}{n_{1,s}}-\frac{1}{n_s}\right) Z_{n_{1,s}}Z_{n_{1,s}}^\top )\mathcal{A}_n^{-1} }{1+ \left(\frac{1}{n_{1,s}}-\frac{1}{n_s}\right) Z_{n_{1,s}}^\top \mathcal{A}_n^{-1}Z_{n_{1,s}}}  \right] Z_{n_{1,s}} ,
\end{align}
where $W_n = U_n D_n U_n^\top$ is the eigenvalue decomposition of $W_n$, $D_n = \diag(1,\cdots,1,1-n_{1,s}/n_s)$ is the eigenvalues, $U_n$ is the corresponding eigenvectors, $\mathcal{A}_n = \frac{1}{n_{1,s}}\sum_{i \in [n_{1,s}-1] }Z_iZ_i^\top$, the fourth equality is by $1_{n_{1,s}}^\top U_n = (0,\cdots,0,\sqrt{n_{1,s}})$, and the last equality follows the Sherman-Morrison matrix inverse formula.  Consider the eigenvalue decomposition of $\frac{n_{1,s}}{n_{1,s}-1}\mathcal{A}_n = \mathcal{U}_n \Lambda_n \mathcal{U}_n^\top$ where $\Lambda_n = \diag(\lambda_1,\cdots,\lambda_{k_n})$.  We note that $Z_{n_{1,s}}$ and $\mathcal{A}_n$, and thus, $\mathcal U_n$ and $\Lambda_n$, are independent by construction. Then, for any bounded and smooth function $f: \Re^{k_n} \mapsto \Re$, we have
\begin{align*}
\mathbb E [f(\mathcal{U}_n^\top Z_{n_{1,s}} \mid \Lambda_n] =   \mathbb E [ \mathbb E (f(\mathcal{U}_n^\top Z_{n_{1,s}}\mid \mathcal U_n, \Lambda_n) \mid \Lambda_n ] = \mathbb 
E f(Z_{n_{1,s}}) = \mathbb E f(\mathcal{U}_n^\top Z_{n_{1,s}}),
\end{align*}
where we use the fact that the distribution of $Z_{n_{1,s}}$ ($k_n \times 1$ standard normal random vector) is invariant to rotations. This implies $\mathcal{U}_n^\top Z_{n_{1,s}}$ and $\Lambda_n$ are independent. Denote $\mathcal{U}_n^\top Z_{n_{1,s}}$ as $\mathcal G = (g_1,\cdots,g_{K_n})^\top$ which are independent of $\Lambda_n$ and suppose $k_n/n_{1,s} \rightarrow \kappa_{1,s}$. Then, we have
\begin{align}\label{eq:A11}
\frac{1}{n_{1,s}} Z_{n_{1,s}}^\top \left(\frac{n_{1,s}}{n_{1,s}-1} \mathcal{A}_n\right)^{-1}Z_{n_{1,s}} & \stackrel{d}{=} \frac{1}{n_{1,s}}\sum_{l = 1}^{k_n} \lambda_l^{-1} g_l^2 \notag \\
& = \frac{1}{n_{1,s}}\sum_{l = 1}^{k_n} \lambda_l^{-1} + \frac{1}{n_{1,s}}\sum_{l = 1}^{k_n} \lambda_l^{-1} (g_l^2-1) \notag \\
& = \frac{k_n}{n_{1,s}} \frac{1}{k_n}\sum_{l = 1}^{k_n} \lambda_l^{-1} + o_P(1) \notag \\
& \convP  \int_{\lambda_-}^{\lambda_+} \frac{\sqrt{(\lambda_+ - \lambda)(\lambda - \lambda_-)}}{2\pi\lambda^2} d\lambda, 
\end{align}
where $\lambda_{\pm} = (1 \pm \sqrt{\kappa_{1,s}})^2$. We note that the third equality in \eqref{eq:A11} holds because $\lambda_{k_n} \convP 1 - \sqrt{\kappa_{1,s}}$ which is bounded away from zero, 
\begin{align*}
\mathbb E\left(\frac{1}{n_{1,s}}\sum_{l = 1}^{k_n} \lambda_l^{-1} (g_l^2-1) \mid \Lambda_n\right) = 0, 
\end{align*}
and 
\begin{align*}
Var \left(\frac{1}{n_{1,s}}\sum_{l = 1}^{k_n} \lambda_l^{-1} (g_l^2-1) \mid \Lambda_n\right) \leq \frac{k_n}{n_{1,s}^2\lambda_{k_n}^{2}}   \rightarrow 0,   
\end{align*}
and the last convergence in probability \eqref{eq:A11} holds by the Mar\v{c}enko-Pastur theorem (see, for example, \cite{bai08}). Denote $\int_{\lambda_-}^{\lambda_+} \frac{\sqrt{(\lambda_+ - \lambda)(\lambda - \lambda_-)}}{2\pi\lambda^2} d\lambda$ as $\zeta_{1,s}$, then we have
\begin{align*}
\frac{1}{n_{1,s}} Z_{n_{1,s}}^\top  \mathcal{A}_n^{-1}Z_{n_{1,s}} \convP \int_{\lambda_-}^{\lambda_+} \frac{\sqrt{(\lambda_+ - \lambda)(\lambda - \lambda_-)}}{2\pi\lambda^2} d\lambda,
\end{align*}
and thus, by \eqref{eq:gamma_1sn}, 
\begin{align*}
\gamma_{1,s,n} \convP \frac{1}{1+ (1-\pi_s)\zeta_{1,s}} \equiv \gamma_{1,s,\infty}.
\end{align*}

\section{Proof of Theorem \ref{thm:main}}
\label{sec:proof_main}
Denote $\hat \tau_s = \hat \tau_{1,s} - \hat \tau_{0,s}$, $\tau_{a,s} = \mathbb{E}(Y(a)|S=s)$ for $a=0,1$, and $\tau_s = \mathbb{E}(Y(1)-Y(0)|S=s)$. Note that 
\begin{align*}
Y_{i}(1) & = \alpha_{1,S_i} +X_i^\top \beta_{1,S_i} + e_{i,S_i}(1) + \eps_{i}(1) \\
& = \alpha_{1,S_i} + \overline{X}_{S_i}^\top\beta_{1,S_i} +  \breve X_i^\top \beta_{1,S_i} + e_{i,S_i}(1) + \eps_{i}(1), 
\end{align*}
which implies 
\begin{align*}
\hat \tau_{1,s} & = (1_{n_{1,s}}^\top M_{1,s} 1_{n_{1,s}})^{-1}(1_{n_{1,s}}^\top M_{1,s} Y_{\aleph_{1,s}}(1))\\
%& = (\sum_{i=1}^n \sum_{j=1}^n 1\{S_i=s,S_j=s\}A_iA_j M_{1,s,i,j})^{-1}(\sum_{i=1}^n \sum_{j=1}^n 1\{S_i=s,S_j=s\}A_iA_j M_{1,s,i,j} Y_{i}(1)) \\
& = \alpha_{1,s} + \overline{X}_{s}^\top\beta_{1,s} + (1_{n_{1,s}}^\top M_{1,s} 1_{n_{1,s}})^{-1}(1_{n_{1,s}}^\top M_{1,s} (e_{\aleph_{1,s}}(1) + \eps_{\aleph_{1,s}}(1))) \\
& = \frac{1}{n_{s}}\sum_{i \in [n]}1\{S_i=s\}\left(\mathbb{E}(Y_i(1)|X_i,S_i=s)- e_{i,s}(1)\right) \\
&+ (1_{n_{1,s}}^\top M_{1,s} 1_{n_{1,s}})^{-1}(1_{n_{1,s}}^\top M_{1,s} (e_{\aleph_{1,s}}(1) + \eps_{\aleph_{1,s}}(1))).
\end{align*}

%We assume that 
%\begin{align*}
%(n_{1,s}^{-1}1_{n_{1,s}}^\top M_{1,s} 1_{n_{1,s}})^{-1} = O_P(1)
%\end{align*}

By Assumption \ref{ass:linear}(iv), we have  
\begin{align*}
& \left\vert \frac{1}{n_{s}}\sum_{i \in [n]}1\{S_i=s\} e_{i,s}(1)\right\vert \leq \left( \frac{1}{n_{s}}\sum_{i \in [n]}1\{S_i=s\} e_{i,s}^2(1)\right)^{1/2}  = o_P(n^{-1/2})
\end{align*}
and 
\begin{align*}
\left\vert n_{1,s}^{-1}(1_{n_{1,s}}^\top M_{1,s} e_{\aleph_{1,s}}(1))\right\vert& \leq n_{1,s}^{-1} ||M_{1,s}||_{op} ||1_{n_{1,s}}||_2 ||e_{\aleph_{1,s}}(1)||_2 \\
& \leq \left( \frac{1}{n_{s}}\sum_{i \in [n]}1\{S_i=s\} e_{i,s}^2(1)\right)^{1/2} = o_P(n^{-1/2}).
\end{align*}

This implies the linear expansion 
\begin{align*}
\hat \tau_{1,s} - \tau_{1,s} & = \frac{1}{n_{s}}\sum_{i \in [n]}1\{S_i=s\}\left(\mathbb{E}(Y_i(1)|X_i,S_i=s)- \mathbb{E}(Y_i(1)|S_i=s)\right) \\
&+ \gamma_{1,s,n}^{-1} n_{1,s}^{-1}(1_{n_{1,s}}^\top M_{1,s}  \eps_{\aleph_{1,s}}(1)) + o_P(n^{-1/2}).
\end{align*}

Similarly, we have 
\begin{align*}
\hat \tau_{0,s} - \tau_{0,s} & = \frac{1}{n_{s}}\sum_{i \in [n]}1\{S_i=s\}\left(\mathbb{E}(Y_i(0)|X_i,S_i=s)- \mathbb{E}(Y_i(0)|S_i=s)\right) \\
&+ \gamma_{0,s,n}^{-1} n_{0,s}^{-1}(1_{n_{0,s}}^\top M_{0,s}  \eps_{\aleph_{0,s}}(0)) + o_P(n^{-1/2}),
\end{align*}
and thus
\begin{align*}
\hat \tau_s - \tau_s & = \frac{1}{n_{s}}\sum_{i \in [n]}1\{S_i=s\}\left[\mathbb{E}(Y_i(1)|X_i,S_i=s)-\mathbb{E}(Y_i(0)|X_i,S_i=s) - \tau_s\right] \\
&+ \gamma_{1,s,n}^{-1} n_{1,s}^{-1}(1_{n_{1,s}}^\top M_{1,s}  \eps_{\aleph_{1,s}}(1)) - \gamma_{0,s,n}^{-1} n_{0,s}^{-1}(1_{n_{0,s}}^\top M_{0,s}  \eps_{\aleph_{0,s}}(0)) + o_P(n^{-1/2}).
\end{align*}

Therefore, we have
\begin{align*}
\hat \tau^{adj} -\tau & = \sum_{s\in \mathcal{S}}\hat p_s \left[\tilde \gamma_{1,s}^{-1} n_{1,s}^{-1}(1_{n_{1,s}}^\top M_{1,s}  \eps_{\aleph_{1,s}}(1)) - \tilde \gamma_{1,s}^{-1} n_{0,s}^{-1}(1_{n_{0,s}}^\top M_{0,s}  \eps_{\aleph_{0,s}}(0))\right] \\
& + \sum_{s \in \mathcal{S}}\frac{\hat p_s}{n_{s}}\sum_{i \in [n]}1\{S_i=s\} \left[\mathbb{E}(Y_i(1)|X_i,S_i=s)- \mathbb{E}(Y_i(1)|S_i=s) \right] \\
& - \sum_{s \in \mathcal{S}}\frac{\hat p_s}{n_{s}}\sum_{i \in [n]}1\{S_i=s\} \left[\mathbb{E}(Y_i(0)|X_i,S_i=s) -\mathbb{E}(Y_i(0)|S_i=s)\right] \\
& + \frac{1}{n}\sum_{i \in [n]}\left[\mathbb{E}(Y_i(1)|S_i)-\mathbb{E}(Y_i(0)|S_i) - \tau\right] + o_P(n^{-1/2}).
\end{align*}

In addition, we have 
\begin{align*}
\hat \tau^{unadj} - \tau & =  \sum_{s \in \mathcal{S}}  \frac{\hat p_s}{n_{1,s}}\sum_{i \in [n]}A_i1\{S_i=s\}\eps_{i}(1)  - \sum_{s \in \mathcal{S}}  \frac{\hat p_s}{n_{0,s}}\sum_{i \in [n]}(1-A_i)1\{S_i=s\} \eps_{i}(0) \\
& + \sum_{s \in \mathcal{S}} \frac{\hat p_s}{n_{1,s}}\sum_{i \in [n]}A_i1\{S_i=s\}(\mathbb{E}(Y_i(1)|X_i,S_i=s) - \mathbb{E}(Y_i(1)|S_i=s)) \\
&  - \sum_{s \in \mathcal{S}} \frac{\hat p_s}{n_{0,s}}\sum_{i \in [n]}(1-A_i)1\{S_i=s\}(\mathbb{E}(Y_i(0)|X_i,S_i=s) - \mathbb{E}(Y_i(0)|S_i=s)) \\
&+ \frac{1}{n}\sum_{i \in [n]}(\mathbb{E}(Y_i(1)|S_i) - \mathbb{E}(Y_i(0)|S_i) - \tau).
\end{align*}

Therefore, we have
\begin{align*}
\sqrt{n}\begin{pmatrix}
\hat \tau^{adj} - \tau \\
\hat \tau^{unadj} - \tau
\end{pmatrix} = \mathcal U_n + \mathcal V_n + \begin{pmatrix}
\mathcal W_n \\
\mathcal W_n
\end{pmatrix} + o_P(1),
\end{align*}
where 
\begin{align}
\mathcal U_{n} = \sum_{s \in \mathcal{S}} \hat p_s \sqrt{n} \begin{pmatrix}
\gamma_{1,s,n}^{-1} n_{1,s}^{-1}(1_{n_{1,s}}^\top M_{1,s}  \eps_{\aleph_{1,s}}(1)) - 	\gamma_{0,s,n}^{-1} n_{0,s}^{-1}(1_{n_{0,s}}^\top M_{0,s}  \eps_{\aleph_{0,s}}(0))\\
n_{1,s}^{-1}(1_{n_{1,s}}^\top \eps_{\aleph_{1,s}}(1)) - n_{0,s}^{-1}(1_{n_{0,s}}^\top \eps_{\aleph_{0,s}}(0))
\end{pmatrix},
\label{eq:Un}
\end{align}
\begin{align}
\mathcal V_{n} = \sum_{s \in \mathcal{S}} \hat p_s \sqrt{n} \begin{pmatrix}
\frac{1}{n_{s}}\sum_{i \in [n]}1\{S_i=s\} \left(\phi_i(1)-\phi_i(0)\right) \\
\frac{1}{n_{1,s}}\sum_{i \in \aleph_{1,s} }\phi_i(1) -
\frac{1}{n_{0,s}}\sum_{i \in \aleph_{0,s}}\phi_i(0) 
\end{pmatrix},
\label{eq:Vn}
\end{align}
\begin{align}
\mathcal W_n = \frac{1}{\sqrt{n}}\sum_{i \in [n]}(\mathbb{E}(Y_i(1)|S_i) - \mathbb{E}(Y_i(0)|S_i) - \tau),
\label{eq:Wn}
\end{align}
and $\phi_i(a) =  \mathbb{E}(Y_i(a)|X_i,S_i)-\mathbb{E}(Y_i(a)|S_i)$. Then, the desired result holds by Lemma \ref{lem:clt}.

\section{Proof of Theorem \ref{thm:further}}
We note that 
\begin{align*}
\hat w \convP \frac{\Sigma_{2,2} - \Sigma_{1,2}}{\Sigma_{1,1}+\Sigma_{2,2} - 2\Sigma_{1,2}}   = \argmin_w (w,1-w) \Sigma (w,1-w)^\top.
\end{align*}
This implies $\hat \tau^*$ is weakly more efficient than both $\hat \tau^{adj}$ and $\hat \tau^{adj}$ which correspond to $\hat w = 1$ and 0, respectively. Then, by the continuous mapping theorem, we have
\begin{align*}
\sqrt{n}\left[ (\hat w, 1- \hat w) \hat \Sigma (\hat w, 1- \hat w)^\top\right]^{-1/2}(\hat \tau^* - \tau) \convD \N(0, 1).
\end{align*}

\section{Proof of Corollary \ref{cor:umpu}}
For the weak convergence, we note that 
\begin{align*}
\sqrt n (\hat \tau^* - \tau_0) = (\hat \omega, 1-\hat \omega) \Sigma^{1/2} \begin{pmatrix}
\hat N_1 \\
\hat N_2
\end{pmatrix} \convD ( \omega^*, 1-\omega^*) \Sigma^{1/2} \begin{pmatrix}
N_1 \\
N_2
\end{pmatrix}.
\end{align*}
Recall $(a,b) = (1,1)\Sigma^{-1/2}$. Given that both $N_1^* = \frac{aN_1 + bN_2}{\sqrt{a^2+b^2}}$ and the limit distribution of \\$  \sqrt{n}\left[ (\hat w, 1- \hat w) \hat \Sigma (\hat w, 1- \hat w)^\top\right]^{-1/2}(\hat \tau^* - \tau_0)$ have variance 1, it suffices to show that $a/b = \tilde a/\tilde b$ where $(\tilde a,\tilde b) = ( \omega^*, 1- \omega^*) \Sigma^{1/2}$ and $\omega^* = (\Sigma_{2,2} - \Sigma_{1,2})/(\Sigma_{1,1}+\Sigma_{2,2}-2\Sigma_{1,2})$. To see this, we note that 
\begin{align*}
\Sigma^{1/2} = \frac{1}{t}\begin{pmatrix}
\Sigma_{1,1} + \eta &  \Sigma_{1,2} \\
\Sigma_{1,2}  &  \Sigma_{2,2} + \eta
\end{pmatrix} \quad \text{and} \quad     \Sigma^{-1/2} = \frac{t}{(\Sigma_{1,1} + \eta)(\Sigma_{2,2} + \eta)-\Sigma^{2}_{1,2} }\begin{pmatrix}
\Sigma_{2,2} + \eta & - \Sigma_{1,2} \\
- \Sigma_{1,2}  &  \Sigma_{1,1} + \eta
\end{pmatrix},
\end{align*}
where $\eta = \sqrt{\Sigma_{1,1}\Sigma_{2,2} - \Sigma_{1,2}^2}$ and $t = \sqrt{\Sigma_{1,1} + \Sigma_{2,2} + 2\eta}$. Then, it is obvious that 
\begin{align*}
a & = \frac{t(\Sigma_{2,2} - \Sigma_{1,2} + \eta)}{(\Sigma_{1,1} + \eta)(\Sigma_{2,2} + \eta)-\Sigma^{2}_{1,2} }, \\
b & = \frac{t(\Sigma_{1,1} - \Sigma_{1,2} + \eta)}{(\Sigma_{1,1} + \eta)(\Sigma_{1,1} + \eta)-\Sigma^{2}_{1,2} }, \\
\tilde a & = \frac{(\Sigma_{2,2} - \Sigma_{1,2})(\Sigma_{1,1} + \eta) + (\Sigma_{1,1} - \Sigma_{1,2})\Sigma_{1,2}}{t (\Sigma_{1,1}+\Sigma_{2,2}-2\Sigma_{1,2})} \\
& = \frac{(\Sigma_{2,2}\Sigma_{1,1} - \Sigma_{1,2}^2) + \eta(\Sigma_{2,2} - \Sigma_{1,2}) }{t (\Sigma_{1,1}+\Sigma_{2,2}-2\Sigma_{1,2})} \\
& = \frac{\eta (\eta + \Sigma_{2,2} - \Sigma_{1,2}) }{t (\Sigma_{1,1}+\Sigma_{2,2}-2\Sigma_{1,2})}, \quad \text{and} \\
\tilde b & = \frac{\eta (\eta + \Sigma_{1,1} - \Sigma_{1,2}) }{t (\Sigma_{1,1}+\Sigma_{2,2}-2\Sigma_{1,2})},
\end{align*}
which implies $a/b = \tilde a/ \tilde b$. 

For the optimality result, note that 
\begin{align*}
(\mathbb W_n,\psi(\hat N_1,\hat N_2)) \convD (1\{(N^*)^2 \geq \mathcal C_{\alpha}\},\psi(N_1,N_2)),    
\end{align*}
where $(N_1,N_2)^\top \stackrel{d}{=}\N\left( \begin{pmatrix}
a\Delta \\
b\Delta
\end{pmatrix}, I_2 \right)$ and $N^* = \frac{aN_1 + bN_2}{\sqrt{a^2 + b^2}}$. Consider a limit experiment in which a researcher observes $(N_1,N_2)$,
knows the values of $(a,b)$, and wants to test $\Delta = 0$ versus the two-sided alternative. \citet[Section 4.2]{LR06} show $1\{(N^*)^2 \geq \mathcal C_{\alpha}\}$ is the UMP unbiased test for this limit experiment, implying that 
\begin{align*}
\mathbb E 1\{(N^*)^2 \geq \mathcal C_{\alpha}\} \geq \mathbb E \psi(N_1,N_2)
\end{align*}
for any $\psi \in \Psi_{\alpha}^U.$ This implies 
\begin{align*}
\lim_{n \rightarrow \infty}    \mathbb{E}\mathbb W_n =\mathbb E 1\{(N^*)^2 \geq \mathcal C_{\alpha}\}
= \sup_{\psi \in \Psi_{\alpha}^U } \mathbb E \psi(N_1,N_2) = \sup_{\psi \in \Psi_{\alpha}^U } \lim_{n \rightarrow \infty} \mathbb{E}\psi(\hat N_1,\hat N_2).
\end{align*}
The inequality that 
\begin{align*}
\sup_{\psi \in \Psi_{\alpha}^U } \lim_{n \rightarrow \infty} \mathbb{E}\psi(\hat N_1,\hat N_2) \geq  \lim_{n \rightarrow \infty} \mathbb{E}\breve{\psi}_n
\end{align*}
holds due to the uniform integrability of $\breve \psi_n$, the fact that $\breve \psi_n = \psi(\hat N_1,\hat N_2) + o_P(1)$ for some $\psi \in \Psi^U_\alpha$, and the dominated convergence theorem.

\section{Proof of Theorem \ref{thm:variance}}
We derive the limits of $\hat \Sigma_{\mathcal U}$, $\hat \Sigma_{\mathcal V}$, and $\hat \Sigma_{\mathcal W}$ in the following three steps. 

\textbf{Step 1: Limit of $\hat \Sigma_{\mathcal U}$}. Following \cite{BCS17}, we define $\{(X_i^s,\eps_i^s(1),\eps_i^s(0)): 1\leq i \leq n\}$ as a sequence of i.i.d. random variables with marginal distributions equal to the distribution of $(X_i,\eps_i(1),\eps_i(0))|S_i = s$ and $N_s = \sum_{i =1}^n1\{S_i <s\}$. We order units by strata and then by $A_i = 1$ first and $A_i = 0$ second within each stratum. This means for the $s$th stratum, units indexed from $N_s+1$ to $N_s + n_{1,s}$ are treated and units indexed from $N_s + n_{1,s}+1$ to $N_s + n_s$ are untreated. Then, we have $(Y_i(a)\mid S_i=s) \stackrel{d}{=} Y_i^s(a)$ and $(\tilde e_i(a)\mid S_i=s) \stackrel{d}{=} e_{i,s}(a)$ where 
\begin{align*}
Y_i^s(a) = \alpha_{a,s} + (X_i^s)^\top \beta_{a,s} + \tilde e_i^s(a) + \eps_i^s(a) \quad \text{and} \quad \tilde e_i^s(a) =  \mathbb{E}(Y_i(a)|X_i=x,S_i=s)  - \alpha_{a,s} - (X_i^s)^\top \beta_{a,s}. 
\end{align*}
We further define $\tilde M_{a,s,i,j}$ as the $(i,j)$th entry of the $n_{a,s} \times n_{a,s}$ matrix,  
$\tilde M_{a,s} =I_{n_{a,s}}- \Xi_{a,s} (\Xi_{a,s}^\top  \Xi_{a,s})^{-1} \Xi_{a,s}^\top $, $\Xi_{1,s} = ((\breve X_{N_s+1}^s)^\top,\cdots,( \breve X_{N_s+n_{1,s}}^s)^\top)^\top$ is an $n_{1,s} \times k_n$ matrix, $\Xi_{0,s} = ((\breve X_{N_s+1+n_{1,s}}^s)^\top,\cdots,(\breve X_{N_s+n_{s}}^s)^\top)^\top$ is an $n_{0,s} \times k_n$ matrix,
\begin{align*}
\breve X_i^s = 	X_i^s - \frac{1}{n_{s}}\sum_{j=N_s+1}^{N_s+n_{s}}X_j^s, \quad \text{if}\quad N_s+1 \leq i \leq N_s+n_{s},
\end{align*}
$\tilde \gamma_{a,s,n} = 1_{n_{a,s}}^\top \tilde M_{a,s} 1_{n_{a,s}}$, and $\phi_i^s(a) = \mathbb{E}(Y_i(a)|X_i,S_i=s)- \mathbb{E}(Y_i(a)|S_i=s)$.

\textbf{Step 1.1: Limit of $\hat \omega_{a,s}^2$.} We consider the case with $a=1$. The result for $a=0$ can be established in the same manner. Given the above definitions, we see that, conditionally on $(A^{(n)},S^{(n)})$, and thus, unconditionally,  
\begin{align*}
\hat \omega_{1,s}^2 \stackrel{d}{=} \frac{1}{n_{1,s}}\tilde \gamma_{1,s,n}^{-2}\sum_{i=N_s+1}^{N_s + n_{1,s}} \left(\sum_{j=N_s+1}^{N_s + n_{1,s}} \tilde M_{1,s,i,j} \right)^2 Y_i^s(1) \tilde {\acute {\eps}}_{1,s,i},
\end{align*}
where $\tilde {\acute {\eps}}_{1,s,i} = \tilde {\eps}_{1,s,i}/ \tilde M_{1,s,i,i}$ and $\tilde {\eps}_{1,s,i}$ is the residual from the linear regression of $Y_i^s(1)$ on $(1,\breve X_i^s)$ with observations $N_s+1 \leq i \leq N_s + n_{1,s}$. Because $\{(X_i^s,Y_i^s(1)): 1\leq i \leq n\}$ are i.i.d. and independent of $(A^{(n)},S^{(n)})$ we can directly apply \citet[Theorem 1]{J22} and conclude that  
\begin{align*}
\hat \omega_{1,s}^2 - \tilde \gamma_{1,s,n}^{-2} \sigma_{1,s,n}^2 = o_P(1). 
\end{align*}
Then, because $\tilde \gamma_{a,s,n} \stackrel{d}{=}\gamma_{a,s,n}$ and Assumption \ref{ass:omega} holds, we have
\begin{align*}
\hat \omega_{1,s}^2 \convP \omega_{1,s,\infty}^2.
\end{align*}

\textbf{Step 1.2: Limit of $\hat \varpi_{a,s}$.} Again, we focus on $\hat \varpi_{1,s}$. Following the argument in \textbf{Step 1.1}, we have
\begin{align*}
\hat \varpi_{1,s} \stackrel{d}{=} \frac{1}{n_{1,s}}\tilde \gamma_{1,s,n}^{-1}\sum_{i=N_s+1}^{N_s + n_{1,s}} \left(\sum_{j=N_s+1}^{N_s + n_{1,s}} \tilde M_{1,s,i,j} \right) Y_i^s(1) \tilde {\acute {\eps}}_{1,s,i}.
\end{align*}
Note that 
\begin{align*}
\tilde {\acute {\eps}}_{1,s,i}  & = \frac{\sum_{k=N_s+1}^{N_s + n_{1,s}} \tilde M_{1,s,i,k} (Y_k^s(1)-\tilde \tau_{1,s})}{\tilde M_{1,s,i,i}} \\
& = \frac{\sum_{k=N_s+1}^{N_s + n_{1,s}} \tilde M_{1,s,i,k} (\tilde e_k^s(1) +  \alpha_{1,s} + \widetilde X_s^\top \beta_{1,s} -\tilde \tau_{1,s} + \eps_k^s(1))}{\tilde M_{1,s,i,i}},
\end{align*}
where $\tilde \tau_{1,s}$ is the intercept of the OLS regression of $Y_i^s(1)$ on $(1,\breve X_i^s)$ using observations $N_s + 1 \leq i \leq N_s + n_{1,s}$ and $\widetilde X_s = \frac{1}{n_{s}} \sum_{i=N_s+1}^{N_s + n_{s}}X_i^s$. By construction, we have $\tilde \tau_{1,s} \stackrel{d}{=} \hat \tau_{1,s}$. Let $\theta_i = \sum_{k=N_s+1}^{N_s + n_{1,s}} \tilde M_{1,s,i,k}$. 
Therefore, we have 
\begin{align*}
& \frac{1}{n_{1,s}}\tilde \gamma_{1,s,n}^{-1}\sum_{i=N_s+1}^{N_s + n_{1,s}} \left(\sum_{j=N_s+1}^{N_s + n_{1,s}} \tilde M_{1,s,i,j} \right) Y_i^s(1) \tilde {\acute {\eps}}_{1,s,i} \\
& = \frac{1}{n_{1,s}}\tilde \gamma_{1,s,n}^{-1} \sum_{i=N_s+1}^{N_s + n_{1,s}}\sum_{k=N_s+1}^{N_s + n_{1,s}}  \frac{\theta_i Y_i^s(1)  \tilde  M_{1,s,i,k} \eps_k^s(1)}{\tilde M_{1,s,i,i}} + \frac{1}{n_{1,s}}\tilde \gamma_{1,s,n}^{-1} \sum_{i=N_s+1}^{N_s + n_{1,s}}\sum_{k=N_s+1}^{N_s + n_{1,s}}  \frac{\theta_i Y_i^s(1)  \tilde  M_{1,s,i,k} \tilde e_k^s(1)}{\tilde M_{1,s,i,i}} \\
& +  \frac{1}{n_{1,s}}\tilde \gamma_{1,s,n}^{-1}\sum_{i=N_s+1}^{N_s + n_{1,s}}     \frac{\theta_i^2 Y_i^s(1)}{\tilde M_{1,s,i,i}}(\alpha_{1,s} + \widetilde X_s^\top \beta_{1,s} -\tilde \tau_{1,s})) \\
& \equiv T_1 + T_2 + T_3. 
\end{align*}
For $T_2$, we have
\begin{align*}
|T_2| & \leq \tilde \gamma_{1,s,n}^{-1} \left[\frac{1}{n_{1,s}} \sum_{i=N_s+1}^{N_s + n_{1,s}} \frac{\theta_i^2 (Y_i^s(1))^2}{\tilde M_{1,s,i,i}^2}  \right]^{1/2} \left[\frac{1}{n_{1,s}} \sum_{i=N_s+1}^{N_s + n_{1,s}} (\tilde e_i^s(1))^2  \right]^{1/2} \\
& \leq  \gamma_{1,s,n}^{-1} \frac{\max_{ N_s+1 \leq i \leq N_s + n_{1,s}} |\theta_i| }{\min_{ N_s+1 \leq i \leq N_s + n_{1,s}} \tilde M_{1,s,i,i} } \left[\frac{1}{n_{1,s}} \sum_{i=N_s+1}^{N_s + n_{1,s}} (Y_i^s(1))^2 \right]^{1/2} \left[\frac{1}{n_{1,s}} \sum_{i=N_s+1}^{N_s + n_{1,s}} (\tilde e_i^s(1))^2  \right]^{1/2}\\
& = o_P(1),
\end{align*}
where the first inequality is by $||\tilde M_{1,s}||_{op} \leq 1$ and the last equality holds because  by Assumption \ref{ass:linear}, we have 
\begin{align*}
& \min_{ N_s+1 \leq i \leq N_s + n_{1,s} } \tilde M_{1,s,i,i} \stackrel {d}{=}  \min_{ N_s+1 \leq i \in \aleph_{1,s} }M_{1,s,i,i} \geq \delta>0, \\
& \frac{1}{n_{1,s}} \sum_{i=N_s+1}^{N_s + n_{1,s}} (\tilde e_i^s(1))^2 \stackrel{d}{=}\frac{1}{n_{1,s}} \sum_{i \in \aleph_{1,s}} e_{i,s}^2(1) = o_P(n^{-1}), \quad \text{and}\\
& \max_{ N_s+1 \leq i \leq N_s + n_{1,s}}|\theta_i| \stackrel{d}{=} \max_{i \in \aleph_{1,s}} |\sum_{j \in \aleph_{1,s}} M_{1,s,i,j}| = o_P(n^{1/2}).
\end{align*}
For $T_3$, we have
\begin{align*}
|T_3| & \leq \tilde \gamma_{1,s,n}^{-1} \left[\frac{1}{n_{1,s}}\sum_{i=N_s+1}^{N_s + n_{1,s}}     \frac{\theta_i^4}{\tilde M_{1,s,i,i}^2}\right]^{1/2} \left[\frac{1}{n_{1,s}}\sum_{i=N_s+1}^{N_s + n_{1,s}}     (Y_i^s(1))^2\right]^{1/2} |\alpha_{1,s} + \widetilde X_s^\top \beta_{1,s} -\tilde \tau_{1,s}| \\
& \leq \gamma_{1,s,n}^{-1} \frac{\max_{ N_s+1 \leq i \leq N_s + n_{1,s}} |\theta_i| }{\min_{ N_s+1 \leq i \leq N_s + n_{1,s}} \tilde M_{1,s,i,i} } \left[\frac{1}{n_{1,s}}\sum_{i=N_s+1}^{N_s + n_{1,s}}  \theta_i^2\right]^{1/2} \\
& \times  \left[\frac{1}{n_{1,s}}\sum_{i=N_s+1}^{N_s + n_{1,s}}     (Y_i^s(1))^2\right]^{1/2} |\alpha_{1,s} + \widetilde X_s^\top \beta_{1,s} -\tilde \tau_{1,s})| \\
& = o_P(1),
\end{align*}
where the last equality holds because 
\begin{align*}
\frac{1}{n_{1,s}}\sum_{i=N_s+1}^{N_s + n_{1,s}}  \theta_i^2 = \frac{1}{n_{1,s}}\sum_{i=N_s+1}^{N_s + n_{1,s}}\sum_{j=N_s+1}^{N_s + n_{1,s}}  \tilde M_{1,s,i,j} 
= \frac{1}{n_{1,s}}1_{n_{1,s}}^\top \tilde M_{1,s} 1_{n_{1,s}} \leq ||\tilde M_{1,s}||_{op} \leq 1
\end{align*}
and 
\begin{align*}
|\alpha_{1,s} + \widetilde X_s^\top \beta_{1,s} -\tilde \tau_{1,s})| \stackrel{d}{=}  |\alpha_{1,s} + \overline X_s^\top \beta_{1,s} -\hat  \tau_{1,s})| & = |(1_{n_{1,s}}^\top M_{1,s} 1_{n_{1,s}})^{-1}(1_{n_{1,s}}^\top M_{1,s}\eps_{\aleph_{1,s}}(1))| + o_P(n^{-1/2})\\ 
&= O_P(n^{-1/2}).   
\end{align*}

Next, we focus on $T_1$. Denote $\mu_i^s(1) = \mathbb{E}(Y_i^s(1)|X_i^s) \stackrel{d}{=} \mathbb{E}(Y_i(1)|X_i,S_i=s)$. Then, we have
\begin{align*}
T_1 & = \frac{1}{n_{1,s}}\tilde \gamma_{1,s,n}^{-1} \sum_{i=N_s+1}^{N_s + n_{1,s}}\sum_{k=N_s+1}^{N_s + n_{1,s}}  \frac{\theta_i \mu_i^s(1)    M_{1,s,i,k} \eps_k^s(1)}{\tilde M_{1,s,i,i}} \\
& + \frac{1}{n_{1,s}}\tilde \gamma_{1,s,n}^{-1} \sum_{i=N_s+1}^{N_s + n_{1,s}}\sum_{N_s+1 \leq k \leq N_s + n_{1,s},k\neq i}  \frac{\theta_i \eps_i^s(1)    M_{1,s,i,k} \eps_k^s(1)}{\tilde M_{1,s,i,i}} \\
& + \frac{1}{n_{1,s}}\tilde \gamma_{1,s,n}^{-1} \sum_{i=N_s+1}^{N_s + n_{1,s}}  \theta_i [(\eps_i^s(1))^2 - \tilde V_{1,s,i}^2] + \frac{1}{n_{1,s}}\tilde \gamma_{1,s,n}^{-1} \sum_{i=N_s+1}^{N_s + n_{1,s}}  \theta_i \tilde V_{1,s,i}^2 \\
& \equiv T_{1,1}+T_{1,2}+T_{1,3}+T_{1,4},
\end{align*}
where $\tilde V_{1,s,i}^2 = \mathbb{E}[(\eps_i^s(1))^2|X_i^s] \stackrel{d}{=} \mathbb{E}(\eps_i^2(1)|X_i,S_i=s)$. Recall $\Xi_{1,s} = ((\breve X_{N_s+1}^s)^\top,\cdots,( \breve X_{N_s+n_{1,s}}^s)^\top)^\top$. We have $\mathbb{E} (T_{1,1}\mid \Xi_{1,s}) = 0$ and 
\begin{align*}
Var(T_{1,1}\mid \Xi_{1,s}) & = \frac{1}{n_{1,s}^2 \tilde \gamma_{1,s,n}^2} \sum_{k=N_s+1}^{N_s + n_{1,s}} \tilde V_{1,s,k}^2 \left[ \sum_{i=N_s+1}^{N_s + n_{1,s}}\frac{\theta_i \mu_i^s(1)    M_{1,s,i,k}}{\tilde M_{1,s,i,i}} \right]^2 \\
& \leq  \frac{\max_{i \in [n]}\tilde V_{1,s,i}^2}{n_{1,s}^2 \tilde \gamma_{1,s,n}^2} \sum_{k=N_s+1}^{N_s + n_{1,s}}  \left[ \sum_{i=N_s+1}^{N_s + n_{1,s}}\frac{\theta_i \mu_i^s(1)    M_{1,s,i,k}}{\tilde M_{1,s,i,i}} \right]\left[ \sum_{j=N_s+1}^{N_s + n_{1,s}}\frac{\theta_j \mu_j^s(1)    M_{1,s,j,k}}{\tilde M_{1,s,j,j}} \right] \\
& =  \frac{\max_{i \in [n]}\tilde V_{1,s,i}^2}{n_{1,s}^2 \tilde \gamma_{1,s,n}^2} \sum_{i=N_s+1}^{N_s + n_{1,s}} \sum_{j=N_s+1}^{N_s + n_{1,s}} \left[ \frac{\theta_i \mu_i^s(1) \theta_j \mu_j^s(1)   M_{1,s,i,j}}{\tilde M_{1,s,i,i} \tilde M_{1,s,j,j}} \right] \\
& \leq \frac{\max_{i \in [n]}\tilde V_{1,s,i}^2}{n_{1,s}^2 \tilde \gamma_{1,s,n}^2}\sum_{i=N_s+1}^{N_s + n_{1,s}} \left[ \frac{\theta_i \mu_i^s(1)}{\tilde M_{1,s,i,i}} \right]^2 \\
& \leq \left[\frac{\max_{i \in [n]}\tilde V_{1,s,i}^2}{n_{1,s} \tilde \gamma_{1,s,n}^2 \min_{ N_s+1 \leq i \leq N_s + n_{1,s} } \tilde M_{1,s,i,i}^2 }\sum_{i=N_s+1}^{N_s + n_{1,s}} (\mu_i^s(1))^2 \right] \left[\frac{ \max_{ N_s+1 \leq i \leq N_s + n_{1,s}}\theta_i^2 }{n_{1,s} }\right] \\
& = o_P(1),
\end{align*}
where the second equality holds because the matrix $\tilde M_{1,s}$ is idempotent, the second inequality holds because $||\tilde M_{1,s}||_{op} \leq 1$, and the last equality holds because by Assumption \ref{ass:linear},  we have
\begin{align*}
& \min_{ N_s+1 \leq i \leq N_s + n_{1,s} } \tilde M_{1,s,i,i} \stackrel {d}{=}  \min_{ N_s+1 \leq i \in \aleph_{1,s} }M_{1,s,i,i} \geq \delta>0, \quad    \max_{i \in [n]}\tilde V_{1,s,i}^2 = O_P(1), \quad \text{and}\\
& \max_{ N_s+1 \leq i \leq N_s + n_{1,s}}|\theta_i| \stackrel{d}{=} \max_{i \in \aleph_{1,s}} \left|\sum_{j \in \aleph_{1,s}} M_{1,s,i,j}\right| = o_P(n^{1/2}).
\end{align*}
This implies $T_{1,1} = o_P(1)$.

For $T_{1,2}$, we have $\mathbb{E} (T_{1,2}\mid \Xi_{1,s}) = 0$ and 
\begin{align*}
Var(T_{1,2}\mid \Xi_{1,s}) & = \frac{1}{n_{1,s}^2}\tilde \gamma_{1,s,n}^{-2} \sum_{i=N_s+1}^{N_s + n_{1,s}}\sum_{N_s+1 \leq k \leq N_s + n_{1,s},k\neq i}   \frac{\theta_i^2  \tilde M_{1,s,i,k}^2 \tilde V_{1,s,i}^2\tilde V_{1,s,k}^2 }{\tilde M_{1,s,i,i}^2} \\
& + \frac{1}{n_{1,s}^2}\tilde \gamma_{1,s,n}^{-2} \sum_{i=N_s+1}^{N_s + n_{1,s}}\sum_{N_s+1 \leq k \leq N_s + n_{1,s},k\neq i}   \frac{|\theta_i \theta_k|  \tilde M_{1,s,i,k}^2 \tilde V_{1,s,i}^2\tilde V_{1,s,k}^2 }{\tilde M_{1,s,i,i}^2} \\
& \leq \frac{\max_{i \in [n]}\tilde V_{1,s,i}^4}{n_{1,s}^2 \tilde \gamma_{1,s,n}^2} \sum_{i=N_s+1}^{N_s + n_{1,s}}\frac{\theta_i^2 \sum_{k=N_s+1}^{N_s + n_{1,s}} \tilde M_{1,s,i,k}^2}{\tilde M_{1,s,i,i}^2} \\
& + \frac{\max_{i \in [n]}\tilde V_{1,s,i}^4}{n_{1,s}^2 \tilde \gamma_{1,s,n}^2} \sum_{i=N_s+1}^{N_s + n_{1,s}} \sum_{k=N_s+1}^{N_s + n_{1,s}}\frac{|\theta_i \theta_k| \tilde M_{1,s,i,k}^2}{\tilde M_{1,s,i,i}^2} \\
& \leq 2\left[\frac{\max_{ N_s+1 \leq i \leq N_s + n_{1,s} }\tilde V_{1,s,i}^4 \max_{ N_s+1 \leq i \leq N_s + n_{1,s} }\theta_i^2 }{n_{1,s} \tilde \gamma_{1,s,n}^2 \min_{ N_s+1 \leq i \leq N_s + n_{1,s} } \tilde M_{1,s,i,i} } \right]  \\
& = o_P(1),
\end{align*}
where the last equality is by $\max_{ N_s+1 \leq i \leq N_s + n_{1,s} }\theta_i^2 = o_P(n)$.

For $T_{1,3}$, we have $\mathbb{E} (T_{1,3}\mid \Xi_{1,s}) = 0$ and
\begin{align*}
Var(T_{1,3} \mid  \Xi_{1,s}) & = \frac{\tilde \gamma_{1,s,n}^{-2}}{n_{1,s}^2} \sum_{i = N_s+1}^{N_s+n_{1,s}} \theta_i^2 \mathbb{E}\left([(\eps_i^s(1))^2 - \tilde V_{1,s,i}^2]^2\mid \Xi_{1,s}\right) \\
& \leq \frac{\tilde \gamma_{1,s,n}^{-2} \max_{N_s + 1 \leq i \leq N_s+n_{1,s}}\mathbb{E}\left([(\eps_i^s(1))^2 - \tilde V_{1,s,i}^2]^2\mid \Xi_{1,s}\right)}{n_{1,s}^2} \sum_{i = N_s+1}^{N_s+n_{1,s}} \theta_i^2 = O_P(n^{-1}), 
\end{align*}
where we use the fact that $\sum_{i = N_s+1}^{N_s+n_{1,s}}\theta_i^2 = 1_{n_{1,s}}^\top \tilde M_{1,s} 1_{n_{1,s}} \leq n_{1,s}$. This implies $T_{1,3} = o_P(1)$. 

Last, by Assumption \ref{ass:omega}, we have
\begin{align*}
T_{1,4} \stackrel{d}{=}  \frac{1}{n_{1,s}} \gamma_{1,s,n}^{-1} \sum_{i \in \aleph_{1,s}}  (\sum_{j \in \aleph_{1,s}}M_{1,s,i,j})  \mathbb{E}(\eps_i^2(1)|X_i,S_i=s) \convP  \varpi_{a,s,\infty}.
\end{align*}

% $\tilde \gamma_{a,s,n} = 1_{n_{a,s}}^\top \tilde M_{a,s} 1_{n_{a,s}}$, 
% and $\phi_i^s(a) = \mathbb{E}(Y_i(a)|X_i,S_i=s)- \mathbb{E}(Y_i(a)|S_i=s)$. 

This concludes the proof of \textbf{Step 1.2.}

% \textbf{Step 1.3: Limit of $\hat \nu_{a,s}^2$.} Following the same proof in \textbf{Step 1.2} with $\theta_i$ and $\tilde \gamma_{1,s,n}$ replaced by 1, we have 
% \begin{align*}
%  \hat \nu_{1,s}^2 & = \frac{1}{n_{1,s}} \sum_{i \in \aleph_{1,s}} Y_i \acute \eps_{1,s,i} \\    & \stackrel{d}{=} \frac{1}{n_{1,s}} \sum_{i=N_s+1}^{N_s+n_{1,s}} Y_i^s(1) \tilde{\acute \eps}_{1,s,i} \\
%  & = \frac{1}{n_{1,s}} \sum_{i=N_s+1}^{N_s+n_{1,s}} \tilde V_{1,s,i}^2 +o_P(1) \\
%  &\convP \mathbb{E}(\eps_i^2(a)|X_i,S_i=s) = Var(\eps_i(a)|S_i=s). 
% \end{align*}

\textbf{Step 1.3: Consistency of $\hat \Sigma_{\mathcal U}$.} We note that $n_s^2/(nn_{1,s}) \convP p_s/\pi_s$ and $n_s^2/(nn_{0,s}) \convP p_s/(1-\pi_s)$. Therefore, we have
% \begin{align*}
%     \hat \Sigma_{\mathcal U} \convP \sum_{s \in \mathcal{S}}p_s \begin{pmatrix}
%           \frac{\omega_{1,s,\infty}^2}{\pi_s} + \frac{\omega_{0,s,\infty}^2}{1-\pi_s}  &  \frac{\omega_{1,s,\infty}^2}{\pi_s} + \frac{\omega_{0,s,\infty}^2}{1-\pi_s} \\
%          \frac{\omega_{1,s,\infty}^2}{\pi_s} + \frac{\omega_{0,s,\infty}^2}{1-\pi_s} &  \frac{Var(\eps_i(1)|S_i=s)}{\pi_s} + \frac{Var(\eps_i(0)|S_i=s)}{1-\pi_s} 
%     \end{pmatrix} = \Sigma_{\mathcal U}.
% \end{align*}

\begin{align*}
\hat \Sigma_{\mathcal U} \convP \sum_{s \in \mathcal{S}}p_s \begin{pmatrix}
\frac{\omega_{1,s,\infty}^2}{\pi_s} + \frac{\omega_{0,s,\infty}^2}{1-\pi_s}  &  \frac{\omega_{1,s,\infty}^2}{\pi_s} + \frac{\omega_{0,s,\infty}^2}{1-\pi_s} \\
\frac{\omega_{1,s,\infty}^2}{\pi_s} + \frac{\omega_{0,s,\infty}^2}{1-\pi_s} &  \bigcdot 
\end{pmatrix}.
\end{align*}

% \min_{N_s+1\leq i \leq N_s + n_{1,s}}

\textbf{Step 2: Limit of $\hat \Sigma_{\mathcal V}^{adj}$}. 
It suffices to consider the limits of $(\hat \beta_{a,s}^\top \Gamma_{a,s} \hat \beta_{a,s} - \sum_{i \in \aleph_{a,s}}P_{a,s,i,i} Y_i \acute \eps_{a,s,i})/n_{a,s}$ and $\hat \beta_{1,s}^\top \Gamma_s \hat \beta_{0,s}/n_s$, which are derived in the following two steps. 

\textbf{Step 2.1: Limit of $(\hat \beta_{a,s}^\top \Gamma_{a,s} \hat \beta_{a,s} - \sum_{i \in \aleph_{a,s}}P_{a,s,i,i} Y_i \acute \eps_{a,s,i})/n_{a,s}$.} We note that
\begin{align}
\hat \beta_{a,s} & = \Gamma_{a,s}^{-1} \sum_{i \in \aleph_{a,s}}\Breve{X}_i (Y_i-\hat \tau_a (s))    \notag \\
& = \Gamma_{a,s}^{-1}\sum_{i \in \aleph_{a,s}}\Breve{X}_i \left[(\alpha_{a,s} + \overline X_s^\top \beta_{a,s}-\hat \tau_a (s)) + \breve X_i^\top \beta_{a,s} + e_{i,s}(a) + \eps_{i}(a) \right] \notag \\
& = \beta_{a,s} + \Gamma_{a,s}^{-1}\sum_{i \in \aleph_{a,s}}\Breve{X}_i\eps_{i}(a)+ \Gamma_{a,s}^{-1}\sum_{i \in \aleph_{a,s}}\Breve{X}_i \left[(\alpha_{a,s} + \overline X_s^\top \beta_{a,s}-\hat \tau_a (s)) +  e_{i,s}(a) \right].
\label{eq:betahat}
\end{align}

By \eqref{eq:betahat}, we have 
\begin{align}
&  \hat \beta_{a,s}^\top \Gamma_{a,s} \hat \beta_{a,s} - \sum_{i \in \aleph_{a,s}}P_{a,s,i,i} Y_i \acute \eps_{a,s,i} \notag \\
& =  \beta_{a,s}^\top \Gamma_{a,s} \beta_{a,s} + \left( \eps_{\aleph_{a,s}}^\top(a) P_{a,s}\eps_{\aleph_{a,s}}(a) - \sum_{i \in \aleph_{a,s}}P_{a,s,i,i} Y_i \acute \eps_{a,s,i}\right) + 2 \sum_{i \in \aleph_{a,s}} \Breve{X}_i^\top \beta_{a,s} \eps_{i}(a) \notag \\
&+ 2 \sum_{i \in \aleph_{a,s}} \Breve{X}_i^\top \beta_{a,s} (\alpha_{a,s} + \overline X_s^\top \beta_{a,s}-\hat \tau_a (s)) + \eps_{\aleph_{a,s}}^\top(a) P_{a,s} \left[1_{n_{a,s}}(\alpha_{a,s} + \overline X_s^\top \beta_{a,s}-\hat \tau_a (s)) + e_{\aleph_{a,s}}(a)\right] \notag \\
& + \left[1_{n_{a,s}}(\alpha_{a,s} + \overline X_s^\top \beta_{a,s}-\hat \tau_a (s)) + e_{\aleph_{a,s}}(a)\right]^\top P_{a,s} \left[1_{n_{a,s}}(\alpha_{a,s} + \overline X_s^\top \beta_{a,s}-\hat \tau_a (s)) + e_{\aleph_{a,s}}(a)\right] \notag \\
& \equiv \beta_{a,s}^\top \Gamma_{a,s} \beta_{a,s} + \sum_{l\in [5]}I_l,
\label{eq:bhat21}
\end{align}
where $e_{\aleph_{a,s}}(a)$ and $\eps_{\aleph_{a,s}}(a)$ are $n_{a,s} \times 1$ vectors of $\{e_{i,s}(a)\}_{i \in \aleph_{a,s}}$ and $\{\eps_{i,s}(a)\}_{i \in \aleph_{a,s}}$, respectively. We note that 
\begin{align}\label{eq:xb}
\sum_{i \in \aleph_{a,s}} (\breve X_i^\top \beta_{a,s})^2 & =     \sum_{i \in \aleph_{a,s}} \left[\mathbb
E(Y_i(a)|X_i,S_i=s) - e_{i,s}(a) - \frac{1}{n_s}\sum_{j \in \aleph_s}\left(\mathbb
E(Y_j(a)|X_j,S_j=s) - e_{j,s}(a)\right) \right]^2 \notag \\
& \lesssim    \sum_{i \in \aleph_{a,s}} \left[\mathbb
E(Y_i(a)|X_i,S_i=s) - e_{i,s}(a)\right]^2 +  \frac{n_{a,s}}{n_s}\sum_{j \in \aleph_s}\left[\left(\mathbb
E(Y_j(a)|X_j,S_j=s) - e_{j,s}(a)\right) \right]^2 \notag \\
& = O_P(n),
\end{align}
\begin{align*}
& \left| \left[\frac{1}{n_{a,s}}\beta_{a,s}^\top \Gamma_{a,s} \beta_{a,s}\right]^{1/2} - \left[\frac{1}{n_{a,s}}\sum_{i \in \aleph_{a,s}}(\mathbb E(Y_i(a)|X_i,S_i=s) - \mathbb E(Y_i(a)|S_i=s))^2 \right]^{1/2}\right| \\
& = \left| \left[\frac{1}{n_{a,s}}\sum_{i \in \aleph_{a,s}}(\breve X_i^\top \beta_{a,s})^2\right]^{1/2} - \left[\frac{1}{n_{a,s}}\sum_{i \in \aleph_{a,s}}(\mathbb E(Y_i(a)|X_i,S_i=s) - \mathbb E(Y_i(a)|S_i=s))^2 \right]^{1/2}\right|  \\
& \leq \left[ \frac{1}{n_{a,s}}\sum_{i \in \aleph_{a,s}}\left(e_{i,s}(a) - \frac{1}{n_s}\sum_{j \in \aleph_s} e_{j,s}(a) \right)^2\right]^{1/2} + \left|\frac{1}{n_s}\sum_{j \in \aleph_s}\mathbb
E(Y_j(a)|X_j,S_j=s) - \mathbb E(Y_i(a)|S_i=s)\right| \\
& = o_P(1),
\end{align*}
and 
\begin{align*}
\frac{1}{n_{a,s}}\sum_{i \in \aleph_{a,s}}(\mathbb E(Y_i(a)|X_i,S_i=s) - \mathbb E(Y_i(a)|S_i=s))^2 \convP var(\phi_i(a)|S_i=s),    
\end{align*}
where $\phi_i(a) = \mathbb E(Y_i(a)|X_i,S_i=s) - \mathbb E(Y_i(a)|S_i=s)$. This implies 
\begin{align*}
\frac{1}{n_{a,s}}\beta_{a,s}^\top \Gamma_{a,s} \beta_{a,s} \convP     var(\phi_i(a)|S_i=s).
\end{align*}

Next, we show $I_1,\cdots,I_5$ are $o_P(n)$. Lemma \ref{lem:I1} shows $I_1 = o_P(n)$. For $I_2$, we note that, conditional on $(A^{(n)},S^{(n)},X^{(n)})$, $\{\eps_{i}(a)\}_{i \in \aleph_{a,s}}$ is independent across $i$ and mean zero, which implies 
\begin{align*}
& Var \left(\sum_{i \in \aleph_{a,s}} \Breve{X}_i^\top \beta_{a,s} \eps_{i}(a) \mid A^{(n)},S^{(n)},X^{(n)}\right) \\
& \leq \left[ \max_{i \in \aleph_{a,s}}\mathbb E( \eps_{i}^2(a) \mid A^{(n)},S^{(n)},X^{(n)}) \right] \left[ \sum_{i \in \aleph_{a,s}} (\Breve{X}_i^\top \beta_{a,s} )^2\right] = O_P(n),
\end{align*}
where the equality is by \eqref{eq:xb}. Therefore, we have $I_2 = O_P(n^{1/2}) = o_P(n)$. Because $\alpha_{a,s} + \overline X_s^\top \beta_{a,s}-\hat \tau_a (s) \convP o_P(1)$ by the proof of Theorem \ref{thm:main}, $\sum_{i \in \aleph_{a,s}}e_{i,s}^2(a) = o(n)$, and $\lambda_{\max}(P_{a,s}) \leq 1$, we have $I_l = o_P(n)$ for $l = 3,4,5$. This means 
\begin{align*}
\frac{1}{n_{a,s}} \left( \hat \beta_{a,s}^\top \Gamma_{a,s} \hat \beta_{a,s} - \sum_{i \in \aleph_{a,s}}P_{a,s,i,i} Y_i \acute \eps_{a,s,i} \right)\convP var(\phi_i(a)|S_i=s).
\end{align*}

\textbf{Step 2.2: Limit of $\hat \beta_{1,s}^\top \Gamma_s \hat \beta_{0,s}/n_s$.} By \eqref{eq:betahat}, we have
\begin{align*}
\hat \beta_{1,s}^\top \Gamma_s \hat \beta_{0,s} & = \beta_{1,s}^\top \Gamma_s \beta_{0,s}   + \beta_{1,s}^\top \Gamma_s \Gamma_{0,s}^{-1} \sum_{i \in \aleph_{0,s}} \breve X_i \eps_i(0) \\
& + \beta_{1,s}^\top \Gamma_s \Gamma_{0,s}^{-1} \left[ \sum_{i \in \aleph_{0,s}} \breve X_i \left( \alpha_{0,s} + \overline X_s^\top \beta_{0,s} - \hat \tau_0(s) + e_{i,s}(0) \right)\right] \\
& + \left[\sum_{i \in \aleph_{1,s}} \breve X_i \eps_i(1) \right]^\top \Gamma_{1,s}^{-1} \Gamma_s \beta_{0,s} + \left[\sum_{i \in \aleph_{1,s}} \breve X_i \eps_i(1) \right]^\top \Gamma_{1,s}^{-1} \Gamma_s \Gamma_{0,s}^{-1} \sum_{i \in \aleph_{0,s}} \breve X_i \eps_i(0) \\
& + \left[\sum_{i \in \aleph_{1,s}} \breve X_i \eps_i(1) \right]^\top \Gamma_{1,s}^{-1} \Gamma_s \Gamma_{0,s}^{-1} \left[ \sum_{i \in \aleph_{0,s}} \breve X_i \left( \alpha_{0,s} + \overline X_s^\top \beta_{0,s} - \hat \tau_0(s) + e_{i,s}(0) \right)\right]  \\
& + \left[\sum_{i \in \aleph_{1,s}} \breve X_i\left(\alpha_{1,s} + \overline X_s^\top \beta_{1,s} -\hat \tau_1(s) + e_{i,s}(1)\right)\right]^\top \Gamma_{1,s}^{-1} \Gamma_s \beta_{0,s} \\
& + \left[\sum_{i \in \aleph_{1,s}} \breve X_i\left(\alpha_{1,s} + \overline X_s^\top \beta_{1,s} -\hat \tau_1(s) + e_{i,s}(1)\right)\right]^\top \Gamma_{1,s}^{-1} \Gamma_s \Gamma_{0,s}^{-1} \sum_{i \in \aleph_{0,s}} \breve X_i \eps_i(0) \\
& + \left[\sum_{i \in \aleph_{1,s}} \breve X_i\left(\alpha_{1,s} + \overline X_s^\top \beta_{1,s} -\hat \tau_1(s) + e_{i,s}(1)\right)\right]^\top \Gamma_{1,s}^{-1} \Gamma_s \Gamma_{0,s}^{-1} \\
& \times \left[ \sum_{i \in \aleph_{0,s}} \breve X_i \left( \alpha_{0,s} + \overline X_s^\top \beta_{0,s} - \hat \tau_0(s) + e_{i,s}(0) \right)\right] \\
& \equiv \sum_{l \in [9]} I_l.
\end{align*}

Following the previous argument, we can show that 
\begin{align*}
\frac{I_1}{n} & = \frac{1}{n} \sum_{i \in \aleph_{s}}(\mathbb E(Y_i(1)|X_i,S_i=s) - \mathbb E(Y_i(1)|S_i=s))(\mathbb E(Y_i(0)|X_i,S_i=s) - \mathbb E(Y_i(0)|S_i=s)) +o_P(1) \\
& \convP cov(\phi_i(1),\phi_i(0)|S_i=s). 
\end{align*}

For $I_2$, we have
\begin{align*}
\mathbb E(I_2|A^{(n)}, S^{(n)},X^{(n)}) = 0
\end{align*}
and 
\begin{align*}
var(I_2|A^{(n)}, S^{(n)},X^{(n)}) \lesssim \beta_{1,s}^\top \Gamma_s \Gamma_{0,s}^{-1} \Gamma_s \beta_{1,s}  \max_{i \in \aleph_{0,s}} \mathbb E(\eps_i^2(0)| A^{(n)}, S^{(n)},X^{(n)}) = o_P(n^2),
\end{align*}
where we use the facts that \eqref{eq:xb} holds and 
\begin{align}
\beta_{1,s}^\top \Gamma_s \Gamma_{0,s}^{-1} \Gamma_s \beta_{1,s} \leq \left\Vert \Gamma_s^{1/2} \Gamma_{0,s}^{-1} \Gamma_s^{1/2}  \right\Vert_{op} ||\breve X_{\aleph_s} \beta_{1,s} ||_2^2 = o_P(n^2).
\label{eq:varI_2}
\end{align}
This implies $I_2/n = o_P(1)$. 

For $I_3$, by Cauchy's inequality, we have
\begin{align*}
|I_3| \leq \left\Vert \beta_{1,s}^\top \Gamma_s \Gamma_{0,s}^{-1} \breve X_{\aleph_{0,s}}^\top \right\Vert_{2} \times O_P(1) = o_P(n),
\end{align*}
where the first inequality holds by Assumption \ref{ass:linear}(iv) and the facts that 
\begin{align*}
\alpha_{0,s} + \overline X_s^\top \beta_{0,s} - \hat \tau_0(s) = O_P(n^{-1/2}), \quad \sum_{i \in \aleph_{0,s}} e_{i,s}^2(a) = o_P(1)
\end{align*}
and the equality holds by Assumption \ref{ass:variance}, \eqref{eq:varI_2}, and the fact that 
\begin{align*}
\left\Vert \beta_{1,s}^\top \Gamma_s \Gamma_{0,s}^{-1} \breve X_{\aleph_{0,s}}^\top \right\Vert_{2} = \left( \beta_{1,s}^\top \Gamma_s \Gamma_{0,s}^{-1} \Gamma_s \beta_{1,s}\right)^{1/2}  = o_P(n).
\end{align*}

We can show $I_4= o_P(n)$ following the same argument used to bound $I_2$.

For $I_5$, we note that the two index sets $\aleph_{0,s}$ and $\aleph_{1,s}$ do not overlap. Therefore, we have
\begin{align*}
\mathbb E(I_5|A^{(n)}, S^{(n)},X^{(n)}) = 0
\end{align*}
and 
\begin{align*}
var(I_5|A^{(n)}, S^{(n)},X^{(n)}) \leq \sum_{i \in \aleph_{1,s}}\sum_{j \in \aleph_{0,s}} B_{i,j}^2  \left[\max_{a = 0,1}\max_{i \in \aleph_{a,s}} \mathbb E(\eps_i^2(0)| A^{(n)}, S^{(n)},X^{(n)})\right]^2 = o_P(n^2),
\end{align*}
where we have $B_{i,j} = \breve X_i^\top \Gamma_{1,s}^{-1} \Gamma_s \Gamma_{0,s}^{-1} \breve X_j$ and 
\begin{align*}
\sum_{i \in \aleph_{1,s}}\sum_{j \in \aleph_{0,s}} B_{i,j}^2 & = \sum_{i \in \aleph_{1,s}} \text{trace}(\breve X_i^\top \Gamma_{1,s}^{-1} \Gamma_s \Gamma_{0,s}^{-1} \Gamma_s \Gamma_{1,s}^{-1} \breve X_i) \\
& = \text{trace}(\Gamma_s^{1/2}\Gamma_{0,s}^{-1}\Gamma_s \Gamma_{1,s}^{-1}\Gamma_s^{1/2} ) \\
& = \text{trace}(\Gamma_s^{1/2}\Gamma_{0,s}^{-1}\Gamma_s^{1/2} ) + \text{trace}(\Gamma_s^{1/2} \Gamma_{1,s}^{-1}\Gamma_s^{1/2} )= o_P(n^2), 
\end{align*}
where we use the fact that $\Gamma_s = \Gamma_{0,s}+\Gamma_{1,s}$. 

For $I_6$, we have
\begin{align*}
|I_6| & \leq \left\vert \left[\sum_{i \in \aleph_{1,s}} \breve X_i \eps_i(1) \right]^\top \Gamma_{1,s}^{-1} \Gamma_s \Gamma_{0,s}^{-1} \left[ \sum_{i \in \aleph_{0,s}} \breve X_i \right]  \right\vert |\alpha_{0,s} + \overline X_s^\top \beta_{0,s} - \hat \tau_0(s) | \\
& + \left\vert \left[\sum_{i \in \aleph_{1,s}} \breve X_i \eps_i(1) \right]^\top \Gamma_{1,s}^{-1} \Gamma_s \Gamma_{0,s}^{-1} \left[ \sum_{i \in \aleph_{0,s}} \breve X_i e_{i,s}(0)\right]  \right\vert\\
& \equiv |I_{6,1}| |\alpha_{0,s} + \overline X_s^\top \beta_{0,s} - \hat \tau_0(s) | + |I_{6,2}|.
\end{align*}
We note that 
\begin{align*}
\mathbb E(I_{6,1}|A^{(n)}, S^{(n)},X^{(n)}) = 0
\end{align*}
and 
\begin{align*}
&    var(I_{6,1}|A^{(n)}, S^{(n)},X^{(n)}) \\
& \leq 
\left[ \sum_{i \in \aleph_{0,s}} \breve X_i \right]^\top \Gamma_{0,s}^{-1}\Gamma_s \Gamma_{1,s}^{-1} \Gamma_s \Gamma_{0,s}^{-1}\left[ \sum_{i \in \aleph_{0,s}} \breve X_i \right] 
\left[\max_{i \in \aleph_{a,s}} \mathbb E(\eps_i^2(1)| A^{(n)}, S^{(n)},X^{(n)})\right] \\
& = o_P(n^3). 
\end{align*}
where the second last equality holds because 
\begin{align*}
\left[ \sum_{i \in \aleph_{0,s}} \breve X_i \right]^\top \Gamma_{0,s}^{-1}\Gamma_s \Gamma_{1,s}^{-1} \Gamma_s S_0^{-1}\left[ \sum_{i \in \aleph_{0,s}} \breve X_i \right] & \leq n_{0,s} \left\Vert \breve X_{\aleph_{0,s}} \Gamma_{0,s}^{-1}\Gamma_s \Gamma_{1,s}^{-1} \Gamma_s S_0^{-1} \breve X_{\aleph_{0,s}} \right\Vert_{op} \\
& \leq n_{0,s} \left\Vert \Gamma_s^{1/2} \Gamma_{1,s}^{-1}\Gamma_s^{1/2} \right\Vert_{op} \left\Vert \breve X_{\aleph_{0,s}} \Gamma_{0,s}^{-1}\Gamma_s^{1/2} \right\Vert_{op}^2 \\
& \leq n_{0,s} \left\Vert \Gamma_s^{1/2} \Gamma_{1,s}^{-1}\Gamma_s^{1/2} \right\Vert_{op} \left\Vert \Gamma_s^{1/2} \Gamma_{0,s}^{-1}\Gamma_s^{1/2} \right\Vert_{op}  = o_P(n^3).
\end{align*}
% \begin{align*}
% & \left[ \sum_{i \in \aleph_{0,s}} \breve X_i \right]^\top \Gamma_{0,s}^{-1}\Gamma_s \Gamma_{1,s}^{-1} \Gamma_s S_0^{-1}\left[ \sum_{i \in \aleph_{0,s}} \breve X_i \right] \\
% & = \text{trace}\left( \Gamma_{0,s}^{-1}\Gamma_s \Gamma_{1,s}^{-1} \Gamma_s S_0^{-1}\left[ \sum_{i \in \aleph_{0,s}} \breve X_i \right] \left[ \sum_{i \in \aleph_{0,s}} \breve X_i \right]^\top \right) \\
%  & \leq n_{0,s}\text{trace}\left( \Gamma_{0,s}^{-1}\Gamma_s \Gamma_{1,s}^{-1} \Gamma_s\right) \\
%  & = o_P(n^3)
% \end{align*}
This implies $|I_{6,1}| |\alpha_{0,s} + \overline X_s^\top \beta_{0,s} - \hat \tau_0(s) | = o_P(n)$ because $|\alpha_{0,s} + \overline X_s^\top \beta_{0,s} - \hat \tau_0(s) | = O_P(n^{-1/2})$. Similarly, for $I_{6,2}$, we have 
\begin{align*}
\mathbb E(I_{6,2}|A^{(n)}, S^{(n)},X^{(n)}) = 0
\end{align*}
and 
\begin{align*}
&    var(I_{6,2}|A^{(n)}, S^{(n)},X^{(n)}) \\
& \leq 
\left[ \sum_{i \in \aleph_{0,s}} \breve X_i e_{i,s}(0)\right]^\top \Gamma_{0,s}^{-1}\Gamma_s \Gamma_{1,s}^{-1} \Gamma_s S_0^{-1}\left[ \sum_{i \in \aleph_{0,s}} \breve X_i e_{i,s}(0)\right] 
\left[\max_{i \in \aleph_{a,s}} \mathbb E(\eps_i^2(1)| A^{(n)}, S^{(n)},X^{(n)})\right] \\
& \leq \left\Vert \breve X_{\aleph_{0,s}} \Gamma_{0,s}^{-1}\Gamma_s \Gamma_{1,s}^{-1} \Gamma_s S_0^{-1} \breve X_{\aleph_{0,s}} \right\Vert_{op} \left\Vert e_{\aleph_{0,s}}(0)\right\Vert_2^2     \left[\max_{i \in \aleph_{a,s}} \mathbb E(\eps_i^2(1)| A^{(n)}, S^{(n)},X^{(n)})\right] = o_P(n^2),
\end{align*}
where we use the fact that $\left\Vert e_{\aleph_{0,s}}(0)\right\Vert_2^2 = O_P(1)$. This implies $I_6 = o_P(n)$. 

We can show $I_7 = o_P(n)$ and $I_8 = o_P(n)$ following the same argument used to bound $I_3$ and $I_6$, respectively. 

% We can show  following the same argument in bounding 

For $I_9$, we have
\begin{align*}
|I_9| & \leq \left\Vert \breve X_{\aleph_{1,s}} \Gamma_{1,s}^{-1}\Gamma_s \Gamma_{0,s}^{-1} \breve X_{\aleph_{0,s}} \right\Vert_{op} \left\Vert 1_{n_{1,s}}(\alpha_{1,s} + \overline X_s^\top - \hat \tau_1(s)) + e_{\aleph_{1,s}}(1) \right\Vert_2  \\
& \times \left\Vert 1_{n_{0,s}}(\alpha_{0,s} + \overline X_s^\top - \hat \tau_0(s)) + e_{\aleph_{0,s}}(0) \right\Vert_2 = o_P(n),
\end{align*}
where we use the fact that 
\begin{align*}
\left\Vert \breve X_{\aleph_{1,s}} \Gamma_{1,s}^{-1}\Gamma_s \Gamma_{0,s}^{-1} \breve X_{\aleph_{0,s}} \right\Vert_{op} & =  \left\Vert \breve X_{\aleph_{1,s}} \Gamma_{1,s}^{-1}\Gamma_s \Gamma_{0,s}^{-1} \Gamma_s \Gamma_{1,s}^{-1} \breve X_{\aleph_{1,s}}^\top \right\Vert_{op}^{1/2} \\
& \leq \left[\left\Vert \Gamma_s^{1/2} \Gamma_{0,s}^{-1} \Gamma_s^{1/2} \right\Vert_{op} \left\Vert \Gamma_s^{1/2} \Gamma_{1,s}^{-1} \breve X_{\aleph_{1,s}}  \right\Vert_{op}^2 \right]^{1/2} \\ 
& = \left[\left\Vert \Gamma_s^{1/2} \Gamma_{0,s}^{-1} \Gamma_s^{1/2} \right\Vert_{op} \left\Vert \Gamma_s^{1/2} \Gamma_{1,s}^{-1} \Gamma_s^{1/2} \right\Vert_{op} \right]^{1/2} = o_P(n)
\end{align*}
and 
\begin{align*}
\left\Vert 1_{n_{0,s}}(\alpha_{0,s} + \overline X_s^\top - \hat \tau_0(s)) + e_{\aleph_{0,s}}(0) \right\Vert_2 = O_P(1).
\end{align*}
This leads to the desired result in this step that $\hat \beta_{1,s}^\top \Gamma_s \hat \beta_{0,s}/n_s \convP cov(\phi_i(1),\phi_i(0)|S_i=s)$.

\textbf{Step 3: Limit of $\hat \Sigma_{\mathcal W}$.}
For $\hat \Sigma_{\mathcal W}$, by the proof of Theorem \ref{thm:main}, we have $\hat \tau_{1,s}- \hat \tau_{0,s} \convP \mathbb{E}(Y(1)-Y(0)|S=s)$ and $\hat \tau^{adj} \convP \tau$. By Assumption \ref{ass:assignment1}, we also have $\hat p_s \convP p_s$, which implies the desired result.

\section{Proof of Theorem \ref{thm:fixed_k}}\label{sec:proof_thm_fixed_K}
This proof depends on the results of Lemma \ref{lem:beta}. Let $S^{k_n}$ be the unit sphere in $\Re^{k_n}$, $\mathbb P_{n,a,s}f_i = \frac{1}{n_{a,s}}\sum_{i \in \aleph_{a,s}}f_i$ for any function of observations $f_i$, $\mathbb P_{a,s}f_i = \mathbb E(f_i|A_i=a,S_i=s) = \mathbb E(f_i|S_i=s)$, $||\cdot||_{\mathbb P_{n,a,s},2}$ be the $L_2$ norm w.r.t. the probability measure $\mathbb P_{n,a,s}$.

\noindent\textbf{Step 1: Asymptotic normality and the expression of $\Omega$.} Following Remark \ref{rem:aipw}, we have
\begin{align*}
\hat \tau^{adj} & = \frac{1}{n} \sum_{i \in [n]} \frac{A_i(Y_i - X_i^\top \hat \beta_{1,S_i})}{\hat \pi_{S_i}} - \frac{1}{n} \sum_{i \in [n]} \frac{(1-A_i)(Y_i -  X_i^\top \hat \beta_{0,S_i})}{1-\hat \pi_{S_i}} + \frac{1}{n}\sum_{i \in [n]}X_i^\top (\hat \beta_{1,S_i} - \hat \beta_{0,S_i}) \\
& = \frac{1}{n} \sum_{i \in [n]} \frac{A_i(Y_i(1) - X_i^\top \beta_{1,S_i}^*)}{\hat \pi_{S_i}} - \frac{1}{n} \sum_{i \in [n]} \frac{(1-A_i)(Y_i(0) -  X_i^\top \beta_{0,S_i}^*)}{1-\hat \pi_{S_i}} + \frac{1}{n}\sum_{i \in [n]}X_i^\top (\beta_{1,S_i}^* -  \beta_{0,S_i}^*) \\
& + \sum_{s \in \mathcal{S}} \hat p_s \left(\frac{1}{n_{1,s}} \sum_{i \in \aleph_{1,s}}X_i - \frac{1}{n_{s}} \sum_{i \in \aleph_{s}}X_i\right)^\top (\beta_{1,s}^* - \hat \beta_{1,s}) \\
& -\sum_{s \in \mathcal{S}} \hat p_s \left(\frac{1}{n_{0,s}} \sum_{i \in \aleph_{0,s}}X_i - \frac{1}{n_{s}} \sum_{i \in \aleph_{s}}X_i\right)^\top (\beta_{0,s}^* - \hat \beta_{0,s}) \\
& = \frac{1}{n} \sum_{i \in [n]} \frac{A_i(Y_i(1) - X_i^\top \beta_{1,S_i}^*)}{\hat \pi_{S_i}} - \frac{1}{n} \sum_{i \in [n]} \frac{(1-A_i)(Y_i(0) -  X_i^\top \beta_{0,S_i}^*)}{1-\hat \pi_{S_i}} \\
& + \frac{1}{n}\sum_{i \in [n]}X_i^\top (\beta_{1,S_i}^* -  \beta_{0,S_i}^*) + o_P(n^{-1/2}),
\end{align*}
where the last equality holds because  
\begin{align*}
&  \max_{a=0,1,s\in \mathcal{S}} \left\Vert\frac{1}{n_{a,s}} \sum_{i \in \aleph_{a,s}}X_i - \frac{1}{n_{s}} \sum_{i \in \aleph_{s}}X_i \right\Vert_2 = O_P((k_n/n)^{1/2}),\\
& \max_{a=0,1,s\in \mathcal{S}} ||\hat \beta_{a,s} - \beta_{a,s}^*||_2 = O_P\left(\sqrt{\frac{k_n \xi_n^2 \log(k_n)}{n} } \right), 
\end{align*}
by Lemma \ref{lem:beta}, 
and $k_n^2 \xi_n^2 \log(k_n) = o(n)$. Let $\delta_{Y,a,s,i} = Y_i(a) - \mathbb{E}(Y_i(a)|S_i=s)$ and $\delta_{X,s,i} = X_i - \mathbb{E}(X_i|S_i=s)$. Then, we have
\begin{align*}
\begin{pmatrix}
\sqrt{n}(\hat \tau^{adj} - \tau) \\
\sqrt{n}(\hat \tau^{unadj} - \tau) 
\end{pmatrix}    & = \sum_{s \in \mathcal{S}} \hat p_s \sqrt{n} \begin{pmatrix}
\frac{1}{n_{1,s}}\sum_{i \in \aleph_{1,s}}(\delta_{Y,1,s,i} - (1-\hat \pi_s)\delta_{X,s,i}^\top \beta_{1,s}^* - \hat \pi_s \delta_{X,s,i}^\top \beta_{0,s}^*)  \\
\frac{1}{n_{1,s}}\sum_{i \in \aleph_{1,s}}\delta_{Y,1,s,i}
\end{pmatrix} \\
& -      \sum_{s \in \mathcal{S}} \hat p_s \sqrt{n} \begin{pmatrix}
\frac{1}{n_{0,s}}\sum_{i \in \aleph_{0,s}}(\delta_{Y,0,s,i} - \hat \pi_s\delta_{X,s,i}^\top \beta_{0,s}^* - (1-\hat \pi_s) \delta_{X,s,i}^\top \beta_{1,s}^*)  \\
\frac{1}{n_{0,s}}\sum_{i \in \aleph_{0,s}}\delta_{Y,0,s,i} 
\end{pmatrix} \\
& + \sqrt{n} \sum_{s \in \mathcal{S}} \hat p_s (\mathbb{E}(Y_i(1)|S_i=s) - \mathbb{E}(Y_i(0)|S_i=s) - \tau)1_2 +  o_P(1) \\
& = \sum_{s \in \mathcal{S}} \hat p_s \sqrt{n} \begin{pmatrix}
\frac{1}{n_{1,s}}\sum_{i \in \aleph_{1,s}}(\delta_{Y,1,s,i} - (1- \pi_s)\delta_{X,s,i}^\top \beta_{1,s}^* -  \pi_s \delta_{X,s,i}^\top \beta_{0,s}^*)  \\
\frac{1}{n_{1,s}}\sum_{i \in \aleph_{1,s}}\delta_{Y,1,s,i}
\end{pmatrix} \\
& -      \sum_{s \in \mathcal{S}} \hat p_s \sqrt{n} \begin{pmatrix}
\frac{1}{n_{0,s}}\sum_{i \in \aleph_{0,s}}(\delta_{Y,0,s,i} - \pi_s\delta_{X,s,i}^\top \beta_{0,s}^* - (1- \pi_s) \delta_{X,s,i}^\top \beta_{1,s}^*)  \\
\frac{1}{n_{0,s}}\sum_{i \in \aleph_{0,s}}\delta_{Y,0,s,i} 
\end{pmatrix} \\
& + \frac{1}{\sqrt{n}}\sum_{i \in [n]}(\mathbb{E}(Y_i(1)|S_i) - \mathbb{E}(Y_i(0)|S_i) - \tau)1_2 +  o_P(1)\\
& \equiv \tilde U_{n,1} - \tilde U_{n,0} + W_n 1_2 + o_P(1),
\end{align*}
where we denote 
\begin{align*}
\tilde U_{n,1} & =  \sum_{s \in \mathcal{S}} \hat p_s \sqrt{n} \begin{pmatrix}
\frac{1}{n_{1,s}}\sum_{i \in \aleph_{1,s}}(\delta_{Y,1,s,i} - (1-\pi_s)\delta_{X,s,i}^\top \beta_{1,s}^* -  \pi_s \delta_{X,s,i}^\top \beta_{0,s}^*)  \\
\frac{1}{n_{1,s}}\sum_{i \in \aleph_{1,s}}\delta_{Y,1,s,i}
\end{pmatrix}, \\
\tilde U_{n,0} & =   \sum_{s \in \mathcal{S}} \hat p_s \sqrt{n} \begin{pmatrix}
\frac{1}{n_{0,s}}\sum_{i \in \aleph_{0,s}}(\delta_{Y,0,s,i} -  \pi_s\delta_{X,s,i}^\top \beta_{0,s}^* - (1- \pi_s) \delta_{X,s,i}^\top \beta_{1,s}^*)  \\
\frac{1}{n_{0,s}}\sum_{i \in \aleph_{0,s}}\delta_{Y,0,s,i} 
\end{pmatrix}, 
\end{align*}
$W_n$ is defined in \eqref{eq:Wn}, and the second equality is by the facts that
\begin{align*}
&    \hat \pi_s - \pi_s = \frac{\sum_{i \in [n]}(A_i - \pi_s)1\{S_i=s\}}{n_s} = o_P(1), \quad  \left\vert \frac{1}{n_{a,s}} \sum_{i \in \aleph_{a,s}}\delta_{X,s,i}(\beta_{1,s}^* - \beta_{0,s}^*)\right\vert = O_P(n^{-1/2}), \quad \text{and thus,}\\
& \max_{a=0,1,s\in \mathcal{S}}\left\vert (\hat \pi_s - \pi_s) \left(\frac{1}{n_{a,s}} \sum_{i \in \aleph_{a,s}}\delta_{X,s,i}^\top (\beta_{1,s}^* - \beta_{0,s}^*) \right)\right\vert = o_P(n^{-1/2}). 
\end{align*}

Following the same argument by \cite{BCS17},\footnote{Also see Lemma N.3 in \cite{JPTZ22} for a similar argument for the regression adjusted quantile treatment effect estimator.} we can show that 
\begin{align*}
\begin{pmatrix}
\Omega_{U,1}^{-1/2} \tilde U_1 \\
\Omega_{U,0}^{-1/2} \tilde U_0 \\
\sigma_{W}^{-1} W
\end{pmatrix}  \convD \N(0_5, I_5),
\end{align*}
% $(\tilde U_{n,1},\tilde U_{n,0},W_n) \convP (\tilde U_1, \tilde U_0, W)$, where $(\tilde U_1, \tilde U_0, W)$ are independent, 
% \begin{align*}
%     &    \tilde U_1 \stackrel{d}{=} \N\left(  \begin{pmatrix}
%     	0 \\
%     	0
%     \end{pmatrix}, \Omega_{U,1} \right), \quad \tilde U_0 \stackrel{d}{=} \N\left(  \begin{pmatrix}
%     0 \\
%     0
% \end{pmatrix}, \Omega_{U,0} \right), \quad  W \stackrel{d}{=} \N(0,var(\mathbb{E}(Y_i(1)|S_i) - \mathbb{E}(Y_i(0)|S_i))),\\
% \end{align*}
where $0_5$ is a $5$-dimensional vector of zero, $I_5$ is a $5 \times 5$ identity matrix, $m_{a}(x,s) = \mathbb{E}(Y_i(a)|X_i=x,S_i=s)$, $\overline \beta_s^* = (1-\pi_s)\beta_{1,s} + \pi_s \beta_{0,s}^*$, 
\begin{align*}
& \Omega_{U,1} =\mathbb{E}\frac{Var(Y_i(1)|X_i,S_i)}{\pi_{S_i}}1_2 1_2^\top \\
&+ \sum_{s \in \mathcal{S}}\frac{p_s}{\pi_s}  \begin{pmatrix}
Var(m_{1}(X_i,s)- X_i^\top \overline \beta_{s}^*\mid S_i =s) & cov(m_{1}(X_i,s)- X_i^\top \overline \beta_{s}^*, m_{1}(X_i,s)\mid S_i =s)\\
cov(m_{1}(X_i,s)- X_i^\top \overline \beta_{s}^*, m_{1}(X_i,s)\mid S_i =s) & Var(m_{1}(X_i,s)),
\end{pmatrix} \\
& \Omega_{U,0} =  \mathbb{E}\frac{Var(Y_i(0)|X_i,S_i)}{1-\Pi_{S_i}}1_2 1_2^\top \\
&+ \sum_{s \in \mathcal{S}}\frac{p_s}{1-\pi_s}  \begin{pmatrix}
Var(m_{0}(X_i,s)- X_i^\top \overline \beta_s^* \mid S_i =s) & cov(m_{0}(X_i,s)- X_i^\top \overline \beta_{s}^*, m_{0}(X_i,s)\mid S_i =s)\\
cov(m_{0}(X_i,s)- X_i^\top \overline \beta_{s}^*, m_{0}(X_i,s)\mid S_i =s) & Var(m_{0}(X_i,s))
\end{pmatrix}, \\
& \sigma_W^2 = var(\mathbb{E}(Y_i(1)|S_i) - \mathbb{E}(Y_i(0)|S_i)).
\end{align*}

We further note that 
\begin{align*}
\Omega & =    \Omega_{U,1} +   \Omega_{U,0} +   Var(\mathbb{E}(Y_i(1)|S_i) - \mathbb{E}(Y_i(0)|S_i))1_2 1_2^\top  \\
& = \left\{ \mathbb{E}\left[\frac{Var(Y_i(1)|X_i,S_i)}{\pi_{S_i}} + \frac{Var(Y_i(0)|X_i,S_i)}{1-\pi_{S_i}}\right] + \mathbb{E}(m_{1}(X_i,S_i) - m_{0}(X_i,S_i)-\tau)^2 \right\}1_2 1_2^\top \\
& + \sum_{s \in \mathcal{S}}\frac{p_s}{\pi_s (1-\pi_s)}\begin{pmatrix}
V_s & V_s \\
V_s & V_s'
\end{pmatrix}
\end{align*}
where 
\begin{align*}
& V_s = Var((1-\pi_s) m_1(X_i,s) + \pi_s m_0(X_i,s) - X_i^\top \overline{\beta}_s^* \mid S_i=s) \quad \text{and}\\
& V_s' = Var((1-\pi_s) m_1(X_i,s) + \pi_s m_0(X_i,s) \mid S_i=s).
\end{align*}
Then, by the definition of $\beta_{a,s}^*$, we see that $V_s' \geq V_s$ for $s \in \mathcal{S}$ and thus, $\Omega_{1,1} = \Omega_{1,2} \leq \Omega_{2,2}$. 

Last, we have
\begin{align*}
\Omega^{-1/2}  \begin{pmatrix}
\sqrt{n}(\hat \tau^{adj} - \tau) \\
\sqrt{n}(\hat \tau^{unadj} - \tau) 
\end{pmatrix}  =   \Omega^{-1/2} \begin{pmatrix}
\Omega_{U,1}^{1/2} & \Omega_{U,0}^{1/2} &    \sigma_{W} 1_2
\end{pmatrix}  \begin{pmatrix}
\Omega_{U,1}^{-1/2} \tilde U_1 \\
\Omega_{U,0}^{-1/2} \tilde U_0 \\
\sigma_{W}^{-1} W
\end{pmatrix} \convD \N(0_2,I_2)
\end{align*}

\noindent\textbf{Step 2: Limit of $\hat \Sigma_{\mathcal U}$. } 
We have 
\begin{align*}
\gamma_{a,s,n} = 1 - \left(\frac{1}{n_{a,s}}\sum_{i \in \aleph_{a,s}}\breve X_i^\top \right) \left(\frac{1}{n_{a,s}}\sum_{i \in \aleph_{a,s}}\breve X_i\breve X_i^\top \right)^{-1}\left(\frac{1}{n_{a,s}}\sum_{i \in \aleph_{a,s}}\breve X_i \right) \convP 1, 
\end{align*}
where we use the fact that 
\begin{align*}
0 & \leq \left(\frac{1}{n_{a,s}}\sum_{i \in \aleph_{a,s}}\breve X_i^\top \right) \left(\frac{1}{n_{a,s}}\sum_{i \in \aleph_{a,s}}\breve X_i\breve X_i^\top \right)^{-1}\left(\frac{1}{n_{a,s}}\sum_{i \in \aleph_{a,s}}\breve X_i \right) \\
& \leq \left[\lambda_{\min}\left(\frac{1}{n_{a,s}}\sum_{i \in \aleph_{a,s}}\breve X_i\breve X_i^\top \right)\right]^{-1} \left\Vert \frac{1}{n_{a,s}}\sum_{i \in \aleph_{a,s}}\breve X_i \right\Vert_2^2 \\
& = O_P(k_n/n) = o_P(1). 
\end{align*}
In addition, we have $\sum_{j \in \aleph_{a,s}}M_{a,s,i,j} = 1 - R_{a,s,i}$ where 
$$R_{a,s,i} = \breve X_i^\top \left(\frac{1}{n_{a,s}}\sum_{i \in \aleph_{a,s}}\breve X_i\breve X_i^\top \right)^{-1}\left(\frac{1}{n_{a,s}}\sum_{i \in \aleph_{a,s}}\breve X_i \right) $$ 
so that 
\begin{align*}
\max_{i \in \aleph_{a,s}}|R_{a,s,i}| \leq \max_{i \in [n]} ||\breve X_i||_2 \lambda_{\min}\left(\frac{1}{n_{a,s}}\sum_{i \in \aleph_{a,s}}\breve X_i\breve X_i^\top \right)^{-1} \left\Vert \frac{1}{n_{a,s}}\sum_{i \in \aleph_{a,s}}\breve X_i \right\Vert_2 = O_P(\sqrt{k_n^2 \xi_n/n}) = o_P(1). 
\end{align*}
and
\begin{align*}
\left| \frac{1}{n_{a,s}}\sum_{i \in \aleph_{a,s}}\left[\left(\sum_{j \in \aleph_{a,s}}M_{a,s,i,j}\right)^2-1\right] Y_i \acute{\eps}_{a,s,i}\right| \leq 2\max_{i \in \aleph_{a,s}}\left(\left|R_{a,s,i}\right| + R_{a,s,i}^2\right) \frac{1}{n_{a,s}} \sum_{i \in \aleph_{a,s}}\left|Y_i \acute{\eps}_{a,s,i}\right| = o_P(1),
\end{align*}
where we use the facts that 
\begin{align*}
\min_{i \in \aleph_{a,s}} M_{a,s,i,i} \geq 1- \max_{i \in \aleph_{a,s}}P_{a,s,i,i} & \geq 1- \frac{1}{n_{a,s}}\lambda_{\max}\left( \frac{1}{n_{a,s}} \sum_{i \in \aleph_{a,s}} \breve X_i \breve X_i^\top \right) \max_{i \in \aleph_{a,s}}||\breve X_i||_2^2 = 1-o_P(1)
\end{align*}
and
\begin{align*}
\frac{1}{n_{a,s}}\sum_{i \in \aleph_{a,s}}\left|Y_i \acute{\eps}_{a,s,i}\right| & \leq \frac{1}{n_{a,s} \min_{i \in \aleph_{a,s}} M_{a,s,i,i}} \sum_{i \in \aleph_{a,s}} |Y_i||Y_i - \hat \tau_{a,s} - \breve X_i^\top \hat \beta_{a,s}| \\
& \leq \frac{1}{n_{a,s} \min_{i \in \aleph_{a,s}} M_{a,s,i,i}} \sum_{i \in \aleph_{a,s}} \left[Y_i^2 +  |Y_i| |\hat \tau_{a,s}| + |Y_i|||\breve X_i||_2 ||\hat \beta_{a,s} - \beta_{a,s}^*||_2 +  |Y_i||\breve X_i^\top \beta_{a,s}^*| \right] \\
& = O_P(1).
\end{align*}
To see the last equality of the last display, we note that $ \frac{1}{n_{a,s} } \sum_{i \in \aleph_{a,s}} Y_i^2 = O_P(1)$, $\hat \tau_{a,s} = O_P(1)$ as shown in \eqref{eq:tau_as}, 
\begin{align*}
\frac{1}{n_{a,s} } \sum_{i \in \aleph_{a,s}} |Y_i|||\breve X_i||_2 ||\hat \beta_{a,s} - \beta_{a,s}^*||_2 & \leq  ||Y_i||_{\mathbb P_{n,a,s},2} ||\breve X_i||_{\mathbb P_{n,a,s},2} ||\hat \beta_{a,s} - \beta_{a,s}^*||_2 \\
& = O_P\left(1 \times \sqrt{k_n} \times  \sqrt{\frac{k_n \xi_n^2 \log (k_n)}{n}} \right) = o_P(1),
\end{align*}
\begin{align*}
\frac{1}{n_{a,s} } \sum_{i \in \aleph_{a,s}}|Y_i||\breve X_i^\top \beta_{a,s}^*| & \leq  ||Y_i||_{\mathbb P_{n,a,s},2} ||\breve X_i^\top \beta_{a,s}^* ||_{\mathbb P_{n,a,s},2} \\
& = O_P\left(1 \times ||\beta_{a,s}^*||_2 \right) = O_P(1),
\end{align*}
where we use the fact that 
\begin{align*}
||\beta_{a,s}^*||_2 \leq \lambda_{\min}^{-1}(Var(X_i|S_i=s)) ||Cov(X_i,Y_i|S_i=s)||_2 \leq C ||Cov(X_i,Y_i|S_i=s)||_2
\end{align*}
and by Assumption \ref{ass:reg_fixed} and \citet[Theorem 1.1]{tripathi1999}
\begin{align*}
\infty > C & \geq   Var(Y_i|S_i=s) \\
& \geq Cov(X_i^\top,Y_i|S_i=s)(Var(X_i|S_i=s))^{-1}Cov(X_i,Y_i|S_i=s) \\
& \geq \lambda_{\max}^{-1}(Var(X_i|S_i=s)) ||Cov(X_i,Y_i|S_i=s)||_2^2 \geq c||Cov(X_i,Y_i|S_i=s)||_2^2,
\end{align*}
implying $||Cov(X_i,Y_i|S_i=s)||_2^2$, and thus, $ ||\beta_{a,s}^*||_2$ are O(1).

Next, recall $\hat \tau_{a,s}$ and $\hat \beta_{a,s}$ are the intercept and slope of the OLS regression of $Y_i$ on $(1,\breve X_i)$ using observations $i \in \aleph_{a,s}$. Therefore, by Lemma \ref{lem:beta}, we have 
\begin{align*}
\frac{1}{n_{a,s}}\sum_{i \in \aleph_{a,s}}\breve X_i^\top (\hat \beta_{a,s} - \beta_{a,s}^*) = O_P(\sqrt{k_n/n})||\hat \beta_{a,s} - \beta_{a,s}^*||_2 = o_P(1), 
\end{align*}
$$\frac{1}{n_{a,s}}\sum_{i \in \aleph_{a,s}}\breve X_i^\top \beta_{a,s}^* = o_P(1),$$
\begin{align}\label{eq:tau_as}
\hat \tau_{a,s} & = \frac{1}{n_{a,s}}\sum_{i \in \aleph_{a,s}}(Y_i - \breve X_i^\top \hat \beta_{a,s}) \notag \\
& = \frac{1}{n_{a,s}}\sum_{i \in \aleph_{a,s}}Y_i - \frac{1}{n_{a,s}}\sum_{i \in \aleph_{a,s}}\breve X_i^\top \beta_{a,s}^* + \frac{1}{n_{a,s}}\sum_{i \in \aleph_{a,s}}\breve X_i^\top (\hat \beta_{a,s} - \beta_{a,s}^*) \notag \\
& \convP \mathbb{E}(Y_i(1)|S_i=s), 
\end{align}
\begin{align*}
\left| \frac{1}{n_{a,s}}\sum_{i \in \aleph_{a,s}}Y_i (\acute{\eps}_{a,s,i} - \hat \eps_{a,s,i}) \right|\leq \left(\max_{i \in \aleph_{a,s}}P_{a,s,i,i}\right) \left(    \frac{1}{n_{a,s}}\sum_{i \in \aleph_{a,s}}\left|Y_i \acute{\eps}_{a,s,i}\right| \right)= o_P(1),
\end{align*}
and 
\begin{align*}
\frac{1}{n_{a,s}}\sum_{i \in \aleph_{a,s}}Y_i \hat{\eps}_{a,s,i} & = \frac{1}{n_{a,s}}\sum_{i \in \aleph_{a,s}}Y_i (Y_i - \hat \tau_{a,s} - \breve X_i^\top \hat \beta_{a,s}) \\
& = \frac{1}{n_{a,s}}\sum_{i \in \aleph_{a,s}}Y_i^2  - \left(\frac{1}{n_{a,s}}\sum_{i \in \aleph_{a,s}}Y_i\right) \hat \tau_{a,s} \\
&- \left[\frac{1}{n_{a,s}}\sum_{i \in \aleph_{a,s}}Y_i X_i - \left(\frac{1}{n_{a,s}}\sum_{i \in \aleph_{a,s}}Y_i\right) \left(\frac{1}{n_{s}}\sum_{i \in \aleph_{s}}X_i\right)\right]^\top \beta_{a,s}^* \\
& - \left[\frac{1}{n_{a,s}}\sum_{i \in \aleph_{a,s}}Y_i X_i - \left(\frac{1}{n_{a,s}}\sum_{i \in \aleph_{a,s}}Y_i\right) \left(\frac{1}{n_{s}}\sum_{i \in \aleph_{s}}X_i\right)\right]^\top (\hat \beta_{a,s}-\beta_{a,s}^*) \\
& =  Var(Y_i(a)|S_i=s) + o_P(1) - cov(Y_i(a),X_i^\top \beta_{a,s}^*|S_i=s)  - O_P(k_n^{1/2})||\hat \beta_{a,s}-\beta_{a,s}^*||_2 \\
& = Var(Y_i(a)|S_i=s) - cov(Y_i(a),X_i^\top \beta_{a,s}^*|S_i=s) +o_P(1) \\
& = Var(Y_i(a) - X_i^\top \beta_{a,s}^*|S_i=s) + o_P(1).
\end{align*}
Therefore, we have $\hat \omega^2_{a,s} = Var(Y_i(a) - X_i^\top \beta_{a,s}^*|S_i=s) + o_P(1)$. By a similar argument, we have
% \begin{align*}
%     \hat \varpi_{a,s} \convP Var(Y_i(a) - X_i^\top \beta_{a,s}^*|S_i=s) \quad \text{and} \quad \hat \nu^2_{a,s}\convP Var(Y_i(a) - X_i^\top \beta_{a,s}^*|S_i=s). 
% \end{align*}
\begin{align*}
\hat \varpi_{a,s} = Var(Y_i(a) - X_i^\top \beta_{a,s}^*|S_i=s) +o_P(1). 
\end{align*}

This implies 
\begin{align*}
\sum_{s \in \mathcal{S}} \sum_{a=0,1} \frac{n_s^2}{n n_{a,s}} \hat \omega_{a,s}^2 =     \sum_{s\in \mathcal{S}}p_s \left[\frac{Var(Y_i(1) - X_i^\top \beta^*_{1,s}|S_i=s)}{\pi_{s}} +\frac{Var(Y_i(0) - X_i^\top \beta^*_{0,s}|S_i=s)}{1-\pi_{s}}\right] + o_P(1)
\end{align*}
and 
\begin{align*}
\sum_{s \in \mathcal{S}} \sum_{a=0,1} \frac{n_s^2}{n n_{a,s}} \hat \varpi_{a,s} =        \sum_{s\in \mathcal{S}}p_s \left[\frac{Var(Y_i(1) - X_i^\top \beta^*_{1,s}|S_i=s)}{\pi_{s}} +\frac{Var(Y_i(0) - X_i^\top \beta^*_{0,s}|S_i=s)}{1-\pi_{s}}\right] + o_P(1).
\end{align*}

\textbf{Step 3: Limit of $\hat \Sigma_{\mathcal V}^{adj}$.} We first note that 
\begin{align*}
\left\vert \frac{1}{n_{a,s}}\sum_{i \in \aleph_{a,s}}P_{a,s,i,i}Y_i \acute{\eps}_{a,s,i} \right\vert \leq \left( \max_{i \in \aleph_{a,s}}P_{a,s,i,i}\right) \left( \frac{1}{n_{a,s}}\sum_{i \in \aleph_{a,s}} |Y_i \acute{\eps}_{a,s,i}|\right) = o_P(1).
\end{align*}
In addition, we have 
\begin{align*}
& \left|\left(    \frac{1}{n_{a,s}} \hat \beta_{a,s}^\top \Gamma_{a,s} \hat \beta_{a,s} \right)^{1/2} - \left(    \frac{1}{n_{a,s}} \sum_{i \in \aleph_{a,s}}(\breve X_i^\top \beta_{a,s}^*)^2 \right)^{1/2}\right| \\
& \leq \left(    \frac{1}{n_{a,s}} \sum_{i \in \aleph_{a,s}}(\breve X_i^\top (\hat \beta_{a,s} - \beta_{a,s}^*))^2 \right)^{1/2} = o_P(1), 
\end{align*}
and 
\begin{align*}
\frac{1}{n_{a,s}} \sum_{i \in \aleph_{a,s}}(\breve X_i^\top \beta_{a,s}^*)^2 = \frac{1}{n_{a,s}} \sum_{i \in \aleph_{a,s}}(\Phi_i - \frac{1}{n_a}\sum_{i \in \aleph_s}\Phi_i)^2 = Var(\Phi_i|S_i=s) + o_P(1),   
\end{align*}
where $\Phi_i = X_i^\top \beta_{a,s}^*$. 
This implies 
\begin{align*}
\frac{1}{n_{a,s}} \hat \beta_{a,s}^\top \Gamma_{a,s} \hat \beta_{a,s} = Var(X_i^\top \beta_{a,s}^*|S_i=s) + o_P(1). 
\end{align*}

Similarly, we have 
\begin{align*}
\frac{1}{n_{s}} \hat \beta_{1,s}^\top \Gamma_{s} \hat \beta_{0,s} & = \frac{1}{4}\left[ (\hat \beta_{1,s}+\hat \beta_{0,s})^\top \Gamma_{s} (\hat \beta_{1,s}+\hat \beta_{0,s}) - (\hat \beta_{1,s}-\hat \beta_{0,s})^\top \Gamma_{s} (\hat \beta_{1,s}-\hat \beta_{0,s})\right] \\
& = \frac{1}{4}\left[ Var(X_i^\top (\beta_{1,s}^* + \beta_{0,s}^*)|S_i=s)-Var(X_i^\top (\beta_{1,s}^* - \beta_{0,s}^*)|S_i=s) \right] + o_P(1).
\end{align*}
% \begin{align*}
%     \frac{1}{n_{s}} \hat \beta_{1,s}^\top \Gamma_{s} \hat \beta_{0,s} = \beta_{1,s}^{*\top} Var(X_i|S_i=s)\beta_{0,s}^*,
% \end{align*}
Plug the above results into the expression of $\hat  \Sigma_{\mathcal V}^{adj}$ and obtain
\begin{align*}
\hat  \Sigma_{\mathcal V}^{adj} =       \mathbb E Var(X_i^\top (\beta_{1,s}^* - \beta_{0,s}^*)|S_i) + o_P(1).  
\end{align*}

\textbf{Step 4: $\hat \Sigma^{-1} \Omega \convP I_2.$} We have already shown in \textbf{Step 1} that $\hat \tau_{a,s} \convP \mathbb E (Y_i(a)|S_i=s)$. This implies $\hat \Sigma_{\mathcal W} \convP var(\mathbb{E}(Y_i(1)|S_i) - \mathbb{E}(Y_i(0)|S_i)).$ Therefore, combining results in \textbf{Steps 2 and 3}, we have
\begin{align*}
\hat \Sigma = & \left\{\sum_{s\in \mathcal{S}}p_s \left[\frac{Var(Y_i(1) - X_i^\top \beta^*_{1,s}|S_i=s)}{\pi_{s}} +\frac{Var(Y_i(0) - X_i^\top \beta^*_{0,s}|S_i=s)}{1-\pi_{s}} + \right]\right\}1_21_2^\top\\
& + \begin{pmatrix}
\mathbb E Var(X_i^\top (\beta_{1,S_i}^* - \beta_{0,S_i}^*)|S_i) & \mathbb E Var(X_i^\top (\beta_{1,S_i}^* - \beta_{0,S_i}^*)|S_i)\\
\mathbb E Var(X_i^\top (\beta_{1,S_i}^* - \beta_{0,S_i}^*)|S_i) & \mathbb E \left[\frac{Var(X_i^\top \beta_{1,S_i}^*|S_i)}{\pi_{S_i}} + \frac{Var(X_i^\top \beta_{0,S_i}^*|S_i)}{1-\pi_{S_i}} \right] 
\end{pmatrix}\\
&     +var(\mathbb{E}(Y_i(1)|S_i) - \mathbb{E}(Y_i(0)|S_i))1_21_2^\top +o_P(1) \\
& = \sum_{s \in \mathcal{S}}p_s \left\{\frac{\mathbb E[ Var(Y_i(1)|X_i,S_i)|S_i=s] }{\pi_s}+\frac{\mathbb E[ Var(Y_i(0)|X_i,S_i)|S_i=s] }{1-\pi_s} \right\}1_21_2^\top     \\
& + \sum_{s \in \mathcal{S}}\frac{p_s}{(1-\pi_s)\pi_s} \left\{Var( (1-\pi_s)(m_1(X_i,S_i) - X_i^\top \beta_{1,s}^*)|S_i=s) \right\}1_21_2^\top \\
& + \sum_{s \in \mathcal{S}}\frac{p_s}{(1-\pi_s)\pi_s} \left\{Var( \pi_s(m_0(X_i,S_i) - X_i^\top \beta_{0,s}^*)|S_i=s) \right\}1_21_2^\top \\
&+ \sum_{s \in \mathcal{S}} p_s\left\{Var(m_1(X_i,S_i)|S_i=s) + Var(m_0(X_i,S_i)|S_i=s)-2cov(X_i^\top \beta_{1,s}^*,X_i^\top \beta_{0,s}^*\mid S_i=s) \right\}1_21_2^\top \\
& + \begin{pmatrix}
0 & 0\\
0 & \mathbb E \left[\frac{Var(X_i^\top \beta_{1,s}^*|S_i)}{\pi_{S_i}} + \frac{Var(X_i^\top \beta_{0,s}^*|S_i)}{1-\pi_{S_i}} - Var(X_i^\top (\beta_{1,s}^* - \beta_{0,s}^*|S_i))\right] 
\end{pmatrix}\\
&     +var(\mathbb{E}(Y_i(1)|S_i) - \mathbb{E}(Y_i(0)|S_i))1_21_2^\top +o_P(1)\\
& =  \sum_{s \in \mathcal{S}}p_s \left\{\frac{\mathbb E[ Var(Y_i(1)|X_i,S_i)|S_i=s] }{\pi_s}+\frac{\mathbb E[ Var(Y_i(0)|X_i,S_i)|S_i=s] }{1-\pi_s} \right\}1_21_2^\top     \\
& + \mathbb{E}(m_1(X_i,S_i) - m_1(X_i,S_i) - \tau)^2 1_2 1_2^\top \\
& +  \sum_{s \in \mathcal{S}}\frac{p_s}{(1-\pi_s)\pi_s} \left\{Var( (1-\pi_s)m_1(X_i,S_i) + \pi_s m_0(X_i,S_i) - X_i^\top \overline \beta_{s}^*)|S_i=s) \right\}1_21_2^\top \\
& + \begin{pmatrix}
0 & 0\\
0 & \mathbb E \left[\frac{Var(X_i^\top \beta_{1,s}^*|S_i)}{\pi_{S_i}} + \frac{Var(X_i^\top \beta_{0,s}^*|S_i)}{1-\pi_{S_i}} - Var(X_i^\top (\beta_{1,s}^* - \beta_{0,s}^*|S_i))\right] 
\end{pmatrix} + o_P(1)\\
& = \Omega + o_P(1),  
\end{align*}
where the first equality holds by the facts that
\begin{align*}
&    \frac{Var(Y_i(1)-X_i^\top \beta_{1,s}^*|S_i=s)}{\pi_s} \\
& = \frac{\mathbb E[ Var(Y_i(1)|X_i,S_i)|S_i=s] }{\pi_s} + \frac{Var(m_1(X_i,S_i) - X_i^\top \beta_{1,s}^*|S_i=s)}{\pi_s}  \\
& = \frac{\mathbb E[ Var(Y_i(1)|X_i,S_i)|S_i=s] }{\pi_s} + \frac{Var( (1-\pi_s)(m_1(X_i,S_i) - X_i^\top \beta_{1,s}^*)|S_i=s)}{(1-\pi_s)\pi_s}\\
& + Var(m_1(X_i,S_i)|S_i=s) - Var(X_i^\top \beta_{1,s}^*|S_i=s),   
\end{align*}
\begin{align*}
&    \frac{Var(Y_i(0)-X_i^\top \beta_{0,s}^*|S_i=s)}{1-\pi_s} \\
& = \frac{\mathbb E[ Var(Y_i(0)|X_i,S_i)|S_i=s] }{1-\pi_s} + \frac{Var( \pi_s(m_0(X_i,S_i) - X_i^\top \beta_{0,s}^*)|S_i=s)}{(1-\pi_s)\pi_s}\\
& + Var(m_0(X_i,S_i)|S_i=s) - Var(X_i^\top \beta_{0,s}^*|S_i=s),
\end{align*}
and
\begin{align*}
&   Var(X_i^\top(\beta_{1,s}^* - \beta_{0,s}^*)|S_i=s) - Var(X_i^\top(\beta_{1,s}^* )|S_i=s) - Var(X_i^\top(\beta_{0,s}^* )|S_i=s)\\ &=   
-2cov(X_i^\top \beta_{1,s}^*,X_i^\top \beta_{0,s}^*|S_i=s) \\
& = 2cov(m_1(X_i,S_i) - X_i^\top \beta_{1,s}^*,m_0(X_i,S_i) - X_i^\top \beta_{0,s}^*|S_i=s) \\
& - 2cov(m_1(X_i,S_i),m_0(X_i,S_i)|S_i=s), 
\end{align*}
the second equality holds by the fact that 
\begin{align*}
& \sum_{s \in \mathcal{S}}\frac{p_s}{(1-\pi_s)\pi_s} \left\{Var( (1-\pi_s)m_1(X_i,S_i) + \pi_s m_0(X_i,S_i) - X_i^\top \overline \beta_{s}^*)|S_i=s) \right\}\\
& = \sum_{s \in \mathcal{S}}\frac{p_s}{(1-\pi_s)\pi_s} \left\{Var( (1-\pi_s)(m_1(X_i,S_i) - X_i^\top \beta_{1,s}^*)|S_i=s) \right\} \\
& + \sum_{s \in \mathcal{S}}\frac{p_s}{(1-\pi_s)\pi_s} \left\{Var( \pi_s(m_0(X_i,S_i) - X_i^\top \beta_{0,s}^*)|S_i=s) \right\}\\
& + 2cov(m_1(X_i,S_i) - X_i^\top \beta_{1,s}^*,m_0(X_i,S_i) - X_i^\top \beta_{0,s}^*|S_i=s), 
\end{align*}
and the last equality holds by the fact that 
\begin{align*}
& \sum_{s \in \mathcal{S}}\frac{p_s}{(1-\pi_s)\pi_s} \left\{Var( (1-\pi_s)m_1(X_i,S_i) + \pi_s m_0(X_i,S_i)|S_i=s) \right\}\\
& = \mathbb E \left[\frac{Var(X_i^\top \beta_{1,s}^*|S_i)}{\pi_{S_i}} + \frac{Var(X_i^\top \beta_{0,s}^*|S_i)}{1-\pi_{S_i}} - Var(X_i^\top (\beta_{1,s}^* - \beta_{0,s}^*|S_i))\right] \\
& + \sum_{s \in \mathcal{S}}\frac{p_s}{(1-\pi_s)\pi_s} \left\{Var( (1-\pi_s)m_1(X_i,S_i) + \pi_s m_0(X_i,S_i) - X_i^\top \overline \beta_{s}^*)|S_i=s) \right\}. 
\end{align*}
Given $\liminf_{n \rightarrow \infty}\lambda_{\min}(\Omega) >0$, this concludes the proof of $\hat \Sigma^{-1} \Omega \convP I_2$. 

\textbf{Step 5: Asymptotic normality of $\hat \tau^*$.}
We see that $\Omega_{1,1} = \Omega_{1,2}$ and 
\begin{align*}
\liminf_{n \rightarrow \infty}    (\Omega_{2,2} - \Omega_{1,1}) = \liminf_{n \rightarrow \infty} \sum_{s \in \mathcal{S}}\frac{p_s}{\pi_s(1-\pi_s)}(V_s'-V_s) = \liminf_{n \rightarrow \infty} \sum_{s \in \mathcal{S}}\frac{p_s}{\pi_s(1-\pi_s)}Var(X_i^\top \overline{\beta}_s^*|S_i=s) > 0. 
\end{align*}
This implies $\hat w \convP 1$, and thus, 
\begin{align*}
|\sqrt{n}(\hat \tau^* - \hat \tau^{adj})| \leq|1-\hat w| \left[|\sqrt{n}(\hat \tau^{adj} - \tau)| +|\sqrt{n}(\hat \tau^{unadj} - \tau)| \right]   = o_P(1).  
\end{align*}
Therefore, we have
\begin{align*}
\sqrt{n} \left[(\hat w, 1-\hat w)\hat \Sigma (\hat w, 1-\hat w)^\top\right]^{-1/2} ( \hat \tau^* - \tau) \convD \N(0,1)
\end{align*}
and $(\hat w, 1-\hat w)\hat \Sigma (\hat w, 1-\hat w)^\top = \Omega_{1,1} + o_P(1)< \Omega_{2,2}$.

\section{Technical Lemmas}\label{sec:lemma}
\begin{lem}
	Recall the definitions of $\mathcal U_n$, $\mathcal V_n$, and $\mathcal W_n$ in \eqref{eq:Un}, \eqref{eq:Vn}, and \eqref{eq:Wn}, respectively. Suppose Assumptions \ref{ass:assignment1}--\ref{ass:omega} holds.  
	Then, we have
	\begin{align*}
	\mathcal U_{n} \convD \mathcal U \stackrel{d}{=}  \N\left(\begin{pmatrix}
	0 \\
	0
	\end{pmatrix}, \begin{pmatrix}
	\mathbb{E}\left(\frac{\omega_{1,S_i,\infty}^2}{\pi_{S_i}}+\frac{\omega_{0,S_i,\infty}^2}{1-\pi_{S_i}}\right) & 	\mathbb{E}\left(\frac{\varpi_{1,S_i,\infty}}{\pi_{S_i}}+\frac{\varpi_{0,S_i,\infty}}{1-\pi_{S_i}}\right)\\
	\mathbb{E}\left(\frac{\varpi_{1,S_i,\infty}}{\pi_{S_i}}+\frac{\varpi_{0,S_i,\infty}}{1-\pi_{S_i}}\right) &  \mathbb{E}\left(\frac{\mathbb{E}(\eps_i^2(1)|S_i)}{\pi_{S_i}}+\frac{\mathbb{E}(\eps_i^2(0)|S_i)}{1-\pi_{S_i}}\right)
	\end{pmatrix} \right),
	\end{align*}
	\begin{align*}
	\mathcal V_{n} \convD \mathcal V \stackrel{d}{=} \N\left(\begin{pmatrix}
	0 \\
	0
	\end{pmatrix},\begin{pmatrix}
	\mathbb{E}	var(\phi_i(1) - \phi_i(0)|S_i) & \mathbb{E}	var(\phi_i(1) - \phi_i(0)|S_i)\\
	\mathbb{E}	var(\phi_i(1) - \phi_i(0)|S_i) & \mathbb{E}\left[\frac{var(\phi_i(1)|S_i)}{\pi_{S_i}} + \frac{var(\phi_i(0)|S_i)}{(1-\pi_{S_i})}\right]
	\end{pmatrix}  \right),
	\end{align*}
	and 
	\begin{align*}
	\mathcal W_n \convD \mathcal W \stackrel{d}{=} \N(0,var(\mathbb{E}(Y_i(1)|S_i) - \mathbb{E}(Y_i(0)|S_i))).
	\end{align*}
	In addition, $(\mathcal U,\mathcal V,\mathcal W)$ are independent.  
	\label{lem:clt}
\end{lem}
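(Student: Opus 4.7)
\medskip

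\textbf{Proof proposal for Lemma \ref{lem:clt}.} The plan is to exploit the natural conditional independence structure of the three components. Let $\mathcal{F}_n = \sigma(S^{(n)},A^{(n)},X^{(n)})$. Under Assumption \ref{ass:assignment1}, the errors $\{\eps_i(a):i\in[n],a=0,1\}$ are conditionally independent given $\mathcal{F}_n$ with conditional means zero, while $\mathcal{V}_n$ and $\mathcal{W}_n$ are both $\mathcal{F}_n$-measurable. So I will (i) derive the limit of $\mathcal{U}_n$ conditional on $\mathcal{F}_n$ and show the conditional limit is a fixed Gaussian (hence unconditionally independent of $(\mathcal{V}_n,\mathcal{W}_n)$), and (ii) derive the joint limit of $(\mathcal{V}_n,\mathcal{W}_n)$ by the BCS17 strategy for CAR designs combined with a standard i.i.d.\ CLT.

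\medskip

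For step (i), write $\mathcal{U}_n = \sum_{a,s}\sum_{i\in\aleph_{a,s}}c_{a,s,i}\eps_i(a)$, where $c_{a,s,i}\in\Re^2$ stacks the weights $\hat p_s\sqrt{n}\,\gamma_{a,s,n}^{-1}n_{a,s}^{-1}\bigl(\sum_{j\in\aleph_{a,s}}M_{a,s,i,j}\bigr)$ (adjusted coordinate) and $\hat p_s\sqrt{n}\,n_{a,s}^{-1}$ (unadjusted coordinate), with the appropriate sign for $a=0$. Conditional on $\mathcal{F}_n$, this is a sum of independent mean-zero $\Re^2$-valued random vectors. A direct calculation using Assumption \ref{ass:omega} shows the conditional covariance matrix converges in probability to $\Sigma_{\mathcal{U}}$ as displayed: the $(1,1)$ entry is driven by $\gamma_{a,s,n}^{-2}\sigma^2_{a,s,n}\convP\omega^2_{a,s,\infty}$, the $(1,2)$ entry by $\gamma_{a,s,n}^{-1}\rho_{a,s,n}\convP\varpi_{a,s,\infty}$, and the $(2,2)$ entry by $\mathbb{E}(\eps^2_i(a)|S_i)$ via a standard LLN. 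Lyapunov's condition follows from the fourth-moment bound in Assumption \ref{ass:linear}(i) together with $\max_{i\in\aleph_{a,s}}|\sum_j M_{a,s,i,j}|=o_P(n^{1/2})$ from Assumption \ref{ass:linear}(v), which controls the maximum weight and ensures that no single $i$ dominates. Invoking Yurinskii's coupling conditional on $\mathcal{F}_n$ (carefully stated with an explicit vanishing rate $\delta_n$, in order to avoid the issue flagged in footnote 5) and the anti-concentration inequality of \cite{CCK14} then upgrades the Gaussian approximation on convex sets to a conditional CLT, so $(\mathcal{U}_n\mid\mathcal{F}_n)\convD\mathcal{U}$ in probability. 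Since $\mathcal{U}$ is a fixed centered Gaussian independent of $\mathcal{F}_n$, this gives $(\mathcal{U}_n,\mathcal{V}_n,\mathcal{W}_n)\convD(\mathcal{U},\mathcal{V}_\infty,\mathcal{W})$ with $\mathcal{U}\indep(\mathcal{V}_\infty,\mathcal{W})$ whenever $(\mathcal{V}_n,\mathcal{W}_n)\convD(\mathcal{V}_\infty,\mathcal{W})$, via the tower property applied to characteristic functions:
\begin{align*}
\mathbb{E} e^{it_1^\top\mathcal{U}_n+it_2^\top\mathcal{V}_n+it_3\mathcal{W}_n}
=\mathbb{E}\Bigl[e^{it_2^\top\mathcal{V}_n+it_3\mathcal{W}_n}\,\mathbb{E}\bigl(e^{it_1^\top\mathcal{U}_n}\mid\mathcal{F}_n\bigr)\Bigr]\to e^{-t_1^\top\Sigma_{\mathcal{U}}t_1/2}\,\mathbb{E} e^{it_2^\top\mathcal{V}_\infty+it_3\mathcal{W}}.
\end{align*}

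\medskip

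For step (ii), the convergence of $\mathcal{W}_n$ is immediate: $\mathbb{E}(Y_i(1)-Y_i(0)|S_i)-\tau$ is an i.i.d.\ mean-zero sequence with variance $\Sigma_{\mathcal{W}}$. For $\mathcal{V}_n$, decompose it stratum by stratum. Within each stratum $s$, order the units as in the BCS17 coupling (independent auxiliary sequences with the same marginal as $(X_i\mid S_i=s)$), so that the treated and control partial sums $n_{1,s}^{-1}\sum_{i\in\aleph_{1,s}}\phi_i(1)$ and $n_{0,s}^{-1}\sum_{i\in\aleph_{0,s}}\phi_i(0)$ are asymptotically equivalent to partial sums of i.i.d.\ copies of $\phi_i(a)\mid S_i=s$, whose initial segments are independent by construction. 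Combined with Assumption \ref{ass:assignment1}(iv) giving $n_{a,s}/n_s\to\pi_s^{a}(1-\pi_s)^{1-a}$ and the CLT, this yields $\mathcal{V}_n\convD\mathcal{V}$ jointly with $\mathcal{W}_n\convD\mathcal{W}$, with the covariance entries in $\Sigma_{\mathcal{V}}$ obtained from $\text{var}(\phi_i(1)-\phi_i(0)\mid S_i)$ (adjusted) and the pooled variance formula (unadjusted). The independence of $\mathcal{V}$ and $\mathcal{W}$ in the limit follows because $\mathbb{E}(\phi_i(a)\mid S_i)=0$, so the cross-moments between a stratum-average of $\phi_i$'s and any measurable function of $S^{(n)}$ vanish.

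\medskip

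\textbf{Main obstacle.} The most delicate step is the conditional CLT for $\mathcal{U}_n$ in the many-regressor regime $\kappa_{a,s}\in(0,1)$. Neither the weights $\sum_j M_{a,s,i,j}$ nor the diagonal entries $P_{a,s,i,i}$ are $o(1)$ in this regime, so a naive Lindeberg argument does not suffice; one must combine Assumption \ref{ass:linear}(v) with Yurinskii's coupling (applied carefully in the sense of Remark 2.1 of \cite{CMU22}) and the anti-concentration inequality of \cite{CCK14} to obtain conditional weak convergence to a fixed Gaussian law uniformly over $\omega\in\Omega$. Once that is in hand, the independence across the three components is essentially free from the $\mathcal{F}_n$-measurability of $(\mathcal{V}_n,\mathcal{W}_n)$.
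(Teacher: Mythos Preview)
Your proposal is correct and follows essentially the same route as the paper: condition on $(S^{(n)},A^{(n)},X^{(n)})$ so that $\mathcal U_n$ is a sum of conditionally independent mean-zero vectors, apply Yurinskii's coupling plus the \cite{CCK14} anti-concentration bound to get a conditional CLT with fixed Gaussian limit $\mathcal U$, and handle $(\mathcal V_n,\mathcal W_n)$ via the \cite{BCS17} i.i.d.\ auxiliary-sequence coupling and partial-sum equicontinuity. The paper's only organizational difference is that it applies the BCS17 coupling up front to the pair $(\mathcal U_n,\mathcal V_n)$ and then conditions successively on $(A^{(n)},S^{(n)})$ and on the auxiliary covariates $\Xi^{(n)}$, working with CDFs on rectangles rather than characteristic functions; your ordering (separate $\mathcal U_n$ first by conditioning on the original $\mathcal F_n$, then couple for $\mathcal V_n$) is equally valid.

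One small point: your last sentence justifying $\mathcal V\indep\mathcal W$ by ``zero cross-moments'' is not the right mechanism. Vanishing covariance gives independence only after joint Gaussianity is in hand, and under CAR the joint Gaussianity of $(\mathcal V_n,\mathcal W_n)$ is precisely what requires proof. The BCS17 argument you already invoke delivers independence directly: after coupling, the approximation $\tilde{\mathcal V}_n^*$ (with deterministic index boundaries $\lfloor n\mathbb P(S<s)\rfloor$, etc.) is \emph{independent} of $(A^{(n)},S^{(n)})$ by construction, so $\mathbb P(\tilde{\mathcal V}_n\le t_2\mid A^{(n)},S^{(n)})\convP\mathbb P(\mathcal V\le t_2)$, and since $\mathcal W_n$ is $S^{(n)}$-measurable the product structure $\mathbb P(\mathcal V\le t_2)\mathbb P(\mathcal W\le t_3)$ falls out of the tower property. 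Drop the cross-moment remark and lean on this instead.
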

\begin{proof}
	Following \cite{BCS17}, we define $\{(X_i^s,\eps_i^s(1),\eps_i^s(0)): 1\leq i \leq n\}$ as a sequence of i.i.d. random variables with marginal distributions equal to the distribution of $(X_i,\eps_i(1),\eps_i(0))|S_i = s$ and $N_s = \sum_{i =1}^n1\{S_i <s\}$. We order units by strata and then by $A_i = 1$ first and $A_i = 0$ second within each stratum. This means for the $s$th stratum, units indexed from $N_s+1$ to $N_s + n_{1,s}$ are treated and units indexed from $N_s + n_{1,s}+1$ to $N_s + n_s$ are untreated. Then, we have $(Y_i(a)\mid S_i=s) \stackrel{d}{=} Y_i^s(a)$ where 
	\begin{align*}
	Y_i^s(a) = \mu_a(X_i^s,s) + \eps_i^s(a) \quad \text{and} \quad \mu_a(x,s) = \mathbb{E}(Y_i(a)|X_i=x,S_i=s). 
	\end{align*}
	We further define $\tilde M_{a,s,i,j}$ as the $(i,j)$th entry of the $n_{a,s} \times n_{a,s}$ matrix,  
	$\tilde M_{a,s} = I_{n_{a,s}}-\Xi_{a,s} (\Xi_{a,s}^\top  \Xi_{a,s})^{-1} \Xi_{a,s}^\top $, $\Xi_{1,s} = ((\breve X_{N_s+1}^s)^\top,\cdots,( \breve X_{N_s+n_{1,s}}^s)^\top)^\top$ is an $n_{1,s} \times k_n$ matrix, $\Xi_{0,s} = ((\breve X_{N_s+1+n_{1,s}}^s)^\top,\cdots,(\breve X_{N_s+n_{s}}^s)^\top)^\top$ is an $n_{0,s} \times k_n$ matrix,
	\begin{align*}
	\breve X_i^s = 	X_i^s - \frac{1}{n_{s}}\sum_{j=N_s+1}^{N_s+n_{s}}X_j^s, \quad \text{if}\quad N_s+1 \leq i \leq N_s+n_{s},
	\end{align*}
	$\tilde \gamma_{a,s,n} = 1_{n_{a,s}}^\top \tilde M_{a,s} 1_{n_{a,s}}$, and $\phi_i^s(a) = \mathbb{E}(Y_i^s(a)|X_i^s)- \mathbb{E}(Y_i^s(a)) \stackrel{d}{=} \mathbb E (Y_i(a)|X_i,S_i=s) - \mathbb E (Y_i(a)|S_i=s)$.

	Conditional on $(A^{(n)},S^{(n)})$, we have
	\begin{align*}
	(\mathcal U_n,\mathcal V_n) |(A^{(n)},S^{(n)}) \stackrel{d}{=} (\tilde{ \mathcal U}_n, \tilde{ \mathcal  V}_n)|(A^{(n)},S^{(n)}),
	\end{align*}
	where
	\begin{align*}
	\tilde {\mathcal	U}_{n} = \sum_{s \in \mathcal{S}} \hat p_s \sqrt{n} \begin{pmatrix}
	\tilde \gamma_{1,s,n}^{-1} n_{1,s}^{-1}\left[\sum_{i = N_s+1}^{N_s+n_{1,s}} \left(\sum_{j = N_s+1}^{N_s+n_{1,s}} \tilde M_{1,s,i,j}\right)\eps_{i}^s(1)\right] \\
	- 	\tilde \gamma_{0,s,n}^{-1} n_{0,s}^{-1}\left[\sum_{i = N_s+n_{1,s}+1}^{N_s+n_{s}}\left(\sum_{j = N_s+n_{1,s}+1}^{N_s+n_{s}} \tilde M_{0,s,i,j}\right)\eps_{i}^s(0)\right] \\
	n_{1,s}^{-1}\left[\sum_{i = N_s+1}^{N_s+n_{1,s}} \eps_{i}^s(1)\right] - n_{0,s}^{-1}\left[\sum_{i = N_s+n_{1,s}+1}^{N_s+n_{s}} \eps_{i}^s(0)\right]
	\end{pmatrix} \quad \text{and}
	\end{align*}
	\begin{align*}
	\tilde {\mathcal V}_{n} = \sum_{s \in \mathcal{S}} \hat p_s \sqrt{n} \begin{pmatrix}
	\frac{1}{n_{s}}\sum_{i = N_s+1}^{N_s+n_{s}} \left[\phi_i^s(1) - \phi_i^s(0)\right] \\
	\frac{1}{n_{1,s}}\sum_{i =N_s+1}^{N_s+n_{1,s}}\phi_i^s(1) -
	\frac{1}{n_{0,s}}\sum_{i=N_s+n_{1,s}+1}^{N_s+n_{s}}\phi_i^s(0) 
	\end{pmatrix}.
	\end{align*}
	In addition, because $\mathcal W_n$ only depends on $S^{(n)}$, it implies the joint distribution of $(\mathcal U_n,\mathcal V_n,\mathcal W_n)$ and $(\tilde {\mathcal{U}}_n, \tilde {\mathcal{V}}_n,\mathcal W_n)$ are the same, i.e., 
	\begin{align}
	(\mathcal U_n,\mathcal V_n,\mathcal W_n) \stackrel{d}{=} 	(\tilde {\mathcal{U}}_n, \tilde {\mathcal{V}}_n,\mathcal W_n).
	\label{eq:tildeUV}
	\end{align}
	
	We aim to show that, for any $t_1 \in \Re^2$, $t_2 \in \Re^2$, and $t_3 \in \Re$,  
	\begin{align*}
	\left|\mathbb{P}\left(\mathcal U_n \leq t_1,\mathcal V_n \leq t_2,\mathcal W_n \leq t_3 \right) - \mathbb{P}\left(\mathcal U \leq t_1\right)\mathbb{P}\left(\mathcal V \leq t_2\right)\mathbb{P}\left(\mathcal W \leq t_3 \right)\right| = o(1).
	\end{align*}	
	
	By the above definition, we have
	\begin{align*}
	& \left|\mathbb{P}\left(\mathcal U_n \leq t_1,\mathcal V_n \leq t_2,\mathcal W_n \leq t_3 \right) - \mathbb{P}\left(\mathcal U \leq t_1\right)\mathbb{P}\left(\mathcal V \leq t_2\right)\mathbb{P}\left(\mathcal W \leq t_3 \right)\right| \\
	& = \left|\mathbb{P}\left(\tilde {\mathcal{U}}_n \leq t_1,\tilde {\mathcal{V}}_n \leq t_2,\mathcal W_n \leq t_3 \right) - \mathbb{P}\left(\mathcal U \leq t_1\right)\mathbb{P}\left(\mathcal V \leq t_2\right)\mathbb{P}\left(\mathcal W \leq t_3 \right)\right| \\
	& = 	\left|\mathbb{E}\mathbb{P}\left(\tilde {\mathcal{U}}_n \leq t_1, \tilde {\mathcal{V}}_n \leq t_2\bigg|A^{(n)},S^{(n)}\right) 1\{\mathcal W_n \leq t_3\} - \mathbb{P}\left(\mathcal U \leq t_1\right)\mathbb{P}\left(\mathcal V \leq t_2\right)\mathbb{P}\left(\mathcal W \leq t_3 \right)\right|\\
	& \leq 	\mathbb{E}\left|\mathbb{P}\left(\tilde {\mathcal{U}}_n \leq t_1, \tilde {\mathcal{V}}_n \leq t_2\bigg|A^{(n)},S^{(n)} \right)-\mathbb{P}\left(\mathcal U \leq t_1\right)\mathbb{P}\left(\mathcal V \leq t_2\right)\right| + |\mathbb{P}(\mathcal W_n \leq t_3) - \mathbb{P}(\mathcal W\leq t_3)| \\
	& \leq \mathbb{E}\left|\mathbb{P}\left(\tilde {\mathcal{U}}_n \leq t_1, \tilde {\mathcal{V}}_n \leq t_2\bigg|A^{(n)},S^{(n)}\right)-\mathbb{P}\left(\mathcal U \leq t_1\right)\mathbb{P}\left(\mathcal V \leq t_2\right)\right| + o(1),
	\end{align*}	
	where first equality is by \eqref{eq:tildeUV} and the second inequality is because by the Lindeberg central limit theorem, $\mathcal W_n \convD \mathcal W$. Therefore, it suffices to show 
	\begin{align*}
	\mathbb{P}\left(\tilde {\mathcal{U}}_n \leq t_1, \tilde {\mathcal{V}}_n \leq t_2\bigg|A^{(n)},S^{(n)}\right)-\mathbb{P}\left(\mathcal U \leq t_1\right)\mathbb{P}\left(\mathcal V \leq t_2\right) \convP 0.
	\end{align*}
	
	Let $\Xi^{(n)} = \{X_i^s\}_{i \in [n],s\in \mathcal{S}}$. Then, $\tilde {\mathcal{V}}_n$ belongs to the sigma field generated by $(A^{(n)},S^{(n)},\Xi^{(n)})$. We have
	\begin{align}
	& \left\vert \mathbb{P}\left(\tilde {\mathcal{U}}_n \leq t_1, \tilde {\mathcal{V}}_n \leq t_2\bigg|A^{(n)},S^{(n)}\right)-\mathbb{P}\left(\mathcal U \leq t_1\right)\mathbb{P}\left(\mathcal V \leq t_2\right) \right\vert \notag \\
	& \leq \mathbb{E}\left[\left\vert \mathbb{P}\left(\tilde {\mathcal{U}}_n \leq t_1\bigg|A^{(n)},S^{(n)},\Xi^{(n)}\right)-\mathbb{P}\left(\mathcal U \leq t_1\right)\right\vert \bigg|A^{(n)},S^{(n)}  \right] + \left|\mathbb{P}\left(\tilde {\mathcal{V}}_n \leq t_2\bigg|A^{(n)},S^{(n)}\right) - \mathbb{P}\left(\mathcal V \leq t_2\right)\right|.
	\label{eq:sufficient}
	\end{align}	
	
	We show the two terms on the RHS of \eqref{eq:sufficient} vanish in probability in the following two steps. 
	
	\textbf{Step 1: The first term on the RHS of \eqref{eq:sufficient}}. It suffices to show that 
	\begin{align*}
	\mathbb{P}\left(\tilde {\mathcal{U}}_n \leq t_1\bigg|A^{(n)},S^{(n)},\Xi^{(n)}\right)-\mathbb{P}\left(\mathcal U \leq t_1\right)  = o_P(1).
	\end{align*}
	
	Let
	\begin{align*}
	Z_i = \begin{cases}
	\left((n_{s}/n_{1,s}) \tilde \gamma_{1,s,n}^{-1}\theta_i \eps_{i}^s(1), (n_{s}/n_{1,s}) \eps_{i}^s(1)\right)^\top & \text{if} \quad N_s+1 \leq i \leq N_s +n_{1,s} \\
	\left((n_{s}/n_{0,s}) \tilde \gamma_{0,s,n}^{-1} \theta_i \eps_{i}^s(0), (n_{s}/n_{0,s}) \eps_{i}^s(0)\right)^\top & \text{if} \quad N_s+ n_{1,s}+1 \leq i \leq N_s +n_{s}
	\end{cases},
	\end{align*}
	where 
	\begin{align*}
	\theta_i = \begin{cases}
	\sum_{j = N_s+1}^{N_s+n_{1,s}} \tilde M_{1,s,i,j} & \text{if} \quad N_s+1 \leq i \leq N_s +n_{1,s} \\
	\sum_{j = N_s+n_{1,s}+1}^{N_s+n_{s}} \tilde M_{0,s,i,j} & \text{if} \quad N_s+ n_{1,s}+1 \leq i \leq N_s +n_{s}
	\end{cases}.
	\end{align*}
	Then, we have
	\begin{align*}
	\tilde {\mathcal{U}}_n = \frac{1}{\sqrt{n}}\sum_{i\in [n]} Z_i.
	\end{align*}
	
	By construction,
	\begin{align*}
	\max_{N_s+1 \leq i \leq N_s+n_{1,s}}\left|\sum_{j = N_s+1}^{N_s+n_{1,s}} \tilde M_{1,s,i,j}\right| \quad \text{and} \quad \max_{ i \in \aleph_{1,s}} \left|\sum_{j \in \aleph_{1,s}}M_{1,s,i,j}\right|
	\end{align*}
	share the same distribution conditional on $(A^{(n)},S^{(n)})$, and thus, unconditionally. Same for
	\begin{align*}
	\max_{N_s+ n_{1,s}+1 \leq i \leq N_s+n_{s}}\left|\sum_{j = N_s+ n_{1,s}+1 }^{N_s+n_{s}} \tilde M_{0,s,i,j}\right| \quad \text{and} \quad \max_{ i \in \aleph_{0,s}} \left|\sum_{j \in \aleph_{0,s}}M_{0,s,i,j}\right|.	
	\end{align*}
	Therefore, Assumption \ref{ass:linear}(v) implies 
	\begin{align}
	\max_{i \in [n]}	|\theta_i| = o_P(n^{1/2})
	\label{eq:theta}
	\end{align} 
	
	In addition, we note that $\left(\hat p_s, n_{1,s},n_{0,s}, n_{s}, \tilde \gamma_{1,s,n}, \tilde \gamma_{0,s,n}, \theta_i\right)$ belong to the sigma field generated by $(A^{(n)},S^{(n)},\Xi^{(n)})$ and $\{Z_i\}_{i \in [n]}$ are independent and mean-zero conditional on $(A^{(n)},S^{(n)},\Xi^{(n)})$. Define 
	\begin{align*}
	H_n & = \frac{1}{n}\sum_{i \in [n]} \mathbb{E}\left(Z_iZ_i^\top\bigg|A^{(n)},S^{(n)},\Xi^{(n)}\right) \\
	& = \frac{1}{n}\sum_{s \in \mathcal{S}} \biggl[\sum_{i = N_s+1}^{N_s+n_{1,s}} \begin{pmatrix}
	(n_{s}/n_{1,s})^2 \tilde \gamma_{1,s,n}^{-2} \theta_i^2 \eta_i^2 & (n_{s}/n_{1,s})^2 \tilde \gamma_{1,s,n}^{-1} \theta_i \eta_i^2 \\
	(n_{s}/n_{1,s})^2 \tilde \gamma_{1,s,n}^{-1} \theta_i \eta_i^2 & (n_{s}/n_{1,s})^2  \eta_i^2
	\end{pmatrix} \\
	& + \sum_{i = N_s+n_{1,s}+1}^{N_s+n_{s}} \begin{pmatrix}
	(n_{s}/n_{0,s})^2 \tilde \gamma_{0,s,n}^{-2} \theta_i^2 \eta_i^2 & (n_{s}/n_{0,s})^2 \tilde \gamma_{0,s,n}^{-1} \theta_i \eta_i^2 \\
	(n_{s}/n_{0,s})^2 \tilde \gamma_{0,s,n}^{-1} \theta_i \eta_i^2 & (n_{s}/n_{0,s})^2  \eta_i^2
	\end{pmatrix} \biggr],
	\end{align*}
	where 
	\begin{align*}
	\eta_i^2 = \begin{cases}
	\mathbb{E}\left[(\eps_{i}^s(1))^2\bigg|A^{(n)},S^{(n)},\Xi^{(n)}\right] & \text{if} \quad N_s+1 \leq i \leq N_s +n_{1,s} \\
	\mathbb{E}\left[(\eps_{i}^s(0))^2\bigg|A^{(n)},S^{(n)},\Xi^{(n)}\right] & \text{if} \quad N_s+ n_{1,s}+1 \leq i \leq N_s +n_{s}
	\end{cases}.
	\end{align*}
	
	By construction, we have 
	\begin{align*}
	\eta_i^2 \stackrel{d}{=} \begin{cases}
	\mathbb{E}(\eps_i^2(1)|X_i,S_i=s) & \text{if} \quad N_s+1 \leq i \leq N_s +n_{1,s} \\
	\mathbb{E}(\eps_i^2(0)|X_i,S_i=s)  & \text{if} \quad N_s+ n_{1,s}+1 \leq i \leq N_s +n_{s}
	\end{cases}.
	\end{align*}
	%Therefore, by Assumption \ref{ass:linear}(iii), we have
	%\begin{align*}
	%\min_{\ell = 1,2}\frac{1}{n}\sum_{i \in [n]}\mathbb{E}Z_{i,\ell}^2\bigg|A^{(n)},S^{(n)},\Xi^{(n)}\ \geq c_n,
	%\end{align*}
	%such that $c_n \convP c>0$. In addition
	
	Further define $Z_i = (Z_{i,1},Z_{i,2})^\top$ and
	\begin{align*}
	L_n = \max_{\ell=1,2} \frac{1}{n^{3/2}}\sum_{i \in [n]} \mathbb{E}\left(|Z_{i,\ell}^3| \mid A^{(n)},S^{(n)},\Xi^{(n)}\right).    
	\end{align*}
	Then, we have
	\begin{align*}
	L_n & \leq \frac{\overline \sigma_n^3}{n^{3/2}} \sum_{s \in \mathcal{S}}\biggl[ (n/n_{1,s})^3\sum_{i = N_s+1}^{N_s+n_{1,s}} \tilde \gamma_{1,s,n}^{-3} |\theta_i|^3 + (n/n_{0,s})^3\sum_{i = N_s+n_{1,s}+1}^{N_s+n_{s}} \gamma_{0,s,n}^{-3} |\theta_i|^3 \biggr] \\
	& \leq \frac{\overline{\sigma}^3_n n^3}{\min_{a=0,1, s\in \mathcal{S}}n_{a,s}^3 \min_{a=0,1, s\in \mathcal{S}} \tilde \gamma_{a,s,n}^2} \frac{\max_{ i \in [n]} |\theta_i|}{\sqrt{n}} \\
	&  \convP 0,
	\end{align*}
	where $\tilde \gamma_{a,s,n} \stackrel{d}{=} \gamma_{a,s,n}$, 
	$$\overline \sigma_n^3 = \max_{i \in [n],a=0,1,s \in \mathcal{S}}\mathbb{E}\left[|\eps_{i}^s(a)|^3 \mid A^{(n)},S^{(n)},\Xi^{(n)} \right] \stackrel{d}{=} \max_{ i \in [n],a=0,1,s\in \mathcal{S}} \mathbb{E}(|\eps_i^3(a)|\mid X_i,S_i=s),$$ the second inequality holds by the facts that 
	\begin{align*}
	\sum_{i = N_s+1}^{N_s+n_{1,s}}  |\theta_i|^3 \leq \max_{ i \in [n]}|\theta_i| 	\sum_{i = N_s+1}^{N_s+n_{1,s}}  |\theta_i|^2 = \max_{ i \in [n]}|\theta_i| \tilde \gamma_{1,s,n}
	\end{align*}
	and 
	\begin{align*}
	\sum_{i = N_s+n_{1,s}+1}^{N_s+n_{s}}  |\theta_i|^3 \leq \max_{ i \in [n]}|\theta_i| 	\sum_{i = N_s+n_{1,s}+1}^{N_s+n_{s}}  |\theta_i|^2 = \max_{ i \in [n]}|\theta_i| \tilde \gamma_{0,s,n},
	\end{align*}
	and the last convergence is by \eqref{eq:theta}. 
	
	By the Yurinskii's coupling (\citet[Theorem 10]{P02}), there exists a version of $\breve {\mathcal{U}}_n$ and a universal constant $C_0$ such that 
	\begin{align*}
	\breve {\mathcal{U}}_n \mid A^{(n)},S^{(n)},\Xi^{(n)} \stackrel{d}{=} \N(0,H_n)    
	\end{align*}
	and
	\begin{align}\label{eq:coupling}
	\mathbb{P}\left(\left\Vert \breve {\mathcal{U}}_n - \tilde {\mathcal{U}}_n \right\Vert_2 \geq 3\delta_n \bigg|A^{(n)},S^{(n)},\Xi^{(n)}\right) \leq C_0 2L_n \delta_n^{-3} \left(1 + \frac{|\log(1/(2L_n\delta_n^{-3})) |}{2} \right).
	\end{align}
	Because $L_n \convP 0$, we can choose $\delta_n = o(1)$ such that $L_n \delta_n^{-3} = o(1)$, which implies 
	$$C_0 2L_n \delta_n^{-3} \left(1 + \frac{|\log(1/(2L_n\delta_n^{-3})) |}{2} \right) \convP 0,$$ and thus, 
	\begin{align*}
	\left\Vert \breve {\mathcal{U}}_n - \tilde {\mathcal{U}}_n \right\Vert_2 = o_P(1).
	\end{align*}
	
	Furthermore, note that 
	\begin{align*}
	H_n 
	& = \sum_{s \in \mathcal{S}} \frac{n_s^2}{n n_{1,s}} \biggl[\frac{1}{n_{1,s}}\sum_{i = N_s+1}^{N_s+n_{1,s}} \begin{pmatrix}
	\tilde \gamma_{1,s,n}^{-2} \theta_i^2 \eta_i^2 & \tilde \gamma_{1,s,n}^{-1} \theta_i \eta_i^2 \\
	\tilde \gamma_{1,s,n}^{-1} \theta_i \eta_i^2 &   \eta_i^2
	\end{pmatrix}\biggr] \\
	& + \sum_{s \in \mathcal{S}} \frac{n_s^2}{n n_{0,s}} \biggl[\frac{1}{n_{0,s}}\sum_{i = N_s+n_{1,s}+1}^{N_s+n_{s}} \begin{pmatrix}
	\tilde \gamma_{0,s,n}^{-2} \theta_i^2 \eta_i^2 &  \tilde \gamma_{0,s,n}^{-1} \theta_i \eta_i^2 \\
	\tilde \gamma_{0,s,n}^{-1} \theta_i \eta_i^2 &   \eta_i^2
	\end{pmatrix} \biggr] \\ 
	& \stackrel{d}{=} \sum_{s \in \mathcal{S}} \frac{n_s^2}{n n_{1,s}} \biggl[\begin{pmatrix}
	\gamma_{1,s,n}^{-2} \sigma_{1,s,n}^2 &  \gamma_{1,s,n}^{-1} \rho_{1,s,n}\\
	\gamma_{1,s,n}^{-1} \rho_{1,s,n} &  \tilde \gamma_{1,s,n}^{-1} \frac{1}{n_{1,s}}\sum_{i \in \aleph_{1,s}} \left[\mathbb{E}(\eps_{i}^2(1)|X_i,S_i=s) \right]
	\end{pmatrix}\biggr] \\
	&+ \sum_{s \in \mathcal{S}} \frac{n_s^2}{n n_{0,s}} \biggl[\begin{pmatrix}
	\gamma_{0,s,n}^{-2} \sigma_{0,s,n}^2 &  \tilde \gamma_{0,s,n}^{-1} \rho_{0,s,n} \\
	\gamma_{0,s,n}^{-1} \rho_{0,s,n} &  \frac{1}{n_{0,s}}\sum_{i \in \aleph_{0,s}} \left[\mathbb{E}(\eps_{i}^2(0)|X_i,S_i=s) \right]
	\end{pmatrix} \biggr] \\
	& \convP \sum_{s \in \mathcal{S}}\left[\frac{p_s}{\pi_s} \begin{pmatrix}
	\omega_{1,s,n}^2 & \varpi_{1,s,n} \\
	\varpi_{1,s,n} &  \mathbb{E}(\eps_i^2(1)|S_i=s)
	\end{pmatrix} + \frac{p_s}{1-\pi_s} \begin{pmatrix}
	\omega_{0,s,n}^2 & \varpi_{0,s,n} \\
	\varpi_{0,s,n} &  \mathbb{E}(\eps_i^2(0)|S_i=s)
	\end{pmatrix}
	\right] \\
	& = H,
	\end{align*}
	where the second last line holds by Assumption \ref{ass:omega} and $H$ is nonrandom. 
	
	We can write  $\breve{\mathcal{U}}_n = H^{1/2}_n \mathcal{G}_2$, where $\mathcal{G}_2 $ is a two-dimensional standard Gaussian vector such that $\mathcal G_2 \indep A^{(n)},S^{(n)},\Xi^{(n)}$. Further denote $\mathcal U = H^{1/2} \mathcal{G}_2$ and note that
	\begin{align*}
	||H^{1/2}_n- H^{1/2}||_{op} \leq ||H_n- H||_{op}^{1/2} = o_P(1),
	\end{align*}
	where  the inequality is by \citet[Theorem X.1.1]{bhatia2013} with $f(u) = u^{1/2}$. Let $F_n(\delta') = \{||H_n- H||_{op}^{1/2} \leq (\delta')^2 \}$ for any $(\delta')^2>0$. Then, $\mathbb{P}(F_n(\delta')) \rightarrow 1$ and $F_n(\delta')$ belongs to the sigma field generated by $(A^{(n)},S^{(n)},\Xi^{(n)})$. On $F_n(\delta')$, we have
	\begin{align*}
	&  \mathbb{P}\left(\breve {\mathcal{U}}_n \leq t_1\bigg|A^{(n)},S^{(n)},\Xi^{(n)}\right)-\mathbb{P}\left(U \leq t_1\right)  \\
	& =  \mathbb{P}\left( H_n^{1/2} \mathcal{G}_2 \leq t_1\bigg|A^{(n)},S^{(n)},\Xi^{(n)}\right)-\mathbb{P}\left(H^{1/2} \mathcal{G}_2 \leq t_1 \right) \\
	& \leq \mathbb{P}\left( H^{1/2}\mathcal{G}_2  \leq t_1+ 1_2||H^{1/2}_n - H^{1/2}||_{op} ||\mathcal{G}_2||_2 \bigg|A^{(n)},S^{(n)},\Xi^{(n)} \right) - \mathbb{P}( H^{1/2} \mathcal{G}_2 \leq t_1) \\
	& \leq \mathbb{P}\left( H^{1/2}\mathcal{G}_2  \leq t_1+ 1_2\delta' \right) - \mathbb{P}( H^{1/2} \mathcal{G}_2 \leq t_1) + \mathbb{P}(||\mathcal{G}_2||_2 \geq 1/\delta').
	\end{align*}
	By Assumption \ref{ass:linear}(ii), we see that $H_{1,1}$ and $H_{2,2}$ are bounded above from zero. Therefore, by \citet[Lemma A.1]{CCK14}, we have
	\begin{align*}
	\mathbb{P}\left( H^{1/2}\mathcal{G}_2  \leq t_1+ 1_2\delta' \right) - \mathbb{P}( H^{1/2} \mathcal{G}_2 \leq t_1) \leq C \delta',
	\end{align*}
	which implies 
	\begin{align*}
	\mathbb{P}\left(\breve {\mathcal{U}}_n \leq t_1\bigg|A^{(n)},S^{(n)},\Xi^{(n)}\right)-\mathbb{P}\left(\mathcal U \leq t_1\right) \leq C \delta' +  \mathbb{P}(||\mathcal{G}_2||_2 \geq 1/\delta').
	\end{align*}
	
	By a similar argument, we have 
	\begin{align*}
	&  \mathbb{P}\left(\breve {\mathcal{U}}_n > t_1\bigg|A^{(n)},S^{(n)},\Xi^{(n)}\right)-\mathbb{P}\left(\mathcal U > t_1\right)  \\
	& =  \mathbb{P}\left( H_n^{1/2} \mathcal{G}_2 > t_1\bigg|A^{(n)},S^{(n)},\Xi^{(n)}\right)-\mathbb{P}\left(H^{1/2} \mathcal{G}_2 > t_1 \right) \\
	& \leq \mathbb{P}\left( H^{1/2}\mathcal{G}_2  > t_1- 1_2||H^{1/2}_n - H^{1/2}||_{op} ||\mathcal{G}_2||_2 \bigg|A^{(n)},S^{(n)},\Xi^{(n)} \right) - \mathbb{P}( H^{1/2} \mathcal{G}_2 > t_1) \\
	& \leq \mathbb{P}\left( H^{1/2}\mathcal{G}_2  > t_1 - 1_2\delta' \right) - \mathbb{P}( H^{1/2} \mathcal{G}_2 > t_1) + \mathbb{P}(||\mathcal{G}_2||_2 \geq 1/\delta') \\
	& \leq C \delta' +\mathbb{P}(||\mathcal{G}_2||_2 \geq 1/\delta').
	\end{align*}
	Combining these two bounds, we have
	\begin{align*}
	\left|\mathbb{P}\left(\mathcal U \leq t_1\right)	- \mathbb{P}\left(\breve {\mathcal{U}}_n \leq t_1\bigg|A^{(n)},S^{(n)},\Xi^{(n)}\right) \right| \leq C \delta' +\mathbb{P}(||\mathcal{G}_2||_2 \geq 1/\delta') + 1\{F_n^c(\delta')\}.
	\end{align*}
	By letting $n \rightarrow \infty$ ( so that $\delta_n \rightarrow 0$) followed by $\delta' \downarrow 0$, we have
	\begin{align*}
	& \left| \mathbb{P}\left(\mathcal U \leq t_1\right)	- \mathbb{P}\left(\tilde {\mathcal{U}}_n \leq t_1\bigg|A^{(n)},S^{(n)},\Xi^{(n)}\right)\right| \\
	&\leq \left| \mathbb{P}\left(\mathcal U \leq t_1\right)	- \mathbb{P}\left(\breve {\mathcal{U}}_n \leq t_1\bigg|A^{(n)},S^{(n)},\Xi^{(n)}\right)\right| \\
	& + \left| \mathbb{P}\left(\breve {\mathcal{U}}_n \leq t_1\bigg|A^{(n)},S^{(n)},\Xi^{(n)}\right)	- \mathbb{P}\left(\tilde {\mathcal{U}}_n \leq t_1\bigg|A^{(n)},S^{(n)},\Xi^{(n)}\right)\right| \\
	& \leq C\delta' + \mathbb P(||\mathcal G_2||_2 \geq 1/\delta') + 1\{F_n^c(\delta')\} + \mathbb{P}\left(t_1 - 3\delta_n 1_2 \leq \breve {\mathcal{U}}_n \leq t_1 + 3\delta_n  1_2 \bigg|A^{(n)},S^{(n)},\Xi^{(n)}\right) \\
	& + \mathbb{P}\left(\left\Vert \breve {\mathcal{U}}_n - \tilde {\mathcal{U}}_n \right\Vert_2 \geq 3\delta_n \bigg|A^{(n)},S^{(n)},\Xi^{(n)}\right) \\
	& \leq C\left[\delta' + \mathbb P(||\mathcal G_2||_2 \geq 1/\delta')  + \mathbb{P}\left(t_1 - 3\delta_n 1_2 \leq \mathcal{U} \leq t_1 + 3\delta_n  1_2 \right) + 1\{F_n^c(\delta')\}\right] \\
	& +  C_0 2L_n \delta_n^{-3} \left(1 + \frac{|\log(1/(2L_n\delta_n^{-3})) |}{2} \right) \\
	& \leq C\left[\delta' + \mathbb P(||\mathcal G_2||_2 \geq 1/\delta')  + \delta_n + 1\{F_n^c(\delta')\}\right] +  C_0 2L_n \delta_n^{-3} \left(1 + \frac{|\log(1/(2L_n\delta_n^{-3})) |}{2} \right) \\
	& \convP 0,
	\end{align*}
	where the third inequality is by \eqref{eq:coupling} and fact that
	\begin{align*}
	& \left|\mathbb{P}\left(t_1 - 3\delta_n 1_2 \leq \breve {\mathcal{U}}_n \leq t_1 + 3\delta_n  1_2\bigg|A^{(n)},S^{(n)},\Xi^{(n)}\right) - \mathbb{P}\left(t_1 - 3\delta_n 1_2 \leq \mathcal{U} \leq t_1 + 3\delta_n 1_2 \right) \right| \\
	& \leq \left|\mathbb{P}\left(\breve {\mathcal{U}}_n \leq t_1 + 3\delta_n 1_2 \bigg|A^{(n)},S^{(n)},\Xi^{(n)}\right) - \mathbb{P}\left( \mathcal{U} \leq t_1 + 3\delta_n 1_2\bigg|A^{(n)},S^{(n)},\Xi^{(n)}\right) \right| \\
	& + \left|\mathbb{P}\left(\breve {\mathcal{U}}_n \leq t_1 - 3\delta_n 1_2 \bigg|A^{(n)},S^{(n)},\Xi^{(n)}\right) - \mathbb{P}\left( \mathcal{U} \leq t_1 - 3\delta_n 1_2 \right) \right| \\
	&  \leq C\delta' + \mathbb P(||\mathcal G_2||_2 \geq 1/\delta') + 2\cdot1\{F_n^c(\delta')\},
	\end{align*}
	and the fourth inequality is by \citet[Lemma A.1]{CCK14}. 
	
	\textbf{Step 2: The second term on the RHS of \eqref{eq:sufficient}}. 	We first define
	\begin{align*}
	\tilde 	V^*_{n} = \sum_{s \in \mathcal{S}} p_s \sqrt{n} \begin{pmatrix}
	\frac{1}{n p_s}\sum_{i = \lfloor n\mathbb{P}(S<s)\rfloor+1}^{\lfloor n\mathbb{P}(S\leq s)\rfloor} \left[\phi_i^s(1)- \mathbb{E}(\phi_i^s(1)) - \left(\phi_i^s(0)- \mathbb{E}(\phi_i^s(0))\right)\right] \\
	\frac{1}{n \pi_s p_s}\sum_{i =\lfloor n\mathbb{P}(S<s)\rfloor+1}^{\lfloor n(\mathbb{P}(S<s)+\pi_sp_s)\rfloor}(\phi_i^s(1)- \mathbb{E}(\phi_i^s(1))) \\
	-
	\frac{1}{n (1-\pi_s)p_s}\sum_{i=\lfloor n(\mathbb{P}(S<s)+\pi_sp_s)\rfloor+1}^{\lfloor n(\mathbb{P}(S\leq s))\rfloor}(\phi_i^s(0)- \mathbb{E}(\phi_i^s(0))) 
	\end{pmatrix},
	\end{align*}
	
	we note that	
	\begin{align*}
	N_s/n \convP \mathbb{P}(S<s), \quad n_{1,s}/n \convP \pi_sp_s, \quad \text{and} \quad n_{0,s}/n \convP (1-\pi_s)p_s.
	\end{align*}
	
	Because the partial sum process is stochastically equicontinuous, we have
	$\tilde {\mathcal{V}}_n = \tilde {\mathcal{V}}_n^* + o_P(1)$. By construction, we have $\tilde {\mathcal{V}}_n^* \indep (A^{(n)},X^{(n)})$, and by the Lindeberg CLT, $\tilde {\mathcal{V}}_n^* \convD V$. In particular, we see that the limit distribution of $\tilde {\mathcal{V}}_n^*$
	\begin{align*}
	& \N \left(\begin{pmatrix}
	0 \\
	0
	\end{pmatrix}, \sum_{s \in \mathcal{S}}p^2(s) \begin{pmatrix}
	\frac{var(\phi_i^s(1) - \phi_i^s(0))}{p_s} & \frac{var(\phi_i^s(1) - \phi_i^s(0))}{p_s}\\
	\frac{var(\phi_i^s(1) - \phi_i^s(0))}{p_s} & \frac{var(\phi_i^s(1))}{\pi_sp_s} + \frac{var(\phi_i^s(0))}{(1-\pi_s)p_s}
	\end{pmatrix}  \right) \\
	& \stackrel{d}{=} \N \left(\begin{pmatrix}
	0 \\
	0
	\end{pmatrix},  \begin{pmatrix}
	\mathbb{E}	var(\phi_i(1) - \phi_i(0)|S_i) & \mathbb{E}	var(\phi_i(1) - \phi_i(0)|S_i)\\
	\mathbb{E}	var(\phi_i(1) - \phi_i(0)|S_i) & \mathbb{E}\left[\frac{var(\phi_i(1)|S_i)}{\pi_{S_i}} + \frac{var(\phi_i(0)|S_i)}{(1-\pi_{S_i})}\right]
	\end{pmatrix}  \right).
	\end{align*}
	
	Therefore, we have
	\begin{align*}
	& \mathbb{P}\left(\tilde {\mathcal{V}}_n \leq t_2\bigg|A^{(n)},S^{(n)}\right) \\
	& \leq \mathbb{P}\left(\tilde {\mathcal{V}}_n^* \leq t_2+\delta \bigg|A^{(n)},S^{(n)}\right) + \mathbb{P}\left(|\tilde {\mathcal{V}}_n - \tilde {\mathcal{V}}_n^*| \geq \delta \bigg| A^{(n)},S^{(n)} \right) \\
	& = \mathbb{P}\left(\tilde {\mathcal{V}}_n^* \leq t_2+\delta \right) + \mathbb{P}\left(|\tilde {\mathcal{V}}_n - \tilde {\mathcal{V}}_n^*| \geq \delta \bigg|A^{(n)},S^{(n)} \right)\\
	& = \mathbb{P}\left(\tilde {\mathcal{V}}_n^* \leq t_2+\delta \right)  + o_P(1) \\
	& = \mathbb{P}\left(\mathcal V \leq t_2+\delta \right) + o_P(1),
	\end{align*}
	where the first equality is by the fact that $\tilde {\mathcal{V}}_n^*$ is independent of $(A^{(n)},S^{(n)})$ and the second equality holds because by Markov's inequality, for any $\delta'>0$,
	\begin{align*}
	\mathbb{P}\left(\mathbb{P}\left(|\tilde {\mathcal{V}}_n - \tilde {\mathcal{V}}_n^*| \geq \delta \bigg|A^{(n)},S^{(n)} \right) \geq \delta' \right) \leq \frac{\mathbb{P}\left(|\tilde {\mathcal{V}}_n - \tilde {\mathcal{V}}_n^*| \geq \delta \right)}{\delta'} \rightarrow 0.
	\end{align*}
	
	Similarly, we can show that 
	\begin{align*}
	& \mathbb{P}\left(\tilde {\mathcal{V}}_n > t_2\bigg|A^{(n)},S^{(n)}\right) \leq \mathbb{P}\left(\mathcal V > t_2- \delta \right) + o_P(1),
	\end{align*}
	or equivalently, 
	\begin{align*}
	\mathbb{P}\left(\tilde {\mathcal{V}}_n \leq t_2\bigg|A^{(n)},S^{(n)}\right) \geq \mathbb{P}\left(\mathcal V \leq t_2- \delta \right) + o_P(1).
	\end{align*}
	By letting $n \rightarrow \infty$ followed by $\delta \downarrow 0$, we have
	\begin{align*}
	\left|\mathbb{P}\left(\tilde {\mathcal{V}}_n \leq t_2\bigg|A^{(n)},S^{(n)}\right) - \mathbb{P}\left(\mathcal V \leq t_2\right)\right| \convP 0. 
	\end{align*}
	This concludes the proof.

\end{proof}

\begin{lem}
	Suppose Assumptions \ref{ass:assignment1}--\ref{ass:variance} hold. Then, we have 
	$$\eps_{\aleph_{a,s}}^\top(a) P_{a,s}\eps_{\aleph_{a,s}}(a) - \sum_{i \in \aleph_{a,s}}P_{a,s,i,i} Y_i \acute \eps_{a,s,i} = o_P(n).$$
	\label{lem:I1}
\end{lem}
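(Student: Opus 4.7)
The plan is to show that $\eps^\top P_{a,s}\eps$ and $\sum_i P_{a,s,i,i} Y_i \acute\eps_{a,s,i}$ both concentrate around the common deterministic quantity $\sum_i P_{a,s,i,i}\sigma_i^2$, where $\sigma_i^2 = \mathbb E(\eps_i^2(a)|X_i,S_i)$, with deviations of order $o_P(n)$, so that their difference is $o_P(n)$ by the triangle inequality. I will suppress the $(a,s)$ subscripts and work conditional on $\mathcal F := (X^{(n)}, S^{(n)}, A^{(n)})$ where convenient.

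First I will handle $\eps^\top P\eps$ directly. Decompose $\eps^\top P \eps = \sum_i P_{ii}\eps_i^2 + \sum_{i\neq j} P_{ij}\eps_i\eps_j$. The centered diagonal piece has conditional variance $\leq \max_i\mathbb E(\eps_i^4|X_i,S_i)\cdot \tr(P) = O_P(k_n)$ by Assumption \ref{ass:linear}(i), and the off-diagonal quadratic form has conditional variance $\leq 2(\max_i\sigma_i^2)^2\cdot \tr(P^2) = O_P(k_n)$. Since $k_n \leq n$ by Assumption \ref{ass:linear}(iii), both pieces are $O_P(\sqrt n) = o_P(n)$ by Markov.

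The main work is for $\sum_i P_{ii} Y_i \acute\eps_i$. The normal equations for the OLS of $Y$ on $(1,\breve X_{\aleph_{a,s}})$ give $\hat\eps = M^*(e+\eps)$, where $M^* = I - P^*$ is the projection orthogonal to $[1_{n_{a,s}}, \breve X_{\aleph_{a,s}}]$. A rank-one update writes $M^* = M - uu^\top/(1_{n_{a,s}}^\top M 1_{n_{a,s}})$ with $u = M 1_{n_{a,s}}$, and Assumption \ref{ass:linear}(v) combined with $\gamma_{a,s,n}\convP\gamma_{a,s,\infty}>0$ yields $M^*_{ii}/M_{ii} = 1+o_P(1)$ uniformly in $i$. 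Writing $Y_i = \mu_i + \eps_i$ with $\mu_i = \alpha_{a,s} + X_i^\top\beta_{a,s} + e_{i,s}(a)$, so that the key cancellation $M^*\mu = M^*e$ holds, the target becomes the double sum $\sum_{i,j}(P_{ii}M^*_{ij}/M_{ii})(\mu_i+\eps_i)(e_j+\eps_j)$. Its leading $\eps_i^2$ diagonal, $\sum_i(P_{ii}M^*_{ii}/M_{ii})\eps_i^2$, matches $\sum_i P_{ii}\eps_i^2$ up to an error of $o_P(1)\cdot O_P(k_n) = o_P(n)$, thereby cancelling the matching diagonal from $\eps^\top P\eps$. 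Every remaining cross term falls into one of three families: (a) off-diagonal mean-zero quadratic forms in $\eps$ with conditional variance at most $(1/\delta)^2(\max_i\sigma_i^2)^2\cdot \tr((M^*)^2) = O_P(n)$, hence $O_P(\sqrt n) = o_P(n)$; (b) conditionally linear-in-$\eps$ forms whose variance is bounded via the operator inequality $\|M^*D\mu\|_2 \leq (1/\delta)\|\mu\|_2 = O_P(\sqrt n)$, where $D = \diag(P_{ii}/M_{ii})$ and $\|\mu\|_2^2 = O_P(n)$ follows from \eqref{eq:xb}, again yielding $O_P(\sqrt n)$; and (c) deterministic-in-$\eps$ pieces involving $\mu e$ or $ee$, controlled by Cauchy-Schwarz, the identity $M^*\mu = M^*e$, and $\|e\|_2^2 = o_P(1)$ from Assumption \ref{ass:linear}(iv).

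The main obstacle is bookkeeping: writing out the double-sum expansion systematically and checking that every cross term lands in one of the three families, invoking the cancellation $M^*\mu = M^*e$ precisely where $M^*$ acts directly on $\mu$ (rather than through $D$). Once the expansion is organized, the triangle inequality yields $I_1 = o_P(n)$.
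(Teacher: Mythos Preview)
Your proposal is correct and follows essentially the same approach as the paper: both use the rank-one update $M^* = M - (M1)(M1)^\top/(1^\top M1)$ (the paper calls this $H_{a,s}$) to write $\hat\eps = M^*(e+\eps)$, then decompose the target into the diagonal $\eps_i^2$ piece (which cancels up to $o_P(n)$ via $M^*_{ii}/M_{ii}=1+o_P(1)$), off-diagonal quadratic-in-$\eps$ forms, linear-in-$\eps$ forms, and deterministic pieces, bounding each using idempotency of $M^*$, Assumption~\ref{ass:linear}(v), and $\|e\|_2=o_P(1)$. The paper labels these six pieces $I_{1,1}$--$I_{1,6}$ rather than your families (a)--(c), and your $\eps_i e_j$ cross term (the paper's $I_{1,6}$) does not fit cleanly into your description of (b) or (c), but its control is immediate from $\|He\|_2\le\|e\|_2=o_P(1)$, so this is a minor bookkeeping slip rather than a gap.
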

\begin{proof}
	Let $H_{a,s,i,j} = M_{a,s,i,j} - (\sum_{j \in \aleph_{a,s}}M_{a,s,i,j}) \gamma_{a,s,\infty}^{-1} (\sum_{i \in \aleph_{a,s}}M_{a,s,i,j})$ and $H_{a,s}$ be the $n_{a,s} \times n_{a,s}$ matrix with its $(i,j)$th entry being $H_{a,s,i,j}$.  We can see that $\sum_{j \in \aleph_{a,s}}H_{a,s,i,j}H_{a,s,j,k} = H_{a,s,i,k}$, which means $H_{a,s}$ is idempotent. In addition, we have $\hat \eps_{a,s,i} = \sum_{j \in \aleph_{a,s}}H_{a,s,i,j}(e_{j,s}(a) + \eps_j(a))$ and $\acute \eps_{a,s,i} = \hat  \eps_{a,s,i}/M_{a,s,i,i}$. Further denote $\tilde H_{a,s,i,j} = H_{a,s,i,j}/M_{a,s,i,i}$ and $\mu_i(a) = \mathbb{E}(Y_i(a)|X_i,S_i)$.  This implies 
	\begin{align*}
	I_1 & = \sum_{i \in \aleph_{a,s}} \sum_{j \in \aleph_{a,s}, j \neq i} \eps_{i}(a)P_{a,s,i,j}\eps_{j}(a) + \sum_{i \in \aleph_{a,s}}P_{a,s,i,i}(\eps_{i}^2(a) - Y_i \Acute{\eps}_{a,s,i}) \\
	& =  \left[\sum_{i \in \aleph_{a,s}}P_{a,s,i,i}\eps_i^2(a) (1 - \tilde H_{a,s,i,i})\right] + \left[\sum_{i \in \aleph_{a,s}} \sum_{j \in \aleph_{a,s}, j \neq i} \eps_{i}(a)P_{a,s,i,j}\eps_{i}(a)\right] \\
	&- \left[\sum_{i \in \aleph_{a,s}} \sum_{j \in \aleph_{a,s}, j \neq i}P_{a,s,i,i} \tilde H_{a,s,i,j}\eps_i(a)\eps_j(a)\right]  - \left[\sum_{i \in \aleph_{a,s}} \sum_{j \in \aleph_{a,s}}P_{a,s,i,i} \tilde H_{a,s,i,j}\mu_i(a)e_{j,s}(a) \right] \\
	& - \left[ \sum_{i \in \aleph_{a,s}} \sum_{j \in \aleph_{a,s}}P_{a,s,i,i} \tilde H_{a,s,i,j}\mu_i(a)\eps_j(a) \right]-\left[\sum_{i \in \aleph_{a,s}} \sum_{j \in \aleph_{a,s}}P_{a,s,i,i} \tilde H_{a,s,i,j}\eps_i(a)e_{j,s}(a) \right] \\
	& \equiv I_{1,1} + I_{1,2} - I_{1,3} - I_{1,4} - I_{1,5} - I_{1,6}.
	\end{align*}
	Conditional on $(A^{(n)},S^{(n)},X^{(n)})$, $\aleph_{a,s}$, $P_{a,s}$, $\tilde H_{a,s}$, $\phi_i(a)$, and $e_{i,s}(a)$ are all deterministic and $\{\eps_i(a)\}_{i \in \aleph_{a,s}}$ are independent across $i$, conditionally mean zero, and $\eps_i(a)|(A^{(n)},S^{(n)},X^{(n)}) \stackrel{d}{=} \eps_i(a)|X_i,S_i=s$. For $I_{1,1}$, we have
	\begin{align*}
	|  I_{1,1}| & = \sum_{i \in \aleph_{a,s}}P_{a,s,i,i}\eps_i^2(a) \frac{(\sum_{j \in \aleph_{a,s}}M_{a,s,i,j})^2}{1_{\aleph_{a,s}}^\top M_{a,s}1_{\aleph_{a,s}} M_{a,s,i,i}} \\
	& \leq \left(\sum_{i \in \aleph_{a,s}}P_{a,s,i,i}\eps_i^2(a)\right) \left( \frac{\max_{i \in \aleph_{a,s}}(\sum_{j \in \aleph_{a,s}}M_{a,s,i,j})^2}{n}\right) \left(\frac{n}{n_{a,s} \gamma_{a,s,n} \min_{i \in \aleph_{a,s}} M_{a,s,i,i}}\right) = o_P(n)
	\end{align*}
	where we use Assumptions \ref{ass:linear}, \ref{ass:omega}, \ref{ass:variance}, and the facts that $\min_{i \in \aleph_{a,s}}M_{a,s,i,i} \geq \delta>0$ and 
	\begin{align*}
	& \left(\sum_{i \in \aleph_{a,s}}P_{a,s,i,i}\eps_i^2(a)\right) = O_P(n).
	\end{align*}
	
	For $I_{1,2}$, we have
	\begin{align*}
	\mathbb E (I_{1,2}|A^{(n)},S^{(n)},X^{(n)}) = 0 
	\end{align*}
	and 
	\begin{align*}
	var(I_{1,2}|A^{(n)},S^{(n)},X^{(n)}) & =  \sum_{i \in \aleph_{a,s}} \sum_{j \in \aleph_{a,s}, j \neq i} \mathbb E(\eps^2_{i}(a)|A^{(n)},S^{(n)},X^{(n)}) P_{a,s,i,j}^2\mathbb E(\eps^2_{j}(a)|A^{(n)},S^{(n)},X^{(n)}) \\
	& \leq \sum_{i \in \aleph_{a,s}} \sum_{j \in \aleph_{a,s}} P_{a,s,i,j}^2 \max_{i \in [n]}\mathbb{E}(\eps^2_i(a)|X_i,S_i) = O_P(n),
	\end{align*}
	which implies 
	\begin{align*}
	I_{1,2} = O_P(n^{1/2}) = o_P(n).
	\end{align*}
	
	Similarly, we have  
	\begin{align*}
	\mathbb E (I_{1,3}|A^{(n)},S^{(n)},X^{(n)}) = 0 
	\end{align*}
	and 
	\begin{align*}
	var(I_{1,3}|A^{(n)},S^{(n)},X^{(n)}) & \lesssim \sum_{i \in \aleph_{a,s}} \sum_{j \in \aleph_{a,s}} \tilde H_{a,s,i,j}^2P_{a,s,i,i}^2 \max_{i \in [n]}\mathbb{E}(\eps^2_i(a)|X_i,S_i) \\
	& \lesssim \sum_{i \in \aleph_{a,s}} \sum_{j \in \aleph_{a,s}} H_{a,s,i,j}^2 (\min_{i \in \aleph_{a,s}}M_{a,s,i,i})^{-2}\max_{i \in [n]}\mathbb{E}(\eps^2_i(a)|X_i,S_i)\\
	& = O_P(n),
	\end{align*}
	where the second inequality holds by the fact that $P_{a,s,i,i}^2 \leq 1$ and the last equality holds because $H_{a,s}$ is idempotent. This implies $I_{1,3} = o_P(n)$. 
	
	For $I_{1,4}$, we have
	\begin{align*}
	I_{1,4}^2 & \leq \left(\sum_{i \in \aleph_{a,s}} P_{a,s,i,i}^2\mu_i^2(a) M_{a,s,i,i}^{-2}\right) \left( \sum_{i \in \aleph_{a,s}} \left(\sum_{j \in \aleph_{a,s}}H_{a,s,i,j}e_{j,s}(a)\right)^2\right) \\
	& \leq \left(\sum_{i \in \aleph_{a,s}} P_{a,s,i,i}^2 M_{a,s,i,i}^{-2}\mu_i^2(a) \right)  \left( \sum_{j' \in \aleph_{a,s}} \sum_{j \in \aleph_{a,s}}H_{a,s,j',j}e_{j',s}(a)e_{j,s}(a)\right) \\
	& \leq \left(\max_{i \in \aleph_{a,s}}P_{a,s,i,i}^2 M_{a,s,i,i}^{-2} \right)\left(\sum_{i \in \aleph_{a,s}} \mu_i^2(a)\right)  \left(\sum_{j \in \aleph_{a,s}}e_{j,s}^2(a)\right) =o_P(n),
	\end{align*}
	where the second inequality is because $H_{a,s}$ is idempotent , the last inequality is by $||H_{a,s}||_{op} \leq 1$, and the last equality is by the facts that
	\begin{align*}
	& \left(\sum_{i \in \aleph_{a,s}} \mu_i^2(a)\right) = O_P(n), \quad \max_{i \in \aleph_{a,s}}P_{a,s,i,i}^2 M_{a,s,i,i}^{-2} = O_P(1), \quad   \left(\sum_{j \in \aleph_{a,s}}e_{j,s}^2(a)\right) = o_P(1).
	\end{align*}
	This implies $I_{1,4} = o_P(n^{1/2}) = o_P(n)$. 
	
	For $I_{1,5}$, we have $ \mathbb E (I_{1,5}|A^{(n)},S^{(n)},X^{(n)}) = 0$ and 
	\begin{align*}
	& var(I_{1,5}|A^{(n)},S^{(n)},X^{(n)}) \\
	& =   \sum_{j \in \aleph_{a,s}}  \left[ \sum_{i \in \aleph_{a,s}} P_{a,s,i,i} \tilde H_{a,s,i,j} \mu_i(a)\right]^2  \mathbb E(\eps^2_{j}(a)|A^{(n)},S^{(n)},X^{(n)}) \\
	& \leq \sum_{j \in \aleph_{a,s}}\sum_{i \in \aleph_{a,s}} \frac{P_{a,s,i,i} \mu_i(a)}{M_{a,s,i,i}} H_{a,s,i,j} \frac{P_{a,s,j,j} \mu_j(a)}{M_{a,s,j,j}} \left[\max_{j \in \aleph_{a,s}}\mathbb E(\eps^2_{j}(a)|A^{(n)},S^{(n)},X^{(n)}) \right] \\
	& \leq \sum_{i \in \aleph_{a,s}} \frac{P_{a,s,i,i}^2 \mu_i^2(a)}{M_{a,s,i,i}^2} \left[\max_{j \in \aleph_{a,s}}\mathbb E(\eps^2_{j}(a)|A^{(n)},S^{(n)},X^{(n)}) \right] \\
	& \leq \left(\sum_{i \in \aleph_{a,s}} \mu^2_i(a)\right)\left(\max_{i \in \aleph_{a,s}}\frac{P_{a,s,i,i}^2 }{M_{a,s,i,i}^2}\right) \left(\max_{j \in \aleph_{a,s}}\mathbb E(\eps^2_{j}(a)|A^{(n)},S^{(n)},X^{(n)}) \right) \\
	& = O_P(n),
	\end{align*}
	where the second inequality is by $||H_{a,s}||_{op} \leq 1$. This implies $I_{1,5} = O_P(n^{1/2}) = o_P(n)$. In the same manner, we can show that $I_{1,6} = o_P(n)$, which concludes the proof of this lemma. 
\end{proof}

\begin{lem}\label{lem:beta}
	Suppose Assumptions in Theorem \ref{thm:fixed_k} hold. Then, we have
	\begin{align*}
	||\hat \beta_{a,s} -  \beta_{a,s}^*||_2 =  O_P\left(\sqrt{\frac{k_n \xi_n^2 \log(k_n)}{n} } \right).
	\end{align*}
\end{lem}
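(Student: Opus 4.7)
The plan is a standard OLS perturbation argument adapted to the CAR setting. Since the Frisch-Waugh-Lovell identity shows $\hat\beta_{a,s}$ coincides with the OLS slope in the regression of $Y_i$ on $(1,X_i)$ for $i \in \aleph_{a,s}$, I set $\mu_X(s) = \mathbb{E}(X_i|S_i=s)$, $\alpha_{a,s}^* = \mathbb{E}(Y_i(a)|S_i=s) - \mu_X(s)^\top \beta_{a,s}^*$, and define the population projection residual $u_{i,s}(a) = Y_i(a) - \alpha_{a,s}^* - X_i^\top \beta_{a,s}^*$. The definition of $\beta_{a,s}^*$ directly gives $\mathbb{E}(u_{i,s}(a)|S_i=s) = 0$ and $\mathrm{Cov}(X_i, u_{i,s}(a)|S_i=s) = 0$. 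Substituting $Y_i = \alpha_{a,s}^* + X_i^\top \beta_{a,s}^* + u_{i,s}(a)$ into the OLS formula yields
\begin{equation*}
\hat\beta_{a,s} - \beta_{a,s}^* = \hat\Sigma_{a,s}^{-1} \hat Z_{a,s},
\end{equation*}
where $\hat\Sigma_{a,s} = n_{a,s}^{-1}\sum_{i \in \aleph_{a,s}}(X_i - \overline X_{a,s})(X_i - \overline X_{a,s})^\top$ and $\hat Z_{a,s} = n_{a,s}^{-1}\sum_{i \in \aleph_{a,s}}(X_i - \overline X_{a,s}) u_{i,s}(a)$. The task thus reduces to lower-bounding $\lambda_{\min}(\hat\Sigma_{a,s})$ and upper-bounding $\|\hat Z_{a,s}\|_2$.

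For the design, I would use the algebraic identity $\hat\Sigma_{a,s} = n_{a,s}^{-1}\sum_{i \in \aleph_{a,s}} \breve X_i \breve X_i^\top - (\overline X_{a,s} - \overline X_s)(\overline X_{a,s} - \overline X_s)^\top$. Assumption \ref{ass:reg_fixed}(i) pins the eigenvalues of the first matrix in $[c,C]$, while a conditioning argument on $(S^{(n)}, A^{(n)})$---under which $\{X_i\}_{i\in \aleph_s}$ is i.i.d.\ from the conditional law $X|S=s$ with $\aleph_{a,s}$ a deterministic subset---combined with $\operatorname{tr}(\mathrm{Var}(X_i|S_i=s)) \lesssim k_n$ yields $\|\overline X_{a,s} - \overline X_s\|_2^2 = O_P(k_n/n) = o_P(1)$. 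Hence $\lambda_{\min}(\hat\Sigma_{a,s}) \geq c/2$ with probability tending to one and $\lambda_{\min}(\hat\Sigma_{a,s})^{-1} = O_P(1)$.

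For the noise term, I would decompose $X_i - \overline X_{a,s} = (X_i - \mu_X(s)) - (\overline X_{a,s} - \mu_X(s))$, which splits $\hat Z_{a,s}$ into $\tilde Z_{a,s} := n_{a,s}^{-1}\sum_{i \in \aleph_{a,s}} (X_i - \mu_X(s)) u_{i,s}(a)$ and a remainder of order $O_P(\sqrt{k_n/n}\cdot n^{-1/2})$, which is negligible. I then bound $\|\tilde Z_{a,s}\|_2 \leq \sqrt{k_n}\,\|\tilde Z_{a,s}\|_\infty$. Conditional on $(S^{(n)}, A^{(n)})$, each coordinate $\tilde Z_{a,s,\ell}$ is a sample average of $n_{a,s}$ independent mean-zero variables $(X_{i,\ell} - \mu_X(s)_\ell) u_{i,s}(a)$ with individual factor bounded by $O_P(\xi_n)$ via Assumption \ref{ass:reg_fixed}(i). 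Applying Bernstein's inequality coordinate-wise and a union bound over $\ell = 1, \ldots, k_n$ delivers $\|\tilde Z_{a,s}\|_\infty = O_P(\xi_n \sqrt{\log(k_n)/n})$, so $\|\hat Z_{a,s}\|_2 = O_P(\sqrt{k_n \xi_n^2 \log(k_n)/n})$, and the claim follows.

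The main obstacle is executing the Bernstein step rigorously: $u_{i,s}(a)$ is not assumed uniformly bounded, so a truncation argument is needed, splitting $u_{i,s}(a) = u_{i,s}(a)\mathbf{1}\{|u_{i,s}(a)| \leq \tau_n\} + u_{i,s}(a)\mathbf{1}\{|u_{i,s}(a)| > \tau_n\}$ for a threshold $\tau_n$ growing slowly, applying Bernstein to the truncated part and bounding the tail by a moment inequality (implicit finite $q$-th moment of $Y_i(a)$ for some $q > 2$ suffices). In parallel, the stochastic nature of $\xi_n$ and $n_{a,s}$ must be absorbed by working on the high-probability event $\{n_{a,s}/n \in (c',C'),\; \max_i\|\breve X_i\|_\infty \leq C\xi_n\}$, whose probability tends to one by Assumptions \ref{ass:assignment1} and \ref{ass:reg_fixed}; the remainder is routine bookkeeping.
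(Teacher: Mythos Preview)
Your proposal is correct and takes a genuinely different route from the paper's proof.

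The paper introduces an intermediate estimator $\tilde\beta_{a,s}$ built with population-centered covariates and splits the error as $\|\hat\beta_{a,s}-\tilde\beta_{a,s}\|_2+\|\tilde\beta_{a,s}-\beta_{a,s}^*\|_2$. The first piece is handled by elementary centering algebra of order $O_P(\sqrt{k_n/n})$; the second piece is obtained by invoking the matrix-concentration result of \citet[Lemma~6.2]{belloni2015}, which simultaneously controls the operator-norm deviation of the empirical Gram matrix and the $\ell_2$-deviation of the empirical cross-covariance, delivering the $\sqrt{k_n\xi_n^2\log(k_n)/n}$ rate in one stroke. Your route bypasses the intermediate estimator and the random-matrix lemma entirely: you lower-bound $\lambda_{\min}(\hat\Sigma_{a,s})$ directly from Assumption~\ref{ass:reg_fixed}(i), and for the score you pass from $\ell_2$ to $\sqrt{k_n}\,\ell_\infty$ and apply scalar Bernstein coordinate-wise with a union bound. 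This is more elementary---no matrix-Bernstein or Rudelson-type machinery---at the price of the explicit truncation you flag.

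Two small refinements would tighten your sketch. First, Assumption~\ref{ass:reg_fixed}(i) bounds $\|\breve X_i\|_\infty=\|X_i-\overline X_s\|_\infty$, not $\|X_i-\mu_X(s)\|_\infty$; you need one extra line, $\|\overline X_s-\mu_X(s)\|_\infty\le\|\overline X_s-\mu_X(s)\|_2=O_P(\sqrt{k_n/n})$, which is $O_P(\xi_n)$ because the sample eigenvalue lower bound in Assumption~\ref{ass:reg_fixed}(i) forces $\xi_n$ to be bounded away from zero (indeed $k_n c\le\mathrm{tr}(n_{a,s}^{-1}\sum\breve X_i\breve X_i^\top)\le k_n(\max_i\|\breve X_i\|_\infty)^2$). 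Second, since $\xi_n\ge c'>0$, your Bernstein step only needs the coordinate variance to be $O(1)$---not $O(\xi_n^2)$---to hit the stated rate; this lets you truncate $u_{i,s}(a)$ alone and bound $\mathrm{Var}((X_{i,\ell}-\mu_\ell)u)$ by Cauchy--Schwarz using the assumed second moments of $Y_i(a)$ and $X_i$, which simplifies the bookkeeping you anticipate.
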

\begin{proof}
	Denote $\overline X_{a,s} = \frac{1}{n_{a,s}}\sum_{i \in \aleph_{a,s}} X_i$ and $\overline Y_{a,s} = \frac{1}{n_{a,s}}\sum_{i \in \aleph_{a,s}} Y_i$
	\begin{align*}
	& \hat \beta_{a,s} =   \left(\frac{1}{n_{a,s}}\sum_{i \in \aleph_{a,s}}(X_i-\overline X_{a,s})(X_i-\overline X_{a,s})^\top \right)^{-1}\left(\frac{1}{n_{a,s}}\sum_{i \in \aleph_{a,s}}(X_i-\overline X_{a,s})(Y_i-\overline Y_{a,s}) \right), \\
	& \beta_{a,s}^* =    Var(X_i|S_i=s)^{-1}Cov(X_i,Y_i|S_i=s).
	\end{align*}
	We first define 
	\begin{align*}
	\tilde \beta_{a,s} & =   \left(\frac{1}{n_{a,s}}\sum_{i \in \aleph_{a,s}}(X_i-\mathbb E(X_i|S_i=s))(X_i-\mathbb E(X_i|S_i=s))^\top \right)^{-1} \\
	& \times \left(\frac{1}{n_{a,s}}\sum_{i \in \aleph_{a,s}}(X_i-\mathbb E(X_i|S_i=s))(Y_i-\mathbb E(Y_i|S_i=s)) \right).
	\end{align*}
	
	\textbf{We first bound $||\hat \beta_{a,s} - \tilde \beta_{a,s}||_2$.} Let $S^{k_n}$ be the unit sphere in $\Re^{k_n}$, $\mathbb P_{n,a,s}f_i = \frac{1}{n_{a,s}}\sum_{i \in \aleph_{a,s}}f_i$ for any function of observations $f_i$, $\mathbb P_{a,s}f_i = \mathbb E(f_i|A_i=a,S_i=s) = \mathbb E(f_i|S_i=s)$, $||\cdot||_{\mathbb P_{n,a,s},2}$ be the $L_2$ norm w.r.t. the probability measure $\mathbb P_{n,a,s}$. Note that 
	\begin{align}\label{eq:beta1}
	& \left\Vert \frac{1}{n_{a,s}}\sum_{i \in \aleph_{a,s}}\left[ (X_i-\overline X_{a,s})(X_i-\overline X_{a,s})^\top  - (X_i-\mathbb E(X_i|S_i=s))(X_i-\mathbb E(X_i|S_i=s))^\top \right]\right\Vert_{op} \notag \\
	& = \sup_{\lambda \in S^{k_n}} \left| \left\Vert (X_i-\overline X_{a,s})^\top \lambda\right\Vert_{\mathbb P_{n,a,s},2}^2-  \left\Vert (X_i-\mathbb E(X_i|S_i=s))^\top \lambda\right\Vert_{\mathbb P_{n,a,s},2}^2\right| \notag \\
	& \leq \sup_{\lambda \in S^{k_n}} \left| \left\Vert (X_i-\overline X_{a,s})^\top \lambda\right\Vert_{\mathbb P_{n,a,s},2}-  \left\Vert (X_i-\mathbb E(X_i|S_i=s))^\top \lambda\right\Vert_{\mathbb P_{n,a,s},2}\right| \notag \\
	& \times \sup_{\lambda \in S^{k_n}}\left|\left\Vert (X_i-\overline X_{a,s})^\top \lambda\right\Vert_{\mathbb P_{n,a,s},2} + \left\Vert (X_i-\mathbb E(X_i|S_i=s))^\top \lambda\right\Vert_{\mathbb P_{n,a,s},2}\right| \notag \\
	&  \leq  \left\Vert \overline X_{a,s} - \mathbb E(X_i|S_i=s))\right\Vert_2 \sup_{\lambda \in S^{k_n}}\left|\left\Vert (X_i-\overline X_{a,s})^\top \lambda\right\Vert_{\mathbb P_{n,a,s},2} + \left\Vert (X_i-\mathbb E(X_i|S_i=s))^\top \lambda\right\Vert_{\mathbb P_{n,a,s},2}\right| \notag \\
	& \leq O_P(\sqrt{k_n/n}) \left( 1 + O_P(\sqrt{k_n/n})\right) = O_P(\sqrt{k_n/n}),
	\end{align}
	where we use the fact that 
	\begin{align*}
	\sup_{\lambda \in S^{k_n}}     \left\Vert (X_i-\mathbb E(X_i|S_i=s))^\top \lambda\right\Vert_{\mathbb P_{n,a,s},2}^2 & = \left\Vert \frac{1}{n_{a,s}}\sum_{i \in \aleph_{a,s}}\left[  (X_i-\mathbb E(X_i|S_i=s))(X_i-\mathbb E(X_i|S_i=s))^\top \right]\right\Vert_{op} \\
	& = O_P(1). 
	\end{align*}
	
	Similarly, we can show that 
	\begin{align}\label{eq:beta2}
	\left\Vert \frac{1}{n_{a,s}}\sum_{i \in \aleph_{a,s}}\left[(X_i-\mathbb E(X_i|S_i=s))(Y_i-\mathbb E(Y_i|S_i=s))-(X_i-\overline X_{a,s})(Y_i-\overline Y_{a,s})\right] \right\Vert_2 =  O_P(\sqrt{k_n/n}), 
	\end{align}
	implying that 
	\begin{align*}
	||\hat \beta_{a,s} - \tilde \beta_{a,s}||_2 & \leq  \left\Vert \frac{1}{n_{a,s}}\sum_{i \in \aleph_{a,s}}\left[ (X_i-\overline X_{a,s})(X_i-\overline X_{a,s})^\top  - (X_i-\mathbb E(X_i|S_i=s))(X_i-\mathbb E(X_i|S_i=s))^\top \right]\right\Vert_{op} \\
	& \times \left\Vert \frac{1}{n_{a,s}}\sum_{i \in \aleph_{a,s}}(X_i-\mathbb E(X_i|S_i=s))(Y_i-\mathbb E(Y_i|S_i=s))\right\Vert_2 \\
	& + \left[\lambda_{\min}\left( \frac{1}{n_{a,s}}\sum_{i \in \aleph_{a,s}}(X_i-\overline X_{a,s})(X_i-\overline X_{a,s})^\top \right)\right]^{-1}  \\
	& \times \left\Vert \frac{1}{n_{a,s}}\sum_{i \in \aleph_{a,s}}\left[(X_i-\mathbb E(X_i|S_i=s))(Y_i-\mathbb E(Y_i|S_i=s))-(X_i-\overline X_{a,s})(Y_i-\overline Y_{a,s})\right] \right\Vert_2\\
	& = O_P(\sqrt{k_n/n}), 
	\end{align*}
	where the last equality holds by \eqref{eq:beta1}, \eqref{eq:beta2}, \eqref{eq:beta4} below and the fact that  
	\begin{align*}
	\left\Vert \frac{1}{n_{a,s}}\sum_{i \in \aleph_{a,s}}(X_i-\mathbb E(X_i|S_i=s))(Y_i-\mathbb E(Y_i|S_i=s))\right\Vert_2 = \left\Vert Cov(X_i,Y_i|S_i=s)\right\Vert_2 + o_P(1) = O_P(1),
	\end{align*}

	\textbf{Next, we bound $||\tilde \beta_{a,s} -  \beta_{a,s}^*||_2$.} We note that 
	\begin{align*}
	&     \left\Vert \frac{1}{n_{a,s}}\sum_{i \in \aleph_{a,s}} (X_i-\mathbb E(X_i|S_i=s))(X_i-\mathbb E(X_i|S_i=s))^\top - Var(X_i|S_i=s)\right\Vert_{op} \\
	& = \sup_{\lambda \in S^{k_n}}     \left\vert (\mathbb P_{n,a,s} - \mathbb P_{a,s})((X_i-\mathbb E(X_i|S_i=s))^\top \lambda)^2 \right\vert.
	\end{align*}
	Further denote $M = \max_{i \in \aleph_{a,s}}\sup_{\lambda \in S^{k_n}}|(X_i-\mathbb E(X_i|S_i=s))^\top \lambda| =\max_{i \in \aleph_{a,s}}||X_i-\mathbb E(X_i|S_i=s)||_2$. 
	
	Conditioning on $(S_i,A_i)_{i \in [n]}$, we can treat $\{X_i\}_{i \in \aleph_{a,s}}$ as i.i.d. with distribution $\mathbb P_{a,s}$ which is just the conditional distribution of $ X_i$ given $S_i=s$. 
	
	% Further denote $\{e_i\}_{i \in \aleph_{a,s}}$ as a sequence of i.i.d. Rademacher random variable. Then, we have
	% \begin{align*}
	% & \mathbb P_{a,s}   \sup_{\lambda \in S^{k_n}} \left\vert (\mathbb P_{n,a,s} - \mathbb P_{a,s})((X_i-\mathbb E(X_i|S_i=s))^\top \lambda)^2 \right\vert  \\
	% & \leq 2\mathbb P_{a,s}  \sup_{\lambda \in S^{k_n}} \left| \frac{1}{n_{a,s}}\sum_{i \in \aleph_{a,s}} e_i ((X_i-\mathbb E(X_i|S_i=s))^\top \lambda)^2 \right| \\
	% & \leq 8 \mathbb P_{a,s}\left[M n_{a,s}^{-1}  \sup_{\lambda \in S^{k_n}} \left| \frac{1}{n_{a,s}}\sum_{i \in \aleph_{a,s}} e_i (X_i-\mathbb E(X_i|S_i=s))^\top \lambda \right|\right] \\
	% & \leq 8 ||M||_{\mathbb P_{a,s},2} \left\Vert \sup_{\lambda \in S^{k_n}} \left| \mathbb P_{n,a,s} e_i (X_i-\mathbb E(X_i|S_i=s))^\top \lambda \right|\right\Vert_{\mathbb P_}
	% \end{align*}

	In addition, by \citet[Lemma 6.2]{belloni2015}, we have
	\begin{align}\label{eq:beta3}
	& \left\Vert \frac{1}{n_{a,s}}\sum_{i \in \aleph_{a,s}} (X_i-\mathbb E(X_i|S_i=s))(X_i-\mathbb E(X_i|S_i=s))^\top - Var(X_i|S_i=s)\right\Vert_{op} \notag \\
	& = O_P\left(\sqrt{\frac{k_n \xi_n^2 \log(k_n)}{n} } + \frac{k_n \xi_n^2 \log(k_n)}{n} \right) 
	\end{align}
	and 
	\begin{align}\label{eq:beta4}
	& \left\Vert \frac{1}{n_{a,s}}\sum_{i \in \aleph_{a,s}} (X_i-\mathbb E(X_i|S_i=s))(Y_i-\mathbb E(Y_i|S_i=s)) - Cov(X_i,Y_i|S_i=s)\right\Vert_{2} \notag \\
	& = O_P\left(\sqrt{\frac{k_n \xi_n^2 \log(k_n)}{n} } + \frac{k_n \xi_n^2 \log(k_n)}{n} \right),
	\end{align}
	% \begin{align*}
	% \sup_{\lambda \in S^{k_n}}   \mathbb E (((X_i-\mathbb E(X_i|S_i=s))^\top \lambda)^4|S_i=s) \leq \mathcal E_n,
	% \end{align*}
	% \begin{align*}
	%     \sup_{\lambda \in S^{k_n}} (((X_i-\mathbb E(X_i|S_i=s))^\top \lambda)^2 \leq ||X_i-\mathbb E(X_i|S_i=s)||_2^2,
	% \end{align*}
	% and 

	% Therefore, by \citet[Corollary 5.1]{CCK14}, we have
	% For the same reason, we have
	% \begin{align}\label{eq:beta4}
	%     & \left\Vert \frac{1}{n_{a,s}}\sum_{i \in \aleph_{a,s}} (X_i-\mathbb E(X_i|S_i=s))(Y_i-\mathbb E(Y_i|S_i=s)) - Cov(X_i,Y_i|S_i=s)\right\Vert_{2} \notag \\
	%     & = O_P\left(\sqrt{\frac{k_n \mathcal E_n \log(k_n)}{n} } + \frac{ k_n^2 \xi_n^2 \log(k_n) }{n}\right),
	% \end{align}
	which implies 
	\begin{align*}
	||\tilde \beta_{a,s} -  \beta_{a,s}^*||_2 =  O_P\left(\sqrt{\frac{k_n \xi_n^2 \log(k_n)}{n} } \right),
	\end{align*}
	and thus, 
	\begin{align*}
	||\hat \beta_{a,s} -  \beta_{a,s}^*||_2 =  O_P\left(\sqrt{\frac{k_n \xi_n^2 \log(k_n)}{n} } \right).
	\end{align*}
\end{proof}

\section{Additional Simulation Results}\label{sec:add_sim}

\subsection{Additional Figures for Rejection Probability}
Figures \ref{fig:srs400}-\ref{fig:wei800} show how the rejection rates of $\hat{\tau}^{adj}$, $\hat{\tau}^{\ast}$, YYS and LTM vary with the number of regressors under the null and alternative hypotheses under the SRS, BCD, and WEI randomization schemes respectively.

\begin{figure}[H]
	\centering
	\subfigure[$H_{0}:\mu_1-\mu_0=0$]{
		\includegraphics[width=\textwidth]{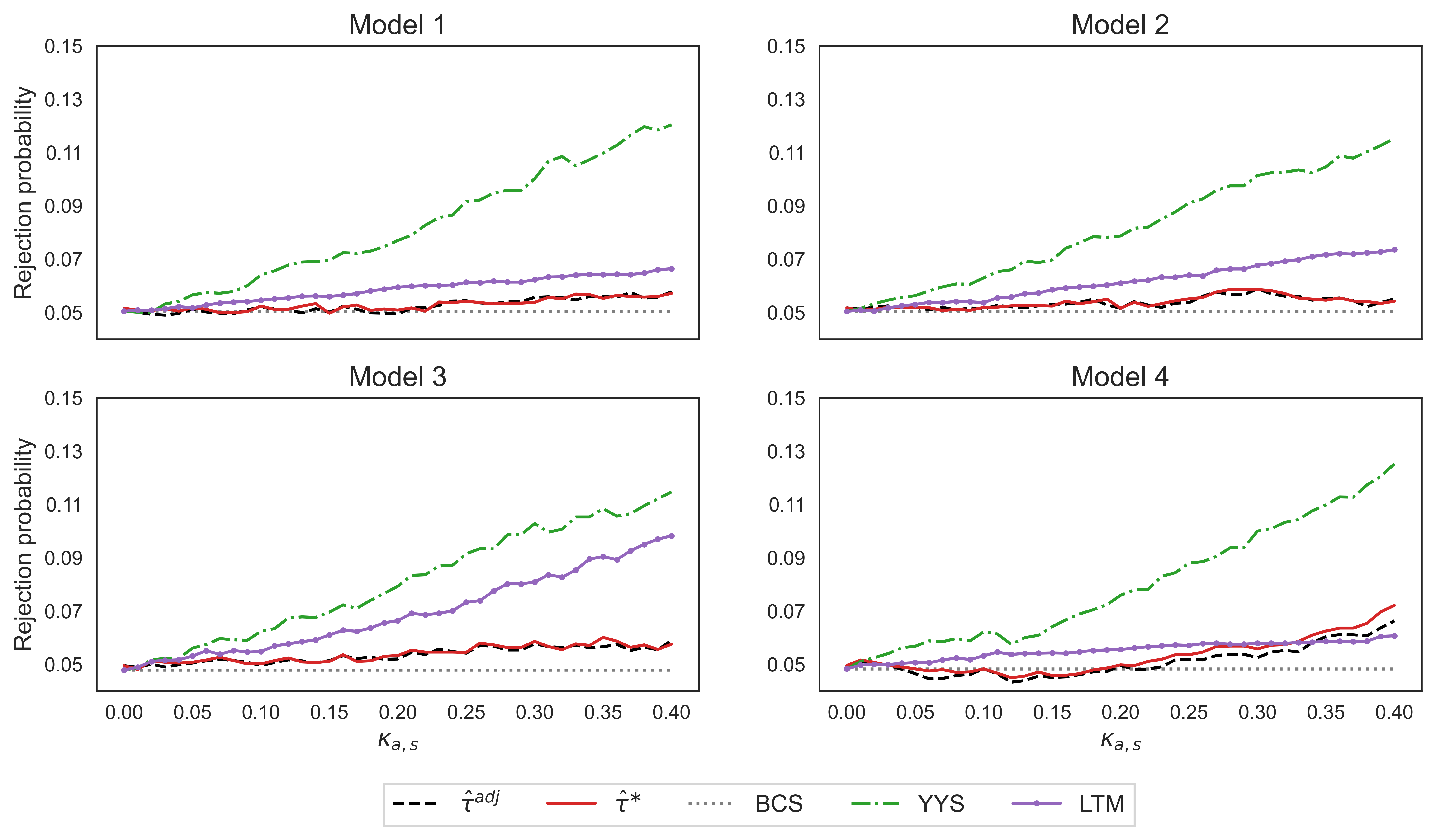}
	}
	
	\subfigure[$H_{1}:\mu_1-\mu_0=0.2$]{
		\includegraphics[width=\textwidth]{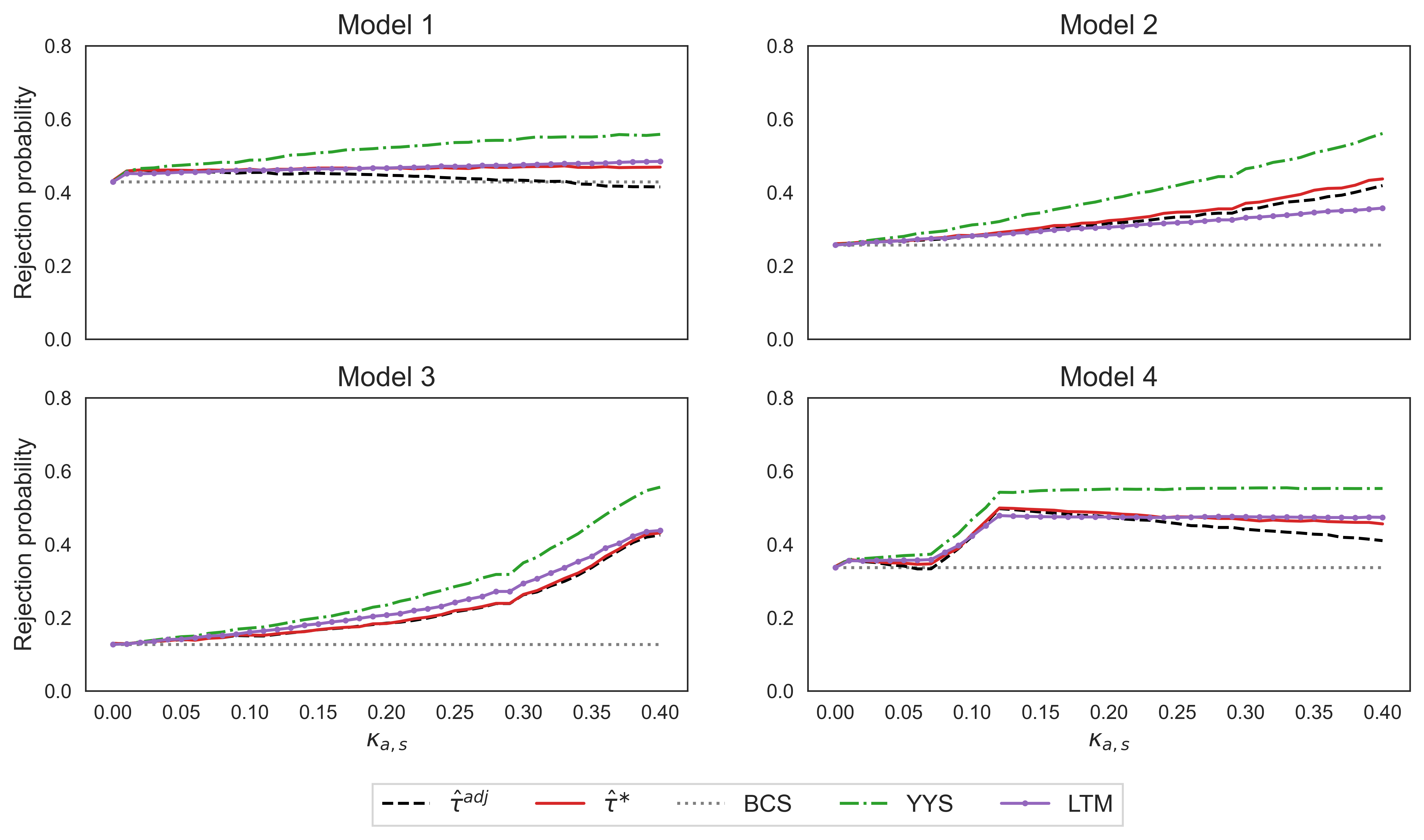}
	}
	\caption{Rejection probabilities under SRS when $n=400$}\label{fig:srs400}
\end{figure}

\begin{figure}[H]
	\centering
	\subfigure[$H_{0}:\mu_1-\mu_0=0$]{
		\includegraphics[width=\textwidth]{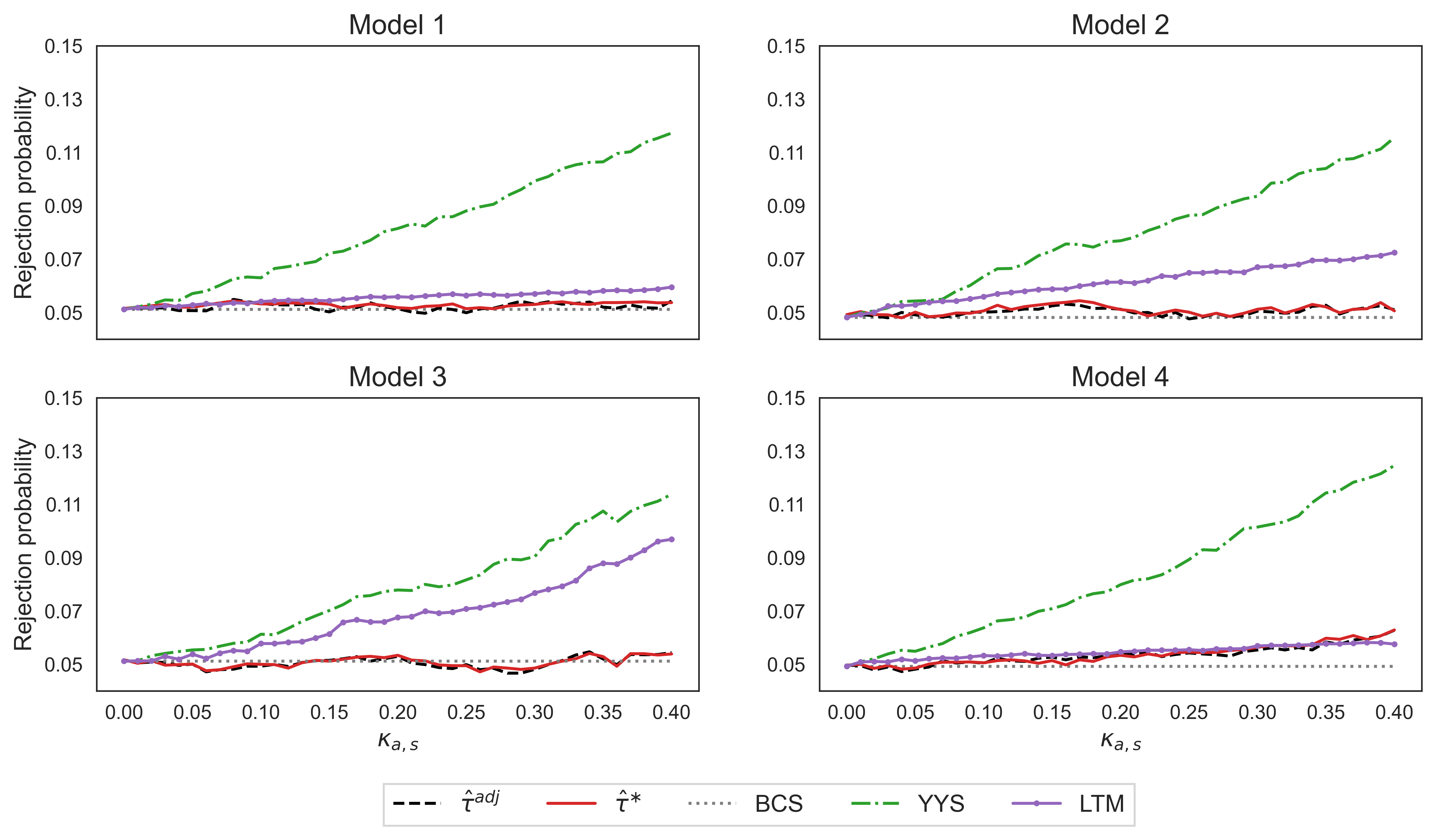}
	}
	
	\subfigure[$H_{1}:\mu_1-\mu_0=0.2$]{
		\includegraphics[width=\textwidth]{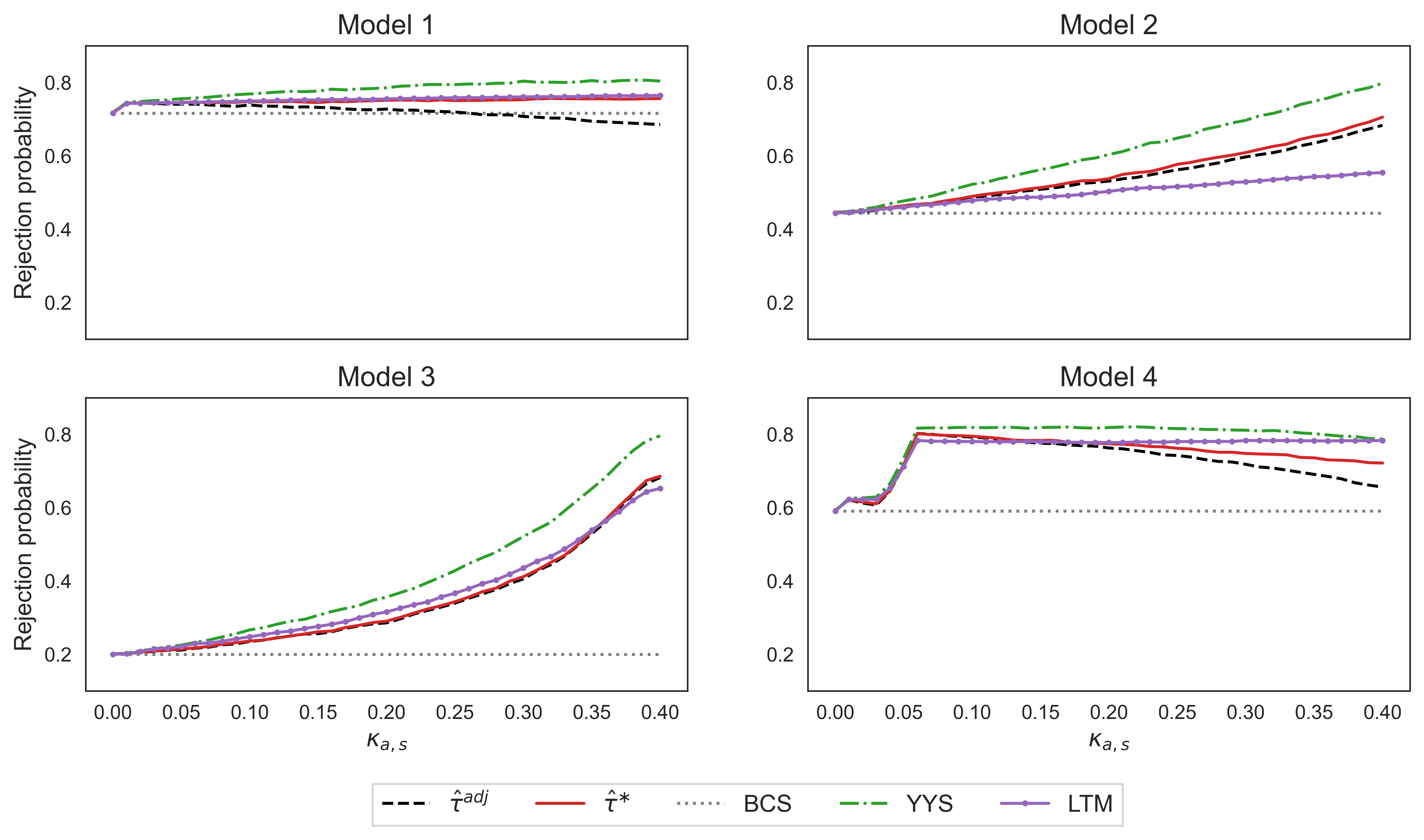}
	}
	\caption{Rejection probabilities under SRS when $n=800$}\label{fig:srs800}
\end{figure}

\begin{figure}[H]
	\centering
	\subfigure[$H_{0}:\mu_1-\mu_0=0$]{
		\includegraphics[width=\textwidth]{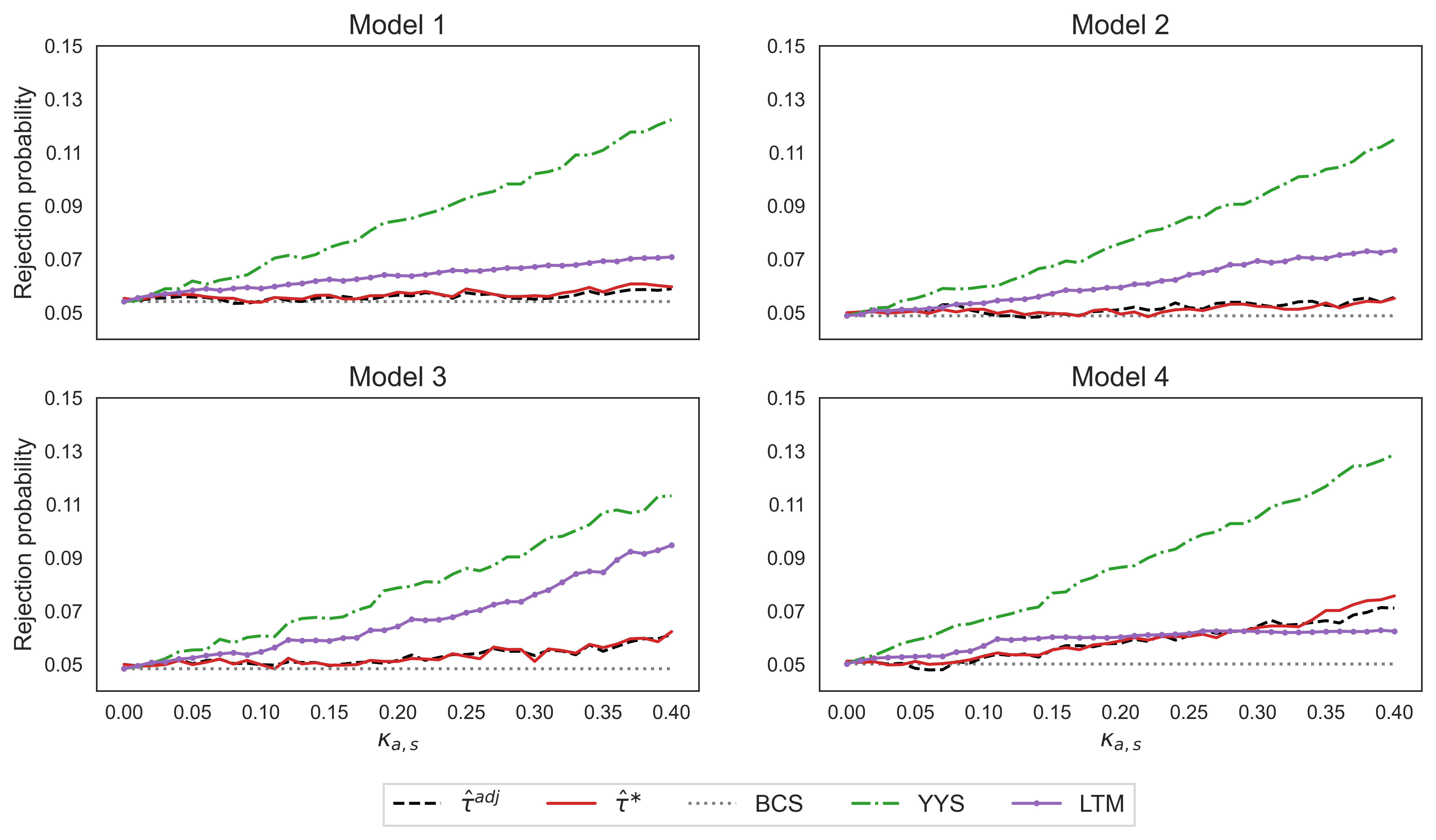}
	}
	
	\subfigure[$H_{1}:\mu_1-\mu_0=0.2$]{
		\includegraphics[width=\textwidth]{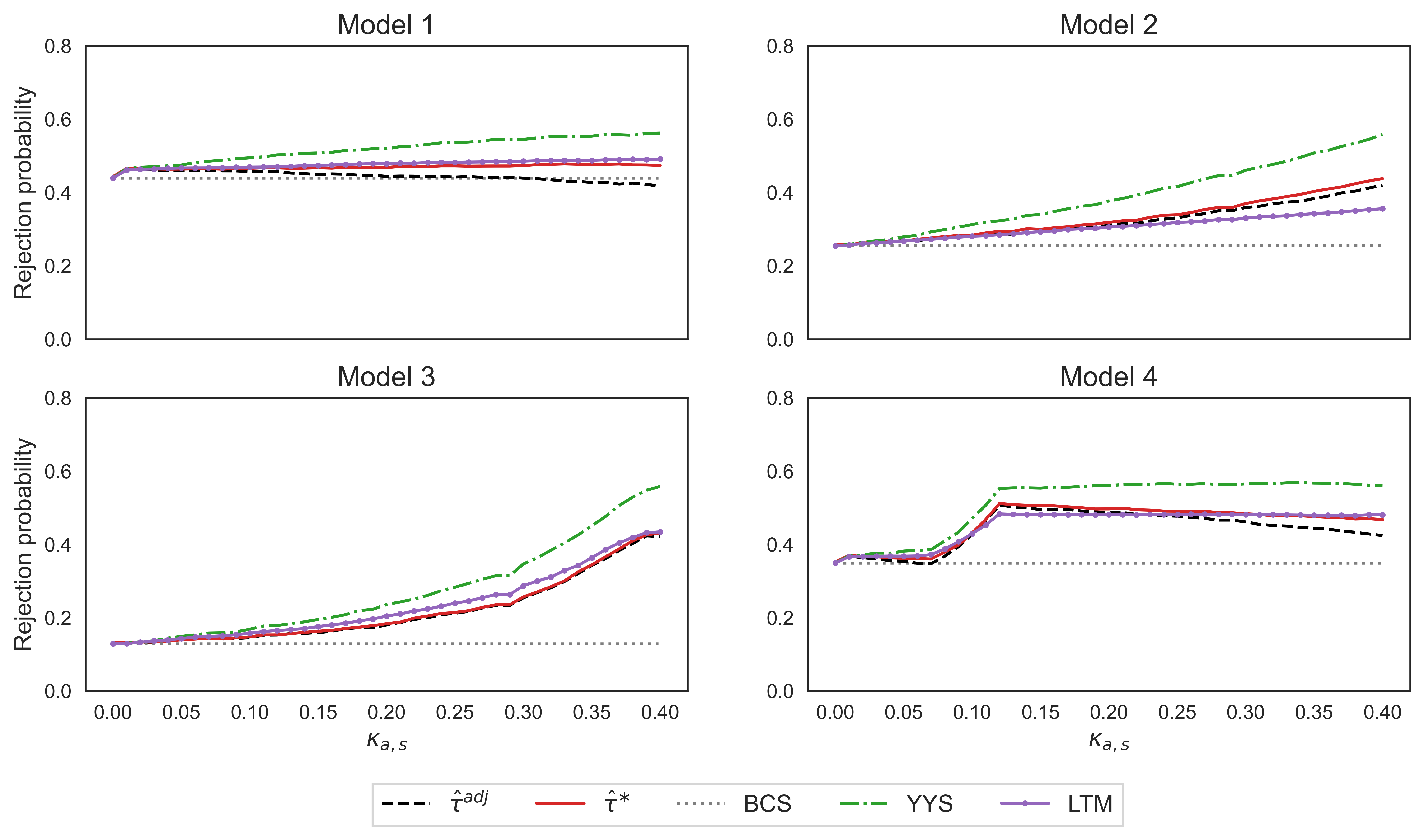}
	}
	\caption{Rejection probabilities under BCD when $n=400$}\label{fig:bcd400}
\end{figure}

\begin{figure}[H]
	\centering
	\subfigure[$H_{0}:\mu_1-\mu_0=0$]{
		\includegraphics[width=\textwidth]{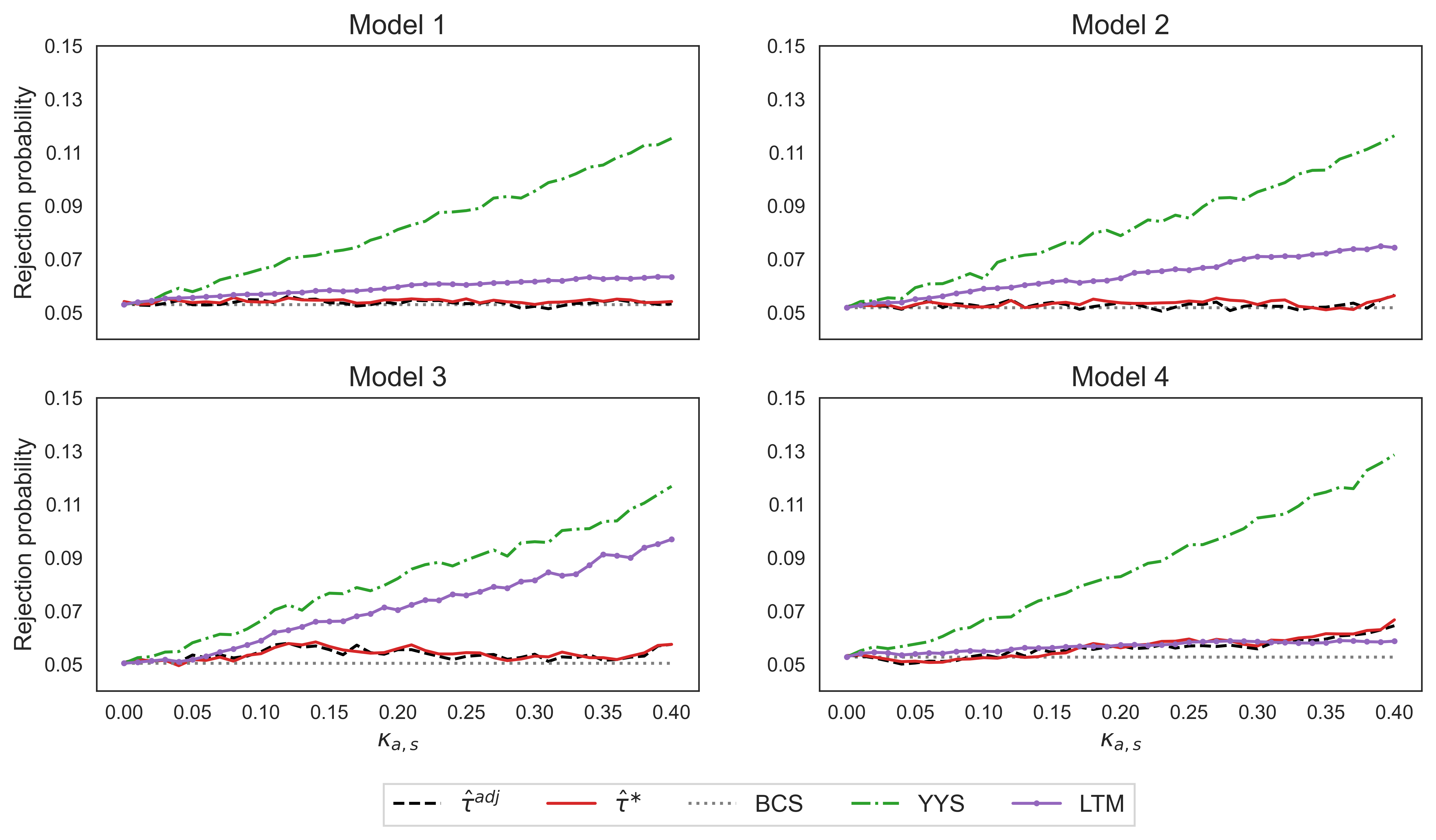}
	}
	
	\subfigure[$H_{1}:\mu_1-\mu_0=0.2$]{
		\includegraphics[width=\textwidth]{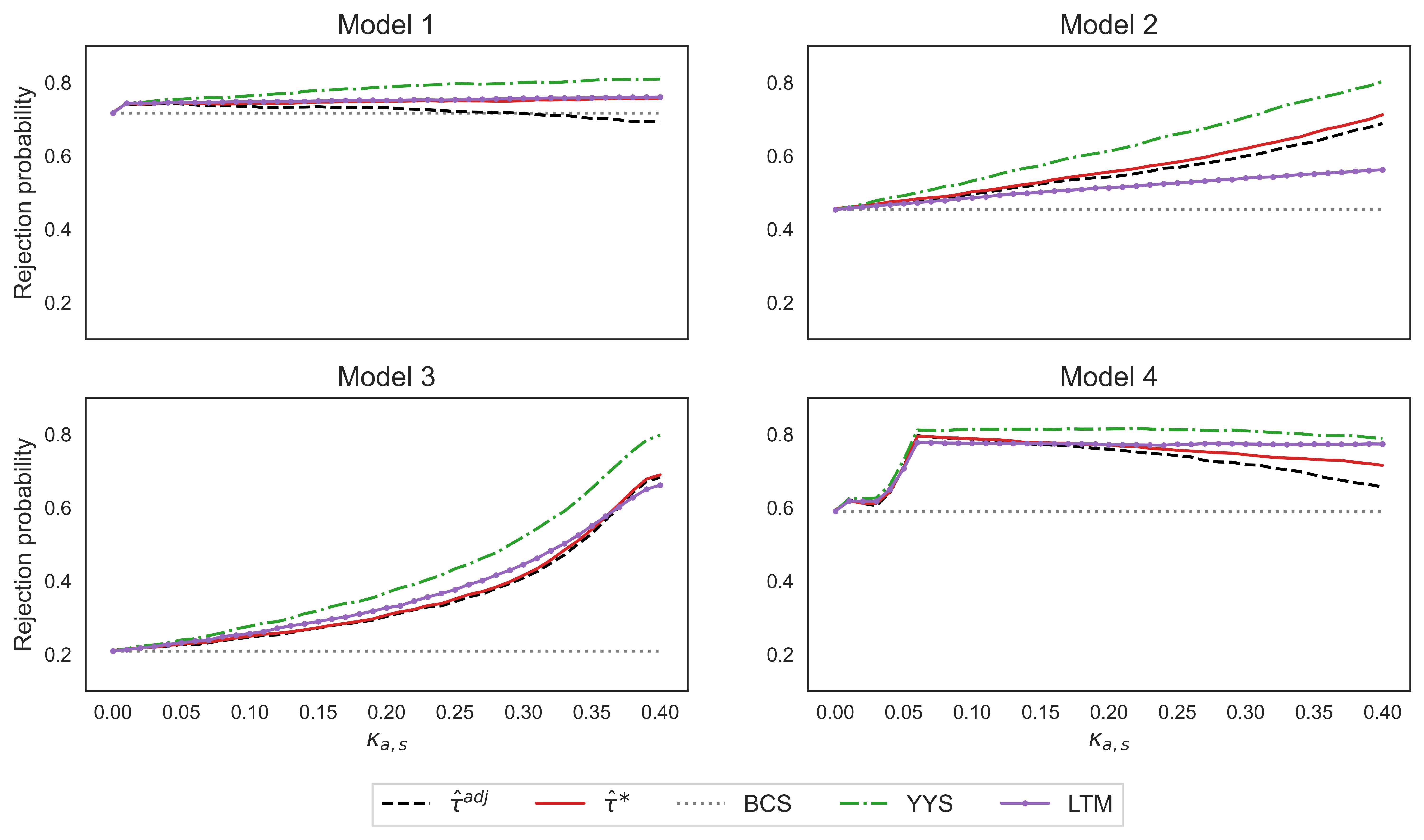}
	}
	\caption{Rejection probabilities under BCD when $n=800$}\label{fig:bcd800}
\end{figure}

\begin{figure}[H]
	\centering
	\subfigure[$H_{0}:\mu_1-\mu_0=0$]{
		\includegraphics[width=\textwidth]{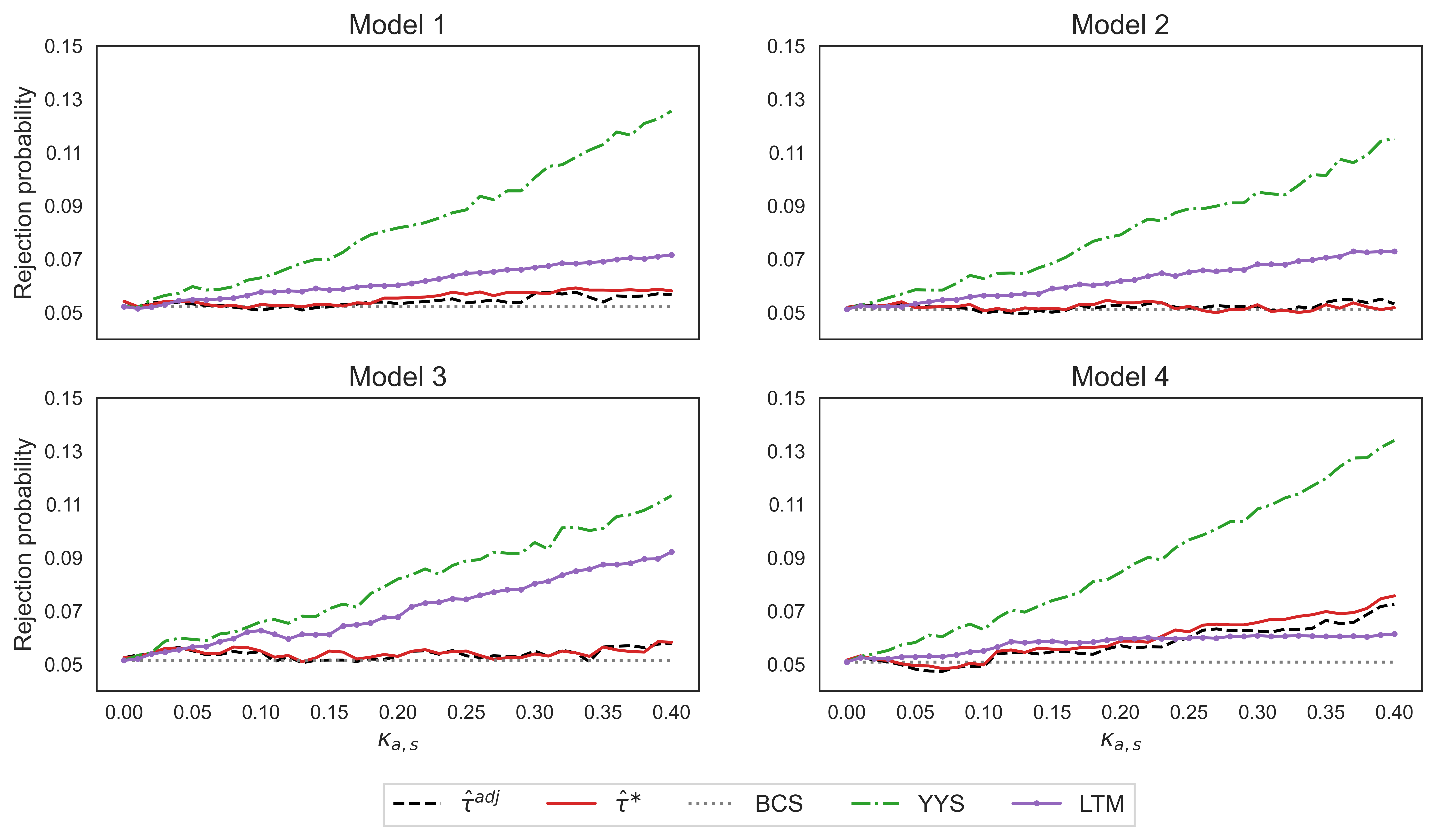}
	}
	
	\subfigure[$H_{1}:\mu_1-\mu_0=0.2$]{
		\includegraphics[width=\textwidth]{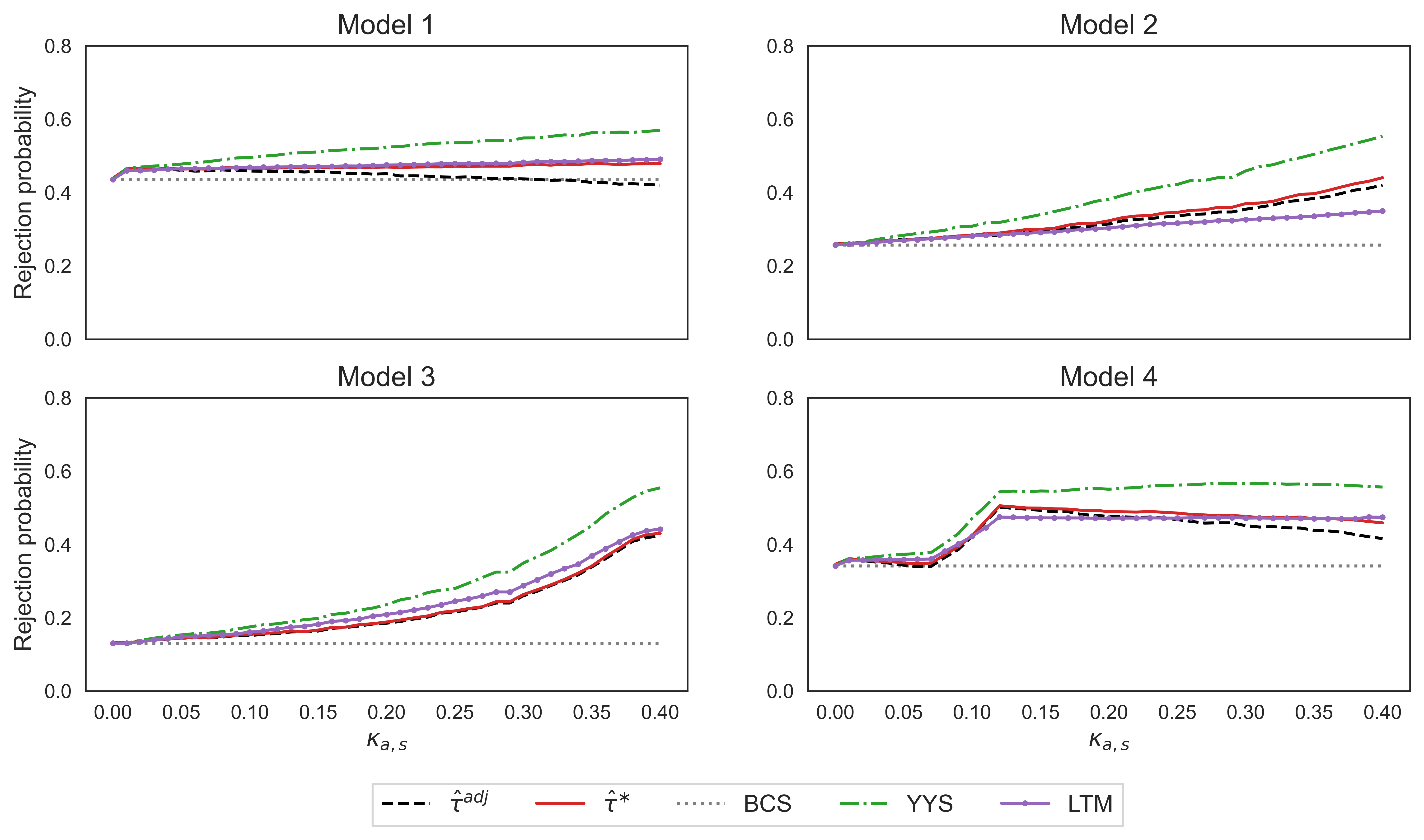}
	}
	\caption{Rejection probabilities under WEI when $n=400$}\label{fig:wei400}
\end{figure}

\begin{figure}[H]
	\centering
	\subfigure[$H_{0}:\mu_1-\mu_0=0$]{
		\includegraphics[width=\textwidth]{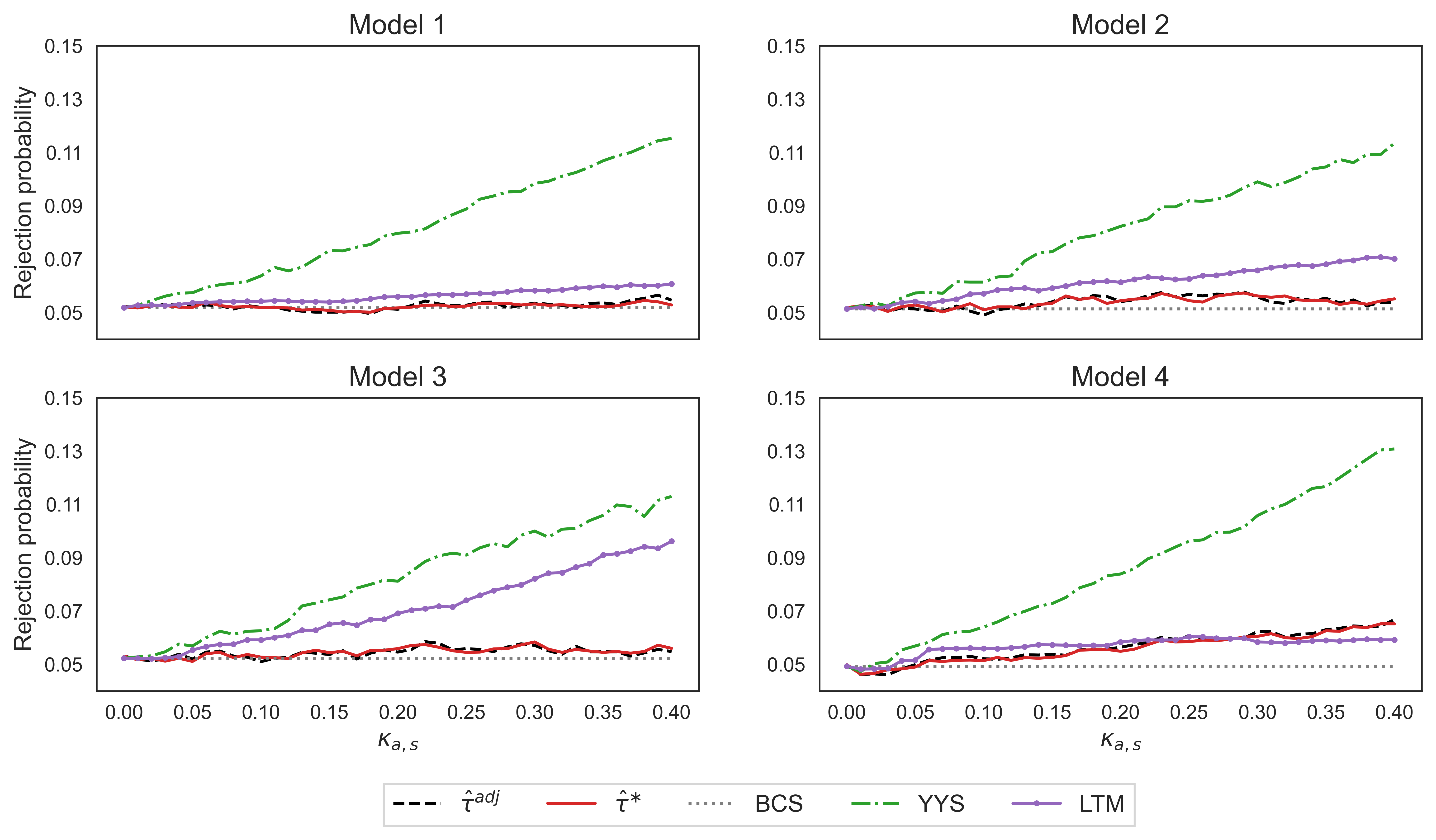}
	}
	
	\subfigure[$H_{1}:\mu_1-\mu_0=0.2$]{
		\includegraphics[width=\textwidth]{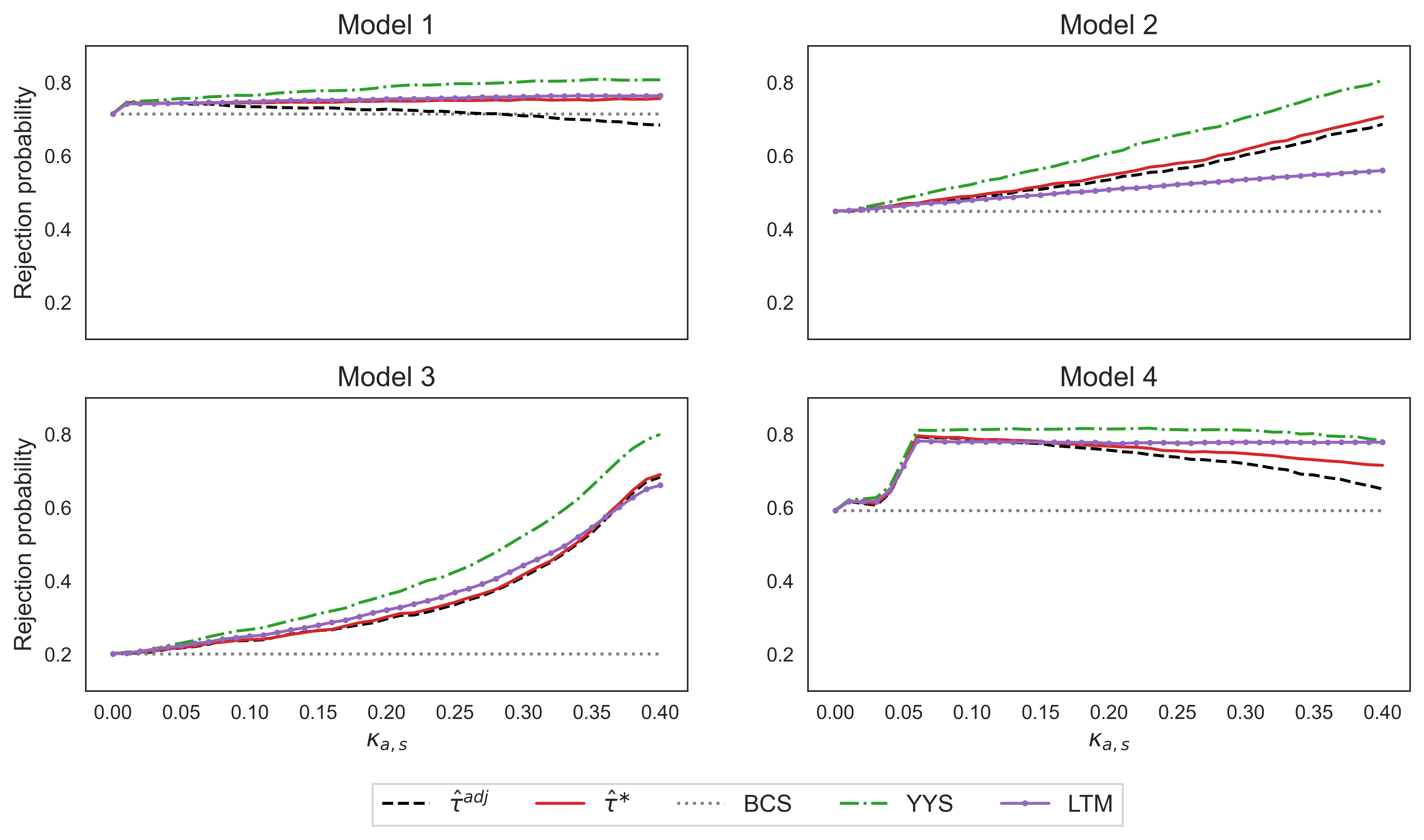}
	}
	\caption{Rejection probabilities under WEI when $n=800$}\label{fig:wei800}
\end{figure}

\subsection{Additional Results for Bias}
Table \ref{Table_Bias_H0} presents the empirical bias of ATE estimators across all cases. The number of regressors is set to \( 0.4\frac{n}{2|\mathcal{S}|} \), corresponding to \( \kappa_{a, s} = 0.4 \). Figure \ref{fig:sbr400bias_H0} illustrates how the empirical bias of \(\hat{\tau}^{adj}\), \(\hat{\tau}^{\ast}\), YYS, and LTM changes with the number of regressors under both the null and alternative hypotheses and the SBR randomization scheme.

We find that the empirical bias terms are asymptotically negligible in all cases compared to the standard deviations. This finding supports our theoretical results that all treatment effect estimators are asymptotically unbiased. Similar results were observed for other randomization schemes, which are omitted here.

\newcolumntype{L}{>{\raggedright\arraybackslash}X} \newcolumntype{C}{>{\centering\arraybackslash}X}
\begin{table}[tbp]\caption{Bias under $H_0:\mu_{1}-\mu_{0}=0$}  \label{Table_Bias_H0}
	\centering
	\begin{tabularx}{1\textwidth}{CCCCCCCCCC} 
		\toprule
		\multicolumn{1}{l}{} & \multicolumn{1}{c}{} &  \multicolumn{4}{c}{n=400, $k_n=40$} &\multicolumn{4}{c}{n=800, $k_n=80$} \\ 
		
		\cmidrule(lr){3-6} \cmidrule(lr){7-10}
		
		Model                & Method               & SRS   & BCD   & WEI   & SBR   & \multicolumn{1}{c}{SRS} & \multicolumn{1}{c}{BCD} & \multicolumn{1}{c}{WEI} & \multicolumn{1}{c}{SBR} \\ 
		\midrule
		1 & $\hat{\tau}^{adj}$  & -0.001 & 0.000  & 0.000  & 0.000  & -0.001 & 0.001  & 0.001  & 0.001  \\
		& $\hat{\tau}^{\ast}$ & -0.001 & 0.000  & 0.000  & 0.000  & 0.000  & 0.001  & 0.001  & 0.001  \\
		& BCS                 & -0.001 & 0.000  & 0.000  & 0.000  & 0.000  & 0.000  & 0.000  & 0.000  \\
		& YYS                 & -0.001 & 0.000  & 0.000  & 0.000  & -0.001 & 0.001  & 0.001  & 0.001  \\
		& LTM                 & -0.001 & 0.000  & 0.000  & 0.000  & 0.000  & 0.000  & 0.000  & 0.000  \\ \hline
		2 & $\hat{\tau}^{adj}$  & 0.000  & 0.000  & 0.000  & 0.000  & -0.001 & 0.000  & 0.000  & 0.000  \\
		& $\hat{\tau}^{\ast}$ & 0.000  & 0.000  & 0.000  & 0.000  & -0.002 & 0.000  & 0.000  & 0.000  \\
		& BCS                 & 0.000  & -0.001 & -0.001 & -0.001 & -0.003 & 0.000  & 0.000  & 0.000  \\
		& YYS                 & 0.000  & 0.000  & 0.000  & 0.000  & -0.001 & 0.000  & 0.000  & 0.000  \\
		& LTM                 & 0.000  & -0.001 & -0.001 & -0.001 & -0.002 & 0.000  & 0.000  & 0.000  \\ \hline
		3 & $\hat{\tau}^{adj}$  & 0.000  & 0.000  & 0.000  & 0.000  & -0.001 & 0.000  & 0.000  & 0.000  \\
		& $\hat{\tau}^{\ast}$ & 0.000  & -0.001 & -0.001 & -0.001 & -0.001 & 0.000  & 0.000  & 0.000  \\
		& BCS                 & 0.001  & -0.002 & -0.002 & -0.002 & -0.004 & 0.000  & 0.000  & 0.000  \\
		& YYS                 & 0.000  & 0.000  & 0.000  & 0.000  & -0.001 & 0.000  & 0.000  & 0.000  \\
		& LTM                 & 0.000  & -0.001 & -0.001 & -0.001 & -0.002 & -0.001 & -0.001 & -0.001 \\ \hline
		4 & $\hat{\tau}^{adj}$  & 0.000  & 0.002  & 0.002  & 0.002  & 0.001  & 0.001  & 0.001  & 0.001  \\
		& $\hat{\tau}^{\ast}$ & 0.000  & 0.002  & 0.002  & 0.002  & 0.001  & 0.000  & 0.000  & 0.000  \\
		& BCS                 & 0.001  & 0.002  & 0.002  & 0.002  & 0.002  & 0.001  & 0.001  & 0.001  \\
		& YYS                 & 0.000  & 0.002  & 0.002  & 0.002  & 0.001  & 0.001  & 0.001  & 0.001  \\
		& LTM                 & 0.001  & 0.002  & 0.002  & 0.002  & 0.002  & 0.001  & 0.001  & 0.001   \\ 
		
		\bottomrule
	\end{tabularx}
\end{table}

\begin{figure}[H]
	\centering
	\includegraphics[width=\textwidth]{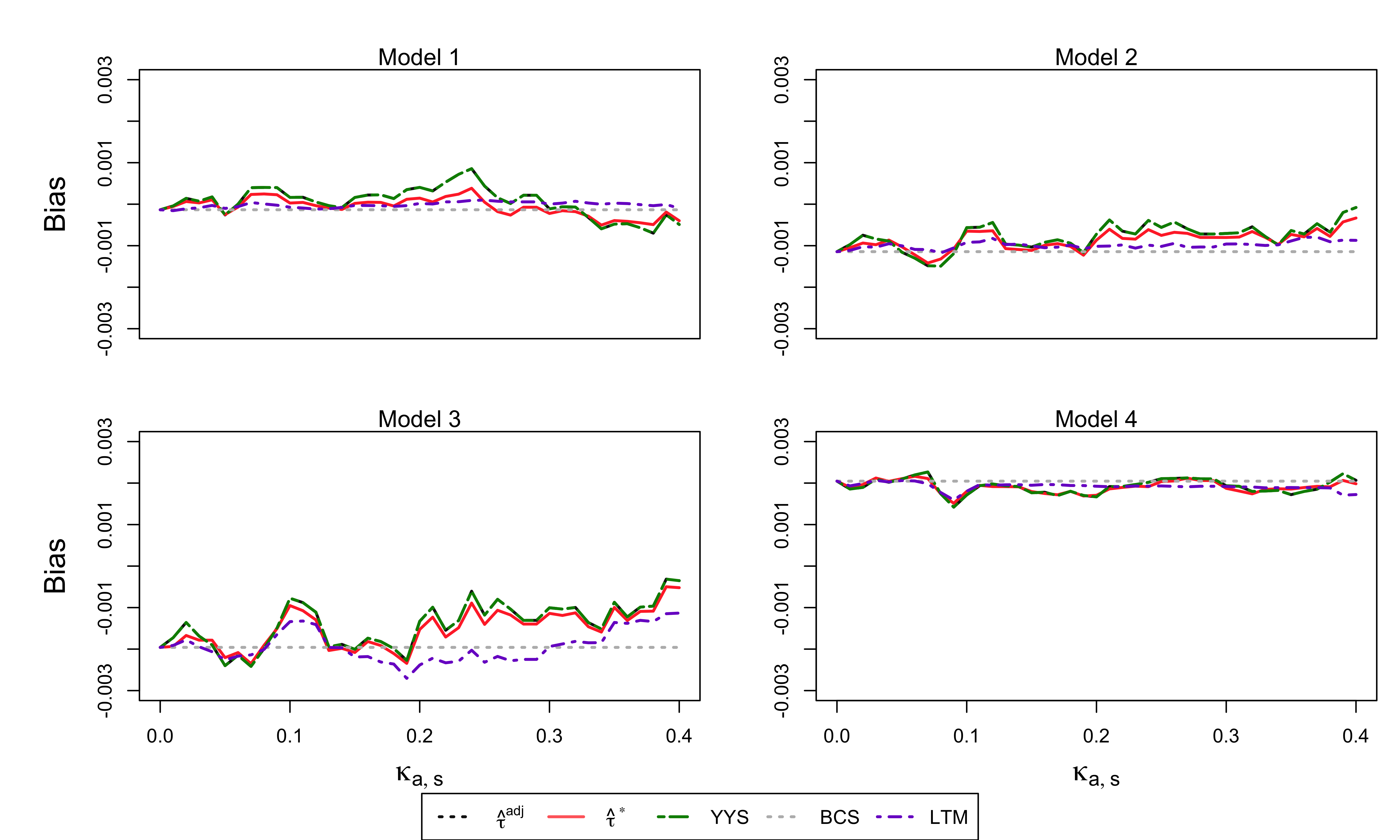}
	
	\caption{Bias under SBR  and $H_0$ when $n=400$}\label{fig:sbr400bias_H0}
\end{figure}

\subsection{Additional Results for Standard Deviation}
Table \ref{Table_Std_H0} presents the standard deviations of ATE estimators for \( n = 400 \) and \( n = 800 \) under \( H_{0} : \mu_1 = \mu_0 = 0 \). The number of regressors is set to \( 0.4\frac{n}{2|\mathcal{S}|} \), corresponding to \( \kappa_{a, s} = 0.4 \). Values in parentheses indicate the ratio of standard deviations to the average of estimated standard errors. These results confirm that the standard error estimator proposed in this paper for \( \hat{\tau}^{adj} \) and \( \hat{\tau}^{\ast} \) is consistent. In contrast, YYS significantly underestimates standard errors across all four models, while LTM also underestimates standard errors for models 1-3, though to a lesser extent.

\begin{figure}[H]
	\includegraphics[width=\textwidth]{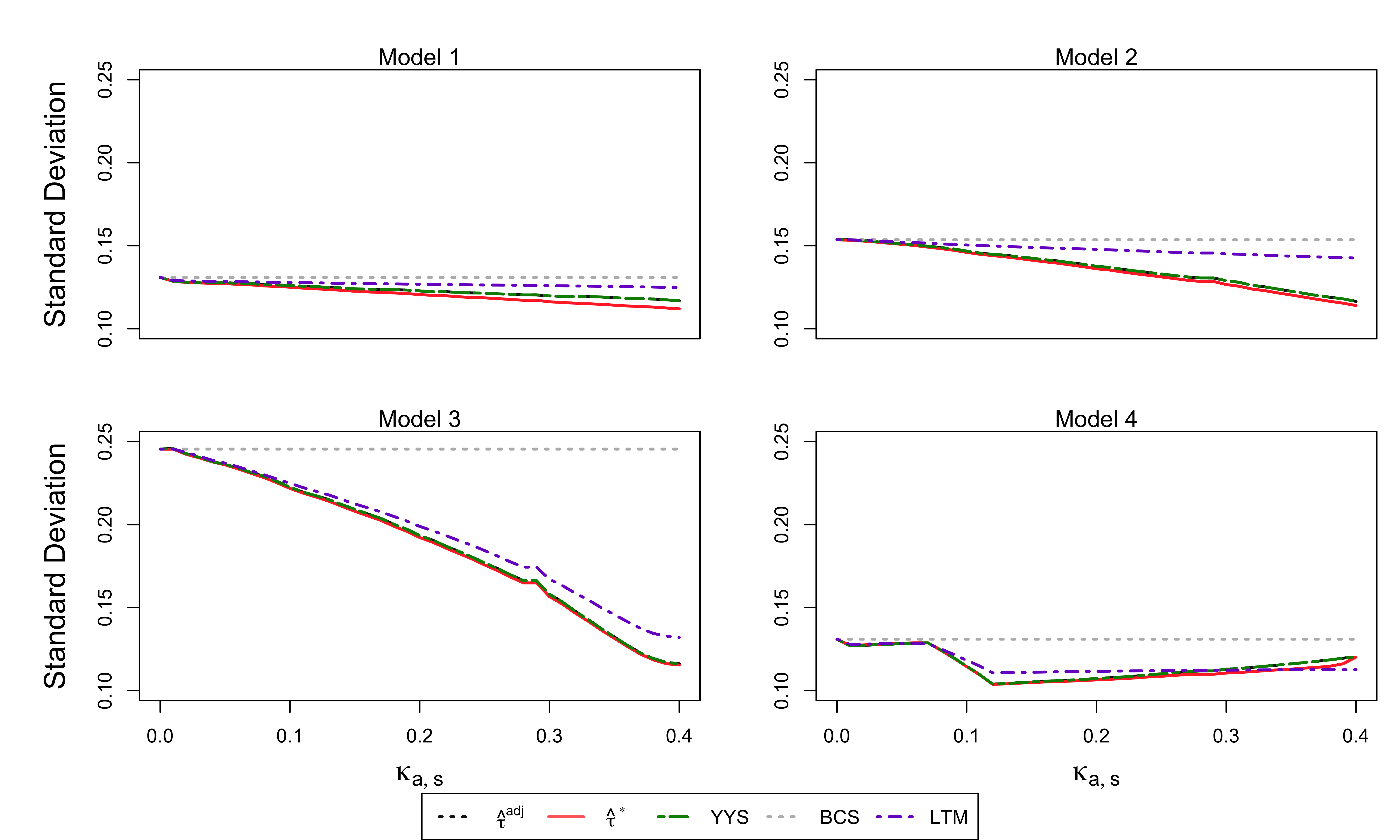}
	\caption{Standard deviation under SBR and $H_0$ when $n=400$}\label{fig:sbr400Std}
\end{figure}

Figure \ref{fig:sbr400Std} illustrates how the standard deviations of \(\hat{\tau}^{adj}\), \(\hat{\tau}^{\ast}\), YYS, and LTM change with the number of regressors under the null hypothesis and the SBR randomization scheme. For \( n = 800 \), we set the number of regressors to \( k_n = 0, 2, \ldots, 80 \), and for \( n = 400 \), to \( k_n = 0, 1, \ldots, 40 \), corresponding to \( \kappa_{a, s} = 0, 0.02, \ldots, 0.4 \) on the \( x \)-axis.

{\small 
	\newcolumntype{L}{>{\raggedright\arraybackslash}X} \newcolumntype{C}{>{\centering\arraybackslash}X}
	\begin{table}[H]\caption{Standard Deviations under $H_0:\mu_{1}-\mu_{0}=0$}  \label{Table_Std_H0}
		\centering
		\begin{tabularx}{1\textwidth}{CCCCCCCCCC} 
			\toprule
			\multicolumn{1}{l}{} & \multicolumn{1}{c}{} &  \multicolumn{4}{c}{n=400, $k_n=40$} &\multicolumn{4}{c}{n=800, $k_n=80$} \\ 
			\cmidrule(lr){3-6} \cmidrule(lr){7-10}
			Model                & Method               & SRS   & BCD   & WEI   & SBR   & \multicolumn{1}{c}{SRS} & \multicolumn{1}{c}{BCD} & \multicolumn{1}{c}{WEI} & \multicolumn{1}{c}{SBR} \\ 
			\midrule
			1 & $\hat{\tau}^{adj}$  & 0.118   & 0.117   & 0.118   & 0.117   & 0.082   & 0.082   & 0.083   & 0.084   \\
			&                     & (1.021) & (1.018) & (1.021) & (1.012) & (1.010) & (1.012) & (1.022) & (1.034) \\
			& $\hat{\tau}^{\ast}$ & 0.113   & 0.113   & 0.112   & 0.112   & 0.079   & 0.079   & 0.080   & 0.080   \\
			&                     & (1.027) & (1.030) & (1.018) & (1.018) & (1.017) & (1.019) & (1.030) & (1.030) \\
			& BCS                 & 0.131   & 0.133   & 0.131   & 0.131   & 0.093   & 0.094   & 0.094   & 0.093   \\
			&                     & (0.992) & (1.010) & (0.992) & (0.992) & (1.003) & (1.015) & (1.013) & (1.003) \\
			& YYS                 & 0.118   & 0.117   & 0.118   & 0.117   & 0.082   & 0.082   & 0.083   & 0.084   \\
			&                     & (1.264) & (1.254) & (1.264) & (1.253) & (1.253) & (1.255) & (1.268) & (1.283) \\
			& LTM                 & 0.125   & 0.127   & 0.125   & 0.125   & 0.089   & 0.089   & 0.090   & 0.089   \\
			&                     & (1.066) & (1.085) & (1.066) & (1.066) & (1.071) & (1.073) & (1.083) & (1.071) \\ \hline
			2 & $\hat{\tau}^{adj}$  & 0.117   & 0.116   & 0.115   & 0.116   & 0.082   & 0.082   & 0.082   & 0.082   \\
			&                     & (1.012) & (1.009) & (0.995) & (1.004) & (1.000) & (1.003) & (1.000) & (1.000) \\
			& $\hat{\tau}^{\ast}$ & 0.114   & 0.113   & 0.112   & 0.114   & 0.080   & 0.079   & 0.080   & 0.080   \\
			&                     & (1.015) & (1.010) & (0.997) & (1.015) & (1.002) & (0.993) & (1.002) & (1.002) \\
			& BCS                 & 0.152   & 0.151   & 0.153   & 0.154   & 0.108   & 0.108   & 0.108   & 0.108   \\
			&                     & (0.994) & (0.990) & (1.000) & (1.007) & (0.998) & (0.999) & (0.998) & (0.998) \\
			& YYS                 & 0.117   & 0.116   & 0.115   & 0.116   & 0.082   & 0.082   & 0.082   & 0.082   \\
			&                     & (1.242) & (1.233) & (1.221) & (1.232) & (1.234) & (1.235) & (1.234) & (1.234) \\
			& LTM                 & 0.142   & 0.140   & 0.142   & 0.143   & 0.101   & 0.101   & 0.102   & 0.101   \\
			&                     & (1.097) & (1.084) & (1.097) & (1.105) & (1.079) & (1.081) & (1.090) & (1.079) \\ \hline
			3 & $\hat{\tau}^{adj}$  & 0.117   & 0.116   & 0.114   & 0.116   & 0.082   & 0.082   & 0.082   & 0.083   \\
			&                     & (1.031) & (1.026) & (1.004) & (1.022) & (1.010) & (1.012) & (1.010) & (1.022) \\
			& $\hat{\tau}^{\ast}$ & 0.116   & 0.115   & 0.113   & 0.115   & 0.082   & 0.081   & 0.081   & 0.082   \\
			&                     & (1.032) & (1.026) & (1.005) & (1.023) & (1.018) & (1.008) & (1.006) & (1.018) \\
			& BCS                 & 0.244   & 0.242   & 0.246   & 0.245   & 0.175   & 0.176   & 0.176   & 0.176   \\
			&                     & (0.997) & (0.992) & (1.005) & (1.001) & (0.997) & (1.004) & (1.003) & (1.003) \\
			& YYS                 & 0.117   & 0.116   & 0.114   & 0.116   & 0.082   & 0.082   & 0.082   & 0.083   \\
			&                     & (1.252) & (1.243) & (1.220) & (1.241) & (1.239) & (1.241) & (1.239) & (1.254) \\
			& LTM                 & 0.133   & 0.131   & 0.131   & 0.132   & 0.096   & 0.095   & 0.096   & 0.095   \\
			&                     & (1.183) & (1.169) & (1.165) & (1.174) & (1.183) & (1.174) & (1.183) & (1.171) \\ \hline
			4 & $\hat{\tau}^{adj}$  & 0.120   & 0.121   & 0.122   & 0.120   & 0.086   & 0.087   & 0.087   & 0.087   \\
			&                     & (1.016) & (1.032) & (1.033) & (1.016) & (1.019) & (1.035) & (1.031) & (1.031) \\
			& $\hat{\tau}^{\ast}$ & 0.115   & 0.115   & 0.117   & 0.120   & 0.081   & 0.082   & 0.082   & 0.082   \\
			&                     & (1.056) & (1.056) & (1.075) & (1.102) & (1.039) & (1.053) & (1.051) & (1.051) \\
			& BCS                 & 0.129   & 0.129   & 0.131   & 0.131   & 0.091   & 0.093   & 0.092   & 0.092   \\
			&                     & (0.986) & (0.988) & (1.002) & (1.002) & (0.992) & (1.015) & (1.003) & (1.003) \\
			& YYS                 & 0.120   & 0.121   & 0.122   & 0.120   & 0.086   & 0.087   & 0.087   & 0.087   \\
			&                     & (1.261) & (1.275) & (1.282) & (1.261) & (1.281) & (1.298) & (1.296) & (1.296) \\
			& LTM                 & 0.111   & 0.112   & 0.113   & 0.113   & 0.075   & 0.076   & 0.076   & 0.076   \\
			&                     & (1.032) & (1.045) & (1.051) & (1.051) & (1.031) & (1.047) & (1.045) & (1.045)\\
			\bottomrule
			\multicolumn{10}{l}{Note: All simulations were conducted with 10,000 replications. The ratio of sample standard}\\
			\multicolumn{10}{l}{deviations to the average of estimated standard errors is given in parenthesis.}
		\end{tabularx}
	\end{table}
}

In most cases, \(\hat{\tau}^\ast\) shows the smallest standard deviations, confirming our theoretical result that \(\hat{\tau}^\ast\) is the most efficient combination of \(\hat{\tau}^{adj}\) and the unadjusted estimator. In Model 4, however, when \(\kappa_{\alpha, s}\) is above \(0.3\), LTM achieves a slightly smaller standard deviation. This outcome is consistent with our theory, as the Lasso estimator does not fall within the class of estimators based on \(\hat{\tau}^{adj}\) and \(\hat{\tau}^{unadj}\). The main challenge with using Lasso for adjustment is its inability to control size when the sparsity condition is not met, as shown in Table 1 for Model 4 in the main text. Results under other randomization schemes are similar and are therefore omitted.

\subsection{Simulation analysis of the sensitivity to violation of Assumption 2(iv)}

In this section, we consider two DGPs that violate  Assumption 2(iv). Similar to models 1-4, our data generating processes are as follows.	
For $a\in\{0,1\}$, we generate potential outcomes according to the equation
$$
Y_{i}(a)=\mu_{a}+m_{a}(Z_{i})+ \sigma_{a}(Z_{i})\epsilon_{i}(a),
$$
where $\mu_{a}$, and $m_{a}\left(Z_{i}\right)$ and $\sigma_a(Z_i)$ are specified as follows. In all specifications, $\{Z_i, \epsilon_{i}(1), \epsilon_{i}(0)\}$ are i.i.d.  The conditional mean functions $m_a(\cdot)$ are different for each model. 

\begin{description}
	\item [{Model 5:} Model with incorrect regressors] 
	The dimension of $Z_i$ is set to $d_n=0.2n/|\mathcal{S}|$. The first entry of $Z_i$ is uniformly distributed in $[-1,1]$, i.e., $Z_{1i}\sim U[-1,1]$. 
	The other entries of $Z_i$ are$[Z_{2i},\cdots,Z_{d_n,i}]^\top=\Sigma_\rho^{1/2}\mathbf{v}_i$, where   $\mathbf{v}_i\in \mathbb{R}^{d_n-1}$ containing i.i.d. $U[-1,1]$ entries and $\Sigma_\rho$ is a Toeplitz matrix with
	$
	[\Sigma_\rho]_{i,j}=\rho^{|i-j|}
	$ and $\rho=0.6$. 
	The conditional variances of idiosyncratic terms are given by 
	$\sigma_{0}(Z_{i})=\sigma_{1}(Z_{i})=c_{\epsilon}\left[1+\left(Z_{1i}+\sum_{j=2}^{d_n}Z_{ji}/\sqrt{d_n-1}\right)^2\right]^{1/2}.$ 
	We set $c_\eps$ as the normalizing constant such that $\mathbb E \sigma_a^2(Z_i) = 1$. Set $(\epsilon_{i}(1),\epsilon_{i}(0))\sim \mathcal{N}(0,\mathbf{I}_2)$ to be independent of $Z_i$. 
	
	Model 5 is the same as Model 3, but with the conditional means defined as:
	
	\begin{center}
		$m_1(Z_i)=2\sum_{j=2}^{d_n}Z_{ji}/\sqrt{d_n-1}$, and
		$m_0(Z_i)=2\sum_{j=2}^{d_n}Z_{ji}^3/\sqrt{d_n-1}.$
	\end{center}
	
	\textbf{We directly use $Z_{ji}$'s as regressors so that the regression adjustment for $m_{0}(\cdot)$ is not correctly specified.}
	
	\item [{Model 6:}] The setting is the same as Model 5, but with the conditional means defined as:
	
	\begin{center}
		$m_1(Z_i)=2\sum_{j=2}^{d_n}Z_{ji}/\sqrt{d_n-1}$, and 
		$m_0(Z_i)=2\sum_{j=2}^{d_n}\left(\frac{Z_{ji}^2-1/3}{3}+\frac{Z_{ji}}{10}\right)/\sqrt{d_n-1}$.
	\end{center}
	\textbf{We directly use $Z_{ji}$'s as regressors, so that the regression adjustment for $m_{0}(\cdot)$ is not correctly specified.}
\end{description}

Table \ref{Table_rate_H0_model56} presents the test sizes for the estimators, while Table \ref{Table_rate_H1_model56} shows their power under \( H_{1}: \mu_1 - \mu_0 = 0.2 \). The number of regressors is set to \( 0.4\frac{n}{2|\mathcal{S}|} \), corresponding to \( \kappa_{a, s} = 0.4 \). The results are similar to those of models 1-4. Specifically, under \( H_0 \), \(\hat{\tau}^{adj}\) and \(\hat{\tau}^{\ast}\) maintain good size control across both models, with accuracy improving as \( n \) increases. In contrast, YYS and LTM significantly over-reject the null hypothesis. BCS, which does not use any regressors, is unaffected by the size distortion that arises with many regressors. Under \( H_1 \), \(\hat{\tau}^{\ast}\) consistently shows higher power than \(\hat{\tau}^{adj}\) and BCS, as predicted by our theory. YYS and LTM exhibit higher rejection rates due to their use of an incorrect variance estimator.

Figures \ref{fig:SBR_dgp5}-\ref{fig:SBR_dgp6} show the rejection probabilities, bias, and standard deviations of the methods considered, with varying numbers of regressors under the SBR randomization scheme. We examine Models 5 and 6 for \( n = 800 \) and \( n = 400 \), setting the number of regressors to \( k_n = 0, 2, \ldots, 80 \) for \( n = 800 \) and \( k_n = 0, 1, \ldots, 40 \) for \( n = 400 \), corresponding to \( \kappa_{a, s} = 0, 0.02, \ldots, 0.4 \). The \( x \)-axis represents \( \kappa_{a, s} \). 

Despite Assumption 2(iv) not holding, \(\hat{\tau}^\ast\) and \(\hat{\tau}^{adj}\) perform well. First, panels (a) of the figures indicate that \(\hat{\tau}^{adj}\) and \(\hat{\tau}^\ast\) maintain uniform size control across different values of \( \kappa_{a, s} \), with close-to-nominal rejection rates under \( H_0 \), even as the number of regressors ranges from 0 to 80 (40) for \( n = 800 \) (\( n = 400 \)). In contrast, LTM and YYS lack uniform size control, as their rejection probabilities increase with the number of regressors. Second, \(\hat{\tau}^\ast\) consistently achieves higher power than both BCS and \(\hat{\tau}^{adj}\). Third, we observe that the bias is nearly zero in Model 5 and remains small in Model 6, with a slight increase as \( \kappa_{a, s} \) grows. However, this bias is asymptotically negligible relative to the standard deviations. Fourth, \(\hat{\tau}^\ast\) has the smallest standard deviation among all methods. Similar results were obtained under other randomization schemes and are omitted here for brevity.

\newcolumntype{L}{>{\raggedright\arraybackslash}X} \newcolumntype{C}{>{\centering\arraybackslash}X}
\begin{table}[tbp]\caption{Rejection rate (in percent) under $H_0:\mu_{1}-\mu_{0}=0$}  \label{Table_rate_H0_model56}
	\centering
	\begin{tabularx}{1\textwidth}{CCCCCCCCCC} 
		\toprule
		\multicolumn{1}{l}{} & \multicolumn{1}{c}{} &  \multicolumn{4}{c}{n=400, $k_n=40$} &\multicolumn{4}{c}{n=800, $k_n=80$} \\ 
		\cmidrule(lr){3-6} \cmidrule(lr){7-10}
		Model                & Method               & SRS   & BCD   & WEI   & SBR   & \multicolumn{1}{c}{SRS} & \multicolumn{1}{c}{BCD} & \multicolumn{1}{c}{WEI} & \multicolumn{1}{c}{SBR} \\ 
		\midrule
		5 & $\hat{\tau}^{adj}$  & 5.44  & 5.68  & 5.34  & 5.83  & 5.39 & 5.51  & 5.4   & 5.37  \\
		& $\hat{\tau}^{\ast}$ & 5.66  & 5.53  & 5.23  & 5.59  & 5.45 & 5.61  & 5.39  & 5.26  \\
		& BCS                 & 5.11  & 4.64  & 5.14  & 5.24  & 5.02 & 5.15  & 5.18  & 5.08  \\
		& YYS                 & 11.63 & 11.35 & 11.12 & 11.64 & 11.8 & 11.49 & 11.44 & 11.81 \\
		& LTM                 & 8.13  & 7.5   & 7.5   & 7.95  & 7.9  & 7.91  & 7.86  & 7.75  \\ \hline
		6 & $\hat{\tau}^{adj}$  & 5.52  & 5.51  & 5.18  & 5.34  & 5.41 & 5.53  & 5.11  & 5.45  \\
		& $\hat{\tau}^{\ast}$ & 5.47  & 5.39  & 5.25  & 5.29  & 5.59 & 5.58  & 5.08  & 5.67  \\
		& BCS                 & 5.24  & 4.86  & 5.03  & 5.31  & 5.41 & 5.01  & 5.18  & 5.63  \\
		& YYS                 & 8.11  & 8.05  & 7.78  & 8.08  & 8.61 & 8.46  & 8.15  & 8.37  \\
		& LTM                 & 9.41  & 9.02  & 8.89  & 9.14  & 9.88 & 9.77  & 9.43  & 9.97 \\
		\bottomrule		
	\end{tabularx}
\end{table}

\newcolumntype{L}{>{\raggedright\arraybackslash}X} \newcolumntype{C}{>{\centering\arraybackslash}X}
\begin{table}[tbp]\caption{Rejection rate (in percent) under $H_1:\mu_{1}-\mu_{0}=0.2$}  \label{Table_rate_H1_model56}
	\centering
	\begin{tabularx}{1\textwidth}{CCCCCCCCCC} 
		\toprule
		\multicolumn{1}{l}{} & \multicolumn{1}{c}{} &  \multicolumn{4}{c}{n=400, $k_n=40$} &\multicolumn{4}{c}{n=800, $k_n=80$} \\ 			
		\cmidrule(lr){3-6} \cmidrule(lr){7-10}
		Model                & Method               & SRS   & BCD   & WEI   & SBR   & \multicolumn{1}{c}{SRS} & \multicolumn{1}{c}{BCD} & \multicolumn{1}{c}{WEI} & \multicolumn{1}{c}{SBR} \\ 
		\midrule
		5 & $\hat{\tau}^{adj}$  & 39.88 & 39.47 & 39.06 & 39.56 & 66.08 & 66.16 & 66.31 & 66.04 \\
		& $\hat{\tau}^{\ast}$ & 41.68 & 41.81 & 41.20 & 41.76 & 68.86 & 69.19 & 68.64 & 68.47 \\
		& BCS                 & 27.15 & 27.36 & 27.20 & 27.05 & 47.01 & 48.24 & 47.73 & 47.90 \\
		& YYS                 & 52.95 & 53.13 & 52.79 & 53.00 & 78.10 & 78.12 & 78.14 & 78.03 \\
		& LTM                 & 45.67 & 45.89 & 45.94 & 45.65 & 71.08 & 71.85 & 71.35 & 71.25 \\ \hline
		6 & $\hat{\tau}^{adj}$  & 27.30 & 26.75 & 26.85 & 26.91 & 44.30 & 45.04 & 45.18 & 44.50 \\
		& $\hat{\tau}^{\ast}$ & 27.66 & 27.16 & 27.32 & 27.18 & 45.16 & 45.87 & 45.81 & 45.28 \\
		& BCS                 & 18.72 & 17.97 & 17.99 & 18.29 & 30.37 & 31.37 & 31.07 & 31.06 \\
		& YYS                 & 33.71 & 32.98 & 33.18 & 33.32 & 52.33 & 53.07 & 52.85 & 52.72 \\
		& LTM                 & 35.56 & 34.81 & 35.24 & 35.04 & 54.12 & 54.90 & 54.78 & 54.16   \\ 			
		\bottomrule			
	\end{tabularx}
\end{table}

\begin{figure}
	\centering
	\subfigure[$H_{0}:\mu_1-\mu_0=0$]{
		\includegraphics[width=\textwidth]{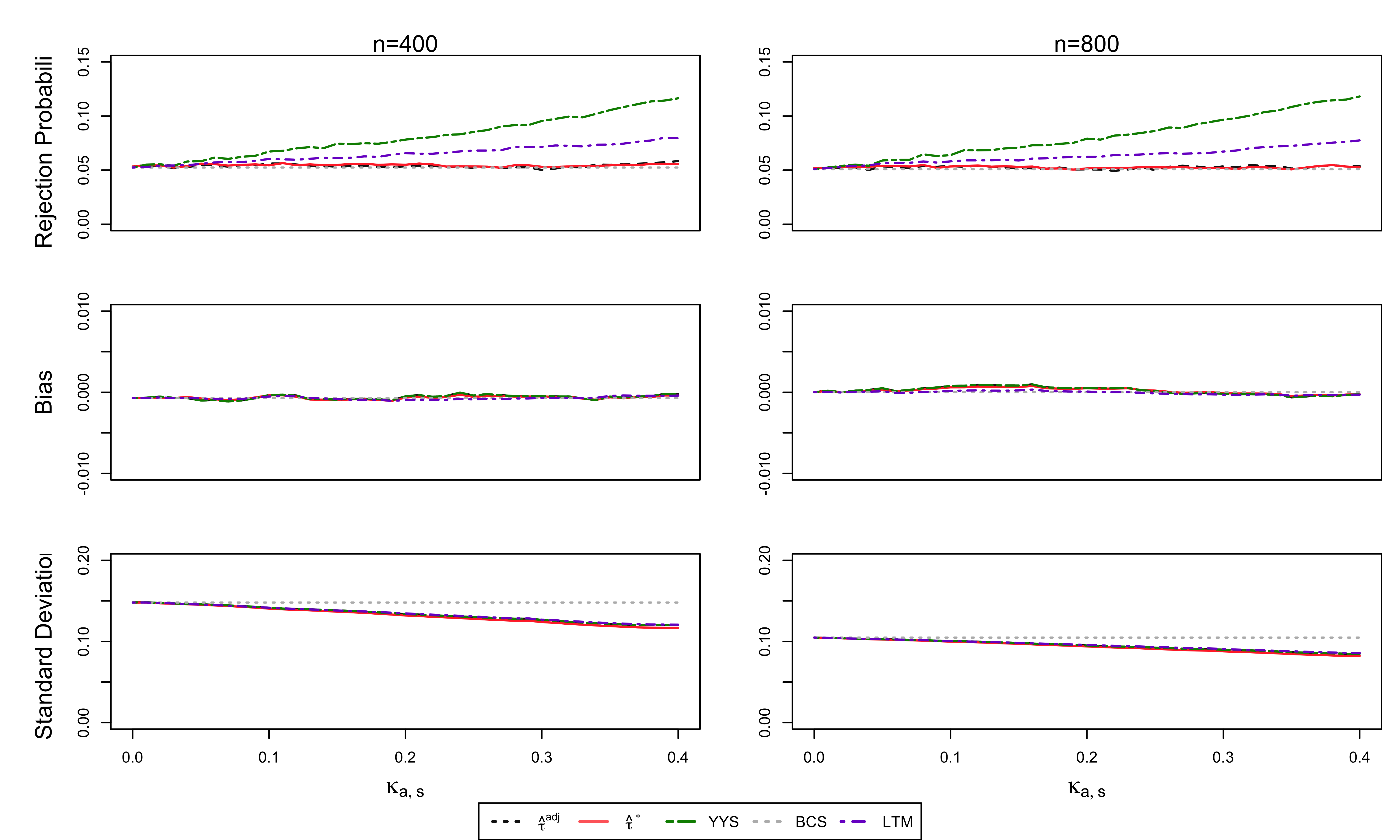}
	}
	\hfill
	\subfigure[$H_{1}:\mu_1-\mu_0=0.2$]{
		\includegraphics[width=\textwidth]{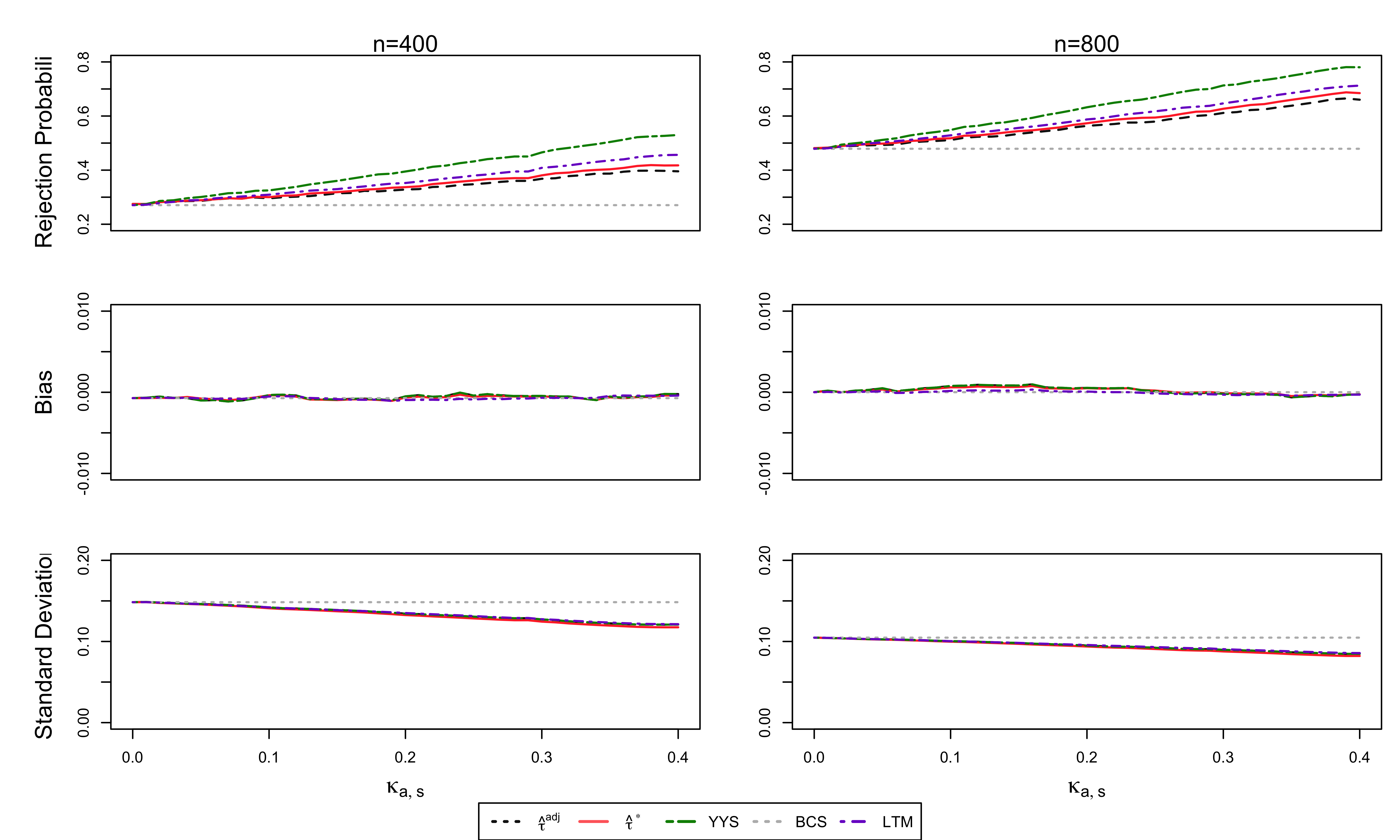}
	}
	\caption{Model 5 under SBR}\label{fig:SBR_dgp5}
\end{figure}

\begin{figure}
	\centering
	\subfigure[$H_{0}:\mu_1-\mu_0=0$]{
		\includegraphics[width=\textwidth]{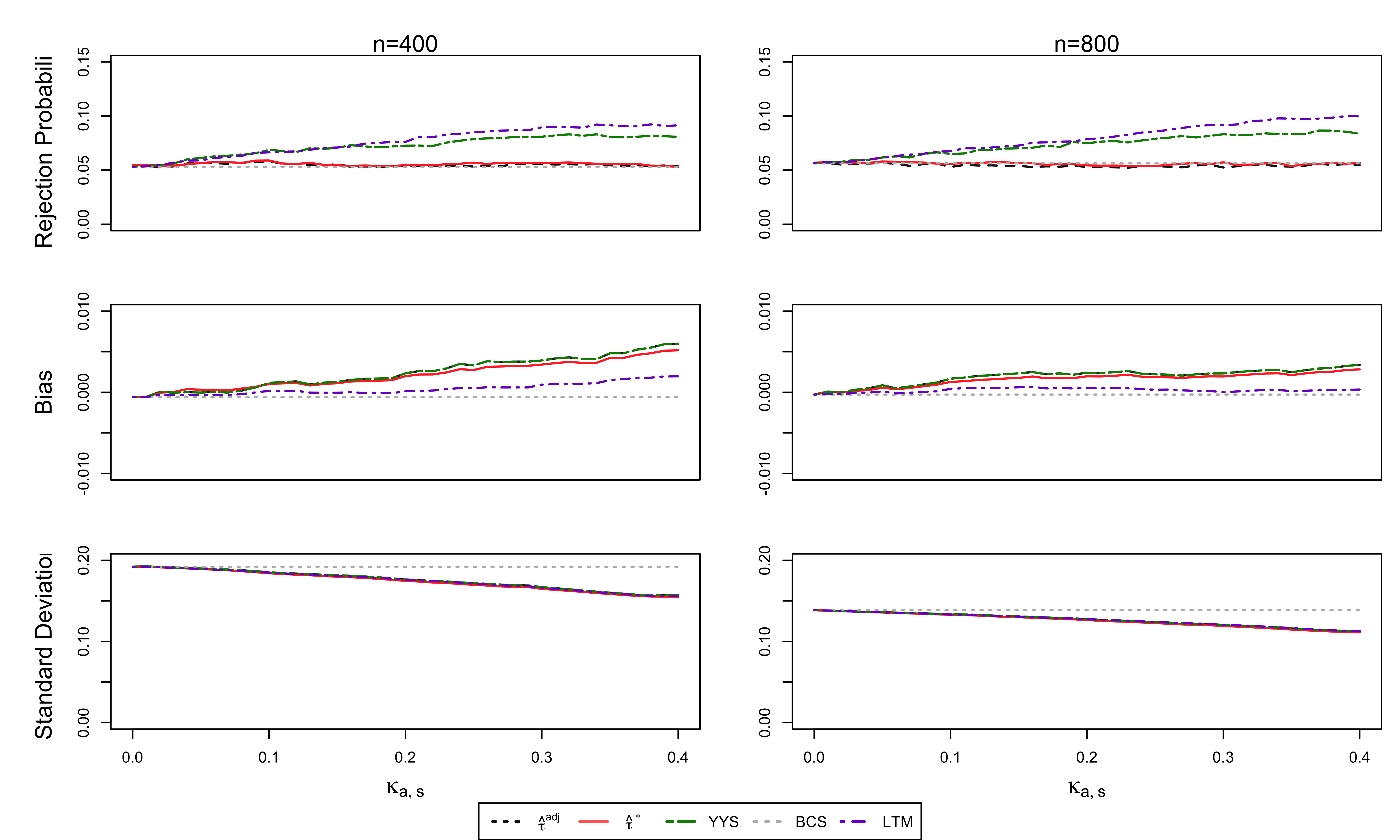}
	}
	\hfill
	\subfigure[$H_{1}:\mu_1-\mu_0=0.2$]{
		\includegraphics[width=\textwidth]{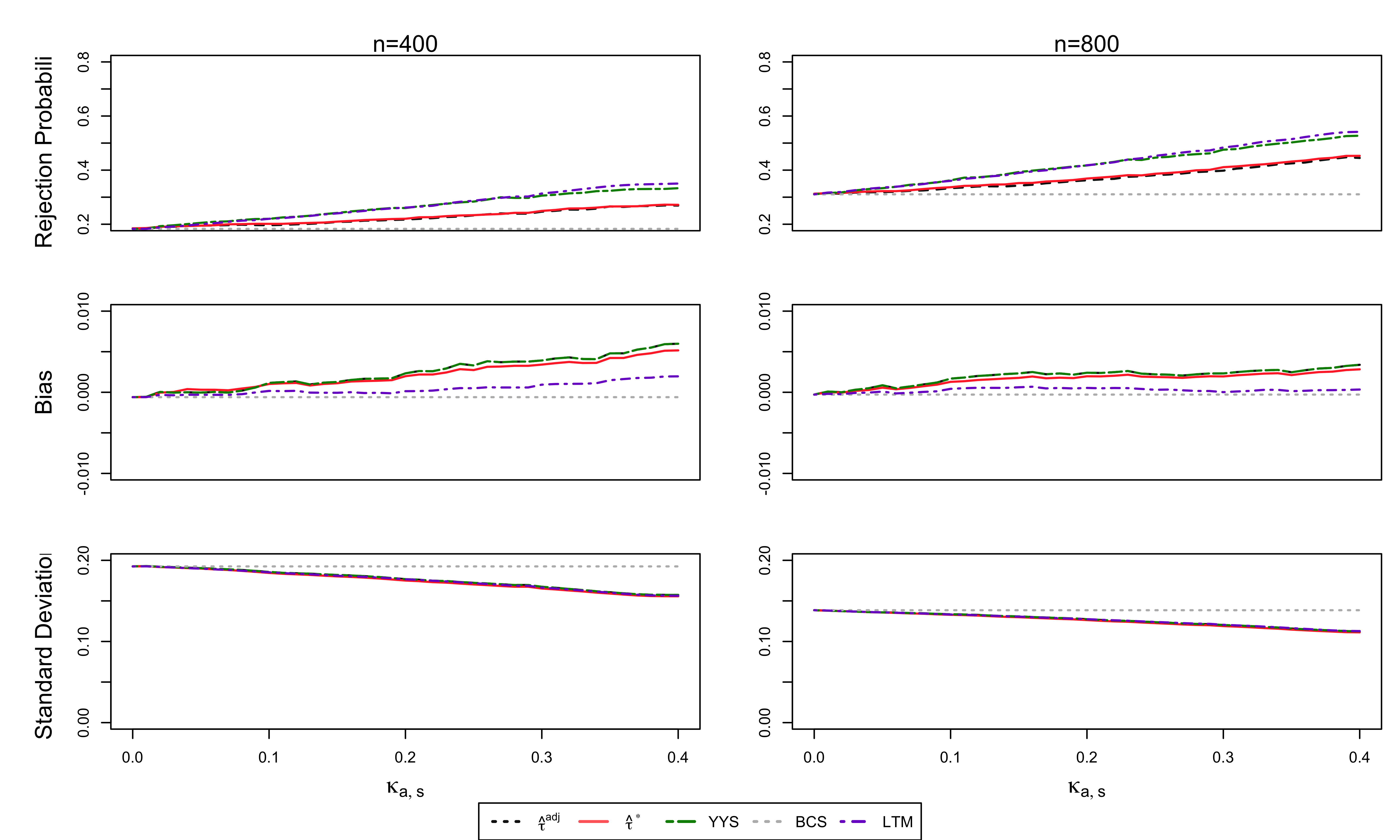}
	}
	\caption{Model 6 under SBR}\label{fig:SBR_dgp6}
\end{figure}

\newpage
\bibliographystyle{chicago}
\bibliography{BCAR}

\begin{thebibliography}{}

\bibitem[\protect\citeauthoryear{Anatolyev and S{\o}lvsten}{Anatolyev and
  S{\o}lvsten}{2023}]{AS23}
Anatolyev, S. and M.~S{\o}lvsten (2023).
\newblock Testing many restrictions under heteroskedasticity.
\newblock {\em Journal of Econometrics\/}~{\em 236\/}(1), 105473.

\bibitem[\protect\citeauthoryear{Anderson and McKenzie}{Anderson and
  McKenzie}{2021}]{anderson2021}
Anderson, S.~J. and D.~McKenzie (2021).
\newblock Improving business practices and the boundary of the entrepreneur: a
  randomized experiment comparing training, consulting, insourcing and
  outsourcing.
\newblock {\em Journal of Political Economy,\/}~(1), 157--209.

\bibitem[\protect\citeauthoryear{Andrews}{Andrews}{2016}]{Andrews(2016)}
Andrews, I. (2016).
\newblock Conditional linear combination tests for weakly identified models.
\newblock {\em Econometrica\/}~{\em 84\/}(6), 2155--2182.

\bibitem[\protect\citeauthoryear{Bai}{Bai}{2022}]{Bai22}
Bai, Y. (2022).
\newblock Optimality of matched-pair designs in randomized controlled trials.
\newblock {\em American Economic Review\/}~{\em 112\/}(12), 3911--3940.

\bibitem[\protect\citeauthoryear{Bai, Jiang, Romano, Shaikh, and Zhang}{Bai
  et~al.}{2024}]{BJRSZ23}
Bai, Y., L.~Jiang, J.~P. Romano, A.~M. Shaikh, and Y.~Zhang (2024).
\newblock Covariate adjustment in experiments with matched pairs.
\newblock {\em Journal of Econometrics\/}~{\em 241\/}(1).

\bibitem[\protect\citeauthoryear{Bai, Liu, Shaikh, and Tabord-Meehan}{Bai
  et~al.}{2023}]{BLST23}
Bai, Y., J.~Liu, A.~M. Shaikh, and M.~Tabord-Meehan (2023).
\newblock On the efficiency of finely stratified experiments.
\newblock {\em arXiv preprint arXiv:2307.15181\/}.

\bibitem[\protect\citeauthoryear{Bai, Shaikh, and Romano}{Bai
  et~al.}{2019}]{BRS19}
Bai, Y., A.~Shaikh, and J.~P. Romano (2019).
\newblock Inference in experiments with matched pairs.
\newblock {\em University of Chicago, Becker Friedman Institute for Economics
  Working Paper\/}~(2019-63).

\bibitem[\protect\citeauthoryear{Bai}{Bai}{2008}]{bai08}
Bai, Z.~D. (2008).
\newblock Methodologies in spectral analysis of large dimensional random
  matrices, a review.
\newblock In {\em Advances in statistics}, pp.\  174--240. World Scientific.

\bibitem[\protect\citeauthoryear{Banerjee, Hanna, Olken, and Sumarto}{Banerjee
  et~al.}{2020}]{banerjee2020lack}
Banerjee, A., R.~Hanna, B.~A. Olken, and S.~Sumarto (2020).
\newblock The (lack of) distortionary effects of proxy-means tests: Results
  from a nationwide experiment in indonesia.
\newblock {\em Journal of Public Economics Plus\/}~{\em 1}, 100001.

\bibitem[\protect\citeauthoryear{Belloni, Chernozhukov, Chetverikov, and
  Kato}{Belloni et~al.}{2015}]{belloni2015}
Belloni, A., V.~Chernozhukov, D.~Chetverikov, and K.~Kato (2015).
\newblock Some new asymptotic theory for least squares series: Pointwise and
  uniform results.
\newblock {\em Journal of Econometrics\/}~{\em 186\/}(2), 345--366.

\bibitem[\protect\citeauthoryear{Belloni, Chernozhukov, Fern{\'a}ndez-Val, and
  Hansen}{Belloni et~al.}{2017}]{BCFH13}
Belloni, A., V.~Chernozhukov, I.~Fern{\'a}ndez-Val, and C.~Hansen (2017).
\newblock Program evaluation with high-dimensional data.
\newblock {\em Econometrica\/}~{\em 85\/}(1), 233--298.

\bibitem[\protect\citeauthoryear{Bhatia}{Bhatia}{2013}]{bhatia2013}
Bhatia, R. (2013).
\newblock {\em Matrix analysis}, Volume 169.
\newblock Springer Science \& Business Media.

\bibitem[\protect\citeauthoryear{Bruhn and McKenzie}{Bruhn and
  McKenzie}{2009}]{B09}
Bruhn, M. and D.~McKenzie (2009).
\newblock In pursuit of balance: Randomization in practice in development field
  experiments.
\newblock {\em American Economic Journal: Applied Economics\/}~{\em 1\/}(4),
  200--232.

\bibitem[\protect\citeauthoryear{Bugni, Canay, and Shaikh}{Bugni
  et~al.}{2018}]{BCS17}
Bugni, F.~A., I.~A. Canay, and A.~M. Shaikh (2018).
\newblock Inference under covariate-adaptive randomization.
\newblock {\em Journal of the American Statistical Association\/}~{\em
  113\/}(524), 1741--1768.

\bibitem[\protect\citeauthoryear{Bugni, Canay, and Shaikh}{Bugni
  et~al.}{2019}]{BCS18}
Bugni, F.~A., I.~A. Canay, and A.~M. Shaikh (2019).
\newblock Inference under covariate-adaptive randomization with multiple
  treatments.
\newblock {\em Quantitative Economics\/}~{\em 10\/}(4), 1747--1785.

\bibitem[\protect\citeauthoryear{Burchardi, Gulesci, Lerva, and
  Sulaiman}{Burchardi et~al.}{2019}]{burchardi2019}
Burchardi, K.~B., S.~Gulesci, B.~Lerva, and M.~Sulaiman (2019).
\newblock Moral hazard: experimental evidence from tenancy contracts.
\newblock {\em Quarterly Journal of Economics\/}~{\em 134\/}(1), 281--347.

\bibitem[\protect\citeauthoryear{Bursztyn, Fiorin, Gottlieb, and Kanz}{Bursztyn
  et~al.}{2019}]{bursztyn2019}
Bursztyn, L., S.~Fiorin, D.~Gottlieb, and M.~Kanz (2019).
\newblock Moral incentives in credit card debt repayment: Evidence from a field
  experiment.
\newblock {\em Journal of Political Economy\/}~{\em 127\/}(4), 1641--1683.

\bibitem[\protect\citeauthoryear{Cattaneo, Farrell, and Feng}{Cattaneo
  et~al.}{2020}]{CFF20}
Cattaneo, M.~D., M.~H. Farrell, and Y.~Feng (2020).
\newblock Large sample properties of partitioning-based series estimators.
\newblock {\em Annals of Statistics\/}~{\em 48\/}(3), 1718--1741.

\bibitem[\protect\citeauthoryear{Cattaneo, Jansson, and Ma}{Cattaneo
  et~al.}{2019}]{CJM18}
Cattaneo, M.~D., M.~Jansson, and X.~Ma (2019).
\newblock Two-step estimation and inference with possibly many included
  covariates.
\newblock {\em The Review of Economic Studies\/}~{\em 86\/}(3), 1095--1122.

\bibitem[\protect\citeauthoryear{Cattaneo, Jansson, and Newey}{Cattaneo
  et~al.}{2018a}]{CJN18_ET}
Cattaneo, M.~D., M.~Jansson, and W.~K. Newey (2018a).
\newblock Alternative asymptotics and the partially linear model with many
  regressors.
\newblock {\em Econometric Theory\/}~{\em 34\/}(2), 277--301.

\bibitem[\protect\citeauthoryear{Cattaneo, Jansson, and Newey}{Cattaneo
  et~al.}{2018b}]{CJN18}
Cattaneo, M.~D., M.~Jansson, and W.~K. Newey (2018b).
\newblock Inference in linear regression models with many covariates and
  heteroscedasticity.
\newblock {\em Journal of the American Statistical Association\/}~{\em
  113\/}(523), 1350--1361.

\bibitem[\protect\citeauthoryear{Cattaneo, Masini, and Underwood}{Cattaneo
  et~al.}{2022}]{CMU22}
Cattaneo, M.~D., R.~P. Masini, and W.~G. Underwood (2022).
\newblock Yurinskii's coupling for martingales.
\newblock {\em arXiv preprint arXiv:2210.00362\/}.

\bibitem[\protect\citeauthoryear{Chen}{Chen}{2007}]{C07_sieve}
Chen, X. (2007).
\newblock Large sample sieve estimation of semi-nonparametric models.
\newblock {\em Handbook of econometrics\/}~{\em 6}, 5549--5632.

\bibitem[\protect\citeauthoryear{Chernozhukov, Chetverikov, and
  Kato}{Chernozhukov et~al.}{2014}]{CCK14}
Chernozhukov, V., D.~Chetverikov, and K.~Kato (2014).
\newblock Gaussian approximation of suprema of empirical processes.
\newblock {\em The Annals of Statistics\/}~{\em 42\/}(4), 1564--1597.

\bibitem[\protect\citeauthoryear{Chiang, Matsushita, and Otsu}{Chiang
  et~al.}{2023}]{CMO23}
Chiang, H.~D., Y.~Matsushita, and T.~Otsu (2023).
\newblock Regression adjustment in randomized controlled trials with many
  covariates.
\newblock {\em arXiv preprint arXiv:2302.00469\/}.

\bibitem[\protect\citeauthoryear{Chong, Cohen, Field, Nakasone, and
  Torero}{Chong et~al.}{2016}]{CCFNT16}
Chong, A., I.~Cohen, E.~Field, E.~Nakasone, and M.~Torero (2016).
\newblock Iron deficiency and schooling attainment in {P}eru.
\newblock {\em American Economic Journal: Applied Economics\/}~{\em 8\/}(4),
  222--55.

\bibitem[\protect\citeauthoryear{Cytrynbaum}{Cytrynbaum}{2023}]{C23optimal}
Cytrynbaum, M. (2023).
\newblock Optimal stratification of survey experiments.
\newblock {\em arXiv preprint arXiv:2111.08157\/}.

\bibitem[\protect\citeauthoryear{Dhar, Jain, and Jayachandran}{Dhar
  et~al.}{2022}]{dhar2022reshaping}
Dhar, D., T.~Jain, and S.~Jayachandran (2022).
\newblock Reshaping adolescents' gender attitudes: Evidence from a school-based
  experiment in india.
\newblock {\em American economic review\/}~{\em 112\/}(3), 899--927.

\bibitem[\protect\citeauthoryear{Duflo}{Duflo}{2018}]{duflo2018machinistas}
Duflo, E. (2018).
\newblock Machinistas meet randomistas: useful ml tools for empirical
  researchers.
\newblock {\em Presentation at the NBER Summer Institute\/}.

\bibitem[\protect\citeauthoryear{Dupas, Karlan, Robinson, and Ubfal}{Dupas
  et~al.}{2018}]{dupas2018}
Dupas, P., D.~Karlan, J.~Robinson, and D.~Ubfal (2018).
\newblock Banking the unbanked? evidence from three countries.
\newblock {\em American Economic Journal: Applied Economics\/}~{\em 10\/}(2),
  257--297.

\bibitem[\protect\citeauthoryear{Dupas and Miguel}{Dupas and
  Miguel}{2017}]{dupas2017}
Dupas, P. and E.~Miguel (2017).
\newblock Impacts and determinants of health levels in low-income countries.
\newblock In {\em Handbook of {{Economic Field Experiments}}}, Volume~2, pp.\
  3--93. {Elsevier}.

\bibitem[\protect\citeauthoryear{Freedman}{Freedman}{2008a}]{F08b}
Freedman, D.~A. (2008a).
\newblock On regression adjustments in experiments with several treatments.
\newblock {\em Annals of Applied Statistics\/}~{\em 2\/}(1), 176--196.

\bibitem[\protect\citeauthoryear{Freedman}{Freedman}{2008b}]{F081}
Freedman, D.~A. (2008b).
\newblock On regression adjustments to experimental data.
\newblock {\em Advances in Applied Mathematics\/}~{\em 40\/}(2), 180--193.

\bibitem[\protect\citeauthoryear{Glewwe and Miguel}{Glewwe and
  Miguel}{2007}]{glewwe2007}
Glewwe, P. and E.~A. Miguel (2007).
\newblock The impact of child health and nutrition on education in less
  developed countries.
\newblock In {\em Handbook of {Development Economics}}, Volume~4, pp.\
  3561--3606. Elsevier.

\bibitem[\protect\citeauthoryear{Greaney, Kaboski, and Van~Leemput}{Greaney
  et~al.}{2016}]{greaney2016}
Greaney, B.~P., J.~P. Kaboski, and E.~Van~Leemput (2016).
\newblock Can self-help groups really be ``self-help''?
\newblock {\em Review of Economic Studies\/}~{\em 83\/}(4), 1614--1644.

\bibitem[\protect\citeauthoryear{Jakiela and Ozier}{Jakiela and
  Ozier}{2016}]{jakiela2016}
Jakiela, P. and O.~Ozier (2016).
\newblock Does {Africa} need a rotten kin theorem? {Experimental} evidence from
  village economies.
\newblock {\em Review of Economic Studies\/}~{\em 83\/}(1), 231--268.

\bibitem[\protect\citeauthoryear{Jiang, Linton, Tang, and Zhang}{Jiang
  et~al.}{2023}]{JLTZ22}
Jiang, L., O.~B. Linton, H.~Tang, and Y.~Zhang (2023).
\newblock Improving estimation efficiency via regression-adjustment in
  covariate-adaptive randomizations with imperfect compliance.
\newblock {\em Review of Economics and Statistics, forthcoming\/}.

\bibitem[\protect\citeauthoryear{Jiang, Phillips, Tao, and Zhang}{Jiang
  et~al.}{2023}]{JPTZ22}
Jiang, L., P.~C. Phillips, Y.~Tao, and Y.~Zhang (2023).
\newblock Regression-adjusted estimation of quantile treatment effects under
  covariate-adaptive randomizations.
\newblock {\em Journal of Econometrics\/}~{\em 234\/}(2), 758--776.

\bibitem[\protect\citeauthoryear{Jochmans}{Jochmans}{2022}]{J22}
Jochmans, K. (2022).
\newblock Heteroscedasticity-robust inference in linear regression models with
  many covariates.
\newblock {\em Journal of the American Statistical Association\/}~{\em
  117\/}(538), 887--896.

\bibitem[\protect\citeauthoryear{Kline, Saggio, and S{\o}lvsten}{Kline
  et~al.}{2020}]{KSS2020}
Kline, P., R.~Saggio, and M.~S{\o}lvsten (2020).
\newblock Leave-out estimation of variance components.
\newblock {\em Econometrica\/}~{\em 88\/}(5), 1859--1898.

\bibitem[\protect\citeauthoryear{Lehmann and Romano}{Lehmann and
  Romano}{2006}]{LR06}
Lehmann, E.~L. and J.~P. Romano (2006).
\newblock {\em Testing statistical hypotheses}.
\newblock Springer Science \& Business Media.

\bibitem[\protect\citeauthoryear{Lei and Ding}{Lei and Ding}{2021}]{LD21}
Lei, L. and P.~Ding (2021).
\newblock Regression adjustment in completely randomized experiments with a
  diverging number of covariates.
\newblock {\em Biometrika\/}~{\em 108\/}(4), 815--828.

\bibitem[\protect\citeauthoryear{Li and M{\"u}ller}{Li and
  M{\"u}ller}{2021}]{LM21}
Li, C. and U.~K. M{\"u}ller (2021).
\newblock Linear regression with many controls of limited explanatory power.
\newblock {\em Quantitative Economics\/}~{\em 12\/}(2), 405--442.

\bibitem[\protect\citeauthoryear{Lim, Wang, and Zhang}{Lim
  et~al.}{2024}]{LWZ24}
Lim, D., W.~Wang, and Y.~Zhang (2024).
\newblock A conditional linear combination test with many weak instruments.
\newblock {\em Journal of Econometrics\/}~{\em 238\/}(2), 105602.

\bibitem[\protect\citeauthoryear{Lin}{Lin}{2013}]{L13}
Lin, W. (2013).
\newblock Agnostic notes on regression adjustments to experimental data:
  reexamining {Freedman{'}s} critique.
\newblock {\em Annals of Applied Statistics\/}~{\em 7\/}(1), 295--318.

\bibitem[\protect\citeauthoryear{Liu, Tu, and Ma}{Liu et~al.}{2023}]{LTM20}
Liu, H., F.~Tu, and W.~Ma (2023).
\newblock Lasso-adjusted treatment effect estimation under covariate-adaptive
  randomization.
\newblock {\em Biometrika\/}~{\em 110\/}(2), 431--447.

\bibitem[\protect\citeauthoryear{Lu, Yang, and Wang}{Lu et~al.}{2024}]{LYW23}
Lu, X., F.~Yang, and Y.~Wang (2024).
\newblock Debiased regression adjustment in completely randomized experiments
  with moderately high-dimensional covariates.
\newblock {\em arXiv preprint arXiv:2309.02073\/}.

\bibitem[\protect\citeauthoryear{Ma, Tu, and Liu}{Ma et~al.}{2022}]{MTL20}
Ma, W., F.~Tu, and H.~Liu (2022).
\newblock Regression analysis for covariate-adaptive randomization: a robust
  and efficient inference perspective.
\newblock {\em Statistics in Medicine\/}~{\em 41\/}(29).

\bibitem[\protect\citeauthoryear{Mar{\v{c}}enko and Pastur}{Mar{\v{c}}enko and
  Pastur}{1967}]{MP67}
Mar{\v{c}}enko, V.~A. and L.~A. Pastur (1967).
\newblock Distribution of eigenvalues for some sets of random matrices.
\newblock {\em Mathematics of the USSR-Sbornik\/}~{\em 1\/}(4), 457.

\bibitem[\protect\citeauthoryear{Mikusheva and S{\o}lvsten}{Mikusheva and
  S{\o}lvsten}{2023}]{MS23}
Mikusheva, A. and M.~S{\o}lvsten (2023).
\newblock Linear regression with weak exogeneity.
\newblock {\em arXiv preprint arXiv:2308.08958\/}.

\bibitem[\protect\citeauthoryear{Pollard}{Pollard}{2002}]{P02}
Pollard, D. (2002).
\newblock {\em A user's guide to measure theoretic probability}.
\newblock Number~8. Cambridge University Press.

\bibitem[\protect\citeauthoryear{Tabord-Meehan}{Tabord-Meehan}{2023}]{T18}
Tabord-Meehan, M. (2023).
\newblock Stratification trees for adaptive randomization in randomized
  controlled trials.
\newblock {\em Review of Economic Studies\/}~{\em 90\/}(5), 2646--2673.

\bibitem[\protect\citeauthoryear{Tao}{Tao}{2012}]{T12}
Tao, T. (2012).
\newblock {\em Topics in random matrix theory}, Volume 132.
\newblock American Mathematical Soc.

\bibitem[\protect\citeauthoryear{Tripathi}{Tripathi}{1999}]{tripathi1999}
Tripathi, G. (1999).
\newblock A matrix extension of the cauchy-schwarz inequality.
\newblock {\em Economics Letters\/}~{\em 63\/}(1), 1--3.

\bibitem[\protect\citeauthoryear{Wager, Du, Taylor, and Tibshirani}{Wager
  et~al.}{2016}]{WDTT16}
Wager, S., W.~Du, J.~Taylor, and R.~J. Tibshirani (2016).
\newblock High-dimensional regression adjustments in randomized experiments.
\newblock {\em Proceedings of the National Academy of Sciences\/}~{\em
  113\/}(45), 12673--12678.

\bibitem[\protect\citeauthoryear{Wei}{Wei}{1978}]{W78}
Wei, L. (1978).
\newblock An application of an urn model to the design of sequential controlled
  clinical trials.
\newblock {\em Journal of the American Statistical Association\/}~{\em
  73\/}(363), 559--563.

\bibitem[\protect\citeauthoryear{W{\"u}thrich and Zhu}{W{\"u}thrich and
  Zhu}{2023}]{WZ23}
W{\"u}thrich, K. and Y.~Zhu (2023).
\newblock Omitted variable bias of lasso-based inference methods: A finite
  sample analysis.
\newblock {\em Review of Economics and Statistics\/}~{\em 105\/}(4), 982--997.

\bibitem[\protect\citeauthoryear{Ye, Yi, and Shao}{Ye et~al.}{2022}]{YYS22}
Ye, T., Y.~Yi, and J.~Shao (2022).
\newblock Inference on the average treatment effect under minimization and
  other covariate-adaptive randomization methods.
\newblock {\em Biometrika\/}~{\em 109\/}(1), 33--47.

\bibitem[\protect\citeauthoryear{Zhang and Zheng}{Zhang and Zheng}{2020}]{ZZ20}
Zhang, Y. and X.~Zheng (2020).
\newblock Quantile treatment effects and bootstrap inference under
  covariate-adaptive randomization.
\newblock {\em Quantitative Economics\/}~{\em 11\/}(3), 957--982.

\end{thebibliography}

\end{document}